\numberwithin{equation}{section}
\newtheorem{theorem}{Theorem}[section]
\newtheorem{lemma}[theorem]{Lemma}
\newtheorem{assumptions}[theorem]{Assumptions List}
\newtheorem{proposition}[theorem]{Proposition}
\newtheorem{corollary}[theorem]{Corollary}
\newcommand{\R}{{\mathbb R}}
\newcommand{\p}{{p}}
\newcommand{\Z}{{\mathbb Z}}
\title{Diffusive behavior for randomly kicked
Newtonian particles in a spatially periodic medium }
\author{\textbf{Jeremy Clark}\footnote{jclark@mappi.helsinki.fi}\\  Department of Mathematics, 
 University of Helsinki \\ 
Helsinki 00014,  Finland \vspace{.3cm} \\ 
 \textbf{Christian Maes}\footnote{christian.maes@fys.kuleuven.be } \\ Instituut voor Theoretische Fysica, 
K.U.Leuven \\ 
Celestijnenlaan 200D, Belgium }
\begin{document}
\maketitle

\begin{abstract}
We prove a central limit theorem for the momentum distribution of
a particle undergoing an unbiased spatially periodic random
forcing at exponentially distributed times without friction. The
start
 is a linear Boltzmann equation for the phase space density,
where the average energy of the particle grows linearly in time.
Rescaling time, the momentum converges to a Brownian motion, and
the position is its time-integral showing superdiffusive scaling
with time $t^{3/2}$. The analysis has two parts: (1) to show that
the particle spends most of its time at high energy, where the
spatial environment is practically invisible; (2) to treat the low
energy incursions where the motion is dominated by the
deterministic force, with potential drift but where symmetry
arguments cancel the ballistic behavior.

\end{abstract}

\section{Introduction}

Recent times show a renewed great interest in obtaining diffusive
behavior from microscopically defined dynamics.  The motivation is
much older, to derive, as Fourier, Navier or Boltzmann first did
in their ways and times, irreversible and dissipative behavior
starting from the reversible microscopic laws.  The limiting
behavior is often associated to a conserved quantity like energy
in classical mechanics and the challenge is then to express the
(energy) current in terms of gradients of the (energy) density
itself.
 Obviously, for the sharpness of the limit, some
scaling must be done, combined with typicality arguments on the
level of the initial or boundary conditions.  For example, more
recently the search for a rigorous derivation of Fourier's law of
heat conduction was relaunched in \cite{fourier} and many attempts
and models have been taken up after that.\\

More modestly one starts half-way with an effective description on
the level of single-particle dynamics.  The one-particle phase
space density then really refers to a cloud of weakly interacting
particles brought in contact with some environment, and the
conserved quantity is simply the number of particles. In the
present paper we study the diffusive scaling limit of a massive
particle in a one-dimensional periodic potential to which we add
random forcing. The latter is not derived from first principles,
but has various physical motivations. Heuristically, the forcing
corresponds to the random collisions with an effectively infinite
temperature granular bath. The granular structure is in the
discrete kicks the particle undergoes at random exponentially
distributed times. Depending on its present position, the
distribution of the momentum kicks differs. Together with the
potential that specifies the spatial inhomogeneity and makes the
problem nontrivial.\\

 The
  spatial heterogeneity brings us to a second
  motivation of the present work:
  the study of active particles where flights of ballistic motion are interrupted by spatially
depending re-orientations, or the statistical characterization of
particle trajectories in active heterogeneous fluids, see e.g.
\cite{Weihs}.  Our present work adds a rigorous result
establishing under what conditions diffusive behavior for the
momentum and superdiffusive behavior for the position get
realized.\\

Finally, the present contribution fits in the long tradition of
proofs of the central limit theorem (invariance principle, and its
modifications) for additive functionals, position as the integral
of the momenta and momenta as the integral of the forces. Much of
all that for studies of interacting particle systems started from
the pioneering work in \cite{kv}.  In the spirit of the present
work the papers \cite{masi,gold} added very important symmetry
considerations, making it possible to apply the work to strongly
dependent variables.  The fact that these arguments avoided the
use of mixing assumptions or strong enough decay of
time-correlations appears like an important lesson for today's
pursuit of diffusive behavior mentioned at the very beginning of
this introduction.  Indeed, one often emphasizes that strong
enough chaoticity assumptions are needed in the mathematical
control of the transition from the microscopic reversible laws to
macroscopic irreversible behavior.  One then refers for example to
the problem of obtaining regular transport properties via
well-controlled Green-Kubo expressions where some temporal decay
certainly seems necessary.  These Green-Kubo relations are however
needed only for very special observables, and not for all possible
even microscopically defined quantities.  It is therefore very
welcome if symmetry considerations can help to establish diffusive
behavior for certain classes of functionals that share symmetry
properties with the relevant observables of statistical mechanics.
The present paper is not starting from the microscopic classical
mechanical world, but it does deal with the
problem of exploiting symmetry to cancel ballistic behavior.\\
Other more recent work on the central limit theorem that shares
important ambitions with the present study includes
\cite{olla,jara,bovier,hairer}.  For the study of fluctuations in
Markov processes with an overview of central limit results, we
refer to the recent book \cite{landim}.\\

 The next section
  introduces the model, the results and the main strategy of the proof.  The momentum variable is not
  autonomous since it is coupled to the position of the particle.  Its changes come from two sources, the
  momentum jumps by the external Poisson noise
  and the acceleration due to the presence of the potential.
  That is translated in the structure of the argument.  The idea is to obtain a martingale central limit theorem
  for the momentum jumps while the effect of the potential should vanish in the long time limit.  Section \ref{SecEnergy} establishes that, most of the time, the particle's energy
    grows linearly with time.     That is sufficient to show in Section \ref{SecReflection} that
    the absolute value of the momentum process converges in
    distribution to the absolute value of a Brownian motion.  Next, in Section \ref{SecEstimates},
     follow the estimates characterizing the motion at high energy, where the
      (bounded) potential has very little effect.  The low energy motion is
      discussed under section \ref{SecDrift}.  There the drift due to the
      potential gets controlled by symmetry arguments.  The combination of high
      and low energy estimates yields the main result of Section \ref{SecMain}.

\section{Main result}

\subsection{Informal description}
Consider a one-dimensional classical particle whose position and
momentum $(X_{t}, K_{t})$ evolve deterministically with
Hamiltonian $H(x,\,k)=\frac{1}{2}k^{2}+V(x)$ for some bounded
periodic potential $0\leq V(x)\leq \bar{V}$ except at Poisson
times at which the particle may receive a momentum kick from the
environment. That is, independent of its current momentum $k$ and
at the rate $j_x(v)$ when its current position is $x$ the particle
receives a momentum jump $v$.
 On the level of the phase space densities, the
dynamics we consider is then governed by the linear Boltzmann
equation
\begin{align}\label{Boltzmann}
\frac{d}{dt}P_{t}(x,k)= -k\, \frac{\partial P_{t} }{\partial
x}(x,k)+\frac{dV}{dx}(x)\frac{\partial P_{t}}{\partial k}(x,k) +
\int_{\R} dv\,j_{x}(v)\,(P_{t}(x,k-v)-P_{t}(x,k))
\end{align}
for the phase space probability density $P_{t}(x,k)\in
L^{1}(\R^{2})$ for the particle at each time $t\geq 0$.
 The rates $j_x$ have the same periodicity as the
potential $V$.  It is however
probabilistically simpler to imagine a universal Poisson clock
having rate $\mathcal{R}>0$, such that when the alarm rings, a
(biased) coin is tossed to decide whether or not a momentum kick
will occur. The probability of the coin and the distribution of
the momentum jump $v$ are respectively $0\leq \kappa(a) \leq 1$
and $\mathcal{P}_{a}(v)$, where $a=X_{t}\,\textup{mod}\,1$  is the
position (modulo the period $1$ of the potential $V$) at the
Poisson time $t$.  We assume that the momentum jumps are symmetric
$\mathcal{P}_{a}(-v)=\mathcal{P}_{a}(v)$ and that there is a
uniform lower bound for the coin probabilities $0< \nu \leq
\kappa(a)$.  Then, in \eqref{Boltzmann},
$j_{x}(v)=\mathcal{R}\,\kappa(x)\mathcal{P}_{x}(v)$.

 Our main result is to show, under certain technical conditions,
 that the normalized variables
 $(t^{-\frac{3}{2}}X_{s\,t},t^{-\frac{1}{2}}K_{s\,t})$, $s\geq 0$ approach
 the process $(\int_{0}^{s}dr\,\mathbf{B}_{r},\mathbf{B}_{s})$ in distribution
 where $\mathbf{B}_{s}$ is Brownian motion whose diffusion
  constant $\sigma$ depends on the spatial average of the
  periodic noise $\sigma=\int_{0}^{1}da\,\int_{\R}dv\,j_{a}(v)\,v^{2}$.

%The model~(\ref{Boltzmann}) is an infinite temperature ($\beta=0$) example of a more general class of linear Boltzmann equations
%\begin{multline}\label{Finite}
%\frac{d}{dt}P_{t}(x,k)= -k\, \frac{\partial P_{t} }{\partial x}(x,k)+\frac{dV}{dx}(x)\frac{\partial P_{t}}{\partial k}(x,k)\\ + \int_{\R} dk^{\prime}\, \big(\mathcal{J}_{x}(k,k^{\prime})P_{t}(x,k^{\prime})-\mathcal{J}_{x}(k^{\prime},k)P_{t}(x,k)\big),
%\end{multline}
%where $\mathcal{J}_{x}(k,\,k^{\prime})=e^{-\frac{\beta}{4}(k^{2}-(k^{\prime})^{2})  }\mathcal{A}_{x}(k,\,k^{\prime})$ for $\beta \geq 0$ and some  positive symmetric kernel $\mathcal{A}_{x}(k,\,k^{\prime})$.  The models agree when $\beta=0$ and $\mathcal{J}_{x}(k,\,k^{\prime})=\mathcal{A}_{x}(k,\,k^{\prime})=j_{x}(k-k^{\prime})$.  We call our model frictionless since in addition to $\beta=0$, the kernal $\mathcal{J}_{x}(k,\,k^{\prime})$ is only a function of the difference $k-k^{\prime}$.
There is clearly no energy relaxation in~(\ref{Boltzmann}), since,
no matter where you start, the time-derivative of the expected energy satisfies
\begin{multline}
 \frac{d}{dt}\mathbb{E}\big[E_{t}\big]=
  \frac{d}{dt}\int_{\R^{2}}dx\,dk\, (\frac{k^{2}}{2}+V(x))\, P_{t}(x,k) = \frac{1}{2}\int_{\R^{2}}dx\,
   dk\, \big(\int_{\R} dv\,j_{x}(v)\,v^{2}\big)\,P_{t}(x,k)
\end{multline}
and thus the mean energy grows linearly as
$$\mathbb{E}\big[E_{0}\big]+
   \frac{t}{2}\inf_{a\in [0,\,1]}\int_{\R} dv\,j_{a}(v)\,v^{2} \leq \mathbb{E}
   \big[E_{t}\big] \leq  \mathbb{E}\big[E_{0}\big]+    \frac{t}{2}\sup_{a\in [0,\,1]}\int_{\R} dv\,j_{a}(v)\,v^{2}.  $$

Moreover, as we will show, by time $t$ not only the average
but also the typical energy is of order $t$. Since the
potential $V(x)$ is bounded, the absolute value of the momentum is
then $|k|\propto t^{\frac{1}{2}}$.  As a consequence of having
high momentum, the particle will pass through one period of the
potential (periodic cell) much faster than the time scale of the
Poisson clock governing the noise.  The particle then effectively
``feels'' a spatial average of the noise in which the averaged
distribution of a jump $\tilde{P}(v)$ and the averaged Poisson
rate are $\tilde{\mathcal{R}}$
$$\tilde{P}(v)=
\frac{\int_{0}^{1}da\,\kappa(a)\,\mathcal{P}_{a}(v)}{\int_{0}^{1}da\,\kappa(a)}\quad
  \text{ and }\quad \tilde{\mathcal{R}}=\mathcal{R}\int_{0}^{1} da\,\kappa(a)
$$
 Moreover, at very high momentum, the force field by the potential can only
 displace the momentum by relatively negligible values.
The effective dynamics at high energy is thus
\begin{align}\label{TIBoltzmann}
\frac{d}{dt}P_{t}(x,k)= -k\, \frac{\partial P_{t} }{\partial x}(x,k)+
 \tilde{\mathcal{R}}\int_{\R} dv\,\tilde{P}(v)\,(P_{t}(x,k-v)-P_{t}(x,k))
\end{align}
This dynamics is translation invariant and so the momentum process
has now become a Markov process.
%In particular, the momentum is a
%Levy process with charaterist function with a Levy measure
%determined by the density $\tilde{\mathcal{R}}\tilde{P}(v)$.
Showing that $(t^{-{\frac{3}{2}}}X_{st},\,t^{-\frac{1}{2}}K_{st})$
converges to
$(\int_{0}^{s}dr\,\mathbf{B}_{r},\,\mathbf{B}_{s})$ is then
straightforward.

Another way of expressing this result on the level of
single-time marginals is to consider the rescaled density
$t^2\,P_{t}(t^{3/2}\,x,t^{1/2}\,k)$ at time $t$.  Its limit
$t\uparrow \infty$ is Gaussian $P_{\infty}(x,k)$,
 $$P_{\infty}(x,k)=  \frac{\sqrt{3}}{\pi\,\sigma}
  e^{-\frac{6}{\sigma}(x-\frac{k}{2})^{2}-\frac{1}{2\sigma}k^{2} }   $$
and $\sigma=\int_{0}^{1}dx\int_{\R}
dv\,j_{x}(v)\,v^{2}$.  The coupling between position $x$ and momentum $k$ results from the correlation between
the Brownian motion (for the momentum) and its time-integral (for the position).

\subsection{Strategy of proof}
The process $(X_t,K_t)$ is Markovian over the set of
right-continuous paths (having left limits) from $t\in\R^+$ to $\R^2$, bounded over
finite time intervals. Since the position variable is an integral of the
momentum, the proof that
$(t^{-\frac{3}{2}}X_{st},\,t^{-\frac{1}{2}}K_{st})$ converges to
$(\int_{0}^{s}dr\,\mathbf{B}_{r},\mathbf{B}_{s})$ for a Brownian
motion $\mathbf{B}_{s}$ is implied by showing that the momentum
component converges to a Brownian motion.   The momentum process
can be written as
\begin{align}\label{spl}
K_{st} = K_{0}+M_{st}+\int_{0}^{st}dr\,\frac{dV}{dx}(X_{r}
\big)
\end{align}
where $M_{t}$ is the martingale of jumps, $M_t = \sum_s^t v_s$
over the jump times $0\leq s\leq t$ in the Poisson process at rate
$\mathcal{R}$ and $v_s$ is the actual momentum kick.  On the other
hand, $\int_{0}^{t}dr\,\frac{dV}{dx}(X_{r})$ is the net drift due
to the conservative force up to time $t$.

The analysis splits into two semi-independent parts corresponding
to the two last terms in \eqref{spl}. First we show that the
momentum jump part $t^{-\frac{1}{2}}M_{st}$ converges to a
Brownian motion.  That requires establishing a martingale central
limit theorem. Because of the inhomogeneity in the momentum
jumps we need to prove that there is asymptotic regularity in the variances
of the momentum jumps (quadratic variation process). That is realized because the particle spends
most of its time at high energy where translation invariance is recovered.  We call that
 the high energy analysis.\\
Secondly, for the low energy analysis we show that the drift process
$t^{-\frac{1}{2}}\int_{0}^{st}dr\,\frac{dV}{dx}(X_{r})$ makes a
vanishing contribution for large times; in other words the variance of the time integral of the drift converges to zero.
Indeed note again that the final process, obtained after scaling, does
not depend on the potential $V$. The superdiffusion of the position makes the position process almost deterministic  and the random kicks become rare compared to the fast
movement of the particle.  Mathematically, in section 6 we use that periods of low energy are well separated by
times of high energy.  There is therefore some independence between the low energy incursions.  Moreover symmetry arguments constrain the gained momentum in each such incursion to have zero expectation.

\subsection{Main theorem}
Our main mathematical result is a central limit theorem for the momentum process.
We give here the precise statement.

\begin{assumptions}\label{AssumpOne}\text{ }

\begin{enumerate}[I.]
\item There exists
 $ 0< r_{1} $ such that for all $a\in [0,\,1]$,
   $r_{1}\leq   \int_{\R}dv\,j_{a}(v)\,v^{2}.$
\item There is $\rho>0$ such that for all $a\in [0,\,1]$,
  $\int_{\R}dv\,\mathcal{P}_{a}(v)\,v^{4}\leq \rho $.
\item $j_{a}(v)=j_{a}(-v)$ \item $\bar{V} > V(x)\geq 0$ is bounded
and has a bounded derivative.
\end{enumerate}

\end{assumptions}

The fist three assumptions are on the rate of momentum jumps. They
should be symmetric, allow spreading but still have a fourth
momentum.  For the Hamiltonian part, both the potential and the
force is bounded.  The assumptions of List~\ref{AssumpOne}  are
designed to be the minimal assumptions for
Section~\ref{SecEnergy}, and most results from later sections
require both List~\ref{AssumpOne} and some of the assumptions from
List~\ref{AssumpTwo}.

\begin{assumptions}\label{AssumpTwo}\text{ }

\begin{enumerate}[i.]
\item There exists $\mathcal{C}$ and $\eta>0$ such that for all
$a\in[0,\,1]$ and $v,w\in \R$ with $|v|-|w|\geq 0$
$$  \mathcal{P}_{a}(w)\leq \mathcal{C}\,e^{-\eta (|v|-|w|)}\mathcal{P}_{a}(v).      $$

\item    There exists a $\mu$ such that for all $a\in [0,\,1]$,
$$\sup_{v\in \R}\Big[\big(\mathcal{P}_{a}(v)\big)^{-1}(1+|v|)^{-1} \sup_{|w-v|\leq 1}\big|\frac{d\mathcal{P}_{a}}{dv}(w)\big| \Big]\leq \mu. $$.

\item There exists a reflection $R$ on the torus such that
$V(R(x))=V(x)$ and $j_{R(a)}(v)=j_{a}(v)$ for $a\in [0,\,1]$ and
$v\in \R$.

\end{enumerate}

\end{assumptions}

Condition (\textit{i}) implies that the Laplace
transform of $\mathcal{P}_{a}$ is finite in a neighborhood around
zero and thus that the fourth moment as in (\textit{II}) and all
other moments are finite.

In later sections, $r_{1}$, $r_{2}$, $\nu$ will be defined as
$$r_{1}= \inf_{a\in [0,1]} \int_{\R}dv\,j_{a}(v)\,v^{2}, \quad r_{2}=\sup_{a\in [0,1]} \int_{\R}dv\,j_{a}(v)\,v^{2}, \quad  \nu=\inf_{a\in [0,1]} \kappa_{a}.  $$
Since $j_{a}(v)=\mathcal{R}\kappa_{a}\mathcal{P}_{a}(v)$, the condition (\textit{I}) and (\textit{II}) with Jensen's inequality imply that $ 0< r_{1}\mathcal{R}^{-1}\rho^{-\frac{1}{2}}\leq \nu $.  Also by (\textit{II}) and Jensen's inequality, $r_{2}<\mathcal{R}\rho^{\frac{1}{2}}<\infty$.

(\textit{III}) and (\text{iii}) are the symmetries that we assume
for the dynamics. (\textit{III}) says that for every point $a\in
[0,1]$ in the periodic cell, the rate of kicks by a momentum $v$
occurs with the same rate as kicks by a momentum $-v$. (\text{iv})
specifies that in addition to the periodicity of the dynamics,
there is also a spatial reflection symmetry.  The combination of
these symmetries forms a  ``momentum time-reversal symmetry''
which is used in Section~\ref{Torus}.

(\textit{ii}) is a technical assumption so that the values of the derivative $|\frac{d\mathcal{P}_{a}}{dv}(w)\big|$ for $w$ in a neighborhood around $v$ cannot be to large compared to the value $\mathcal{P}_{a}(v)$.  The constraint becomes more flexible at large $|v|$, where the ratio is allowed to increase as $|v|$.  This condition effectively forbids densities with tails that vanish faster than a Gaussian density $G(v)$ in which
$$\frac{ \big| \frac{dG}{dv}(v)\big| }{G(v)}\propto |v|.$$
The support of $\mathcal{P}_{a}(v)$ for each $a$ cannot be finite for instance.
Avoiding this decay of $\mathcal{P}(v)$ as $|v|\rightarrow \infty$ is not essential to the analysis, but generalizing the condition (\textit{ii}) (for instance by replacing $(1+|v|)$ by $(1+|v|)^{m}$) requires making other conditions more complicated.

\begin{theorem}[Main result]\label{MainThm}
Assume List~\ref{AssumpOne}, List~\ref{AssumpTwo}, and that the
initial joint  phase space distribution $P_{0}(x,\,k) \in
L^1(\R^2)$ has finite second moments.  In the limit $t\rightarrow
\infty$, $(t^{-\frac{3}{2}}X_{st},t^{-\frac{1}{2}}K_{st})$
converges in distribution to
$(\int_{0}^{s}dr\,\mathbf{B}_{r},\mathbf{B}_{s})$ where
$\mathbf{B}_{s}$ is Brownian motion with diffusion constant
$\sigma=\int_{0}^{1}da\int_{\R} dv\,j_{a}(v)\,v^{2}$.
\end{theorem}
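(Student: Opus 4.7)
The position variable is literally the time-integral of the momentum, so once we establish that $t^{-1/2} K_{st}$ converges (as a process in $s$) to Brownian motion with diffusion constant $\sigma$, the joint convergence to $(\int_0^s \mathbf{B}_r\, dr,\mathbf{B}_s)$ follows by the continuous mapping theorem applied to the functional $f\mapsto (\int_0^{\cdot}f(r)\,dr, f)$ on $D([0,\infty),\R)$ with the Skorokhod topology. The first task is therefore to split $K_{st}$ using \eqref{spl} and handle the two pieces separately: the pure-jump martingale $t^{-1/2} M_{st}$ and the deterministic-force term $t^{-1/2}\int_0^{st} V'(X_r)\,dr$. The initial value $t^{-1/2} K_0$ tends to $0$ under the finite-second-moment assumption.

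\textbf{Martingale CLT for the jump part.}  The process $M_t$ is a square-integrable martingale with predictable quadratic variation
\[
\langle M\rangle_t \;=\; \int_0^{t} \Big(\int_{\R} dv\, j_{X_r}(v)\, v^2\Big)\,dr.
\]
I would invoke a standard martingale functional CLT (e.g.\ Rebolledo / Jacod--Shiryaev): it suffices to show that (a) $t^{-1}\langle M\rangle_{st}\to \sigma\, s$ in probability for each $s\ge 0$, and (b) a Lindeberg--type jump-size control, e.g.\ $t^{-1}\sum_{r\le st}(\Delta M_r)^2\mathbf{1}_{|\Delta M_r|\ge \varepsilon\sqrt{t}}\to 0$. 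Condition (b) is an easy consequence of the uniform fourth-moment bound from Assumption (\textit{II}), by a Chebyshev-style argument on the compensator of $\sum (\Delta M)^4$. For (a), the key input from the high-energy analysis (Sections \ref{SecEnergy}, \ref{SecReflection}, \ref{SecEstimates}) is that on any fixed time window $[0,st]$, apart from a time-fraction tending to $0$, the particle has momentum $|K_r|\gg 1$ and therefore traverses each period of $V$ on a time scale much shorter than the inter-kick time. Consequently the occupation measure of $X_r\,\mathrm{mod}\,1$ up to time $t$ is close to Lebesgue on $[0,1]$, and the time-average of $\int_{\R} j_{X_r}(v)v^2\,dv$ converges to the spatial average $\sigma$. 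I would quantify this by decomposing the time integral into excursions between successive ``high-energy'' and ``low-energy'' phases (whose existence and sharp separation are already controlled in Section~\ref{SecEnergy}) and using the bound on $r_2$ to discard the low-energy part.

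\textbf{Vanishing of the drift.}  The hard part is to show
\[
t^{-1/2}\int_0^{st}\frac{dV}{dx}(X_r)\,dr \;\xrightarrow{\;t\to\infty\;}\; 0 \quad \text{in probability.}
\]
Let $\tau_1<\tau_2<\dots$ denote the endpoints of the low-energy incursions identified in Section~\ref{SecDrift}, and write $\Delta_i=\int_{\tau_i}^{\tau_{i+1}} V'(X_r)\,dr$.  At high energy the integrand is effectively an exact derivative: $V'(X_r)\,dr \approx -dV(X_r)/K_r + O(K_r^{-2})$ style identities, which after integration by parts are bounded by $O(|K_r|^{-1})$ uniformly, so that the cumulative high-energy contribution is $o(t^{1/2})$ using the $|K_r|\sim r^{1/2}$ growth established in Section~\ref{SecReflection}. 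The dangerous contribution is the low-energy incursions, whose number up to time $t$ is at most of order $t^{1/2}$ (since between two incursions the energy must drop from order $t$ back to $O(1)$, and momentum evolves diffusively). Here the reflection symmetry (iii) combined with the symmetry (\textit{III}) of the kicks implements a ``momentum time-reversal'' coupling: for each incursion the law of the gained momentum is symmetric, so $\E[\Delta_i\mid\mathcal{F}_{\tau_i}]=0$, giving a martingale structure among the $\Delta_i$. A second moment estimate $\E[\Delta_i^2]\le C$, uniform in $i$ (using $\bar V<\infty$ and the separation between successive low-energy incursions to get quasi-independence), together with the $O(t^{1/2})$ bound on the number of incursions gives $\mathrm{Var}\big(\sum_i \Delta_i\big)=O(t^{1/2})$, which after dividing by $t^{1/2}$ vanishes.

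\textbf{Main obstacle.}  The delicate step is the control of the low-energy drift: the integrand $V'(X_r)$ is bounded but not small, and on a single low-energy incursion $\Delta_i$ may be of order one. What saves the argument is the cancellation provided by the reflection symmetry $R$, which must be propagated through the stopping-time structure of the incursions. Making this rigorous requires combining the strong Markov property at the entrance to each low-energy region with the symmetry of the invariant exit distribution on the torus (the construction alluded to in Section~\ref{Torus}), and also confirming that successive incursions are sufficiently decorrelated via the intervening high-energy phases. Once the two pieces are assembled --- the martingale CLT for $t^{-1/2}M_{st}$ yielding $\mathbf{B}_s$ with variance $\sigma\,s$, and the negligibility of the drift term --- Slutsky and the continuous mapping step promote convergence of the momentum process to the stated joint convergence of $(t^{-3/2}X_{st}, t^{-1/2}K_{st})$, completing the proof.
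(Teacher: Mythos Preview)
Your overall architecture is the paper's: reduce the pair $(t^{-3/2}X_{st},t^{-1/2}K_{st})$ to the momentum process via $X_{st}=X_0+\int_0^{st}K_r\,dr$, split $K$ as in \eqref{spl}, apply a martingale functional CLT to $t^{-1/2}M_{st}$ (verifying Lindeberg from Assumption~(II) and $t^{-1}\langle M\rangle_{st}\to\sigma s$ from the high-energy ergodization of Lemma~\ref{EIL} and Lemma~\ref{SingleJump}), and show the drift term vanishes. The martingale part and the continuous-mapping step are fine and match Theorem~\ref{MartBrown} and the proof in Section~\ref{SecMain}.

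The gap is in your treatment of the low-energy drift. The assertion ``for each incursion the law of the gained momentum is symmetric, so $\E[\Delta_i\mid\mathcal{F}_{\tau_i}]=0$'' is not correct, and this is precisely where the paper does the real work. The reflection symmetry $(x,k)\mapsto (R(x),-k)$ is a symmetry of the dynamics, but the state $(X_{\sigma_j},K_{\sigma_j})$ at the start of an incursion is \emph{not} invariant under it (the position is whatever it happens to be, and the momentum has a definite sign), so the conditional law of $\Delta_i$ given $\mathcal{F}_{\sigma_j}$ has no reason to be symmetric. What the paper actually establishes is a two-step mechanism: (i) by the boundary-crossing analysis of Appendix~\ref{AppendixBC} and Proposition~\ref{OddsEnds}(2), the conditional expectation $\E[Y_{\mathbf{s}_1,\mathbf{s}_2}(m)\mid\mathcal{F}_{\theta_{j-1}}]$ loses memory of the starting point and converges to deterministic constants $c_{\mathbf{s}_1,\mathbf{s}_2}(t)$ indexed by the entry and exit \emph{signs}; (ii) Proposition~\ref{UtSym} then shows $c_{+,+},c_{-,-}\to 0$ and $c_{+,-}+c_{-,+}\to 0$, not by a direct pairing of trajectories, but by applying the combined time-reversal/torus-reflection symmetry to an \emph{equilibrium} first-entrance/last-exit distribution $\phi_t^*$ constructed in Lemma~\ref{Equilibrium}. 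The off-diagonal constants $c_{+,-}$ and $c_{-,+}$ need not vanish individually; their sum is what cancels, and Theorem~\ref{BigDriftLem} closes the argument by using that the number of sign-changing incursions in each direction differ by at most one (together with Lemma~\ref{ExitProb} and Lemma~\ref{ExitProbTwo}).

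Your quantitative bookkeeping is also off relative to the paper's setup: the incursion thresholds are at $|K|\sim t^{1/4}$, not $O(1)$; the number of incursions is $O(t^{1/4})$ by the submartingale up-crossing inequality (Lemma~\ref{FiniteTimes}(2)), not $O(t^{1/2})$; and the second moment of the raw drift over one incursion is $O(t^{1/2})$, not $O(1)$ (Proposition~\ref{OddsEnds}(1)). The arithmetic still closes ($t^{-1}\cdot t^{1/4}\cdot t^{1/2}\to 0$), but only after replacing the incorrect martingale claim by the ``constants plus martingale remainder'' decomposition of Proposition~\ref{OddsEnds}(3). In short: the strategy is right, but the symmetry does not give zero conditional mean directly --- it gives cancellation of limiting constants after an equilibrium construction, and that step is the heart of Section~\ref{SecDrift}.
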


\section{A martingale central limit theorem}\label{SecEnergy}
In this section, we prove that the typical energy for the particle
is on the order of $t$. That implies a regularity in the momentum
process, at least concerning its absolute value and for the
quadratic variation of the momentum jumps.  The net result is a
martingale central limit theorem for the martingale part in
\eqref{spl}.  Note that we always assume the natural filtration
${\cal F}_t$ specifying the Markov process up to time $t$.

\begin{theorem} \label{MartBrown}
Assume List~\ref{AssumpOne} and (\textit{i})-(\textit{ii}) of List~\ref{AssumpTwo} and that the initial joint distribution $P_{0}(x,\,k)$ has finite second moments.   Then $t^{-\frac{1}{2}}M_{st}$ converges in distribution to a Brownian motion $\mathbf{B}_{s}$ with diffusion constant $\sigma$.
\end{theorem}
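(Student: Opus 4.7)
The plan is to apply a martingale central limit theorem in Rebolledo's form for continuous-time jump martingales to the rescaled process $M^{(t)}_s := t^{-1/2}M_{st}$. Two conditions must be verified: a Lindeberg-type bound on the jumps, and convergence in probability of the predictable quadratic variation to $s\sigma$. The relevant compensator is
\[
\langle M^{(t)}\rangle_s \;=\; \frac{1}{t}\int_{0}^{st}\Phi(X_r)\,dr, \qquad \Phi(x):=\int_{\R}dv\,j_x(v)\,v^2,
\]
which is the time integral along the trajectory of a periodic function with spatial mean $\sigma$.

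The Lindeberg condition is the easy half. By Assumption II and Chebyshev's inequality, $\int_{|v|>\epsilon\sqrt{t}} v^2\mathcal{P}_a(v)\,dv \leq \rho/(\epsilon^2 t)$ uniformly in $a\in[0,1]$, so
\[
\frac{1}{t}\int_0^{st}dr\int_{|v|>\epsilon\sqrt{t}} j_{X_r}(v)\,v^2\,dv \;\leq\; \frac{s\,\mathcal{R}\,\rho}{\epsilon^2\,t} \;\longrightarrow\; 0,
\]
and the stochastic Lindeberg statement follows from this bound on the compensator by standard arguments for jump martingales.

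The core of the argument is the convergence $\langle M^{(t)}\rangle_s \to s\sigma$ in probability. The idea is ergodic time-averaging along a trajectory that, once the energy is large, sweeps across each periodic cell very quickly. I would first invoke the high-energy result of this section to conclude that, for any $\alpha>0$, the Lebesgue measure of the low-energy set $L^{(\alpha)}_t := \{r\in [0,st]:K_r^2<\alpha t\}$ is $o(t)$ in probability. Since $0\leq\Phi\leq r_2$, its contribution to $\langle M^{(t)}\rangle_s$ from $L^{(\alpha)}_t$ is at most $r_2|L^{(\alpha)}_t|/t=o_{\mathbb P}(1)$. On the complementary high-energy set, I would introduce the bounded periodic antiderivative $\Psi$ of $\Phi-\sigma$ (well-defined because $\Phi-\sigma$ has zero spatial mean) and exploit the identity
\[
\Phi(X_r)-\sigma \;=\; \frac{1}{K_r}\,\frac{d}{dr}\Psi(X_r),
\]
valid between Poisson times at which $K_r\neq 0$. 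Integration by parts over each smooth piece, followed by summation over the Poisson jumps, reduces $\int(\Phi(X_r)-\sigma)\,dr$ on the high-energy set to: boundary terms of size $O(\|\Psi\|_\infty/\sqrt{\alpha t})$; a drift remainder $\int \Psi(X_r)V'(X_r)/K_r^2\,dr$ controlled by $\|V'\|_\infty$ together with $|K_r|\geq \sqrt{\alpha t}$; and a jump remainder summing differences of $\Psi(X_r)/K_r$ across the kicks of $K_r$.

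The main obstacle is the last piece. Individual momentum kicks can themselves be of order $\sqrt{t}$, so any pointwise estimate on a single summand is only $O(1)$ while the number of Poisson times up to $st$ is $O(t)$; the bound therefore has to exploit an average smallness of $v_r/K_r^2$, using $|K_r|\geq \sqrt{\alpha t}$ on the high-energy set together with the fourth moment bound $\rho$ and the regularity of $\mathcal{P}_a$ provided by (\textit{ii}) of List~\ref{AssumpTwo}. That regularity is what allows one to convert a formal pointwise difference of $\Psi(X_r)/K_r$ across a jump into an $L^1$ estimate that, once divided by $t$, tends to zero in probability. Once both conditions of Rebolledo's theorem are in place, one obtains convergence of $M^{(t)}_\cdot$ in the Skorokhod sense to a continuous centered Gaussian martingale with independent increments and variance $s\sigma$ at time $s$, i.e.\ a Brownian motion with diffusion constant $\sigma$, which is the claim.
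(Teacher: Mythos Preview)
Your plan is sound and the Lindeberg half is right, but the route you take to $\langle M^{(t)}\rangle_s\to s\sigma$ is genuinely different from the paper's. The paper does not introduce an antiderivative $\Psi$ or integrate by parts. It partitions $[0,st]$ by the Poisson times and observes that whenever $|K_{t_n}|\geq \tfrac12\epsilon\sqrt t$ the deterministic Hamiltonian trajectory equidistributes on the torus on the time scale of one Poisson interval: Part~(4) of Lemma~\ref{SingleJump} gives $|T_{(x,k),\tau}(a)-\tau|\leq 2\tau/|k|$ for the occupation density, hence
\[
\mathbb E_{(X_{t_n},K_{t_n})}\Big[\big|\langle M\rangle_{t_{n+1}}-\langle M\rangle_{t_n}-\sigma(t_{n+1}-t_n)\big|\Big]\;\leq\;\frac{2r_2}{\mathcal R\,|K_{t_n}|}\;\leq\;\frac{4r_2}{\mathcal R\,\epsilon\sqrt t}.
\]
Summed over $\mathcal N_t=O(t)$ intervals and divided by $t$ this is $O(\epsilon^{-1}t^{-1/2})$; the Energy Lemma~\ref{EIL} then bounds the fraction of intervals with $|K_{t_n}|<\epsilon\sqrt t$ by $\delta$ with high probability, contributing at most $r_2\delta$. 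Your corrector argument trades this direct spatial-averaging bound for an algebraic identity, which is a legitimate alternative in the Kipnis--Varadhan spirit; the paper's method has the advantage of being essentially a one-line estimate per Poisson interval.

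Two comments on the execution. First, you overstate the difficulty of the jump remainder: on the high-energy set each summand $\Psi(X_{t_n})\,v_n/(K_{t_n^-}K_{t_n})$ is $O(|v_n|/(\alpha t))$, so the whole sum is $O(\alpha^{-1})$ in expectation and vanishes after division by $t$; only the fourth-moment bound~(II) is needed to dispose of the rare jumps with $|v_n|\gtrsim\sqrt{\alpha t}$, and the derivative hypothesis~(\textit{ii}) on $\mathcal P_a$ plays no role whatsoever in this step. Second, the piece you do not address is the collection of boundary terms $\Psi(X_r)/K_r$ at the transitions between the high- and low-energy regimes: the high-energy set need not be an interval, and the number of such transitions must be controlled (e.g.\ by the submartingale up-crossing inequality for $E_r^{1/2}$ as in Part~(2) of Lemma~\ref{FiniteTimes}) before you can conclude that their total contribution is $o(1)$.
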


The proof follows in Section \ref{SecMain}.  It will be built on  the lemma's below and in the next section.

The following lemma relies on a martingale central limit theorem~\cite{MCLT} and on having some bounds for the time-change of that central limit theorem.   The result applies to more general class of martingales, but we develop it here to the martingale process
 $$M_{t}^{\prime}= \sum_{n=1}^{\mathcal{N}_{t}}w_{n}S(K_{t_{n}^{-}}),   $$
where $t_{n}$ are the Poisson time for the underlying Poisson clock $\mathcal{N}_{t}$ with rate $\mathcal{R}$, $S(K_{t^{-}})$ is the left limit up to time $t$ for the sign of the momentum, and $w_{n}=M_{t_{n}}-M_{t_{n}^{-}}$.  Note that $w_{n}$ is zero if at the Poisson time $t_{n}$ there is no momentum jump, and it is equal to the momentum jump if it does happen.

\begin{lemma}\label{TimeChange}
For $T_{\epsilon}^{(t)}=\int_{0}^{1}dr\, \chi(|t^{-\frac{1}{2}}M_{rt}'|> \epsilon ) $,  there exists a $C>0$ such that for all $\epsilon,\,\delta>0$
$$ \liminf_{t\rightarrow \infty}\textup{Pr}\big[ T_{\epsilon}^{(t)} \geq 1- \delta\big]\geq 1- C\frac{r_{2}^{\frac{1}{2}}}{r_{1}}\frac{\epsilon}{\delta}.   $$
 The same result remains true with $|t^{-\frac{1}{2}}M_{st}'|$ replaced by $t^{-\frac{1}{2}}M_{s\,t}' - \inf_{0\leq u \leq s} t^{-\frac{1}{2}}M_{u\,t}'$.

\end{lemma}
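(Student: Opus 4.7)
The natural route is a Markov inequality applied to $1 - T^{(t)}_\epsilon$: by Fubini,
$$\E\bigl[1 - T^{(t)}_\epsilon\bigr] = \int_{0}^{1}\textup{Pr}\bigl[|t^{-1/2}M'_{rt}| \leq \epsilon\bigr]\,dr,$$
and Markov gives $\textup{Pr}[T^{(t)}_\epsilon < 1-\delta] \leq \delta^{-1}\E[1 - T^{(t)}_\epsilon]$, so it is enough to show the right-hand side is $O(r_2^{1/2}\epsilon/r_1)$ as $t \to \infty$.

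The key structural input is that the predictable quadratic variation
$\langle M'\rangle_t = \int_0^t (S(K_s))^2\,\mathcal{R}\,\kappa(X_s)\int v^2 \mathcal{P}_{X_s}(v)\,dv\,ds$
satisfies $r_1 t \leq \langle M'\rangle_t \leq r_2 t$ almost surely, since $S(K_s)^2 = 1$ off a Lebesgue-null set and $\int v^2 j_a(v)\,dv \in [r_1, r_2]$. Together with the fourth-moment bound in Assumption \ref{AssumpOne}(II) (which supplies a Lyapunov-type condition on jump sizes), this feeds the martingale CLT of \cite{MCLT}: the rescaled family $(t^{-1/2}M'_{st})_{s\in[0,1]}$ is tight in Skorohod space and every subsequential weak limit $\mathbf{Y}$ is a continuous Gaussian martingale with $r_1 s \leq \langle\mathbf{Y}\rangle_s \leq r_2 s$. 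The functional $y \mapsto \int_0^1 \chi(|y(r)|\leq\epsilon)\,dr$ is almost surely continuous at such a $\mathbf{Y}$ (the continuous Gaussian martingale does not accumulate Lebesgue-positive occupation at the levels $\pm \epsilon$), so continuous mapping plus bounded convergence pass the expectation to its limit counterpart.

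For the continuous limit $\mathbf{Y}$, the occupation density formula gives
$$\int_0^1 \chi(|\mathbf{Y}_r|\leq \epsilon)\,dr \leq \frac{1}{r_1}\int_0^1 \chi(|\mathbf{Y}_r|\leq \epsilon)\,d\langle\mathbf{Y}\rangle_r = \frac{1}{r_1}\int_{-\epsilon}^{\epsilon} L^a_1(\mathbf{Y})\,da,$$
and Tanaka's formula together with Cauchy--Schwarz yields $\E[L^a_1(\mathbf{Y})] \leq \E[|\mathbf{Y}_1 - a|] + |a| \leq \sqrt{\E[\mathbf{Y}_1^2]} + 2|a| \leq \sqrt{r_2} + 2|a|$. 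Integrating over $a \in [-\epsilon, \epsilon]$ produces an upper bound of order $r_2^{1/2}\epsilon/r_1$ (with a negligible $\epsilon^2/r_1$ term absorbed into the constant), which combined with the Markov step delivers the stated liminf estimate.

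The refinement involving $M'_{st} - \inf_{u\leq s}M'_{ut}$ follows by invoking L\'evy's identity $(\mathbf{Y}_s - \inf_{u\leq s}\mathbf{Y}_u) \stackrel{d}{=} |\mathbf{Y}_s|$ for the continuous limit, which leaves the occupation-measure computation unchanged. The most delicate point I foresee is the clean use of the martingale CLT when $t^{-1}\langle M'\rangle_{st}$ is only sandwiched between $r_1 s$ and $r_2 s$ rather than shown to converge; this I expect to handle by observing that the occupation-density bound above is uniform over the class of continuous Gaussian martingales satisfying the same sandwich, so that a subsequential weak-convergence argument suffices. An alternative, avoiding weak convergence altogether, would be to produce a direct quantitative density bound on $M'_{rt}$ at each $t$ via a local-CLT / characteristic-function smoothing argument exploiting Assumption \ref{AssumpOne}(II).
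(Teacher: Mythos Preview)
Your opening reduction (Markov plus Fubini) is identical to the paper's. After that the routes diverge. The paper does not try to pass to a weak limit of $t^{-1/2}M'_{st}$ itself; instead it introduces the deterministic time change $\tau_u=\inf\{s:\ (r_1 t)^{-1}\langle M'\rangle_{st}\ge u\}$ and sets $\tilde B^{(t)}_u=t^{-1/2}M'_{\tau_u t}$. Because the predictable quadratic variation of $\tilde B^{(t)}$ is now \emph{exactly} $r_1 u$, the martingale CLT applies directly and $\tilde B^{(t)}$ converges to a Brownian motion of variance $r_1$ --- no subsequences needed. The pathwise bounds $s\le R_s\le (r_2/r_1)s$ on the inverse time change then give
\[
\int_0^1\chi\bigl(|t^{-1/2}M'_{st}|\le\epsilon\bigr)\,ds\ \le\ \int_0^{r_2/r_1}\chi\bigl(|\tilde B^{(t)}_u|\le\epsilon\bigr)\,du,
\]
and the right-hand side converges to the explicit Gaussian integral $\int_0^{r_2/r_1}\!\int_{-\epsilon}^{\epsilon}(2\pi r_1 u)^{-1/2}e^{-x^2/2r_1 u}\,dx\,du=O(\epsilon\,r_2^{1/2}/r_1)$. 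For the reflected version the paper applies the \emph{same} time change first, so that L\'evy's identity is invoked only for the honest Brownian limit $\tilde B^{(t)}\to\mathbf B$, via the Lipschitz map $y\mapsto y_s+\sup_{r\le s}(-y_r)$.

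Your approach via subsequential limits and Tanaka/occupation density is workable but heavier, and two of your intermediate claims need repair. First, a subsequential weak limit $\mathbf Y$ of $t^{-1/2}M'_{\cdot\,t}$ is a continuous martingale but need not be Gaussian: Gaussianity would require $\langle\mathbf Y\rangle_s$ to be deterministic, which you have no reason to expect here. Fortunately your Tanaka bound $\E[L^a_1]\le\sqrt{\E[\langle\mathbf Y\rangle_1]}\le\sqrt{r_2}$ does not use Gaussianity, so this is cosmetic. Second, and more substantively, your invocation of L\'evy's identity $(\mathbf Y_s-\inf_{u\le s}\mathbf Y_u)\stackrel{d}{=}|\mathbf Y_s|$ for a general continuous martingale $\mathbf Y$ is not immediate: after a DDS time change $\mathbf Y_s=B_{A_s}$, the random clock $A_s$ is a functional of $B$, and there is no reason it is the \emph{same} functional of $|\tilde B|$, so the process-level identity does not transfer automatically. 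The paper's device of time-changing \emph{before} taking the limit avoids this issue entirely, since L\'evy is then applied only to genuine Brownian motion. What the time-change buys, in short, is that all the delicate points you flag at the end simply disappear.
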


\begin{proof}
We start from the lower bound:
\begin{multline}\label{SubmergeTime}
\textup{Pr}\big[ T_{\epsilon}^{(t)} \leq 1-\delta\big]=1- \textup{Pr}\Big[ \int_{0}^{1}dr\,\chi\big(t^{-\frac{1}{2}} |M_{st}'|\leq \epsilon \big) \geq \delta\Big]\\  \geq 1- \frac{1}{\delta}\mathbb{E}\big[\int_{0}^{1}ds\, \chi(|t^{-\frac{1}{2}}M_{s\,t}'|\leq \epsilon )     \big]=1- \frac{1}{\delta}\int_{0}^{1}ds\, \textup{Pr}\big[|t^{-\frac{1}{2}}M_{s\,t}'|\leq \epsilon     \big]
\end{multline}

Define $\tilde{B}_{u}^{(t)}=t^{-\frac{1}{2}}M_{\tau_{u}t}^{\prime}$, where
$\tau_{u}$ is the hitting time
$$\tau_{u}=\inf \left \{ s\geq 0\, \Big|\, \frac{1}{r_{1}t}\langle M^{\prime}\rangle_{st} \, \geq u \,  \right \},
$$
and $\langle M^{\prime}\rangle_{t}$ is the predictable quadratic variation of $M^{\prime}$ up to time $t$.  In our situation, $\langle M^{\prime}\rangle_{t}$ has the form
$$\langle M^{\prime}\rangle_{t}= \int_{0}^{t}dr\,\int_{\R}dv\, j_{X_{r}}(v)\,v^{2}.  $$
By the martingale central limit theorem,  $\tilde{B}_{u}^{(t)}$ converges to a Brownian motion in the uniform metric.  The Lindberg condition is guaranteed by the boundedness of the fourth moment of single momentum jumps in (\text{II}) of List~\ref{AssumpOne} and the fact that the jump times occur according to a Poisson clock (having rate $\mathcal{R}$).

 We  also have  $t^{-\frac{1}{2}}M_{st}^{\prime}=\tilde{B}_{R_{s}}^{(t)}$, where $R_{s}= \frac{1}{r_{1}t}\langle M^{\prime}\rangle_{st} ,$
since $\tau_{u}$ and $R_{s}$ are inverses of one another. By
(\text{II})  of List~\ref{AssumpOne} and the discussion following
it, $ \int_{\R}dv\,j_{a}(v)\,v^{2}$ ranges between the values
$0<r_{1}\leq r_{2}$ for $a\in[0,1]$.  It follows that,
\begin{align}\label{Tandori}
u\,\frac{r_{1}}{r_{2}}\leq \tau_{u}\leq u \quad \text{and} \quad  s\leq R_{s}\leq s\frac{r_{2}}{r_{1}}.
\end{align}

By~(\ref{Tandori}), $R_{s}$ has the range $s\leq R_{s}\leq
\frac{r_{2}}{r_{1}}s $, and thus
$$\int_{0}^{1}ds\, \chi \big(| t^{-\frac{1}{2}} M_{st}^{\prime}|\leq \epsilon      \big)= \int_{0}^{1}ds\, \chi\big(|\tilde{B}_{R_{s}}^{(t)}|\leq \epsilon    \big) \leq \int_{0}^{\frac{r_{2}}{r_{1}} }du\, \chi \big(|\tilde{B}_{u}^{(t)}|\leq \epsilon \big).   $$

Taking the expectation of the right-hand side and using the fact that $\tilde{B}_{s}^{(t)}$ approaches a Brownian motion, we have that
\begin{align*}
\mathbb{E}\big[\int_{0}^{\frac{r_{2}}{r_{1}} }du\, \chi \big(|\tilde{B}_{u}^{(t)}|\leq \epsilon \big) \big]=\int_{0}^{\frac{r_{2}}{r_{1}} }du\,\textup{Pr}\big[|\tilde{B}_{u}^{(t)}|\leq \epsilon \big]\longrightarrow \int_{0}^{\frac{r_{2}}{r_{1}} }du\,\int_{-\epsilon}^{\epsilon}dx\,\frac{e^{-\frac{x^{2}}{2r_{1}s} }}{\sqrt{2\pi r_{1}s }}
\end{align*}
By a change of variables, the right side is bounded by a constant multiple of $\epsilon\, r_{1}^{-1}r_{2}^{\frac{1}{2}}$. With~(\ref{SubmergeTime}) this proves the result.

To generalize the result to $H_{s}^{(t)}=t^{-\frac{1}{2}}M_{s\,t}' - \inf_{0\leq a \leq s} t^{-\frac{1}{2}}M_{a\,t}^{\prime}$, we make the same time-change $\tau$, to define a process   $Z_{u}^{(t)}=H_{\tau_{u}}^{(t)}$.  Since $\tilde{B}_{u}^{(t)}=t^{-\frac{1}{2}}M_{\tau_{u}}'$
converges to a Brownian motion, it will follow that $Z_{u}^{(t)}$ converges to the absolute value of a Brownian motion.  Indeed, the function $f:L^{\infty}([0,\,1])\rightarrow L^{\infty}([0,\,1] )$ defined by $f(x_{s})=x_{s}+\textup{sup}_{0\leq r\leq s}-x_{r}$ (read the supremum as an essential supremum) satisfies $\|f(x_{s})-f(y_{s})\|_{\infty}\leq 2\|x_{s}-y_{s}\|_{\infty}$.  Thus the convergence of $\tilde{B}_{u}^{(t)}$ to a Brownian motion $\mathbf{B}_{u}$ implies that $f\big(\tilde{B}_{u}^{(t)}\big)$ converges in distribution to $f(\mathbf{B}_{u})$.  However, by the basic result for Brownian motion~\cite{Karat},  $ f(\mathbf{B}_{u})$ is equal in distribution to $|\mathbf{B}_{u}|$.  Thus we can apply the same reasoning as above to get the result.

\end{proof}

The lemma below will be used in the proof of Lemma~\ref{EIL}.  Its proof follows from basic calculus but requires consideration of several cases.

\begin{lemma}\label{StickyCases}
For $\bar{V}=\sup_{x\in\R}V(x)<\infty$, there exists a $c>0$ such that
\begin{multline}\label{State}
\Big| 2^{\frac{1}{2}}\,w\,S(k)-\Big[\big(\frac{1}{2} |k+w|^{2}+V \big)^{\frac{1}{2}}-\big(\frac{1}{2}|k-w|^{2}+V\big)^{\frac{1}{2}}  \Big]  \Big|^{2} \\ \leq  2\,w^{2}1_{|w|>J}+ c\,J\Big[ \big(\frac{1}{2} |k+w|^{2}+V\big)^{\frac{1}{2}}+\big(\frac{1}{2}|k-w|^{2}+V \big)^{\frac{1}{2}}-2\big(\frac{1}{2} k^{2}+V\big)^{\frac{1}{2}}  \Big]
\end{multline}
for all $k,w\in \R$, $0\leq V\leq \bar{V}$, and $J\geq V^{\frac{1}{2}}$.

\end{lemma}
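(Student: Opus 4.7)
Both sides of the inequality are invariant under $k\to -k$ and $w\to -w$, so I first reduce to $k,w\geq 0$. Abbreviate $\phi(u)=(u^{2}/2+V)^{1/2}$ and write $p=\phi(k+w)$, $q=\phi(k-w)$, $r=\phi(k)$, $A=\sqrt{2}\,w$, $B=p-q$; the LHS becomes $(A-B)^{2}$ and the bracket on the RHS becomes $p+q-2r$. The function $\phi$ is even, non-decreasing on $[0,\infty)$, convex ($\phi''=V/(2\phi^{3})\geq 0$) and satisfies $|\phi'|\leq 1/\sqrt{2}$. Since $k+w\geq |k-w|$ this gives $0\leq B\leq \sqrt{2}\,w=A$, hence $(A-B)^{2}\leq A^{2}=2w^{2}$; combined with $p+q-2r\geq 0$, this handles the case $|w|>J$ outright.

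For $|w|\leq J$ the proof rests on the algebraic inequality $(A-B)^{2}\leq 2(p+q)(p+q-2r)$. Using $p^{2}+q^{2}=2r^{2}+w^{2}$ one verifies the identity $(A-B)(A+B)=A^{2}-B^{2}=(p+q-2r)(p+q+2r)$, and then a short rearrangement gives
\[2(p+q)(p+q-2r)-(A-B)^{2}\;=\;\frac{(p+q-2r)\,\bigl[A(p+q-2r)+B(3(p+q)+2r)\bigr]}{A+B},\]
whose every factor is manifestly $\geq 0$ (the key one being $p+q\geq 2r$, again by convexity).

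It then remains to split on the size of $k$. If $k\leq 2J$, the bound $\phi(u)\leq |u|/\sqrt{2}+\sqrt{V}$ combined with $\sqrt{V}\leq J$ yields $p+q\leq \sqrt{2}\max(k,w)+2\sqrt{V}\leq 2(\sqrt{2}+1)\,J$, so the algebraic inequality directly gives $(A-B)^{2}\leq 4(\sqrt{2}+1)\,J\,(p+q-2r)$. If $k>2J$, then $[k-w,k+w]\subset [J,\infty)\subset[\sqrt{V},\infty)$ and explicit high-energy estimates take over: the bound $1/\sqrt{2}-\phi'(u)\leq V/(\sqrt{2}\,u^{2})$ for $u\geq \sqrt V$ yields $A-B=\int_{k-w}^{k+w}(1/\sqrt{2}-\phi'(u))\,du\lesssim Vw/k^{2}$, while $\phi''(u)\gtrsim V/k^{3}$ on the same interval yields $p+q-2r\gtrsim Vw^{2}/k^{3}$; combining, $(A-B)^{2}/(p+q-2r)\lesssim V/k\leq J/2$ (using $k>2\sqrt{V}$). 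Taking $c$ as the maximum of the constants produced yields the lemma. The only real obstacle is tracking the constants cleanly in this last subcase; the underlying argument is driven throughout by the identity $(A-B)(A+B)=(p+q-2r)(p+q+2r)$ and the convexity of $\phi$.
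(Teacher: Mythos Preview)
Your proof is correct and takes a genuinely different route from the paper's. The paper first rescales by $V$ to normalize to $(\tfrac{1}{2}k^{2}+1)^{1/2}$, and then carries out a four-case asymptotic analysis in the $(w,k)$-plane (small $|w|$; large $|w|$ with $|k|\ge |w|+|w|^{1/2}$; large $|w|$ with $|k\pm w|\le |w|^{1/2}$; large $|w|$ with $|k|\le |w|^{1/2}$), using Taylor expansions of $M$ and $N$ and comparing orders in each regime. Your argument is considerably cleaner: the single algebraic observation $p^{2}+q^{2}=2r^{2}+w^{2}$ yields $(A-B)(A+B)=(p+q-2r)(p+q+2r)$ and then the exact identity
\[
2(p+q)(p+q-2r)-(A-B)^{2}=\frac{(p+q-2r)\bigl[A(p+q-2r)+B(3(p+q)+2r)\bigr]}{A+B}\ge 0,
\]
after which only the elementary split $k\le 2J$ versus $k>2J$ is needed, the second handled by the straightforward derivative bounds $1/\sqrt{2}-\phi'(u)\le V/(\sqrt{2}u^{2})$ and $\phi''(u)\gtrsim V/k^{3}$ on $[k-w,k+w]$. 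Your method gives explicit absolute constants with no asymptotic language and no separate treatment of the $V=0$ case; the paper's approach, by contrast, makes the competing asymptotic scales of the two sides more transparent but at the cost of a lengthier case-by-case verification. One small remark: in the high-energy subcase you should note that when $V=0$ both $A-B$ and $p+q-2r$ vanish identically (so the ratio argument is vacuous but the inequality holds trivially); otherwise the argument is complete.
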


\begin{proof}
When $V=0$, we use the identity
$$\big|S(k)w-\frac{1}{2}\big(|k+w|-|k-w|\big)\big|= \frac{1}{2}\big(|k+w|+|k-w|-2|k| \big). $$
Hence, for $c=2$,
\begin{align}\label{VZeroCase}
\big| 2^{\frac{1}{2}}S(k)w-2^{-\frac{1}{2} }\big(|k+w|-|k-w|\big)\big|^{2}\leq 2\,v^{2}\chi(|w|> J)+\frac{c\,J}{2}\big(|k+w|+|k-w|-2|k| \big),
\end{align}
since $\big|S(k)w-\frac{1}{2}\big(|k+w|-|k-w|\big)\big|\leq |w|$.

 When $V\neq 0$, we define
\begin{eqnarray*}
M(w,k)&=&2\,w\,S(k)-\big( |k+w|^{2}+1 \big)^{\frac{1}{2}}+\big( |k-w|^{2}+1\big)^{\frac{1}{2}}   \\
N(w,k)&=& \big( |k+w|^{2}+1\big)^{\frac{1}{2}}+\big(|k-w|^{2}+1 \big)^{\frac{1}{2}}-2\big( k^{2}+1\big)^{\frac{1}{2}}
\end{eqnarray*}
 We can divide the inequality~\eqref{State} by $V$, and let $2^{-\frac{1}{2}}V^{-\frac{1}{2}}k\rightarrow k$, $2^{-\frac{1}{2}}V^{-\frac{1}{2}}w\rightarrow v$, and $J  \,V^{-\frac{1}{2}}\rightarrow J$ so that the inequality we must show is
\begin{align}\label{Normalized}
|M(w,k)|^{2} \leq  4\,w^{2}1_{|w|>J}+ c\,J\,N(w,k)
\end{align}
for some $c$.  For $|w|>J$ the inequality is trivial, since the
$M(w,k)$ is smaller than $2\,|w|$.  Hence we can take  $J=|w|\vee
1$.   If we show that there is such a $c^{\prime}$ such
that~(\ref{Normalized}) holds for $(w,k)\in\R^{2}-D$ for the
compact domain $D=([-L,-L^{-1}]\cup [L^{-1}, L])\times [-L,L]$ for
some $L>1$, then, since $M(w,k)$ is bounded inside $D$ and
$N(w,k)$ is bounded away from zero inside of $D$, it follows that
there exits a $c$ verifying~\eqref{Normalized} for all
$(w,k)\subset \R^{2}$. The proof that there exits such a
$c^{\prime}$ for all $(w,k)\in \R^{2}-D$ for some $L\gg 1$ is
partitioned into the following main cases
\begin{enumerate}
\item $|w|\ll 1,$

\item $|w|\gg 1$ and $|k|\geq |w|+|w|^{\frac{1}{2}},$

\item $|w|\gg 1$ and $|k+w|\leq |w|^{\frac{1}{2}}$ or $|k-w|\leq |w|^{\frac{1}{2}}$,

\item $|w|\gg 1$ and $|k|\leq |w|^{\frac{1}{2}}.$

\end{enumerate}
We go through each case and we show that $|M(w,k)|^{2}$ is smaller
than some constant multiple of $(|w|\vee1)\,N(w,k)$. \vspace{.2cm}

\noindent Case (1):  The origin can only be between $k-w$ and $k+w$ when $|k|\leq |w|  \ll 1$, in which case
$$|M(w,k)|^{2}\leq w^{2}<2w^{2}=4\big[\frac{1}{2}(k+w)^{2}+\frac{1}{2}(k-w)^{2}-k^{2}\big]\approx 4 N(w,k).$$
When the origin is not between $k-w$  and $k+w$ then by Taylor's formula for $\big(|k+w|^{2}+1 \big)^{\frac{1}{2}}$ up to first order and using that $k+a$ has the same sign as $k$ for $|a|\leq |w|$,
\begin{align}\label{CapM}
M(w,k)= S(k) \int_{-w}^{w}     da \frac{|k+a|-\big( |k+a|^{2}+1 \big)^{\frac{1}{2}} }{\big( |k+a|^{2}+1 \big)^{\frac{1}{2}} }.
\end{align}
Using Taylor's formula to second order,
\begin{align}\label{CapN}
N(w,k)=\int_{-|w|}^{|w|}da\,\frac{|w|-|a|}{ \big( |k+a|^{2}+1 \big)^{\frac{3}{2}} }.
\end{align}
 for all $w$ and $k$.  From these formulas we can take upper and lower bounds for $|M(w,k)|^{2}$ and $N(w,k)$ respectively:
$$|M(w,k)|^{2}\leq |w|^{2}\frac{\big|\, |k|-|w|-\big((|k|-|w|)^{2}+1\big)^{\frac{1}{2}}\big|^{2}   }{ (|k|-|w|)^{2}+1}\leq |w|^{2}\frac{ 1\wedge (|k|-|w|)^{-2} }{ (|k|-|w|)^{2}+1          }   $$
$$N(w,k)\geq \frac{1}{2}|w|^{2}\frac{1}{\big( (|k|+|w|)^{2}+1\big)^{\frac{3}{2}} },           $$
where we have used the monotonicity of the functions involved, and the second inequality for $|M(w,k)|^{2}$ includes the inequality
\begin{align}\label{Facile}
\big| |a|-(a^{2}+1)^{\frac{1}{2}}\big|\leq 1\wedge |a|^{-1} .
\end{align}
For large $|k|$, $|M(w,k)|^{2}$ has order $|k|^{-4}$  and $N(w,k)$ has order $|k|^{-3}$.  Thus we can find a constant $c^{\prime}$ so that~(\ref{Normalized}) holds for all $(w,k)$ with $|w|<L^{-1}$ for some $L>1$. \vspace{.2cm}

\noindent Case (2):  When $|k|\geq |w|+|w|^{\frac{1}{2}}$, then we can apply~(\ref{CapM}), ~(\ref{CapN}), and the same estimates as above to show that the asymptotics for $|M(w,k)|^{2}$ has lower order than $|v|\,N(v,k)$. \vspace{.2cm}

\noindent Case (3): It is convenient to follow the pattern in~(\ref{VZeroCase}) and write $M(w,k)$ as $M=M_{1}+M_{2}$, where
\begin{eqnarray*}
M_{1}(w,k) &=& 2\,v\,s(k)+|k+w|-|k-w|\\  M_{2}(w,k)& =& \big( |k+w|^{2}+1 \big)^{\frac{1}{2}}-|k+w|-\big( |k-w|^{2}+1 \big)^{\frac{1}{2}}+|k-w| .
\end{eqnarray*}
Notice that $M_{1}(w,k)$ will only be non-zero when $|k|\leq |w|$,
and that it can be written as
$M_{1}(w,k)=2S(k)S(w)\big(|w|-|k|\big)\chi(|w|\geq |k|)$. Applying
the inequality $|x+y|^{2}\leq 2\,x^{2}+2\,y^{2}$:
\begin{align}
|M(w,k)|^{2}\leq  2\,M_{1}^{2}(w,k) + 2\,M_{2}^{2}(w,k).
\end{align}

We start by bounding $M_{1}$, since it will be used for the
$M_{2}$ term. For $|w|\gg 1$ we claim that
\begin{align}\label{Key}
\big(|w|-|k|\big)\chi(|w|\geq |k|)< N(w,k),
\end{align}
which implies a bound  for $M_{1}$, since
$$ M_{1}=4\big(|w|-|k|\big)^{2}\chi(|w|\geq |k|)\leq  4|w|(|w|-|k|)\chi(|w|\geq |k|)\leq 4|w|M_{2}(w,k).$$  The expression
$\big(|w|-|k|\big)\chi(|w|\geq |k|)$ has its maximum at $k=0$ and
decreases linearly to zero at $|k|=|w|$.  At $k=0$,
$\big(|w|-|k|\big)\chi(|w|\geq |k|)=|w|<2|w|\approx N(0,k) $.
However, $N(w,k)$ decreases at a slightly less than linear rate
and $N(w,w)\approx 1$.

For the $M_{2}^{2}(w,k)$ term, we will finally use the case condition that either $|k+w|\leq |w|^{\frac{1}{2}}$ or $|k-w|\leq |w|^{\frac{1}{2}}$.  Without loss of generality, let it be that $|k-w|\leq |w|^{\frac{1}{2}}$.  Then $|k+w|\gg 1$ and so by~(\ref{Facile}) $\big| \big(|k+w|^{2}+1 \big)^{\frac{1}{2}}-|k+w|\,\big|\sim |k+w|^{-1}\sim |w|^{-1} $.  Thus
\begin{multline}\label{Hiccup}
\Big( \big(|k+w|^{2}+1 \big)^{\frac{1}{2}}-|k+w|- \big(|k-w|^{2}+1 \big)^{\frac{1}{2}}+|k-w| \Big)^{2}\\  = \big[(|k-w|^{2}+1)^{\frac{1}{2}}-|k-w|\big]^{2}+\mathit{O}(|w|^{-1})  ,
\end{multline}
which is $\mathit{O}(1)$. On the other hand,
$$|w|\,N(w,k)\approx |w|\,\big[(|k-w|^{2}+1)^{\frac{1}{2}}+|k+w|-|k|+\mathit{O}(|w|^{-1}) ]=\mathit{O}( |w|^{2}),   $$
since $(|k-w|^{2}+1)^{\frac{1}{2}}=\mathit{O}(|w|^{\frac{1}{2}})$. Thus $|w|\,N(w,k)$ has higher order than~(\ref{Hiccup}). \vspace{.2cm}\\

\noindent Case (4):  Notice that
$$ M_{2}(w,k) =\int_{-w}^{w}da \frac{\,k+a-S(k+a)\big( |k+a|^{2}+1 \big)^{\frac{1}{2}} }{\big( |k+a|^{2}+1 \big)^{\frac{1}{2}}}
$$
where we have expanded the term $\big(|k+w|^{2}+1
\big)^{\frac{1}{2}}$ appearing in $M_{2}$ to first order as in
case (1) and we have written $|k+w|-|k-w| =
\int_{-w}^{w}da\,S(k+a)$. Now,
\begin{multline}
\Big|  \int_{-w}^{w}da \frac{\,k+a-S(k+a)\big( |k+a|^{2}+1 \big)^{\frac{1}{2}} }{\big( |k+a|^{2}+1 \big)^{\frac{1}{2}} } \Big|\\ \leq  2\big(|w|-|k|\big)+    \Big|  \int_{-w}^{w}da\, \chi(|k+a|\geq |w|)  \,     \frac{\,k+a-S(k+a)\big( |k+a|^{2}+1 \big)^{\frac{1}{2}} }{\big( |k+a|^{2}+1 \big)^{\frac{1}{2}} } \Big| \\ \leq 2\big(|w|-|k|\big)+  2\int_{|w|^{\frac{1}{2}}}^{|w|}da\,|a|^{-2}= 2\big(|w|-|k|\big)+\mathit{O}(|w|^{-\frac{1}{2}})  ,
\end{multline}
where in the first inequality  we have used that the integrand on
the left is bounded by $1$ over the interval $a\in[k-2|w|,k+
2|w|]$.  The second inequality follows since $|k+a|^{-2}$ bounds
the integrand, and $\int_{|w|^{\frac{1}{2}}}^{|w|}da\,|a|^{-2}$ is
the smallest that $\int_{S}da |a|^{-2}$ can be for a domain $S$
with diameter $2|w|$ and which is bounded away from the origin by
$|w|^{\frac{1}{2}}$.  Finally, since $|w|-|k|\leq |w|$ and
by~\eqref{Key}
\begin{align*}
\Big|  \int_{-w}^{w}da \frac{\,k+a-S(k+a)\big( |k+a|^{2}+1 \big)^{\frac{1}{2}} }{\big(  |k+a|^{2}+1 \big)^{\frac{1}{2}} } \Big|\leq \big| 2\big(|w|-|k|\big)+\mathit{O}(|w|^{-\frac{1}{2}})  \big|^{2}< 5|w|N(w,k).
\end{align*}
For $k$ such that $|w|-|k|\approx 0$, $\mathit{O}(|w|^{-\frac{1}{2}})$ may become the larger term, but it is still smaller than $N(w,k)$ which will be $\approx 1$ when $|k|$ is near $|w|$.

\end{proof}

We now apply Lemma~\ref{TimeChange} to obtain a similar inequality with $|M_{s}^{\prime}|$ replaced by $E_{s}^{\frac{1}{2} }$.  Our analysis is based on the fact that $E_{t}^{\frac{1}{2}}=\big(\frac{1}{2}  K_{t}^{2}+V(X_{t})\big)^{\frac{1}{2}}$ is a submartingale.  The submartingale property of $E_{t}^{\frac{1}{2}}$ follows since  $f(k)=\big(\frac{1}{2}k^{2}+V\big)^{\frac{1}{2}}$ is a convex function.   Since the jumps occurs with symmetric probabilities ${\cal P}_a(v)={\cal P}_a(-v)$, we can write $E_{t}^{\frac{1}{2}}-E_{0}^{\frac{1}{2}}=\mathbf{M}_{t}+\mathbf{A}_{t}$ in terms of a martingale part $\mathbf{M}_{t}$ and a stochastically increasing part $\mathbf{A}_{t}$ as
\begin{align}\label{Martingale}
\mathbf{M}_{t}=\frac{1}{2}\sum_{n=1}^{\mathcal{N}_{t}}\Big[ \big(\frac{1}{2} |K_{t_{n}^{-}}+w_{n}|^{2}+V(X_{t_{n}})\big)^{\frac{1}{2}}-\big(\frac{1}{2}|K_{t_{n}^{-}}-w_{n}|^{2}+V(X_{t_{n}})\big)^{\frac{1}{2}}\Big],
\end{align}
\begin{multline}
\mathbf{A}_{t}=\frac{1}{2}\sum_{n=1}^{\mathcal{N}_{t}}\Big[\big(\frac{1}{2} |K_{t_{n}^{-}}+w_{n}|^{2}+V(X_{t_{n}})\big)^{\frac{1}{2}}+\big(\frac{1}{2}|K_{t_{n}^{-}}-w_{n}|^{2}+V(X_{t_{n}})\big)^{\frac{1}{2}}\\-2\big(\frac{1}{2} |K_{t_{n}^{-}}|^{2}+V(X_{t_{n}})\big)^{\frac{1}{2}} \Big],
\end{multline}
where  $t_{n},w_{n}$ for $n=1,\dots, \mathcal{N}_{t}$ are the
Poisson times and their corresponding momentum jumps (when they
occur), respectively.  The processes $\mathbf{M}_{t}$, $\mathbf{A}_{t}$ form a Doob-Meyer decomposition for $E_{t}^{\frac{1}{2}}-E_{0}^{\frac{1}{2}}$, however $\mathbf{A}_{t}$ is not a predictable process so the decomposition is not in the unique sense.

\begin{lemma}[Energy Lemma]\label{EIL}
Assume List~\ref{AssumpOne}.  Define $T_{\epsilon,\,V}^{(t)}=\int_{0}^{1}ds\, \chi(|t^{-\frac{1}{2}}E_{st}^{\frac{1}{2}}|>\epsilon )$.  There exists a constant $C$ such that
$$ \liminf_{t\rightarrow \infty}\textup{Pr}\big[ T_{\epsilon,\,V}^{(t)}\geq 1- \delta \big]\geq 1-C\frac{r_{2}^{\frac{1}{2}} }{r_{1}}\frac{\epsilon}{\delta}.   $$

\end{lemma}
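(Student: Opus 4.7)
The plan is to dominate $E_{st}^{1/2}$ from below by the reflected martingale of Lemma~\ref{TimeChange}, up to a lower-order error, and then invoke that lemma directly. The key inputs are the Doob--Meyer-type decomposition~\eqref{Martingale} and the pointwise inequality of Lemma~\ref{StickyCases}.

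First I would exploit that $\mathbf{A}_{t}$ in~\eqref{Martingale} is non-decreasing (each increment is non-negative by convexity of $k\mapsto(k^{2}/2+V)^{1/2}$), so non-negativity $E_{t}^{1/2}\geq 0$ forces
$$E_{t}^{1/2}\geq \mathbf{M}_{t}-\inf_{0\leq u\leq t}\mathbf{M}_{u}.$$
Indeed $E_{0}^{1/2}+\mathbf{A}_{u}\geq -\mathbf{M}_{u}$ for every $u\leq t$, and the left side is non-decreasing in $u$. Thus it suffices to prove the analogous statement with $E_{st}^{1/2}$ replaced by the reflected martingale $\mathbf{M}_{st}-\inf_{u\leq st}\mathbf{M}_{u}$.

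Next I would couple $\mathbf{M}_{t}$ to the sign-martingale $M_{t}'=\sum_{n\leq \mathcal{N}_{t}}w_{n}S(K_{t_{n}^{-}})$ of Lemma~\ref{TimeChange} by controlling the difference martingale $D_{t}:=\sqrt{2}\,M_{t}'-2\mathbf{M}_{t}$. Lemma~\ref{StickyCases} is tailored for exactly this: squaring its jump at $t_{n}$ gives
$$|\Delta D_{t_{n}}|^{2}\leq 2w_{n}^{2}\chi(|w_{n}|>J)+2cJ\,\Delta \mathbf{A}_{t_{n}},\qquad J\geq \bar{V}^{1/2}.$$
Taking expectations, compensating the Poisson sum, and applying condition~(\textit{II}) via $\int v^{2}\chi(|v|>J)\mathcal{P}_{a}(v)\,dv\leq \rho J^{-2}$ bounds the tail term by $\mathcal{R}\rho\,t J^{-2}$, while the convexity term is bounded by $2cJ\,\mathbb{E}[\mathbf{A}_{t}]\leq C J\sqrt{t}$, using $\mathbb{E}[\mathbf{A}_{t}]=\mathbb{E}[E_{t}^{1/2}-E_{0}^{1/2}]\leq C\sqrt{t}$, which in turn comes from Jensen applied to the linear-in-$t$ bound on $\mathbb{E}[E_{t}]$ from the introduction. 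Choosing $J=t^{1/6}$ balances both at $t^{2/3}=o(t)$, so $\mathbb{E}[\langle D\rangle_{t}]=o(t)$, and Doob's $L^{2}$-maximal inequality gives $t^{-1/2}\sup_{u\leq t}|D_{u}|\to 0$ in probability.

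Combining the two steps, via the elementary inequality $|\inf_{u}(a_{u}-b_{u})-\inf_{u}a_{u}|\leq \sup_{u}|b_{u}|$,
$$2E_{st}^{1/2}\geq 2E_{0}^{1/2}+\sqrt{2}\Big[M_{st}'-\inf_{0\leq u\leq st}M_{u}'\Big]-2\sup_{u\leq st}|D_{u}|.$$
Dividing by $\sqrt{t}$, the initial term vanishes because $P_{0}$ has finite second moments and the discrepancy $\sup|D|/\sqrt{t}$ vanishes in probability by the previous step. Thus up to an event of vanishing probability,
$$\{s\in[0,1]:\,t^{-1/2}E_{st}^{1/2}>\epsilon\}\supseteq \{s\in[0,1]:\,t^{-1/2}H_{s}'>\sqrt{2}\,\epsilon+o(1)\},$$
where $H_{s}'=M_{st}'-\inf_{u\leq s}M_{ut}'$; the reflected form of Lemma~\ref{TimeChange}, with $\epsilon$ slightly shrunken to absorb the $o(1)$, yields the conclusion. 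The main obstacle is the coupling step: Lemma~\ref{StickyCases} is indispensable there because it is the only mechanism that cleanly splits the gap between $\sqrt{2}\,w\,S(k)$ and the true energy-jump into a small-tail piece (controlled by the assumed fourth moment) and a convexity-defect piece (controlled by the compensator $\mathbf{A}_{t}$), which is precisely what makes the balancing $J=t^{1/6}$ work.
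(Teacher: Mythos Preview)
Your proof is correct and follows essentially the same route as the paper: both bound the quadratic variation of $\sqrt{2}M'_{t}-2\mathbf{M}_{t}$ via Lemma~\ref{StickyCases}, control the two pieces with $\mathbb{E}[\mathbf{A}_{t}]=O(\sqrt{t})$ and the fourth-moment tail, apply Doob's maximal inequality, and then invoke the reflected version of Lemma~\ref{TimeChange}. Your packaging is marginally more direct---the one-line Skorokhod bound $E_{t}^{1/2}\geq \mathbf{M}_{t}-\inf_{u\leq t}\mathbf{M}_{u}$ replaces the paper's separate comparison of $\mathbf{A}_{t}$ with $A'_{t}=\sup_{u\le t}(-M'_{u})$---and your balanced choice $J=t^{1/6}$ works just as well as the paper's $J=t^{1/4}$; the stray $+2E_{0}^{1/2}$ in your combined inequality is not actually justified by your Step~1 (which gives only $E_{t}^{1/2}\geq \mathbf{M}_{t}-\inf\mathbf{M}_{u}$, without the initial term), but this is harmless since you immediately discard it.
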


\begin{proof}

Define the martingale $M_{t}^{\prime}$ and the increasing process
$A_{t}^{\prime}$ as
 $$M_{t}^{\prime}= \sum_{n=1}^{\mathcal{N}_{t}}w_{n}S(K_{t_{n}^{-}})\text{ and } A_{t}^{\prime}=\sup_{0\leq s\leq t}-M_{s}^{\prime},$$
where $S(K_{t^{-}})$ is the left limit up to time $t$ for the sign of the momentum.
Also define $G_{s}^{(t)}$ as the difference $G_{s}^{(t)}=\sqrt{2}t^{-\frac{1}{2}}E_{st}-t^{-\frac{1}{2}}M_{st}^{\prime}-t^{-\frac{1}{2}}A^{\prime}_{st}$.  In general, we have that
\begin{align}\label{Kmart}
   \textup{Pr}\big[T_{\epsilon}^{(t)}>1-\delta,\, \sup_{0\leq s\leq 1} G_{s}^{(t)}\leq(\sqrt{2}-1)\epsilon  \big]\leq      \textup{Pr}\big[T_{\epsilon,V}^{(t)}>1-\delta \big],
\end{align}
where $T_{\epsilon}^{(t)}$ is defined as in Lemma~\ref{TimeChange} for the process $t^{-\frac{1}{2}}M_{st}^{\prime}+t^{-\frac{1}{2}}A_{st}^{\prime}$.  We will prove below that
$\Pr\big[\sup_{0\leq s\leq 1} G_{s}^{(t)}>(\sqrt{2}-1)\epsilon \big]=\mathit{O}(t^{-\frac{1}{4}}) $.  In that case, by applying the inclusion-exclusion principle to the left side of~(\ref{Kmart}), then~(\ref{Kmart}) can be written
$$
   \textup{Pr}\big[T_{\epsilon}^{(t)}>1-\delta  \big]+\mathit{O}(t^{-\frac{1}{4}}) \leq      \textup{Pr}\big[T_{\epsilon,\,V}^{(t)}>1-\delta \big].
$$
 We can then apply Lemma~\ref{TimeChange} to the left-side to  complete the proof.

Now we work towards establishing $\Pr\big[\sup_{0\leq s\leq 1} G_{s}^{(t)}>(\sqrt{2}-1)\epsilon \big]=\mathit{O}(t^{-\frac{1}{4}}) $.  Consider the martingale $M_{t}^{\prime}-\mathbf{M}_{t}$.  The square of the jumps of $M_{t}^{\prime}-\mathbf{M}_{t}$ can be bounded by the jumps of $\mathbf{A}_{t}$ plus an extra term through the inequality from Lemma~\ref{StickyCases}:
\begin{multline}\label{QuadVar}
 \Big| \sqrt{2}\,w\,S(k)-\Big[\big(\frac{1}{2} |k+w|^{2}+V(x) \big)^{\frac{1}{2}}-\big(\frac{1}{2}|k-w|^{2}+V(x)\big)^{\frac{1}{2}}  \Big]  \Big|^{2} \\ \leq 2\,v^{2}1_{|w|>J}+ c\,J\,\Big[ \big(\frac{1}{2} |k+w|^{2}+V(x)\big)^{\frac{1}{2}}+\big(\frac{1}{2}|k-w|^{2}+V(x) \big)^{\frac{1}{2}}-2\big(\frac{1}{2} k^{2}+V(x)\big)^{\frac{1}{2}}  \Big].
 \end{multline}
for all $J>1$, $k,\,w,\,x\in \R$.  Define the process $Q_{r}=\sum_{n=1}^{\mathcal{N}_{r}}w_{n}^{2}\chi(|w_{n}|\geq t^{\frac{1}{4}})$.

By~(\ref{QuadVar}), the quadratic variation process $[ M^{\prime}-\mathbf{M}]_{t}$ has the bound
$$[ M^{\prime}-\mathbf{M}]_{t} \leq c\,t^{\frac{1}{4}} \mathbf{A}_{t}+2Q_{r}.$$
This will allow us to bound $\mathbb{E}\big[
(M_{t}^{\prime}-\mathbf{M}_{t})^{2}\big]=
\mathbb{E}\big[[ M^{\prime}-\mathbf{M}]_{t}
\big]$. By the fact that $\mathbf{A}_{t}$
is the increasing part of the  Doob-Meyer decomposition for
$E_{t}^{\frac{1}{2}}-E_{0}^{\frac{1}{2}}$ and Jensen's inequality we have
$$\mathbb{E}[\mathbf{A}_{t}]=\mathbb{E}[E_{t}^{\frac{1}{2}}-E_{0}^{\frac{1}{2}}]\leq \,\big(\mathbb{E}[E_{t}]\big)^{\frac{1}{2}}\leq \,\big(\mathbb{E}[E_{0}]+2^{-1}r_{2}t\big)^{\frac{1}{2}}\sim \,2^{-\frac{1}{2}} r_{2}^{\frac{1}{2}}\,t^{\frac{1}{2}}.
$$
Also we have
$$\mathbb{E}\big[Q_{t}    \big]=\mathbb{E}\big[ \sum_{n=1}^{\mathcal{N}_{t}}w_{n}^{2}\,\chi(|w_{n}|\geq t^{\frac{1}{4}})\big]\leq \mathbb{E}[\mathcal{N}_{t}]\,\rho \,t^{-\frac{1}{2}}=\mathcal{R}\,\rho t^{\frac{1}{2}}, $$
where we have used that the fourth moments of the jumps are
bounded as $\mathbb{E}\big[w_{n}^{4}]\leq \rho$,
 since
$$  \mathbb{E}\big[w_{n}^{2}\,\chi(|w_{n}|\geq t^{\frac{1}{4}})]\leq t^{-\frac{1}{2}}\,\mathbb{E}\big[w_{n}^{4}\,\chi(|w_{n}|\geq t^{\frac{1}{4}})]\leq t^{-\frac{1}{2}}\rho. $$
Thus, using the above and  Doob's maximal inequality
\begin{multline}\label{MGUB}
  \mathbb{E}\big[  \sup_{0\leq s\leq 1} t^{-1}|M_{st}^{\prime}-\mathbf{M}_{st}|^{2}    \big]\leq 4t^{-1}\mathbb{E}\big[(M_{t}^{\prime}-\mathbf{M}_{t})^{2}     \big]=4\mathbb{E}\big[[ M^{\prime}-\mathbf{M}]_{t} \big]\\ \leq 4\,c\,t^{-\frac{3}{4}}\mathbb{E}\big[\mathbf{A}_{t}\big]+8t^{-1}\mathbb{E}\big[Q_{t} \big]\leq  4\,c\,2^{-\frac{1}{2}} r_{2}^{\frac{1}{2}}\,t^{-\frac{1}{4}}+8\mathcal{R}\rho\, t^{-\frac{1}{2}}=\mathit{O}(t^{-\frac{1}{4}}).
\end{multline}

Next we show that $t^{-\frac{1}{2}}\mathbf{A}_{st}$ is typically bounded from below by
$t^{-\frac{1}{2}}A_{st}^{\prime}$ in the sense
$$\mathbb{E}\big[\sup_{0\leq s\leq 1} \big|(t^{-\frac{1}{2}}A_{st}^{\prime}-t^{-\frac{1}{2}}\mathbf{A}_{st})1_{A_{st}^{\prime}>\mathbf{A}_{st}}\big|^{2}\big]=\mathit{O}(t^{-\frac{1}{4}}).$$  We can write $\sqrt{2}t^{-\frac{1}{2}}E_{st}^{\frac{1}{2}}$ as
\begin{align}\label{Vamp}
\sqrt{2}t^{-\frac{1}{2}}E_{st}^{\frac{1}{2}}=t^{-\frac{1}{2}}M_{st}^{\prime}+\sqrt{2}t^{-\frac{1}{2}}E_{0}^{\frac{1}{2}}+t^{-\frac{1}{2}}\mathbf{M}_{st}-t^{-\frac{1}{2}}M_{st}^{\prime}+t^{-\frac{1}{2}}\mathbf{A}_{st}
 \end{align}
where since the left side is positive for all $a$, we must have
$$-t^{-\frac{1}{2}}M_{at}^{\prime}\,\chi(t^{-\frac{1}{2}}M_{at}^{\prime}\leq 0)\leq \big( \sqrt{2}t^{-\frac{1}{2} }E_{0}^{\frac{1}{2}}+t^{-\frac{1}{2}}\mathbf{M}_{at}-t^{-\frac{1}{2}}M_{at}^{\prime}+t^{-\frac{1}{2}}\mathbf{A}_{at}\big)\,\chi(t^{-\frac{1}{2}}M_{at}^{\prime}\leq 0).  $$
Taking the supremum in $a$ up to $s\leq 1$ of both sides,
$$t^{-\frac{1}{2}}A_{st}^{\prime}=\sup_{0\leq a\leq s}-t^{-\frac{1}{2}}M_{at}^{\prime}\leq \sup_{0\leq a\leq s}\big| \sqrt{2}t^{-\frac{1}{2} }E_{0}^{\frac{1}{2}}+t^{-\frac{1}{2}}\mathbf{M}_{at}-t^{-\frac{1}{2}}M_{at}^{\prime}\big|+t^{-\frac{1}{2}}\mathbf{A}_{st},$$
where we have used that $M_{0}^{\prime}=0$ for the left side and that $\mathbf{A}_{r}$ is increasing for the right side.  Subtracting $t^{-\frac{1}{2}}\mathbf{A}_{st}$ and taking $\mathbb{E}\big[\sup_{0\leq s\leq 1}|\cdot|^{2}\big]^{\frac{1}{2}}$ of both sides
\begin{multline}\label{IPLB}
\mathbb{E}\big[\sup_{0\leq s\leq 1} \big|(t^{-\frac{1}{2}}A_{st}^{\prime}-t^{-\frac{1}{2}}\mathbf{A}_{st})1_{A_{st}^{\prime}>\mathbf{A}_{st}}\big|^{2}\big]^{\frac{1}{2}}\\ \leq \sqrt{2} t^{-\frac{1}{2} }E_{0}^{\frac{1}{2}}+\mathbb{E}\big[ \sup_{0\leq s\leq 1} t^{-1}\big|\mathbf{M}_{st}-M_{st}^{\prime}\big|^{2}\big]^{\frac{1}{2}}=\mathit{O}(t^{-\frac{1}{8}}).
\end{multline}

Observe that $G_{s}^{(t)}< 0$ implies
\begin{align}\label{AbNorm}
\mathbf{A}_{st}-A_{st}^{\prime}\leq \sqrt{2}\,E_{0}^{\frac{1}{2}}-0\wedge (\mathbf{M}_{st}-M_{st}^{\prime}).
\end{align}
Finally, by the triangle inequality,~(\ref{MGUB}),~(\ref{IPLB}),
and~(\ref{AbNorm})
\begin{multline}\label{LowerBnd}
\mathbb{E}\big[ \sup_{0\leq s\leq 1}|G_{s}^{(t)}|^{2}1_{ G_{s}^{(t)}<0 } \big]^{\frac{1}{2}} \\ \leq t^{-\frac{1}{2}}\sqrt{2}\,E_{0}^{\frac{1}{2}}+ \mathbb{E}\big[ \sup_{0\leq s\leq 1}t^{-1}|\mathbf{A}_{st}-A_{st}^{\prime}  |^{2}1_{ G_{s}^{(t)}<0 } \big]^{\frac{1}{2}}+ \mathbb{E}\big[ \sup_{0\leq s\leq 1}t^{-1}|\mathbf{M}_{st}-M_{st}^{\prime}  |^{2} \big]^{\frac{1}{2}}  \\  \leq t^{-\frac{1}{2}}\sqrt{2}\,E_{0}^{\frac{1}{2}}+\mathbb{E}\big[\sup_{0\leq s\leq 1} t^{-\frac{1}{2}}\big|\mathbf{A}_{st}-A_{st}^{\prime}\big|^{2}1_{A_{st}^{\prime}>\mathbf{A}_{st}}\big]^{\frac{1}{2}}+2\mathbb{E}\big[  \sup_{0\leq s\leq 1} t^{-1}|\mathbf{M}_{st}-M_{st}^{\prime}|^{2}    \big]^{\frac{1}{2}}.
\end{multline}
The right-side is $\mathit{O}(t^{-\frac{1}{8}})$, and via Chebyshev's inequality
$\Pr\big[\sup_{0\leq s\leq 1} |G_{s}^{(t)}|>(\sqrt{2}-1)\epsilon \big]=\mathit{O}(t^{-\frac{1}{4}}) $, which is the bound that was claimed.

\end{proof}

In the following lemma, we set $L_{st}= M_{st}^{\prime}+A_{st}^{\prime}$ where $M_{st}^{\prime}$ and $A_{st}^{\prime}$ are defined as in the proof of Lemma~\ref{EIL}.  The Doob-Meyer decompositions in the lemma are not unique, since the increasing parts are not constrained to be predictable.

\begin{lemma}\label{DOOOOB}
Consider the submartingales $E_{t}$ and $L_{t}^{2}$.
\begin{enumerate}
\item  $E_{t}$ admits a Doob-Meyer decomposition as a sum of martingale and increasing parts
$$
\mathcal{M}_{t}=\sum_{n=1}^{\mathcal{N}_{t}}w_{n}K_{t_{n}^{-}}\quad \text{ and }\quad \mathcal{A}_{t}=E_{0}+\frac{1}{2}\sum_{n=1}^{\mathcal{N}_{t}}w_{n}^{2}.
$$
\item  $L_{t}^{2}$ admits a Doob-Meyer decomposition as a sum of martingale and increasing parts
$\bar{M}_{t}=\sum_{n=1}^{\mathcal{N}_{t}}\alpha_{n}$ and $\bar{A}_{t}=\sum_{n=1}^{\mathcal{N}_{t}}\beta_{n}$ respectively, where
\begin{eqnarray*}
\alpha_{n}&=& 2w_{n}L_{t_{n-1}} \hspace{5.8cm} \text{ for }\quad |L_{t_{n-1}}|\geq |w_{n}|,  \\ \alpha_{n}&=& \frac{1}{2}S(K_{t_{n}^{-} })S(w_{n})(L_{t_{n-1}}+|w_{n}|)^{2} \hspace{2cm} \text{ for }\quad |L_{t_{n-1}}|\leq |w_{n}|, \\
\beta_{n}&=& w_{n}^{2} \hspace{6.8cm} \text{ for }\quad |L_{t_{n-1}}|\geq |w_{n}|,  \\
\beta_{n}&=& w_{n}^{2}+\frac{1}{2}(|w_{n}|-L_{t_{n-1}})^{2}\hspace{3.4cm}  \text{ for }\quad |L_{t_{n-1}}|\leq |w_{n}|.
\end{eqnarray*}

\item In the limit $t\rightarrow \infty$,
$$\mathbb{E}\big[\sup_{0\leq s\leq 1}t^{-1}\big|2\mathcal{A}_{st}-\bar{A}_{st}    \big|   \big]=\mathit{O}(t^{-\frac{1}{4}}).     $$

\end{enumerate}

\end{lemma}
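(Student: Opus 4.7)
For Part~(1), I would exploit the Hamiltonian structure of the flow between Poisson events. The deterministic dynamics preserves $H$, so $E_t$ is constant on inter-arrival intervals; at $t_n$ only $K$ jumps (the position is continuous), giving the pointwise identity $E_{t_n}-E_{t_n^-} = w_n K_{t_n^-} + \tfrac{1}{2}w_n^2$. Summing over jumps and using $\mathbb{E}[w_n\mid \mathcal{F}_{t_n^-}]=0$ from Assumption~III identifies $\mathcal{M}_t$ as a martingale, while $\mathcal{A}_t = E_0 + \tfrac{1}{2}\sum w_n^2$ is manifestly nondecreasing.

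For Part~(2), I would use the reflected-process structure $L_{t_n} = \max(L_{t_n^-} + w_n S(K_{t_n^-}),\,0)$ that comes from the running-supremum definition of $A_t'$. Two subcases arise according to whether $L_{t_{n-1}}\geq |w_n|$ (the effective kick cannot hit zero, and the increment $\Delta L_t^2$ has the clean form $2L_{t_{n-1}}w_nS(K_{t_n^-}) + w_n^2$) or $L_{t_{n-1}}\leq|w_n|$ (where, depending on the sign of $w_n S(K_{t_n^-})$, one sees either a no-reflection increment or a full reflection producing $\Delta L_t^2=-L_{t_{n-1}}^2$). In each subcase I would algebraically split the increment into a zero-conditional-mean piece $\alpha_n$ and a nonnegative residue $\beta_n$; the vanishing conditional expectation follows because any product of an odd function of $w_n$ with an $\mathcal{F}_{t_n^-}$-measurable factor integrates to zero against the symmetric $\mathcal{P}_a$.

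For Part~(3), I would subtract the two decompositions to obtain the cancellation identity
\[
2\mathcal{A}_t - \bar{A}_t \;=\; 2E_0 \;\pm\; \tfrac{1}{2}\sum_{n\leq\mathcal{N}_t}(|w_n|-L_{t_{n-1}})^2\,\chi(L_{t_{n-1}}\leq|w_n|),
\]
whose right-hand side is monotone in $t$, so $\sup_{0\leq s\leq 1}|2\mathcal{A}_{st}-\bar{A}_{st}|$ is controlled by $2E_0$ plus the value of the sum at $s=1$. Compensating the Poisson sum converts its expectation into
\[
\mathbb{E}\Big[\int_0^t dr\int_{\R} dv\, j_{X_r}(v)(|v|-L_r)^2\chi(L_r\leq|v|)\Big],
\]
which I would bound by truncating the $v$-integral at a cutoff $R$ growing like $\log t$: the tail $|v|>R$ is super-polynomially small by Assumption~\textit{i}, while the $|v|\leq R$ portion is supported on the low-energy event $\{L_r\leq R\}$, whose expected occupation time is controlled via Lemma~\ref{EIL} applied at $\epsilon\sim Rt^{-1/2}$, combined with the $L^2$ comparison between $L_r$ and $\sqrt{2E_r}$ recorded in~(\ref{MGUB}) and~(\ref{IPLB}). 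The main obstacle is quantitative: Lemma~\ref{EIL} supplies only a linear-in-$\epsilon/\delta$ tail bound on the occupation time, so converting it to an $L^1$ estimate via Markov's inequality and then balancing the cutoff $R$ against the exponential tail and the $O(t^{-1/8})$ error $L_r\approx\sqrt{2E_r}$, all without degrading the target $O(t^{-1/4})$ exponent, is the delicate calibration step.
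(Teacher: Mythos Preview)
Your Parts~(1) and~(2) match the paper's argument essentially line for line.

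For Part~(3), your route diverges from the paper's, and the divergence matters quantitatively. You propose to bound
\[
\mathbb{E}\Big[\sum_{n\leq\mathcal{N}_t}(|w_n|-L_{t_{n-1}})^2\,\chi(L_{t_{n-1}}\le|w_n|)\Big]
\]
by compensating, truncating at a level $R$, and controlling the small-$|v|$ part through the expected occupation time of $\{L_r\le R\}$ via Lemma~\ref{EIL}/Lemma~\ref{TimeChange}. Two obstacles arise. First, those lemmas are stated as $\liminf$ results, so extracting a finite-$t$ bound on $\mathbb{E}\!\int_0^t\chi(L_r\le R)\,dr$ with a usable constant would require re-entering their proofs; the martingale CLT step there carries no rate. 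Second, even granting the heuristic $\mathbb{E}[\text{fraction of time }L_r\le R]\lesssim R\,t^{-1/2}$, the crude bound $(|v|-L_r)^2\chi(L_r\le|v|)\le R^2\chi(L_r\le R)$ on $|v|\le R$ yields a main term of order $R^3t^{-1/2}$. To make this $O(t^{-1/4})$ you need $R$ polylogarithmic, which forces the tail cut via the exponential Assumption~(\textit{i}) of List~\ref{AssumpTwo}; under the fourth-moment hypothesis~(\textit{II}) alone (which is all the paper uses in this section), balancing the two pieces gives only about $O(t^{-1/5})$. So your ``delicate calibration step'' is not just delicate --- under List~\ref{AssumpOne} alone it does not reach the target exponent.

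The paper avoids occupation-time estimates entirely by a local-time identity. After the same cutoff inequality~(\ref{Lampeke}) with threshold $t^{1/4}$, it observes that the linear sum $\sum_n(|w_n|-L_{t_{n-1}})\chi(L_{t_{n-1}}\le|w_n|)$ is, in conditional expectation given $|w_n|$, exactly twice the sum of the actual reflections of $L$ at zero, i.e.\ twice $A'_t$ (each reflection event has $S(K_{t_n^-})S(w_n)=-1$, which occurs with conditional probability $\tfrac12$ by symmetry). Hence its expectation equals $2\,\mathbb{E}[A'_t]=2\,\mathbb{E}[M'_t+A'_t]=2\,\mathbb{E}[L_t]$, and $\mathbb{E}[L_t]\approx\sqrt{2}\,\mathbb{E}[E_t^{1/2}]\le\sqrt{2\,\mathbb{E}[E_t]}=O(t^{1/2})$ by Jensen. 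The main term is then $t^{-1}\cdot t^{1/4}\cdot O(t^{1/2})=O(t^{-1/4})$ on the nose, using only fourth moments. This structural identity --- that the running supremum $A'_t$ \emph{is} the accumulated overshoot at zero --- is the idea your plan is missing.
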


\begin{proof} \text{ }\\
\noindent Part (1):   $E_{t}$ can be rewritten as
\begin{align}\label{Dizzle}
 E_{t}=E_{0}+\sum_{n=1}^{\mathcal{N}_{t}}w_{n}K_{t_{n}^{-}}+\frac{1}{2}w_{n}^{2},
\end{align}
which follows by an inductive expansion using
the conservation of energy between momentum jumps:
\begin{eqnarray*}
E_{t_{n}}=\frac{1}{2}(K_{t_{n}^{-}}+w_{n})^{2}+V(X_{t_{n}^{-}})&=& \frac{1}{2}K_{t_{n}^{-}}^{2}+V(X_{t_{n}^{-}})+w_{n}K_{t_{n}^{-}}+\frac{1}{2}w_{n}^{2}\\ &=& \frac{1}{2}K_{t_{n-1}}^{2}+V(X_{t_{n-1}})+w_{n}K_{t_{n}^{-}}+\frac{1}{2}w_{n}^{2}.
\end{eqnarray*}
The middle term on the right of~(\ref{Dizzle}) is a martingale by the symmetry of the jump rates (\textit{III}) of List~\ref{AssumpOne}:
$E\big[w_{n}\,\big|\,\mathcal{F}_{t_{n}^{-}}\big]=0$.  \vspace{.2cm}\\

\noindent Part (2):  To find an expression for $L_{t}$, we apply an inductive argument as with $E_{t} $ except that the analysis breaks into two cases.  Expanding $L_{t_{n}}^{2}$ when  $|L_{t_{n-1}}|> |w_{n}|$ is the easy case since
$L_{t_{n}}^{2}=L_{t_{n-1}}^{2}+2v_{n}S(K_{t_{n}^{-}})+w_{n}^{2}$. Again $w_{n}S(K_{t_{n}^{-}}) $ is the martingale contribution, since $E\big[w_{n}^{2}\,\big|\,\mathcal{F}_{t_{n}^{-}}\big]=0$.

Expanding $L_{t_{n}}^{2}$ in the case that $|L_{t_{n-1}}|\leq|w_{n}|$ , then
$L_{t_{n}}=\frac{1}{2} \big(1+S(K_{t_{n}^{-} })S(w_{n})\big)(L_{t_{n-1}}+|w_{n}|)$, and
\begin{align*}
L_{t_{n}}^{2}-L_{t_{n-1}}^{2}= \frac{1}{2}S(K_{t_{n}^{-} })S(w_{n})(L_{t_{n-1}}+|w_{n}|)^{2}+w_{n}^{2}+\frac{1}{2}(|w_{n}|-L_{t_{n-1}})^{2}.
\end{align*}
The first term on the right has mean zero since $\mathbb{E}\big[S(w_{n})\,\big|\,\mathcal{F}_{t_{n}^{-}},\,|w_{n}|\big]=0$.\vspace{.25cm}\\

\noindent Part (3):  By Part (1) and Part (2),
$$t^{-1}(\bar{A}_{st}-2\mathcal{A}_{st})=-2t^{-1}E_{0}+t^{-1}\sum_{n=1}^{\mathcal{N}_{st}}\frac{1}{2}\big(|w_{n}|-L_{t_{n-1}}\big)^{2}\chi( |w_{n}|\geq L_{t_{n-1}} )     $$
Since $|w_{n}|\leq J$, the sum above is bounded by
\begin{multline}\label{Lampeke}
\sum_{n=1}^{\mathcal{N}_{st}}(|w_{n}|-L_{t_{n-1}})^{2} \chi(|L_{t_{n-1}}|\leq |w_{n}|)\\ \leq \,t^{\frac{1}{4}} \sum_{n=1}^{\mathcal{N}_{st}}(|w_{n}|-L_{t_{n-1}}) \chi(|L_{t_{n-1}}|\leq |w_{n}|)+\sum_{n=1}^{\mathcal{N}_{st}}|w_{n}|^{2}\chi(|w_{n}|\geq t^{\frac{1}{4}}).
\end{multline}
By the estimates in Lemma~\ref{EIL}  $$\mathbb{E}\Big[\sup_{0\leq s\leq 1}t^{-1}\sum_{n=1}^{\mathcal{N}_{st}}|w_{n}|^{2}\chi(|w_{n}|\geq t^{\frac{1}{4}}) \Big]=\mathbb{E}\Big[t^{-1}\sum_{n=1}^{\mathcal{N}_{t}}|w_{n}|^{2}\chi(|w_{n}|\geq t^{\frac{1}{4}}) \Big]   =\mathit{O}(t^{-\frac{1}{2}}).$$

The first term on the right-side is closely related to $A_{r}^{\prime}$, since $A_{r}^{\prime}$ can be written
$$\sup_{0\leq s\leq r}-\sum_{n=1}^{\mathcal{N}_{r} }w_{n}S(K_{t_{n}^{-}})   = A_{r}^{\prime}=\sum_{n=1}^{\mathcal{N}_{r}}(|w_{n}|-L_{t_{n-1}} ) \chi\big(|L_{t_{n-1}}|\leq |w_{n}|,\, S(K_{t_{n}^{-}})S(w_{n})=-1\big),    $$
since increases in $A_{t}^{\prime}$ occur $|L_{t_{n-1}}|\leq
|w_{n}|$ and the jump $w_{n}$ has sign such that
$S(K_{t_{n}^{-}})S(w_{n})=-1$. In fact, the conditional
expectation of a single term from the sum with respect to the
information up to time $t_{n}^{-}$ and the size $|w_{n}|$ of the
$n$th jump is
\begin{multline*}
\mathbb{E}\big[ (|w_{n}|-L_{t_{n-1}} ) \chi\big(|L_{t_{n-1}}|\leq |w_{n}|,\, S(K_{t_{n}^{-}})S(w_{n}) =-1\big)\,\big|\,\mathcal{F}_{t_{n}^{-}},|w_{n}| \big]\\ =\frac{1}{2}  (|w_{n}|-L_{t_{n-1}}) \chi(|L_{t_{n-1}}|\leq |w_{n}|),
\end{multline*}
since $w_{n}=-|w_{n}|$ and $w_{n}=|w_{n}|$ have equal probability.
Thus,
\begin{multline}
\mathbb{E}\big[t^{-\frac{3}{4}}\sum_{n=1}^{\mathcal{N}_{t}}(|w_{n}|-L_{t_{n-1}}) \chi(|L_{t_{n-1}}|\leq |w_{n}|)     \big]=t^{-\frac{3}{4}}\mathbb{E}\big[ A^{\prime}_{t} \big] = t^{-\frac{3}{4}}\mathbb{E}\big[ M^{\prime}_{t}+ A^{\prime}_{t} \big]\\ =
\mathbb{E}\big[2^{\frac{1}{2}}t^{-\frac{3}{4}}E^{\frac{1}{2}}_{t}-t^{-\frac{1}{4}}G_{1}^{(t)}  \big]
 \leq 2^{\frac{1}{2}}t^{-\frac{3}{4} }\mathbb{E}\big[E^{\frac{1}{2}}_{t}\big]+\mathit{O}(t^{-\frac{3}{8}}) \\ \leq 2^{\frac{1}{2}}t^{-\frac{3}{4} }\mathbb{E}\big[E_{t}\big]^{\frac{1}{2}}+\mathit{O}(t^{-\frac{3}{8}})\leq 2^{\frac{1}{2}}t^{-\frac{3}{4}}(\mathbb{E}[E_{0}]+ \frac{1}{2} r_{2}t)^{\frac{1}{2}} +\mathit{O}(t^{-\frac{3}{8}})=\mathit{O}(t^{-\frac{1}{4}}),
\end{multline}
where the second equality is because $M_{r}^{\prime}$ is a mean zero martingale, the third equality is from the definition of $G_{r}^{(t)}$, and the first inequality uses the result $\mathbb{E}\big[ \sup_{0\leq s\leq 1}|G_{s}^{(t)}|^{2}1_{ G_{s}^{(t)}<0 } \big]^{\frac{1}{2}}=\mathit{O}(t^{-\frac{1}{8}})$ from the proof of Lemma~\ref{EIL}.

\end{proof}

\section{The absolute value of the momentum}\label{SecReflection}

The following theorem takes us as far possible towards obtaining a
central limit theorem for the momentum $t^{-\frac{1}{2}}K_{st}$
without making an assumption about a reflection symmetry in the
periodic potential and in the jump probabilities.

\begin{theorem}\label{AbsoluteValue}
Assume List~\ref{AssumpOne} and (\textit{i})-(\textit{ii}) of List~\ref{AssumpTwo}. In the limit $t\rightarrow \infty$,  $t^{-\frac{1}{2}}|K_{st}|$  converges in distribution to the absolute value of a Brownian motion with diffusion constant $\sigma=\int_{\R}dv\,\tilde{P}(v)\,v^{2}$.
\end{theorem}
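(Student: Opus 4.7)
The plan is to reduce the claim to a statement about $\sqrt{2}\,t^{-1/2}E_{st}^{1/2}$ using the boundedness of $V$, then exploit the decomposition already built in the proof of Lemma~\ref{EIL} together with a martingale CLT, and conclude via Lévy's reflection identity for Brownian motion. Since $0\le V\le\bar V$ one has $|\sqrt{2}\,E_t^{1/2}-|K_t||\le\sqrt{2\bar V}$, so
\[
\bigl|\sqrt{2}\,t^{-1/2}E_{st}^{1/2}-t^{-1/2}|K_{st}|\bigr|\le \sqrt{2\bar V/t}\longrightarrow 0,
\]
and the proof of Lemma~\ref{EIL} already establishes that the process $G^{(t)}_s=\sqrt{2}\,t^{-1/2}E_{st}^{1/2}-t^{-1/2}(M'_{st}+A'_{st})$ vanishes uniformly on $s\in[0,1]$ in probability (at rate $\mathit{O}(t^{-1/4})$). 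It therefore suffices to show that $t^{-1/2}(M'_{st}+A'_{st})$ converges in distribution as a process to $|\mathbf{B}_s|$.

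Next I would apply the martingale CLT to $M'$ as in Lemma~\ref{TimeChange}. The predictable quadratic variation is $\langle M'\rangle_t=\int_0^t h(X_r)\,dr$ with $h(a)=\int_{\R}j_a(v)\,v^2\,dv$, and the time-changed process $\tilde B^{(t)}_u=t^{-1/2}M'_{\tau_u t}$ converges to a Brownian motion. To undo the time change in the identity $t^{-1/2}M'_{st}=\tilde B^{(t)}_{R_s}$, I would prove
\[
R_s=\frac{1}{r_1 t}\int_0^{st}h(X_r)\,dr\;\xrightarrow{\ \mathrm{Pr}\ }\;\frac{s\sigma}{r_1},\qquad \sigma=\int_0^1 h(a)\,da,
\]
i.e.\ that $h(X_r)$ averages out to its spatial mean. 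Lemma~\ref{EIL} says that with high probability $|K_{st}|\gtrsim \sqrt t$ on most of $[0,1]$, so inside each macroscopic interval the position sweeps through $\sim\sqrt t$ copies of the periodic cell and equidistributes on the torus. Composing the Brownian limit of $\tilde B^{(t)}$ with the deterministic clock $s\mapsto s\sigma/r_1$ then yields $t^{-1/2}M'_{st}\Rightarrow \mathbf{B}_s$ as processes, with $\mathbf{B}$ a Brownian motion of diffusion constant $\sigma$ (the $r_1$ introduced by the time change cancels after composition: the limiting variance at $s$ is $r_1\cdot s\sigma/r_1=s\sigma$).

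To finish I apply the continuous mapping theorem to the map $f(x)_s=x_s+\sup_{0\le r\le s}(-x_r)$, which was shown Lipschitz in the uniform norm in the proof of Lemma~\ref{TimeChange}. Since $A'_{st}=\sup_{0\le r\le s}(-M'_{rt})$ by definition, $t^{-1/2}(M'_{st}+A'_{st})$ is exactly $f$ applied to the process $t^{-1/2}M'_{\cdot t}$, so it converges in distribution to $f(\mathbf{B})$, and Lévy's identity (also cited in Lemma~\ref{TimeChange}) identifies $f(\mathbf{B})$ in law with $|\mathbf{B}|$.

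The hard part will be the quadratic-variation convergence $t^{-1}\int_0^{st}h(X_r)\,dr\to s\sigma$. Lemma~\ref{EIL} is quantitative only about the fraction of time spent at high energy, not about genuine equidistribution of $X_r$ on the torus, so the argument will require controlling the contribution of low-momentum windows where $h(X_r)$ could fail to self-average. This is exactly the kind of excursion estimate announced for Section~\ref{SecEstimates}, and once combined with the bound that the fraction of low-energy time is small it should yield the convergence of $R_s$ in probability that powers the whole argument.
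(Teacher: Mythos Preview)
Your approach is essentially the same as the paper's: reduce to $t^{-1/2}(M'_{st}+A'_{st})$ via the vanishing of the error, establish that $t^{-1/2}M'_{st}$ converges to Brownian motion through convergence of its predictable quadratic variation, and then apply the Lipschitz map $f$ together with L\'evy's identity. Two small corrections: the proof of Lemma~\ref{EIL} only controls $G^{(t)}_s$ on the negative side (the inequality~\eqref{LowerBnd} carries the indicator $1_{G^{(t)}_s<0}$); the two-sided vanishing you need is precisely the content of Lemma~\ref{StochEqn}, so cite that instead. And the quadratic-variation convergence $t^{-1}\langle M'\rangle_{st}\to\sigma s$ you flag as the hard step is exactly what is proved in Theorem~\ref{MartBrown} (using $\langle M'\rangle=\langle M\rangle$ and the high-energy estimates from Section~\ref{SecEstimates}, in particular Part~(4) of Lemma~\ref{SingleJump}), so you can simply invoke that rather than redo the equidistribution argument.
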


The above theorem is not enough to guarantee that $t^{-\frac{1}{2}}K_{st}$ converges to a Brownian motion.  To see this, consider a  random walk $X_{n}$ on $\Z$ which jumps to the right and the left with equal probability at every lattice site except at the origin where it jumps to $1$ with probability $\frac{2}{3}$ and to $-1$ otherwise. In this case, $|X_{n}|$ has the same distribution as a simple random walk and thus $n^{-\frac{1}{2}}|X_{\lfloor s\,n \rfloor}|$ converges in distribution to the absolute value of a Brownian motion $|\mathbf{B}_{s}|$.  However, removing the absolute values, then  $n^{-\frac{1}{2}}X_{\lfloor s\,n \rfloor}$ will have a drift determined by the process $\frac{1}{3}n^{-\frac{1}{2}}\sum_{n=1}^{\lfloor s\,n \rfloor}\chi(X_{n}=0)$.  Since a simple random walk $X_{n}$ spends on the order of $n^{\frac{1}{2}}$ steps at the origin, the drift will be non-vanishing.

The proof of Theorem \ref{AbsoluteValue} follows in Section \ref{SecMain} and is based on the Lemma~\ref{StochEqn} below.  The latter extends the analysis in the proof of Lemma~\ref{EIL} to show that $t^{-\frac{1}{2}}K_{s\,t} $ is close to being the solution of a stochastic equation, reminiscent of the fact for a Brownian motion $\mathbf{B}_{s}$ that $|\mathbf{B}_{s}|=\mathbf{B}_{s}^{\prime}+\sup_{0\leq r\leq s} -\mathbf{B}_{s}^{\prime}  $ for another Brownian motion $\mathbf{B}_{s}^{\prime}=\int_{0}^{s}S(\mathbf{B}_{r})d\mathbf{B}_{r}$.  This is related to Levy's theory of Brown local time~\cite{Karat}.

 For some bounded right-continuous adapted process $F_{t}$, we denote
\begin{align}\label{DefStochInt}
\int_{0}^{s}F_{r\,t}\,dM^{(t)}_{r}=t^{-\frac{1}{2}}\sum_{n=1}^{\mathcal{N}_{st}}w_{n}\,F_{t_{n}^{-}},\end{align}
where $v_{n}$ are the momentum jumps.  Thus the martingale
$M^{\prime}_{t}$ used in the proof of Lemma~\ref{EIL} can be
written
$t^{-\frac{1}{2}}M^{\prime}_{st}=\int_{0}^{s}S(K_{rt})\,dM^{(t)}_{r}
$.

\begin{lemma} \label{StochEqn}
Assume List~\ref{AssumpOne}. Then the momentum process $K_{st}$,
$s\in[0,1]$, satisfies the stochastic equation
\begin{align}\label{ReVamp}
t^{-\frac{1}{2}}|K_{st}| =\int_{0}^{s}S(K_{r\,t})\,dM^{(t)}_{r}+\sup_{0\leq a\leq s}-\int_{0}^{a}S(K_{rt})\,dM^{(t)}_{r}+\mathcal{E}_{s}^{(t)} ,
\end{align}
where the error $\mathcal{E}_{s}^{(t)}$ has $\mathbb{E}\big[\sup_{0\leq s\leq 1}\big|\mathcal{E}_{s}^{(t)}\big|^{2}\big]\rightarrow 0$ for large $t$.

\end{lemma}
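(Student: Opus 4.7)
The plan is to identify the right-hand side of (\ref{ReVamp}), minus the error $\mathcal{E}_s^{(t)}$, as $t^{-1/2}L_{st}$, where $L_{st}=M_{st}'+A_{st}'$ is the combination from the proof of Lemma~\ref{EIL}. By the definition (\ref{DefStochInt}) and $A_{st}'=\sup_{0\leq a\leq s}(-M_{at}')$, one has $\int_0^s S(K_{rt})\,dM^{(t)}_r=t^{-1/2}M_{st}'$ and $\sup_{0\leq a\leq s}\bigl[-\int_0^a S(K_{rt})\,dM^{(t)}_r\bigr]=t^{-1/2}A_{st}'$, so the statement reduces to proving
$$t^{-1/2}|K_{st}|=t^{-1/2}L_{st}+\mathcal{E}_s^{(t)},\qquad\mathbb{E}\bigl[\sup_{0\leq s\leq 1}(\mathcal{E}_s^{(t)})^2\bigr]\to 0.$$

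First I would approximate $|K_{st}|$ by $\sqrt{2E_{st}}$: from $K_{st}^2=2E_{st}-2V(X_{st})$ with $V$ bounded, the elementary $0\leq\sqrt{a}-\sqrt{a-b}\leq\sqrt{b}$ gives $0\leq\sqrt{2E_{st}}-|K_{st}|\leq\sqrt{2\bar V}$, which becomes $\mathit{O}(t^{-1/2})$ uniformly in $s$ under the $t^{-1/2}$ scaling. The proof of Lemma~\ref{EIL} already furnishes the decomposition $\sqrt{2}\,t^{-1/2}E_{st}^{1/2}-t^{-1/2}L_{st}=G_s^{(t)}$ together with the one-sided bound $\mathbb{E}[\sup_{s\leq 1}(G_s^{(t)})^2\chi(G_s^{(t)}<0)]=\mathit{O}(t^{-1/4})$. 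Setting $\mathcal{E}_s^{(t)}=G_s^{(t)}+r_s^{(t)}$ with $|r_s^{(t)}|\leq\sqrt{2\bar V/t}$, the remaining task is to control the positive part $(G_s^{(t)})^+$ in sup-$L^2$.

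For the positive part I would compare the submartingales $L_{st}^2$ and $2E_{st}$ through the two Doob--Meyer-type decompositions of Lemma~\ref{DOOOOB}. Writing $D_{st}=|K_{st}|^2-L_{st}^2=(2\mathcal{A}_{st}-\bar A_{st})-2V(X_{st})+(2\mathcal{M}_{st}-\bar M_{st})$, Part~(3) of Lemma~\ref{DOOOOB} gives $\mathbb{E}[\sup_{s\leq 1}t^{-1}|2\mathcal{A}_{st}-\bar A_{st}|]=\mathit{O}(t^{-1/4})$, the potential is $\mathit{O}(1)$, and the task reduces to bounding the compensated martingale $2\mathcal{M}_{st}-\bar M_{st}$. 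Once $\mathbb{E}[\sup_{s\leq 1}t^{-1}D_{st}^+]\to 0$ is in hand, the factorization $D=(|K|-L)(|K|+L)$ together with tightness of $t^{-1/2}(|K_{st}|+L_{st})$---itself immediate from linear growth of $\mathbb{E}[E_{st}]$ and the already-controlled $(G_s^{(t)})^-$---delivers $\mathbb{E}[\sup_{s\leq 1}t^{-1}((L_{st}-|K_{st}|)^+)^2]\to 0$, closing the estimate.

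The main obstacle is controlling $2\mathcal{M}_{st}-\bar M_{st}$ in sup-$L^2$. Its case-1 jumps ($|L_{t_{n-1}}|\geq|w_n|$) are proportional to $|K_{t_n^-}|-L_{t_{n-1}}$, so a naive quadratic-variation count depends on exactly the quantity one is trying to estimate. I expect a bootstrap to be necessary: the $\mathit{O}(t^{3/8})$ sup-$L^2$ bound on $(L_{st}-|K_{st}|)^+$ that drops out of Lemma~\ref{EIL}'s analysis, combined with the uniform energy estimates $|K_{st}|,\,L_{st}=\mathit{O}(\sqrt{t})$, provides an initial estimate on $(L-|K|)^2$, which then sharpens the quadratic variation of $2\mathcal{M}-\bar M$ to sub-$t$ order and yields $\mathbb{E}[\sup_{s\leq 1}t^{-1}(2\mathcal{M}_{st}-\bar M_{st})^2]\to 0$ via Doob's $L^2$ inequality.
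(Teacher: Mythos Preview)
Your setup is correct and matches the paper exactly: the reduction to $G_s^{(t)}=\sqrt{2}\,t^{-1/2}E_{st}^{1/2}-t^{-1/2}L_{st}$, the $O(t^{-1/2})$ difference between $|K|$ and $\sqrt{2E}$, the one-sided bound on $(G_s^{(t)})^-$ from Lemma~\ref{EIL}, and the appeal to Lemma~\ref{DOOOOB}(3) for the increasing parts are all the right moves.

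The gap is in your treatment of the martingale $2\mathcal{M}-\bar{M}$. In case~1 its jump is $2w_n\bigl(K_{t_n^-}-S(K_{t_n^-})L_{t_{n-1}}\bigr)=2w_nS(K_{t_n^-})\bigl(|K_{t_n^-}|-L_{t_{n-1}}\bigr)$, so its quadratic variation is essentially $\sum 4w_n^2(|K|-L)^2$. Your bootstrap input controls only $(L-|K|)^+$ at order $t^{3/8}$; for $(|K|-L)^+$ you have nothing better than the crude $O(t^{1/2})$ from energy growth. Feeding that in gives a quadratic variation of order $t^2$, hence $t^{-1}(2\mathcal{M}-\bar{M})=O(1)$ in sup-$L^2$, and the iteration does not improve. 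The circularity is real: bounding the quadratic variation requires the very quantity you are after.

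The paper sidesteps this entirely. Rather than estimating $\Upsilon=2\mathcal{M}-\bar{M}$ through its quadratic variation, it exploits only that $\Upsilon$ is a mean-zero martingale. Writing $H_s^{(t)}=2t^{-1}E_{st}-t^{-1}L_{st}^2=G_s^{(t)}\bigl(2^{3/2}t^{-1/2}E_{st}^{1/2}-G_s^{(t)}\bigr)=\Upsilon_s^{(t)}+\Gamma_s^{(t)}$ with $\Gamma=t^{-1}(2\mathcal{A}-\bar{A})=O(t^{-1/4})$, one observes that $H$ has the same sign as $G$ and that $(G^+)^2\le H^+$. The already-known smallness of $(G^-)$ transfers, via Cauchy--Schwarz and the sup-$L^2$ bound on $t^{-1/2}(E^{1/2}+L)$, to smallness of $\mathbb{E}[\sup H^-]$. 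Optional sampling at $\tau=\inf\{s:H_s\ge a\}\wedge 1$ then gives $a\,\Pr[\sup_s H_s\ge a]\le \mathbb{E}[\sup|H|^-]+\mathbb{E}[|\Gamma_\tau|]=O(t^{-1/4})$; combined with a uniform $L^2$ bound on $\sup_s H_s^+$ this yields $\mathbb{E}[\sup_s H_s^+]\to 0$, hence $\mathbb{E}[\sup_s(G_s^+)^2]\to 0$. The point is that a martingale with small mean and small negative excursions cannot have large positive excursions either---no quadratic-variation estimate is needed.
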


\begin{proof}
The error $\mathcal{E}_{s}^{(t)}$ is close to error  $G_{s}^{(t)}=\sqrt{2}\,t^{-\frac{1}{2}}E_{st}^{\frac{1}{2}}-t^{-\frac{1}{2}}M_{st}^{\prime}-t^{-\frac{1}{2}}A_{st}^{\prime}$ which arose in the proof of Lemma~\ref{EIL}, since they differ only by the replacement of the momentum $t^{-\frac{1}{2}}|K_{st}|$ with $\sqrt{2}t^{\frac{1}{2}}E_{st}^{\frac{1}{2}}$ and $ \big|\sqrt{2} t^{-\frac{1}{2}}E_{s\,t}^{\frac{1}{2}}-t^{-\frac{1}{2}}|K_{st}|\big|\leq t^{-\frac{1}{2}}\bar{V}$.  In the proof of Lemma~\ref{EIL}, it was shown that
\begin{align}\label{BigGInq}
t^{-1}\mathbb{E}\big[ \sup_{0\leq s \leq 1}\big|G_{s}^{(t)}\big|^{2} 1_{G_{s}^{(t)}<0} \big]=\mathit{O}(t^{-\frac{1}{4}}).
\end{align}
We now work on showing that  $t^{-1}\mathbb{E}\big[ \sup_{0\leq s \leq 1}\big|G_{s}^{(t)}\big|^{2} 1_{G_{s}^{(t)}\geq 0} \big]$ tends to zero also.

By Part (3) of Lemma~\ref{DOOOOB}, the normalized difference in the increasing parts $\bar{A}_{st}$ and $\mathcal{A}_{st}$ for the Doob-Meyer decompositions of $L_{st}^{2}$ and $ E_{st}$ respectively, tends to zero as
\begin{align}\label{Observe}
 \mathbb{E}\big[\sup_{0\leq s\leq 1} t^{-1}\big| 2\mathcal{A}_{st}-\bar{A}_{st}\big|\big]=\mathit{O}(t^{-\frac{1}{4}}).
\end{align}
 We will make use of the error $G_{s}^{(t)}$ being the difference between two positive submartingales having a vanishing difference between the increasing parts of their Doob-Meyer decompositions:
\begin{multline*}
H_{s}^{(t)}= 2t^{-1}E_{st}-t^{-1}|M_{st}^{\prime}+A_{st}^{\prime}|^{2}       = 2t^{-1}E_{st}-(\sqrt{2}\,t^{-\frac{1}{2}}E_{st}^{\frac{1}{2}}-G_{s}^{(t)})^{2}  \\=G_{s}^{(t)}(2^{\frac{3}{2}}\,t^{-\frac{1}{2}}E_{st}^{\frac{1}{2}}-G_{s}^{(t)} )=\Upsilon_{s}^{(t)}+\Gamma_{s}^{(t)},
\end{multline*}
where $\Upsilon_{s}^{(t)}$ is a martingale and
$\Gamma_{s}^{(t)}=t^{-1}(2\mathcal{A}_{st}-\bar{A}_{st})  $ is
$\mathit{O}(t^{-\frac{1}{4}})$  by~(\ref{Observe}).

 Note that $2^{\frac{3}{2}}\,t^{-\frac{1}{2}}E_{st}^{\frac{1}{2}}-G_{s}^{(t)}$ is positive, so $G_{s}^{(t)}$ determines the sign of $H_{s}^{(t)}$.  By the Cauchy-Schwarz inequality
\begin{align}\label{HNeg}
 t^{-1}\mathbb{E}\big[ \sup_{0\leq s \leq 1}|H_{s}^{(t)}|\,1_{H_{s}^{(t)} <0} \big]\leq \big( \mathbb{E}\big[ \sup_{0\leq s \leq 1}\big|G_{s}^{(t)}\big|^{2}1_{G_{s}^{(t)}<0} \big]\big)^{\frac{1}{2}}\big(\mathbb{E}\big[ \sup_{0\leq s \leq 1}\big|2^{\frac{3}{2}}t^{-\frac{1}{2}}E_{st}^{\frac{1}{2}}-G_{s}^{(t)}\big|^{2}  \big]\big)^{\frac{1}{2}}.
\end{align}
The first factor on the right is $\mathit{O}(t^{-\frac{1}{8}})$
by~(\ref{BigGInq}). Bounding the second factor on the right comes
through the triangle inequality and the use of Doob's maximal
inequality for the two positive submartingales $M_{st}+A_{st}$ and
$E_{st}^{\frac{1}{2}}$:
\begin{align}\label{BigHBnd}
\big(\mathbb{E}\big[ \sup_{0\leq s \leq 1}\big|2^{\frac{3}{2}}t^{-\frac{1}{2}}E_{st}^{\frac{1}{2}}-G_{s}^{(t)}\big|^{2}  \big]\big)^{\frac{1}{2}} \leq  4\sqrt{2}t^{-\frac{1}{2}}\mathbb{E}\big[  E_{st} \big]^{\frac{1}{2}}+4t^{-\frac{1}{2}}\mathbb{E}\big[ | M_{st}^{\prime}+A_{st}^{\prime}|^{2} \big]^{\frac{1}{2}}\leq 12\,r_{2}^{\frac{1}{2}},
\end{align}
where
$2^{\frac{3}{2}}t^{-\frac{1}{2}}E_{st}^{\frac{1}{2}}-G_{s}^{(t)}=2^{\frac{1}{2}}t^{-\frac{1}{2}}E_{st}^{\frac{1}{2}}+t^{-\frac{1}{2}}M_{st}^{\prime}+t^{-\frac{1}{2}}A_{st}^{\prime}$.
The last inequality follows by the forms for the increasing parts
of the Doob-Meyer decompositions for for
$t^{-1}|M_{st}^{\prime}+A_{st}^{\prime}|^{2}$ and
$t^{-1}E_{st}$ from Lemma~\ref{DOOOOB} which are
both bounded by $t^{-1}[ M]_{st}$  .  The
right side of~(\ref{HNeg}) vanishes as
$\mathit{O}(t^{-\frac{1}{8}})$, and  thus the values of
$t^{-1}H_{s}^{(t)}$ do not typically go far in the negative
direction.  Since $H_{s}^{(t)}$ is a martingale with mean
$2t^{-1}E_{0}\rightarrow 0$, it would be expected that
$t^{-1}H_{s}^{(t)}$ can also not go far in the positive direction.

We will argue below that $\lim_{t\rightarrow \infty}\mathbb{E}\big[ \sup_{0\leq s \leq 1}|H_{s}^{(t)}\,|\,1_{H_{s}^{(t)}>0} \big]\rightarrow 0$.   This would complete the proof since
$$t^{-1}\mathbb{E}\big[ \sup_{0\leq s \leq 1}\big|G_{s}^{(t)}\big|^{2}1_{G_{st}>0} \big]\leq t^{-1}\mathbb{E}\big[ \sup_{0\leq s \leq 1}H_{s}^{(t)}1_{H_{s}^{(t)}>0} \big], $$
which clearly follows since $2^{\frac{3}{2}}t^{-\frac{1}{2}}E_{st}^{\frac{1}{2}}-G_{s}^{(t)}$ is larger in absolute value than $G_{s}^{(t)}$ when $G_{s}^{(t)}$ is positive.

 By the optional sampling theorem,
\begin{align}\label{OST}
\mathbb{E}[\Upsilon_{\tau}^{(t)}]=\mathbb{E}[ H_{\tau}^{(t)}-\Gamma_{\tau}^{(t)}]=0
\end{align}
for any adapted stopping time $\tau$.  Given some $a>0$, let $\tau\in[0,\,1]$ be the first time that $H_{s}^{(t)}$ reaches above $a$ and put $\tau=1$ if that event does not occur.  By~(\ref{OST}),
$$\mathbb{E}\big[\Gamma_{\tau}^{(t)}\big]= a\Pr\big[ \sup_{0\leq s\leq 1}H_{s}^{(t)}\geq a\big]+\mathbb{E}\big[ H_{1}^{(t)}\,\chi\big(\sup_{0\leq s\leq 1}H_{s}^{(t)}< a \big)\big],$$
which implies that
\begin{align}\label{Tibet}
\sup_{a\in \R^{+}} a\Pr\big[ \sup_{0\leq s\leq 1}H_{s}^{(t)}\geq a\big]\leq \big|\mathbb{E}\big[\Gamma_{\tau}^{(t)}\big]\big|+ \mathbb{E}\big[ \sup_{0\leq s \leq 1}|H_{s}^{(t)}|1_{H_{s}^{(t)}<0} \big]=\mathit{O}(t^{-\frac{1}{4}}).
\end{align}
Set $Y_{t}= \sup_{0\leq s \leq 1}H_{s}^{(t)}1_{H_{s}^{(t)}>0}$ and $\alpha_{t}= \big(\sup_{a\in \R^{+}} a \Pr[  Y_{t}\geq a]\big)^{\frac{1}{2}}   $.   If $\mathbb{E}\big[ Y_{t}^{2} \big]\leq \mathbf{y}^{2}$ for all $t$, then for any $m>\alpha_{t}>0$
\begin{multline*}
 \mathbb{E}\big[Y_{t}]=\int_{0}^{\infty}da\,\Pr\big[Y_{t}\geq a\big]= \big(\int_{0}^{\alpha_{t} }+\int_{\alpha_{t}}^{m}+\int_{m}^{\infty}\big)da \,\Pr\big[Y_{t}\geq a\big] \\ \leq \alpha_{t}+\frac{1}{\alpha_{t}}\int_{\alpha_{t}}^{m}da \,a\,\Pr\big[Y_{t}\geq a\big]+\frac{1}{m}\int_{m}^{\infty}da \,a\,\Pr\big[Y_{t}\geq a\big]  \leq \alpha_{t}+m \alpha_{t}+\frac{\mathbf{y}^{2}}{2m}.
 \end{multline*}
The second inequality above uses that $\alpha_{t}^{2}$ is the supremum of $a  \Pr[  Y_{t}\geq a]$ for the first integral and the relation $\frac{1}{2}\mathbb{E}\big[ Y_{t}^{2} \big]=\int_{0}^{\infty}da\,a\,\Pr\big[Y_{t}\geq a\big]$ for the second.  We can pick $m$ large to make the last term on the right-side small, and then pick $t$ large enough so that with~(\ref{Tibet}) $\alpha_{t}(1+m)$ is small.

To finish the argument we just need to show that $\mathbb{E}\big[ Y_{t}^{2} \big]$ can be uniformly bounded in $t$.  Since $G_{\tau}^{(t)}$ is the difference of $2^{\frac{1}{2}}t^{-\frac{1}{2}}E_{st}^{\frac{1}{2}}$ and $t^{-\frac{1}{2}}L_{st}$, and $2^{\frac{3}{2}}t^{-\frac{1}{2}}E_{s}^{\frac{1}{2}}-G_{s}^{(t)}$ is their sum,
$$\mathbb{E}\big[Y_{t}^{2}\big]^{\frac{1}{2}}<4\,\mathbb{E}\big[\sup_{0\leq s\leq 1}  t^{-2} E_{st}^{2}     \big]^{\frac{1}{2}}+2\,\mathbb{E}\big[\sup_{0\leq s\leq 1} t^{-2}|L_{st}|^{4}\big]^{\frac{1}{2}}.
     $$
To obtain a value $\mathbf{y}$, the terms on the right can be bounded by standard calculations using the Doob-Meyer decompositions for $E_{r}$ and $|L_{r}|^{2}$ from Lemma~\ref{DOOOOB} and using Doob's maximal inequality.

\end{proof}

\section{Estimates at high energy}\label{SecEstimates}

In this section, we provide estimates that are useful for understanding the dynamics when the particle has high energy, which by Lemma~\ref{EIL}, is the majority of time.  The estimates are based on the idea that the particle will feel a spatially averaged noise and that the momentum is too high to be shifted through the action of the force generated by the potential.

The following elementary bound is used many times in this  section
and later sections.  It follows from the conservation of energy
and the quadratic formula.  It basically says that if the initial
momentum $k_{0}$ has $|k_{0}|\gg
\bar{V}^{\frac{1}{2}}=\sup_{a\in[0,1]}V(x)^{\frac{1}{2}}$, then
the future momenta $k_{s}$, as determined by the Hamiltonian
evolution, will stay close to $k_{0}$.

\begin{lemma} \label{RealBasics}
Let $(x_{t},\,k_{t})$ evolve according to the Hamiltonian $H(x,\,k)=\frac{1}{2}k^{2}+V(x)$, for positive potential bounded by $\bar{V}$.  If the initial momentum has $|k_{0}|^{2}>4\bar{V}$, then the displacements in momentum $k_{t}-k_{0}$ and $k_{t}-k_{s}$ have bounds
$$|k_{t}-k_{0}|< \frac{2\bar{V}}{|k_{0}|} \text{ and } |k_{t}-k_{s}|< \frac{4\bar{V}}{|k_{0}|}\quad \text{ for all $t,\,s\in \R$.} $$

\end{lemma}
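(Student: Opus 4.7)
The plan is to exploit conservation of energy, which gives the identity
\[
\tfrac{1}{2}k_{t}^{2}+V(x_{t})=\tfrac{1}{2}k_{0}^{2}+V(x_{0}),
\]
equivalently $k_{t}^{2}-k_{0}^{2}=2\bigl(V(x_{0})-V(x_{t})\bigr)$. Since $0\le V\le \bar{V}$, this forces $k_{t}^{2}\ge k_{0}^{2}-2\bar{V}>4\bar{V}-2\bar{V}=2\bar{V}>0$, so the momentum never vanishes and $k_{t}$ keeps the sign of $k_{0}$ throughout the flow. Moreover $|k_{t}|>0$ and in particular $|k_{t}|+|k_{0}|>|k_{0}|$ strictly.

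Now I would factor $k_{t}^{2}-k_{0}^{2}=(k_{t}-k_{0})(k_{t}+k_{0})$. Because $k_{t}$ and $k_{0}$ share a sign, $|k_{t}+k_{0}|=|k_{t}|+|k_{0}|\ge |k_{0}|$. Combining with the energy identity and $|V(x_{0})-V(x_{t})|\le \bar{V}$,
\[
|k_{t}-k_{0}|=\frac{2\,|V(x_{0})-V(x_{t})|}{|k_{t}|+|k_{0}|}\le \frac{2\bar{V}}{|k_{0}|},
\]
with strict inequality because one of the two factors in the bound is strict (for instance $|k_{t}|+|k_{0}|>|k_{0}|$). The second bound then follows by the triangle inequality $|k_{t}-k_{s}|\le |k_{t}-k_{0}|+|k_{s}-k_{0}|<4\bar{V}/|k_{0}|$.

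There is no real obstacle here; the only point requiring a moment of thought is to justify that $k_{t}$ cannot change sign, so that $|k_{t}+k_{0}|$ can be replaced by $|k_{t}|+|k_{0}|$. This is precisely where the quantitative hypothesis $|k_{0}|^{2}>4\bar{V}$ enters: it guarantees $k_{t}^{2}\ge k_{0}^{2}-2\bar{V}>2\bar{V}>0$ uniformly in $t$, so that by continuity of $t\mapsto k_{t}$ the sign of $k_{t}$ remains constant.
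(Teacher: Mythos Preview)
Your proof is correct and follows essentially the same approach as the paper: both use energy conservation to relate $k_t^2-k_0^2$ to the potential difference, invoke the hypothesis $|k_0|^2>4\bar V$ to ensure $k_t$ never changes sign, and then deduce the second bound by the triangle inequality. The only difference is cosmetic---where the paper applies the quadratic formula to write $|k_t-k_0|=\bigl||k_0|-(k_0^2+2V(x_0)-2V(x_t))^{1/2}\bigr|$ and bounds this via an integral, you use the factorization $k_t^2-k_0^2=(k_t-k_0)(k_t+k_0)$ directly, which is arguably the cleaner route.
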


\begin{proof}
Since $|k_{0}|^{2}>4\bar{V}$, the momentum $k_{t}$ will not change signs at any time. By the conservation of energy
$$\frac{1}{2}\big|k_{0}+(k_{t}-k_{0})\big|^{2}-\frac{1}{2}k_{0}^{2}=-V(x_{t})+V(x_{0}).$$
Using the quadratic formula and that $k_{t},\,k_{0}$ have the same
sign,
$$|k_{t}-k_{0}|=\big||k_{0}|-\big(k_{0}^{2}+2V(x_{0})-2V(x_{t})\big)^{\frac{1}{2}}\big|\leq \Big| \frac{1}{2}\int_{0}^{2V(x_{0})-2V(x_{t})}da\,\big(k_{0}^{2}+a\big)^{-\frac{1}{2}}\Big|<\frac{2\bar{V}}{|k_{0}|}   $$
Since $\big(\frac{1}{2}k^{2}_{0}+a\big)^{-\frac{1}{2}}\leq \sqrt{2}|k_{0}|^{-1}<2|k_{0}|^{-1}$   for $a\leq \frac{1}{2}\,k_{0}^{2}$.  By the triangle inequality, we can bound the difference $|k_{t}-k_{s}|$.

\end{proof}

As before $\tilde{P}(v)=\int_{0}^{1}da\,\frac{\kappa(a)}{\bar{\kappa}}\,\mathcal{P}_{a}(v)$ where $\bar{\kappa}= \int_{0}^{1}da\,\kappa(a)$. By our assumptions $\inf_{0\leq a\leq 1}\kappa(a)=\nu>0$.

\begin{lemma}\label{SingleJump}
Assume  List~\ref{AssumpOne} and (\textit{i})-(\textit{ii}) of
List~\ref{AssumpTwo}. Starting from the point $(x,\,k)$ with
$|k|^{2}\gg \bar{V}$, let $r_{(x,\,k)}\in L^{1}([0,\,1])$ and
$\tilde{r}_{(x,\,k)}\in L^{1}([0,\,1])$ be the probability density for the
torus position of the particle at the first Poisson time and at the time of the first momentum jump
respectively.   Let $P_{(x,\,k)}\in L^{1}(\R)$ be the density
for the first momentum jump. Let $T_{(x,\,k),t}\in L^{1}([0,\,1])$ be the density of time that a deterministic trajectory starting from $(x,\,k)$ and evolving according to the Hamiltonian $H(x,k)=k^{2}+V(x)$ spends at a torus point over a time interval $[0,t]$.   We have the following bounds
\begin{enumerate}
\item $\sup_{a\in [0,\,1]}\big|r_{(x,\,k)}(a)-1\big| \leq  \frac{2\mathcal{R}}{|k|}+\mathit{O}(\frac{1}{|k|^{2}}),
$
\item
$\sup_{a\in [0,\,1]}\big|\tilde{r}_{(x,k)}(a)\frac{\bar{\kappa}}{\kappa(a)}-1\big|\leq \frac{2\mathcal{R}\nu^{-2} }{|k|}+\mathit{O}(\frac{1}{|k|^{2}})$,
\item
$\sup_{v\in \R}\Big| \frac{P_{(x,\,k)}(v)   }{\tilde{P}(v)  }-1\Big| \leq  \frac{2\mathcal{R}\nu^{-2}}{|k|}+\mathit{O}(\frac{1}{|k|^{2}}). $

\item
$\sup_{a\in [0,\,1]}\big|T_{(x,\,k),t}(a)-t \big|\leq \frac{2t}{|k|} $.

\end{enumerate}

\end{lemma}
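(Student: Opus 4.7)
The plan is, in each of the four parts, to express the relevant density as an explicit sum over the successive passage times of the deterministic Hamiltonian trajectory through the torus angle $a$, and then to use Lemma~\ref{RealBasics} to approximate that sum by a geometric series in the small variable $\mathcal{R}/|k|$. Let $s_0(a)<s_1(a)<\cdots$ denote the times at which $x_{s_n}\equiv a\pmod 1$ starting from $(x,k)$. Lemma~\ref{RealBasics} gives $|k_{s_n}|=|k|+O(\bar V/|k|)$, $s_0\in[0,1/|k|]$, and $s_{n+1}-s_n=1/|k|+O(\bar V/|k|^3)$.

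For part (1), conditioning on the first Poisson time $\tau_1\sim\textup{Exp}(\mathcal{R})$ and changing variables from time to torus angle with Jacobian $|k_s|^{-1}$ yields $r_{(x,k)}(a)=\sum_{n\ge 0}\mathcal{R}\,e^{-\mathcal{R}s_n(a)}/|k_{s_n(a)}|$. Replacing each $|k_{s_n}|$ by $|k|$ and each $s_n$ by $s_0+n/|k|$ identifies the sum as a geometric series of ratio $e^{-\mathcal{R}/|k|}$, whose closed form $\frac{\mathcal{R}/|k|}{1-e^{-\mathcal{R}/|k|}}\,e^{-\mathcal{R}s_0}$ expands as $1+O(\mathcal{R}/|k|)$ uniformly in $a$. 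The errors from the two replacements are $O(1/|k|^2)$ by Lemma~\ref{RealBasics}. For part (4), $T_{(x,k),t}(a)=\sum_{n:\,s_n\le t}|k_{s_n}|^{-1}$; bounding the passage count $N(a)$ via the total arclength $\int_0^t|k_s|\,ds=t|k|(1+O(\bar V/k^2))$ gives $|N(a)-t|k||\le 1+O(t\bar V/|k|)$, whence $T_{(x,k),t}(a)=t(1+O(\bar V/k^2))+O(1/|k|)$, which fits inside the rather loose bound $2t/|k|$.

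Part (2) is formally identical to (1), but with the Poisson density $\mathcal{R}e^{-\mathcal{R}s_n}$ replaced by the survival density $\mathcal{R}\kappa(a)e^{-\Lambda(s_n)}$ for the cumulative hazard $\Lambda(s)=\mathcal{R}\int_0^s\kappa(x_u)\,du$. The key observation is that $\Lambda(s)=\mathcal{R}\bar\kappa s+O(\mathcal{R}/|k|)$ over the range of $s$ that dominates the geometric sum, since the time average of $\kappa(x_u)$ over each full period equals $\bar\kappa$ with only an endpoint discrepancy $O(1/|k|)$. The resulting sum reduces to $\kappa(a)/\bar\kappa$ up to a relative error $O(\mathcal{R}/(|k|\bar\kappa))\le O(\mathcal{R}/(|k|\nu))$, where one factor of $\nu^{-1}$ enters through inverting $1-e^{-\mathcal{R}\bar\kappa/|k|}\sim\mathcal{R}\bar\kappa/|k|$; multiplying through by $\bar\kappa/\kappa(a)\le\nu^{-1}$ to recast the inequality in the form of the stated bound then produces the second factor of $\nu^{-1}$. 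Part (3) follows by substituting the bound from (2) into the identities $P_{(x,k)}(v)=\int_0^1\tilde r_{(x,k)}(a)\mathcal{P}_a(v)\,da$ and $\tilde P(v)=\int_0^1(\kappa(a)/\bar\kappa)\mathcal{P}_a(v)\,da$: the pointwise inequality $|\tilde r_{(x,k)}(a)-\kappa(a)/\bar\kappa|\le(\kappa(a)/\bar\kappa)\cdot 2\mathcal{R}\nu^{-2}/|k|$ transfers to $|P_{(x,k)}(v)/\tilde P(v)-1|$ after dividing by $\tilde P(v)$.

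The main technical obstacle is the simultaneous control, in part (2), of three sources of error: the slow drift of $|k_{s_n}|$ away from $|k|$, the $O(1/|k|)$ deviation of $\Lambda(s)$ from its linear-in-$s$ surrogate $\mathcal{R}\bar\kappa s$, and the position-dependence of the initial partial period $s_0\in[0,1/|k|]$. Each error is individually of order $1/|k|$, but showing that they combine to give the claimed bound requires expanding the geometric series to one order beyond what is needed in (1), and exploiting the uniform lower bound $\inf_a\kappa(a)=\nu$ at every step where $\bar\kappa$ is inverted.
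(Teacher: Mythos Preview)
Your proofs of parts (1), (3), and (4) follow the paper almost verbatim: the same passage-time decomposition, the same use of Lemma~\ref{RealBasics} to freeze $|k_s|$ and the period $\tau$, the same reduction of (3) to (2) via $P_{(x,k)}(v)=\int_0^1\tilde r_{(x,k)}(a)\mathcal P_a(v)\,da$.

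Part (2) is where you diverge. The paper does \emph{not} work with the hazard function of the thinned Poisson process. Instead it expands $\tilde r_{(x,k)}(a)$ as a series over the number $n$ of Poisson ticks occurring before the first actual momentum kick,
\[
\tilde r_{(x,k)}(a)=\kappa(a)\sum_{n\ge 0}\int_{(\mathbb R^+)^n}\!\!ds_1\cdots ds_n\,\mathcal R^n e^{-\mathcal R S(n)}\prod_{m=1}^n\bigl(1-\kappa(x_{S(m)})\bigr)\,r_{(x_{S(n)},k_{S(n)})}(a),
\]
and then applies Part (1) to each $r_{(x_{S(n)},k_{S(n)})}$. The factor $\nu^{-2}$ is exactly the sum $\sum_{n\ge 0}(n+1)(1-\nu)^n$ arising from a telescoping estimate on the products $\prod_m(1-\kappa(x_{S(m)}))$. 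Your route through $\Lambda(s)=\mathcal R\int_0^s\kappa(x_u)\,du$ is more direct and avoids this iteration entirely: once you observe that $\Lambda(s_n(a))=n\mathcal R\bar\kappa/|k|+O(\mathcal R/|k|)+nO(\mathcal R\bar V/|k|^3)$, the same geometric-series argument as in Part (1) runs with $\mathcal R$ replaced by $\mathcal R\bar\kappa$, and in fact yields a leading error $O(\mathcal R/|k|)$ \emph{without} any factor of $\nu^{-2}$.

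This means your attempted bookkeeping of the two $\nu^{-1}$ factors is off---neither the inversion of $1-e^{-\mathcal R\bar\kappa/|k|}$ nor the passage to $\tilde r\,\bar\kappa/\kappa(a)$ actually produces a $\nu^{-1}$ at leading order---but this is harmless, since you are then proving a bound strictly sharper than the one stated. The paper's decomposition is more modular (it reuses Part (1) as a black box), whereas your hazard-rate argument is shorter and gives a better constant; both are correct.
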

\begin{proof}\text{ }\\
Part (1):\vspace{.1cm}

Let $(x_{s},\, k_{s})\in \R^{2}$ be the position and momentum for a particle beginning at $(x,\,k)$ and evolving over a time period $s$ for Hamiltonian $H(x,k)=\frac{1}{2}k^{2}+V(x)$.  Notice that $r_{(x,\,k)}(a)$ can be written as
\begin{align*}
r_{(x,\,k)}(a)=\sum_{n=1}^{\infty}|\mathbf{k}_{(x,k)}(a)|^{-1}\mathcal{R}\,e^{-\mathcal{R}\,r_{n}(a)},
\end{align*}
where $s=r_{1}(a),\, r_{2}(a),\cdots$ are the periodic sequence of times for which $x_{s}\,\textup{mod}(1)=a$, and $\mathbf{k}_{(x,k)}(a)=s(k)\big(H(x,k)-V(a)\big)^{\frac{1}{2}}   $ is their momentum at the point.  These times will exist for every $a\in[0,\,1]$ as long as $H(x,\,k)>\bar{V}$.

 If $4\bar{V}\leq k^{2}$, then  $|k_{s}-k|\leq 2\bar{V}|k|^{-1}$ by Lemma~\ref{RealBasics}. Thus for large momentum $|k|\gg (\bar{V})^{\frac{1}{2}}$, $k_{s}$ is nearly constant and the hit times $r_{n}(a)$ will be close to the sequence of times $s=s_{n}(a)$ at which $ x+sk\,\textup{mod}(1)=a$ for a time period at least on the order of $t^{\frac{1}{2}}$.    The period $ \tau $ such that $r_{n}(a)-r_{n-1}(a)=\tau$ should thus be close to $\frac{1}{|k|}$.  When $|k|$ is large enough so that $|k_{s}-k|\leq 2\bar{V}|k|^{-1}<\frac{1}{2}|k|$, then clearly $\tau\leq \frac{2}{|k|}$, and
\begin{align}\label{TimePeriod}
|\tau-\frac{1}{|k|}|\leq \frac{1}{|k|}\big|\int_{0}^{\tau}ds\, k-\int_{0}^{\frac{1}{|k|}}ds\, k\big|\leq \frac{1}{|k|}\big(\int_{0}^{\tau}ds\,|k_{s}- k|+\int_{0}^{\frac{1}{|k|}}ds\,| k_{s}-k|\big)< \frac{6\bar{V}}{|k|^{3}}.
\end{align}
The difference between the first crossing-times  $|r_{1}(a)-s_{1}(a)|$ of the point $a$ can be similarly bounded.

Using the triangle inequality
\begin{multline}\label{Aqui}
\big|r_{(x,k)}(a)-1  \big|\leq \big|r_{(x,k)}(a)-\frac{1}{|k|}\sum_{n=1}^{\infty}\mathcal{R}\,e^{-\mathcal{R}\,r_{n}(a) }  \big|\\+\big|\frac{1}{|k|}\sum_{n=1}^{\infty}\mathcal{R}\,e^{-\mathcal{R}\,r_{n}(a) }-\frac{1}{|k|}\sum_{n=1}^{\infty}\mathcal{R}\,e^{-\mathcal{R}\,s_{n}(a) }\big|+  \big|\frac{1}{|k|}\sum_{n=1}^{\infty}\mathcal{R}\,e^{-\mathcal{R}\,s_{n}(a) }-1\big|\\ \leq \frac{2\mathcal{R}}{|k|}+\mathit{O}(\frac{1}{|k|^{2}})   ,
\end{multline}
where the last inequality follows by further computation using the inequalities above.  For instance, we can bound the first term on the right as
$$\big|r_{(x,k)}-\frac{1}{|k|}\sum_{n=1}^{\infty}\mathcal{R}\,e^{-\mathcal{R}\,r_{n}(a) }  \big|\leq \big|1-\frac{k}{\mathbf{k}_{(x,k)}(a)} \big|\, e^{-\mathcal{R}\,r_{1}(a)}\frac{1}{|k|\tau}  \frac{\mathcal{R}\,\tau}{1-e^{-\mathcal{R}\,\tau}}\leq  \frac{2\bar{V}}{k^{2}}.   $$

\noindent Part (2):\vspace{.1cm}

We will bound $\sup_{a\in
[0,\,1]}\big|\tilde{r}_{(x,k)}(a)\frac{\bar{\kappa} }{\kappa(a) }-
1\big|$ by invoking Part (1).  This will involve breaking down an
expression for $\tilde{r}_{(x,k)}(a)$. Since there are a random
number of Poisson times before the time of the first momentum
jump, the expression will have a series of integrals whose $n$th
term corresponds to a momentum jump occurs at the $(n+1)$th
Poisson time.
\begin{align}\label{StrangeSum}
\tilde{r}_{(x,\,k)}(a)=\kappa(a)\, \sum_{n=0}^{\infty} \int_{(\R^{+})^{n}}ds_{1}\dots ds_{n}\, \mathcal{R}^{n}\,e^{-\mathcal{R}S(n)   }\,\,\Pi_{m=1}^{n} \big|1-\kappa(x_{S(m)})\big| r_{(x_{S(n)},\,k_{S(n)})}(a),
\end{align}
and $S(m)=s_{1}+\cdots +s_{m}$.

  Let
$|k|>4\bar{V}^{\frac{1}{2}}$ so that with two applications of Lemma~\ref{RealBasics}, $\sup_{s,\,t\geq 0}|k_{t}-k_{s}|\leq |k|^{-1}2\bar{V}$.  In particular, for any time $S(n)$, we can apply Part (1) to the difference $|r_{(x_{S(n)},\,k_{S(n)})}(a)-1|$ to get
\begin{multline}\label{NiHa}
\Big|\tilde{r}_{(x,\,k)}(a)\frac{ \bar{\kappa}  }{ \kappa(a)}-\bar{\kappa} \sum_{n=0}^{\infty} \int_{(\R^{+})^{n}}ds_{1}\dots ds_{n}\, \mathcal{R}^{n}\,e^{-\mathcal{R}S(n)   }\,\,\Pi_{m=1}^{n} \big|1-\kappa(x_{S(m)})\big| \Big| \\ \leq \big(\frac{2\mathcal{R}}{|k|}+\mathit{O}(\frac{1}{|k|^{2}})  \big)\sum_{n=1}^{\infty} \int_{(\R^{+})^{n}}ds_{1}\dots ds_{n}\, \mathcal{R}^{n}\,e^{-\mathcal{R}S(n)   }\,\,\Pi_{m=1}^{n} \big|1-\kappa(x_{S(m)})\big| \\
\leq \Big(\frac{2\mathcal{R}}{|k|}+\mathit{O}(\frac{1}{|k|^{2}})\Big)\sum_{n=1}^{\infty} (1-\nu)^{n}= \nu^{-1}(1-\nu) \Big(\frac{2\mathcal{R}}{|k|}+\mathit{O}(\frac{1}{|k|^{2}})\Big),
\end{multline}
where the second inequality follows since $\big|1-\kappa(x_{S(m)})\big|\leq 1-\nu$ for all $m$ and $\mathcal{R}^{n}\,e^{-\mathcal{R}S(n)} $ defines a probability measure on $(\R^{+})^{n}$.

 If $\int_{(\R^{+})^{n} }ds_{1} \cdots ds_{n} \mathcal{R}^{n}\,e^{-\mathcal{R}S(n)   }\,\Pi_{m=1}^{n} \big| 1-\kappa(x_{S(m)})\big|$ were replaced by $(1-\bar{\kappa })^{n}$ in the left-side of~(\ref{NiHa}), then identity $\sum_{n=0}^{\infty}(1-\bar{\kappa})^{n}=\kappa^{-1}$ would make the difference zero.

  Using a telescoping sum and the definition of $r_{(x,k)}(a)$,
\begin{multline*}
\Big| \int_{(\R^{+})^{n} }ds_{1} \cdots ds_{n} \mathcal{R}^{n}\,e^{-\mathcal{R}S(n)   }\,\Pi_{m=1}^{n} \big| 1-\kappa(x_{S(m)})\big|-(1-\bar{\kappa })^{n}\Big|\\ \leq  \sum_{m=0}^{n} (1-\kappa)^{n-m} \int_{(\R)^{m-1}} ds_{1} \dots  ds_{m-1}  \mathcal{R}^{m-1}\,e^{-\mathcal{R}S(m-1)}\Pi_{r=1}^{m-1} \big|1-\kappa(x_{S(r)})\big|\,  \\ \times \int_{0}^{1}da\,\big|1-\kappa(a)\big|\,\big| r_{(x_{S(m-1)},\,k_{S(m-1)})}(a)-1 \big|
\end{multline*}
Again by Part (1), since $\mathcal{R}^{m-1}e^{-\mathcal{R}S(m-1)}$
is a probability measure on $(\R^{+})^{m-1}$, and by the bounds
$1-\kappa(a),\,1-\bar{\kappa} \leq\, 1-\nu$, we can estimate the
right-side above by
$$n (1-\nu)^{n}\Big(\frac{2\mathcal{R}}{|k|}+\mathit{O}(\frac{1}{|k|^{2}})\Big).      $$

Putting everything together
$$\sup_{a\in [0,\,1]}\big|\tilde{r}_{(x,k)}(a)\frac{\bar{\kappa}}{\kappa(a)}-1 \big|\leq  \Big(\frac{2\mathcal{R}}{|k|}+\mathit{O}(\frac{1}{|k|^{2}})\Big)\,\sum_{n=0}^{\infty}(n+1)(1-\nu)^{n},  $$
and the sum of the series is $\nu^{-2}$.\vspace{.25cm}

\noindent Part (3):\vspace{.1cm}

Now we study the probability density $P_{(x,\,k)}(v)$ for the next momentum jump.   We can write the density for the next momentum jump as
\begin{align}\label{NotAvg}
P_{(x,\,k)}(v)=\int_{0}^{1}da\,\tilde{r}_{(x,\,k)}(a)\,\mathcal{P}_{a}(v),
\end{align}

We then have a bound using
\begin{multline}\label{IG}
\sup_{v\in \R}\big| \frac{P_{(x,\,k)}(v)   }{\tilde{P}(v)  }-1\big|\leq \sup_{v\in \R} \frac{1}{\tilde{P}(v)}\int_{0}^{1}da\,\big|\tilde{r}_{(x,\,k)}(a)\frac{\bar{\kappa}}{ \kappa(a)}-1\big|\frac{\kappa(a)}{\bar{\kappa}} \mathcal{P}_{a}(v) \\  \leq \sup_{a\in [0,\,1]}\big|\tilde{r}_{(x,k)}(a)\frac{\bar{\kappa}}{   \kappa(a)}-1\big|\sup_{|v|\leq J} \frac{1}{\tilde{P}(v)} \int_{0}^{1}da\,\mathcal{P}_{a}(v)\frac{\bar{\kappa}}{   \kappa(a)}= \sup_{a\in [0,\,1]}\big|\tilde{r}_{(x,k)}(a)\frac{\bar{\kappa}}{   \kappa(a)}-1\big|,
\end{multline}
where in the last equality we have used the definition of $\tilde{P}(v)$.  Applying Part (2) then we get the bound.\vspace{.25cm}

\noindent Part (4): \vspace{.1cm}
The density $T_{(x,k),t}$ can be written
$$T_{(x,k),t}(a)= n_{(x,k)}(a,t)k_{(x,k)}(a)   $$
where $\mathbf{k}_{(x,k)}(a)$ is defined as in Part (1), and $n_{(x,k)}(a,t)$ is the number of times the particle passes over the torus point $a$ over a time period $t$ when starting from $(x,k)$.  The result follows by bounding the errors for $\mathbf{k}_{(x,k)}(a)\sim k$ and $|k|^{-1}n_{(x,k)}(a,t)\sim t$ which is similar to Part (1).

\end{proof}

The following lemma bounds the contribution of the  cumulative
drift over periods of high-energy. Define $\tau(t)$ to be the time
of the next to last momentum jump and put it equal to zero if two
jumps have not occurred.

\begin{lemma}\label{HEDrift}
Assume List~\ref{AssumpOne}.   In the limit $t\rightarrow \infty$,
for $0< \beta <\frac{1}{2}$,
$$\mathbb{E}\Big[\sup_{0\leq r\leq t}\Big|t^{-1+2\beta} \int_{0}^{r}ds\,\frac{dV}{dx}(X_{s})\,\chi \big(|K_{\tau(s)}|>t^{\beta} \big)  \Big|^{2} \Big]^{\frac{1}{2}}\longrightarrow 0.    $$

\end{lemma}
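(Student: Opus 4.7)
The plan is to decompose the integral across jump-free intervals. On $(t_{n},t_{n+1})$ there are no momentum kicks, so Newton's equation gives $\int_{t_{n}}^{t_{n+1}^{-}}\frac{dV}{dx}(X_{s})\,ds = K_{t_{n+1}^{-}}-K_{t_{n}}$, and since $\tau(s)=t_{n-1}$ is constant on this interval the indicator reduces to $\chi_{n}:=\chi(|K_{t_{n-1}}|>t^{\beta})$. Hence
$$\int_{0}^{r}\frac{dV}{dx}(X_{s})\chi(|K_{\tau(s)}|>t^{\beta})\,ds \;=\; \sum_{n}Y_{n}(r),\qquad Y_{n}(r):=\big(K_{(t_{n+1}\wedge r)^{-}}-K_{t_{n}\wedge r}\big)\chi_{n}.$$
By the fourth-moment bound (II) of List~\ref{AssumpOne} and Markov, I would restrict to the good event that every jump in $[0,t]$ satisfies $|w_{n}|\leq t^{\beta}/2$. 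On this event $\chi_{n}=1$ propagates to $|K_{t_{n}}|\geq t^{\beta}/4$, and Lemma~\ref{RealBasics} yields the uniform pointwise bound $|Y_{n}|\leq 8\bar{V}\,t^{-\beta}$.

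Next I would perform a Doob--Meyer decomposition $Y_{n}=\widetilde{Y}_{n}+G_{n}$ along the discrete filtration $\{\mathcal{F}_{t_{n}}\}$, with $G_{n}:=\mathbb{E}[Y_{n}\mid\mathcal{F}_{t_{n}}]$ and $\widetilde{Y}_{n}:=Y_{n}-G_{n}$ martingale differences. For the martingale piece, orthogonality of increments and Doob's $L^{2}$ maximal inequality give
$$\mathbb{E}\Big[\sup_{r\leq t}\big|\sum_{n}\widetilde{Y}_{n}(r)\big|^{2}\Big]\leq 4\mathbb{E}\Big[\sum_{n\leq \mathcal{N}_{t}}\widetilde{Y}_{n}^{2}\Big]\leq 4\mathbb{E}[\mathcal{N}_{t}]\cdot (8\bar{V})^{2}\,t^{-2\beta}=\mathit{O}(t^{1-2\beta}),$$
so after taking the square root and multiplying by $t^{-1+2\beta}$ the contribution is $\mathit{O}(t^{\beta-\frac{1}{2}})\to 0$ because $\beta<\frac{1}{2}$.

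For the compensator $\sum_{n}G_{n}$, I would use the energy-conservation identity $K_{t_{n+1}^{-}}-K_{t_{n}}=(V(X_{t_{n}})-V(X_{t_{n+1}^{-}}))/K_{t_{n}}+\mathit{O}(|K_{t_{n}}|^{-3})$ obtained by expanding the square-root in the conserved energy $\frac{1}{2}K^{2}+V$. Taking the conditional expectation and invoking Lemma~\ref{SingleJump}(1), according to which the density of $X_{t_{n+1}^{-}}\!\mod 1$ differs from the uniform density by $\mathit{O}(|K_{t_{n}}|^{-1})$, gives
$$G_{n}=\frac{V(X_{t_{n}})-\bar{V}_{\rm avg}}{K_{t_{n}}}\chi_{n}+R_{n},\qquad \bar{V}_{\rm avg}:=\int_{0}^{1}V(a)\,da,\qquad |R_{n}|=\mathit{O}(|K_{t_{n}}|^{-2}).$$
A second Doob--Meyer split with respect to $\mathcal{F}_{t_{n-1}}$, reapplying Lemma~\ref{SingleJump}(1) so that $\mathbb{E}[V(X_{t_{n}})-\bar{V}_{\rm avg}\mid\mathcal{F}_{t_{n-1}}]=\mathit{O}(|K_{t_{n-1}}|^{-1})$, exhibits the leading term as martingale increments of size $\mathit{O}(t^{-\beta})$ plus a further predictable contribution absorbed into $R_{n}$. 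The new martingale is again controlled by $\mathit{O}(t^{1/2-\beta})$ through orthogonality, while the residual drift $\sum_{n}R_{n}$ obeys $\mathbb{E}\big[\sum_{n}|R_{n}|\big]=\mathit{O}(\log t)$ once the worst-case threshold $|K_{t_{n}}|\geq t^{\beta}$ is replaced by the typical linear energy growth $\mathbb{E}[K_{s}^{2}]\asymp s$ from Section~\ref{SecEnergy}.

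The principal obstacle is controlling the off-by-one gap in the indicator: the condition $|K_{t_{n-1}}|>t^{\beta}$ only propagates to $|K_{t_{n}}|\gtrsim t^{\beta}$ on the event of a small intervening jump $|w_{n}|$, and one must carefully account for the (rare but positive-probability) event of an atypically large jump using the moment bound. A secondary subtlety is that the pathwise bound on $\sum_{n}|R_{n}|$ coming from the high-energy threshold $|K_{t_{n}}|\geq t^{\beta}$ only gives $\mathit{O}(t^{1-2\beta})$, which is of critical order; extracting a genuinely vanishing contribution after the $t^{-1+2\beta}$ rescaling requires replacing this pathwise bound by an expectation estimate that uses the typical linear energy growth from Lemma~\ref{EIL}.
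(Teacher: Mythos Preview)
Your approach is essentially the paper's. Both proofs decompose the drift over Poisson intervals, split off a martingale piece controlled by Doob's inequality together with the pointwise bound $|Y_n|=\mathit{O}(t^{-\beta})$ from Lemma~\ref{RealBasics}, expand the compensator via energy conservation and Lemma~\ref{SingleJump}(1) to obtain an $\mathit{O}(|K|^{-2})$ contribution per term, and finally invoke Lemma~\ref{EIL} to upgrade the critical-order pathwise bound on the residual. The paper organizes the conditioning a bit differently: rather than two iterated Doob--Meyer splits along $\mathcal{F}_{t_n}$ and then $\mathcal{F}_{t_{n-1}}$, it groups the Poisson intervals in even/odd pairs and conditions directly on $\mathcal{F}_{t_{2n-1}}$, producing the two-step estimate in one shot. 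This is only cosmetic.

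One correction: your claimed rate $\mathbb{E}\big[\sum_n |R_n|\big]=\mathit{O}(\log t)$ is not right. Linear growth of $\mathbb{E}[K_s^2]$ does not yield $\mathbb{E}\big[|K_s|^{-2}\chi(|K_s|>t^\beta)\big]=\mathit{O}(s^{-1})$, and in any case Lemma~\ref{EIL} controls occupation times rather than pointwise inverse moments. The paper's actual argument is the $\epsilon$--$\delta$ splitting you gesture at in your last paragraph: separate the terms with $|K_{t_{n-1}}|\geq \epsilon t^{1/2}$ (where $|R_n|\leq c\,\epsilon^{-2}t^{-1}$, so the sum is $\mathit{O}(\epsilon^{-2})$) from those with $t^\beta\leq |K_{t_{n-1}}|<\epsilon t^{1/2}$ (each bounded by $c\,t^{-2\beta}$, but the \emph{number} of such terms is at most $\mathcal{R}\delta t$ with high probability by Lemma~\ref{EIL}, and the fraction is bounded by $1$ deterministically, which converts the probability bound to an $L^2$ bound). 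After rescaling by $t^{-1+2\beta}$ this yields $\mathit{O}(\epsilon^{-2}t^{2\beta-1})+\mathit{O}(\delta)$; choose $\delta$, then $\epsilon$, then $t$.
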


\begin{proof}

It is convenient to split the total integral into a sum of integrals over the periods between Poisson times (at which there may be a momentum jump breaking the conservation of energy) which include only the even and odd terms respectively.  Let $t_{1},\dots,t_{\mathcal{N}_{r}}$ be the Poisson times up to a time $r$.
\begin{multline}
\int_{0}^{r}ds\,\frac{dV}{dx}(X_{s})\,\chi\big(|K_{\tau(s)}|>t^{\beta }\big) \approx \sum_{n=1}^{\lfloor \frac{\mathcal{N}_{r} }{2}\rfloor } \chi\big(|K_{t_{2n-1}}|\geq t^{\beta}  \big)\, \int_{t_{2n}}^{t_{2n+1}}ds\,\frac{dV}{dx}(X_{s})\\+\sum_{n=0}^{\lfloor \frac{\mathcal{N}_{r}}{2}\rfloor } \chi\big(|K_{t_{2n}}|\geq t^{\beta}   \big)\, \int_{t_{2n+1}}^{t_{2n+2}}ds\,\frac{dV}{dx}(X_{s}),
\end{multline}
where on the right side, we have neglected the integral from  $t_{2\lfloor \frac{\mathcal{N}_{r}}{2} \rfloor}$ to $r$, which will be small, and we define $t_{-1}=0$.  We will focus on the sum with interval starting at even numbered times $[t_{2n},t_{2n+1}]$.  First, we will show that the terms in the sum can be replaced by the same expressions conditioned on the information up to time $t_{n-1}$.  These conditional expectations can be uniformly bounded by an argument following at the end of the proof.  The strategy is to invoke Lemma~\ref{EIL} to guarantee that most of the terms in the sum have $|K_{t_{2n-1}}|$ on the order of $t^{\frac{1}{2}}$ rather than just $|K_{t_{2n}}|\geq t^{\beta}$.  This lowers the bounds available for the conditional expectations of those terms.

The following is a martingale
\begin{align}\label{MartiGraw}
Y_{r}=\sum_{n=0}^{\lfloor \frac{\mathcal{N}_{r}}{2}\rfloor } \chi\big(|K_{t_{2n-1}}|\geq \frac{1}{2}t^{\beta}\big)\,\Big(  \int_{t_{2n}}^{t_{2n+1}}ds\,\frac{dV}{dx}(X_{s})-\mathbb{E}\Big[\int_{t_{2n}}^{t_{2n+1}}ds\,\frac{dV}{dx}(X_{s})\, \Big|\, \mathcal{F}_{t_{2n-1}} \Big]\Big).
\end{align}
However, the second moment for a single term from the sum is bounded by
\begin{multline}\label{NotDrop}
\mathbb{E}\Big[ \chi\big(|K_{t_{2n-1}}|\geq \frac{1}{2}t^{\beta}\big) \,\Big|  \int_{t_{2n}}^{t_{2n+1}}ds\,\frac{dV}{dx}(X_{s})-\mathbb{E}\Big[\int_{t_{2n}}^{t_{2n+1}}ds\,\frac{dV}{dx}(X_{s})\, \Big|\, \mathcal{F}_{t_{2n-1}} \Big]\Big|^{2}\Big] \\ \leq
\mathbb{E}\Big[\Big|  \int_{t_{2n}}^{t_{2n+1}}ds\,\frac{dV}{dx}(X_{s})  \Big|^{2} \Big|\,
|K_{t_{2n-1}}|\geq t^{\beta}\Big]\leq 16\bar{V}^{2}t^{-2\beta} \\+ 2\mathcal{R}^{-2}\sup_{0\leq s\leq 1}\big|\frac{dV}{dx}(a) \big|^{2}\, \Pr\Big[ |K_{t_{2n}}|\leq \frac{1}{2}t^{2\beta}\Big| |K_{t_{2n-1}}|\geq t^{\beta}\Big],
\end{multline}
where we have considered separate bounds for the event that $|K_{t_{2n}}|\geq \frac{1}{2}t^{\beta}$ or $|K_{t_{2n}}|< \frac{1}{2}t^{\beta}$. When $|K_{t_{2n}}|\geq \frac{1}{2}t^{\beta}$, then we can apply Lemma~\ref{RealBasics} to bound the drift by $4\bar{V}t^{-\beta}$, and when $|K_{t_{2n}}|< \frac{1}{2}t^{\beta}$ then we use that the forces $\big|\frac{dV}{dx}(a) \big|$ are bounded and that the difference between two Poisson times has an exponential distribution with second moment $2\mathcal{R}^{-2}$.  Finally, since the force can only change the momentum by at most $\bar{V}^{\frac{1}{2}}\ll \frac{1}{4}t^{\beta}$ over any time interval, only a large momentum jump can send $|K_{t_{2n}}|$ below $\frac{1}{2}t^{\beta}$.  However, since the fourth moments of $\mathcal{P}_{a}(w)$ are less than $\rho$,
$$\Pr\Big[ |K_{t_{2n}}|\leq \frac{1}{2}t^{\beta}\Big| |K_{t_{2n-1}}|\geq t^{\beta}\Big]\leq \Pr\Big[|w_{n}|\geq \frac{1}{4}t^{\beta} \Big|  |K_{t_{2n-1}}|\geq t^{\beta}\Big]\leq  \sup_{a\in [0,1]} \int_{\frac{1}{4}t^{\beta} }^{\infty}dw\, \mathcal{P}_{a}(w)\leq \frac{4^{4}\rho}{t^{4\beta}},   $$
where the last inequality is Chebyshev's and thus the right side is $\mathit{O}(t^{-4\beta})$ which make the right term on the left side of~(\ref{NotDrop}) negligible compared to the left term.

Consider again the variance of a single term in~(\ref{MartiGraw}).
By Doob's maximal inequality
$$\mathbb{E}\big[\sup_{0\leq r\leq t} \big| Y_{r}\big|^{2}\big]\leq 4\,\mathbb{E}\big[\big|Y_{t}\big|^{2}\big]\leq \big(16\bar{V}^{2}t^{-2\beta}+\mathit{O}(t^{-4\beta}) \big) \mathbb{E}[\mathcal{N}_{t}]= 16\bar{V}^{2}\mathcal{R} t^{1-2\beta}+\mathit{O}(t^{1-4\beta}). $$
Thus we can focus on bounding the expressions
$\big|\mathbb{E}\big[\int_{t_{2n}}^{t_{2n+1}}ds\,\frac{dV}{dx}(X_{s})\,
\big|\, \mathcal{F}_{t_{2n-1}} \big]\big|$ when
$|K_{t_{2n-1}}|\geq t^{\beta}$.  The end result of the
analysis below will be to show that there is a constant $c$ such
that for all sufficiently large $t^{\beta}\gg 1$
\begin{align}\label{TypeA}
\Big|\mathbb{E}\big[\chi\big(|K_{t_{2n-1}}|\geq t^{\beta} \big)\int_{t_{2n}}^{t_{2n+1}}ds\,\frac{dV}{dx}(X_{s})\, \big|\, \mathcal{F}_{t_{2n-1}} \big]\Big|  \leq c\,t^{-2\beta}.
\end{align}
Applying the above inequality (also with $t^{\beta}$ replaced by $\epsilon t^{\frac{1}{2}}$),
$$\sum_{n=1}^{\lfloor \frac{\mathcal{N}_{t}}{2}\rfloor   } \Big|\mathbb{E}\big[\chi\big(|K_{t_{2n-1}}|\geq t^{\beta}\big)\int_{t_{2n}}^{t_{2n+1}}ds\,\frac{dV}{dx}(X_{s})\, \big|\, \mathcal{F}_{t_{2n-1}} \big]\Big|\leq c\, \mathcal{N}_{t,\epsilon}\epsilon^{-2}t^{-1}+c\,\big(\mathcal{N}_{t}- \mathcal{N}_{t,\epsilon}\big)\,t^{-2\beta},$$
where $ \mathcal{N}_{r,\epsilon}$ is the number of terms with  $|K_{t_{2n-1}}|  \geq \epsilon\,t^{\frac{1}{2}}$ up to time $r$.  By the triangle inequality
\begin{multline}\label{Height}
\mathbb{E}\Big[\Big| t^{-1+2\beta}  \sum_{n=1}^{\lfloor \frac{\mathcal{N}_{t} }{2}\rfloor   } \mathbb{E}\big[\chi\big(|K_{t_{2n-1}}|\geq t^{\beta}\big)\int_{t_{2n}}^{t_{2n+1}}ds\,\frac{dV}{dx}(X_{s})\, \big|\, \mathcal{F}_{t_{2n-1}} \big]\,\Big|^{2}\Big]^{\frac{1}{2}} \\ \leq 2^{\frac{1}{2}}c\,\mathcal{R}\,\epsilon^{-2}t^{-1+2\beta}+c\,t^{-1} \mathbb{E}\big[\big| \mathcal{N}_{t}-\mathcal{N}_{t,\epsilon} -\gamma_{\epsilon}(t)   \big|^{2}\big]^{\frac{1}{2}}+ c\,t^{-1} \mathbb{E}\big[\big| \gamma_{\epsilon}(t)   \big|^{2}\big]^{\frac{1}{2}}  ,
\end{multline}
where it was used that $\mathcal{N}_{t,\epsilon}\leq \mathcal{N}_{t}$, $\mathbb{E}\big[\big|\mathcal{N}_{t}\big|^{2}\big]^{\frac{1}{2}}\leq 2^{\frac{1}{2}}t\mathcal{R}$, and $ \gamma_{\epsilon}(t)$ is defined as
$$\gamma_{\epsilon}(t)=\mathcal{R}\sum_{n=1}^{\mathcal{N}_{r}}(t_{n}-t_{n-1})\chi(|K_{t_{n-1}}|\leq \epsilon t^{\frac{1}{2}}).$$
 The difference $\mathcal{N}_{r}- \mathcal{N}_{t,\epsilon}-\gamma_{\epsilon}(r) $ is a
  martingale since $t_{n}-t_{n-1}$ are exponentially distributed with mean $\mathcal{R}^{-1}$.
   The variance of the martingale satisfies
$$\mathbb{E}\big[\big|\mathcal{N}_{t}- \mathcal{N}_{t,\epsilon}-\gamma_{\epsilon}(t)\big|^{2}\big]= \mathbb{E}\big[\sum_{n=1}^{\mathcal{N}_{t} }|\mathcal{R}(t_{n}-t_{n-1})-1|^{2}\,\chi(|K_{t_{n-1}}|\leq \epsilon t^{\frac{1}{2}})     \big]  \leq \mathbb{E}[\mathcal{N}_{t}].$$
Thus the middle term on the right-side of~(\ref{Height}) is
$\mathit{O}(t^{\frac{1}{2}-2\beta})$.   $\gamma_{\epsilon}(t)$ is
less than the amount of time $r\in[0,\,t]$ the particle spends
with $t^{-\frac{1}{2}}E_{r}^{\frac{1}{2}}\leq \epsilon $.  In
other terms, $t^{-1}\gamma_{\epsilon}(t)\leq 1-
T_{\epsilon,\,V}^{(t)}$ where $T_{\epsilon,\,V}^{(t)}$ is defined
as in Lemma~\ref{EIL}.  Thus by Lemma~\ref{EIL}, $ \Pr\big[
t^{-1}\gamma_{\epsilon}(t) \geq \delta \big]\leq
C\frac{r_{2}^{\frac{1}{2}} }{r_{1}}\frac{\epsilon}{\delta}$.
Since $t^{-1} \gamma_{\epsilon}(t)$ is bounded by $1$,
$$ \mathbb{E}\big[\big| t^{-1} \gamma_{\epsilon}(t)   \big|^{2}\big]^{\frac{1}{2}}\leq  \delta \Pr\big[ t^{-1}\gamma_{\epsilon}(t) < \delta \big]+  \Pr\big[ t^{-1}\gamma_{\epsilon}(t) \geq \delta \big] \leq \delta+C\frac{r_{2}^{\frac{1}{2}} }{r_{1}}\frac{\epsilon}{\delta}\big.   $$
Thus we can pick $\delta$ to make the first term  small and then
pick $\epsilon$ to make the second term small. We now turn to
showing~(\ref{TypeA}).

 By the Markov property
\begin{multline}
\big|\mathbb{E}\big[\chi\big(|K_{t_{2n-1}}|\geq t^{\beta} \big)\int_{t_{2n}}^{t_{2n+1}}ds\,\frac{dV}{dx}(X_{s})\, \big|\, \mathcal{F}_{t_{2n-1}} \big]\big| \\ \leq \int dx\,dk\,P_{\omega}(x,\,k)\,\chi\big(|k|\geq \,t^{\beta}\big)\big|\mathbb{E}\big[\int_{t_{2n}}^{t_{2n+1}}ds\,\frac{dV}{dx}(X_{s})\, \big|\, (X_{t_{2n-1}},\,K_{t_{2n-1}})=(x,\,k) \big]\big|.
\end{multline}
where $P_{\omega}(x,\,k)$ is the distribution $(X_{t_{2n-1}},\,K_{t_{2n-1}})$ conditioned on $\omega\in \mathcal{F}_{t_{2n-1}}$.

Assuming that $|K_{t_{2n-1}}|\geq t^{\beta}$ and $K_{t_{2n}}\geq \frac{1}{2}t^{\beta}$,
\begin{multline}
\Big|\int_{t_{2n}}^{t_{2n+1}}ds\,\frac{dV}{dx}(X_{s})-\frac{V(X_{t_{2n+1}})-V(X_{t_{2n}})}{K_{t_{2n-1}} }\Big| \\ \leq \Big|\int_{t_{2n}}^{t_{2n+1}}ds\,\frac{dV}{dx}(X_{s})-\frac{1}{K_{t_{2n}}}\int_{t_{2n}}^{t_{2n+1}}ds \frac{dV}{dx}(X_{s})\,K_{s}\Big|+\big|\frac{1}{K_{t_{2n}}}-\frac{1}{K_{t_{2n-1}}}\big|\,\big|V(X_{t_{2n+1}})-V(X_{t_{2n}})\big|\\ < 2t^{-2\beta}\bar{V}\sup_{0\leq x\leq 1}\big|\frac{dV}{dx}(x)\big|\big(t_{2n+1}-t_{2n}\big) +2t^{-2\beta}\bar{V}|K_{t_{2n}}-K_{t_{2n-1}}|,
\end{multline}
where we have used the identity $\int_{r}^{t}ds \frac{dV}{dx}(X_{s})\,K_{s}=V(X_{t})-V(X_{r})$. The only thing random in the final bound is the difference $t_{n+1}-t_{n}$, which is an exponential random variable with mean $\mathcal{R}^{-1}$ and the difference $|K_{t_{2n}}-K_{t_{2n-1}}|$ which has variance less than $\frac{r_{2}}{\mathcal{R}}$.  Thus
\begin{multline}
\Big|\mathbb{E}\Big[\chi\big(|K_{t_{2n-1}}|\geq t^{\beta}\big)\int_{t_{2n}}^{t_{2n+1}}ds\,\frac{dV}{dx}(X_{s})\, \Big|\, \mathcal{F}_{t_{2n-1}} \Big]\Big| \\ < \mathit{O}(t^{-4\beta})+  2\,t^{-2\beta}\,\mathcal{R}\bar{V}   \sup_{0\leq x\leq 1}\big|\frac{dV}{dx}(x)\big|+ 2\,t^{-2\beta} \bar{V}\frac{r_{2}^{\frac{1}{2}} }{\mathcal{R}^{\frac{1}{2}}} \\+\int dx\,dk\,P_{\omega}(x,\,k)\,\chi\big(|k|\geq \,t^{\beta}\big)\Big|\mathbb{E}\Big[\frac{V(X_{t_{2n+1}})-V(X_{t_{2n}})}{K_{t_{2n-1}} } \, \Big|\, (X_{t_{2n-1}},\,K_{t_{2n-1}})=(x,\,k) \Big]\Big|,
\end{multline}
where $\omega \in \mathcal{F}_{2n-1}$ and
$\mathit{O}(t^{-4\beta})$ corresponds the unlikely event that
$\Pr\big[K_{t_{2n}}< \frac{1}{2}t^{\beta}\big]$ which we have
treated above following~(\ref{NotDrop}).

 Adding and subtracting the spatial average of the potential,
 $\int_{0}^{1}da\,V(a)=\mathcal{V}$ in the expectation above,
\begin{multline}\label{Lantern}
\Big|\mathbb{E}\Big[\frac{V(X_{t_{2n+1}})-V(X_{t_{2n}})}{K_{t_{2n-1}} }\, \Big|\, (X_{t_{2n-1}},\,K_{t_{2n-1}}) \Big]\Big| \\
 \leq \frac{1}{|K_{t_{2n-1}}|}\big( \int_{\R^{2}} dx\,dk\,P_{(X_{t_{2n-1}},\,K_{t_{2n-1}})}(x,\,k)\,\big|\mathbb{E}\big[V(X_{t_{2n+1}})-\mathcal{V}\, \big|\, (X_{t_{2n}},\,K_{t_{2n}})=(x,\,k) \big]\big| \\ +\big|\mathbb{E}\big[V(X_{t_{2n}})-\mathcal{V} \, \big|\, (X_{t_{2n-1}},\,K_{t_{2n-1}}) \big]\big|   \big),
\end{multline}
where $P_{(X_{t_{2n-1}},\,K_{t_{2n-1}})}$ is the probability density for $(X_{t_{2n}},\,K_{t_{2n}})$ given $(X_{t_{2n-1}},\,K_{t_{2n-1}})$.

Finally, we can work with quantities that allow more explicit
expressions
$$ \mathbb{E}\big[V(X_{t_{2n}})  \, \big|\, (X_{t_{n-1}},\,K_{t_{n-1}}) \big]=\int_{0}^{\infty}dt\,\mathcal{R}\,e^{-\mathcal{R}t}V(x_{t})=\int_{0}^{1}da\,r_{(x_{0},\,k_{0})}(a)\,V(a), $$
where $x_{t}$ is the position at time $t$ for the particle
evolving according to the dynamics from the initial point
$(x_{0},\,k_{0})=(X_{t_{n-1}},\,K_{t_{n-1}})$, and $r_{(x,\,k)}\in L^{1}([0,1])$ is defined as in Lemma~\ref{SingleJump}.
$$\big|\mathbb{E}\big[V(X_{t_{2n}})  \, \big|\, (X_{t_{n-1}},\,K_{t_{n-1}}) \big]-\mathcal{V}\big|\leq \bar{V}\,\int_{0}^{1}da\,\big|r_{(x_{0},\,k_{0})}(a)-1\big|.   $$
 By Part (1) of Lemma~\ref{SingleJump}, when $|K_{t_{n-1}}|\geq t^{\beta}$, then $|r_{(x_{0},\,k_{0})}(a)-1|\leq 4\mathcal{R}\,t^{-\beta}+\mathit{O}(t^{-2\beta})$.  A similar analysis bounds the term $\big|\mathbb{E}\big[V(X_{t_{2n+1}})-\mathcal{V}\, \big|\, (X_{t_{2n}},\,K_{t_{2n}})=(x,\,k) \big]\big|$.  Thus with the factor of $|K_{t_{n-1}}|^{-1}$ on the right side of~(\ref{Lantern}),  then~(\ref{Lantern}) is $\mathit{O}(t^{-2\beta})$, which completes the proof.

\end{proof}

\section{Bounding the momentum drift   }\label{SecDrift}

In general, we have that $K_{t}=K_{0}+M_{t}+\int_{0}^{t}dr
\,\frac{dV}{dx}(X_{r})$.  In this section, we develop tools for
controlling the cumulative drift
$\int_{0}^{st}dr\,\frac{dV}{dx}(X_{r})$.  The end result, under
the assumption of the symmetry (\textit{iii}) of~\ref{AssumpTwo},
is that
\begin{align}\label{Horse}
\mathbb{E}\Big[\Big|\sup_{0\leq s\leq 1}
t^{-\frac{1}{2}}\int_{0}^{st}ds\,\frac{dV}{dx}(X_{s})
\Big| \Big]\longrightarrow 0.
\end{align}
Thus on the scale $t^{\frac{1}{2}}$ of  a central limit theorem
for $K_{t}$, the drift term vanishes.

  By the Lemma~\ref{EIL}, the particle  spends most of the time at ``high energy''
   (in Lemma~\ref{EIL} this meant $\propto t$),
    where the contribution to the total drift over any given finite
    time interval is small.  However, the particle is also making
     occasional shorter incursions to ``low energy''
      where the contribution may be larger over a finite interval.
      In this section, ``low energy'' roughly means below $t^{\frac{1}{4}}$.
In order to bound~(\ref{Horse}), the  analysis is split into parts
treating the drift at high and low energies respectively.
%The
%much simpler case of small energy

From this section onwards, we contract the position degree of
freedom to a single periodic cell $x\in[0,1]$.  This clearly does
not affect the statistics for the drift process~(\ref{Horse}).
Thus the dynamics satisfies the same linear Boltzmann
equation~(\ref{Boltzmann}) as before but with periodic boundary
conditions; the derivatives in position at the boundaries of the
interval are symmetric.

We now define what we mean by low energy incursions.  They are
limited by starting and ending times. Define the hitting time $\theta_{0}= \textup{min}\{s\in [0,t]\,
\big| \, |K_{s}|\geq \,t^{\frac{1}{4}} \} $.  For $j\geq 1$ define
the sequences of hitting times $\sigma_{j}, \theta_{j}$:
\begin{eqnarray}\label{StopTime}
\sigma_{j}&=& \textup{min}\{s\in [0,\infty), M_{s}-M_{s^{-}}\neq 0 \big| \, s>\theta_{j-1},\,  |K_{s}|< t^{\frac{1}{4}} \},\\
\theta_{j}&=& \textup{min}\{s\in [0,\infty) \,\big| \, s>\sigma_{j},\,  |K_{s}|>2\, t^{\frac{1}{4}} \},
\end{eqnarray}
Notice that $\theta_{0}$ is defined  differently than $\theta_{j}$
for $j\geq 1$.  We refer to $[\sigma_{j},\theta_{j}]$ as the time
period of the $j$th incursion.

In the lemma below, we give a bound on the expected number of
incursions $N_{Y}(\varrho)$ over a time interval $[0,\varrho t]$,
and show that the  time periods of incursions
$\theta_{j}-\sigma_{j}$ have finite first moments.  The time
periods between incursions $\sigma_{j+1}-\theta_{j}$ can be shown
to be almost surely finite.   This follows from an argument using
L\'evy's zero-one law and Theorem~\ref{MainThm}, but showing
$\sigma_{j+1}-\theta_{j}$ to be finite is not required to prove
Theorem~\ref{MainThm}.  In any case, bounds on
$\sigma_{j+1}-\theta_{j}$ are intrinsically less important to us,
since the challenge is to get estimates for the low-energy part of
the walk.

\begin{lemma}\label{FiniteTimes}
Assume~\ref{AssumpOne} and (\textit{i}) of List~\ref{AssumpTwo}.

\begin{enumerate}

\item
  Given $\sigma_{j}<\infty$,
   the difference $\theta_{j}-\sigma_{j}$ has expectation $\mathit{O}(t^{\frac{1}{2}})$.

\item  Let $\varrho>0$.  For large enough $t$, the expectation for
the number of incursions in the interval $[0,\varrho t]$ is
bounded as
$$ \mathbb{E}[N_{Y}(\varrho t)]\leq \varrho^{\frac{1}{2}}  r_{2}^{\frac{1}{2}} t^{\frac{1}{4} }$$

\end{enumerate}

\end{lemma}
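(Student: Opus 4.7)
My plan is to apply optional stopping to the submartingale $E_t$, using the Doob--Meyer decomposition $E_t=E_0+\mathcal{M}_t+\tfrac{1}{2}\sum_{n\le\mathcal{N}_t}w_n^2$ of Part~(1) of Lemma~\ref{DOOOOB}. By Assumption~\ref{AssumpOne}(I), the intensity of the increasing part is at least $r_1/2$, so stopping at the bounded time $\theta_j\wedge T$ (conditioning on $\sigma_j<\infty$) produces
\begin{equation*}
\E\bigl[\theta_j\wedge T-\sigma_j\,\big|\,\mathcal{F}_{\sigma_j}\bigr]\;\leq\;\tfrac{2}{r_1}\,\E\bigl[E_{\theta_j\wedge T}\,\big|\,\mathcal{F}_{\sigma_j}\bigr].
\end{equation*}
The task reduces to bounding $\E[E_{\theta_j\wedge T}]$ uniformly by $O(t^{1/2})$. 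Path-wise $E_{\theta_j^-}\le 2t^{1/2}+\bar V$ because $|K_{\theta_j^-}|\le 2t^{1/4}$, so only the energy overshoot from the crossing jump $w_{\theta_j}$ needs work. I would split on the size of $w_{\theta_j}$: on the event $\{|w_{\theta_j}|\le t^{1/4}\}$ one has $|K_{\theta_j}|\le 3t^{1/4}$ path-wise and hence $E_{\theta_j}\le \tfrac{9}{2}t^{1/2}+\bar V$; on the complement, Chebyshev and Assumption~\ref{AssumpOne}(II) supply the truncated-moment bound $\int_{|v|>t^{1/4}}v^2\mathcal{P}_a(v)\,dv\le \rho\,t^{-1/2}$, which summed via Campbell's formula over the Poisson jumps in $(\sigma_j,\theta_j\wedge T]$ contributes at most order $t^{-1/2}\E[\theta_j\wedge T-\sigma_j]$. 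Substituting back and absorbing the vanishing coefficient into the left-hand side (valid for large $t$) gives $\E[\theta_j\wedge T-\sigma_j]=O(t^{1/2})$ uniformly in $T$; monotone convergence in $T\to\infty$ finishes the argument.

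\textbf{Part (2).} Here my plan is to apply Doob's up-crossing inequality to the non-negative submartingale $E_t^{1/2}$ (the submartingale property was established in the proof of Lemma~\ref{EIL}). For $j\ge 2$ each incursion forces a down-crossing of $E_t^{1/2}$ from above $b=\sqrt{2}\,t^{1/4}$ at time $\theta_{j-1}$ to below $a=t^{1/4}/\sqrt{2}+\bar V^{1/2}$ at time $\sigma_j$, thanks to $|K_{\theta_{j-1}}|>2t^{1/4}$ and $|K_{\sigma_j}|<t^{1/4}$. Since the number of down-crossings exceeds the number of up-crossings by at most one, Doob's inequality gives
\begin{equation*}
(b-a)\,\E\bigl[N_Y(\varrho t)-2\bigr]\;\le\;(b-a)\,\E\bigl[U_{\varrho t}([a,b])\bigr]\;\le\;\E\bigl[E_{\varrho t}^{1/2}\bigr]\;\le\;\sqrt{\E[E_0]+\tfrac{r_2}{2}\varrho t},
\end{equation*}
using Jensen's inequality and the linear energy-growth bound for the rightmost step. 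With $b-a\sim t^{1/4}/\sqrt{2}$ as $t\to\infty$, these estimates combine to give $\E[N_Y(\varrho t)]\le \sqrt{r_2\,\varrho}\,t^{1/4}\,(1+o(1))$, which is dominated by $\varrho^{1/2}r_2^{1/2}t^{1/4}$ for $t$ sufficiently large, as stated.

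The main obstacle is the overshoot control in Part~(1). The crossing jump $w_{\theta_j}$ is a biased sample from the jump distribution (conditioned on being the jump that realizes the crossing), so the per-jump variance bound $\E[w^2\mid\mathcal{F}_{t_n^-}]\le \rho^{1/2}$ cannot be applied naively to $w_{\theta_j}$ and a direct tight bound on $\E[E_{\theta_j}]$ is circular. The truncation at $|w|=t^{1/4}$, together with the $t^{-1/2}$ decay furnished by the fourth-moment Assumption~\ref{AssumpOne}(II), is what makes the self-consistent bound close; Part~(2), by contrast, is a clean application of Doob's inequality to the already-proven submartingale $E_t^{1/2}$.
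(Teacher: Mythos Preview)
Your proof of Part~(2) is essentially the paper's argument: Doob's up-crossing inequality applied to the submartingale $E_t^{1/2}$. Your choice of levels $a=t^{1/4}/\sqrt{2}+\bar V^{1/2}$ and $b=\sqrt{2}\,t^{1/4}$ is in fact more carefully justified than the paper's (the paper uses $2^{-1}t^{1/4}$ and $2t^{1/4}$, which do not quite bracket the values of $E^{1/2}$ at $\sigma_j,\theta_{j-1}$). The only slip is the last sentence: with your levels the leading constant is exactly $\sqrt{r_2\varrho}$, so the bound is $\sqrt{r_2\varrho}\,t^{1/4}(1+o(1))$, which is \emph{not} dominated by $\sqrt{r_2\varrho}\,t^{1/4}$. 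This is cosmetic---widening the gap between $a$ and $b$ slightly, or simply recording $\E[N_Y(\varrho t)]=O(\varrho^{1/2}t^{1/4})$, suffices for every later use of the lemma.

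Your Part~(1), however, takes a genuinely different route from the paper. The paper handles the overshoot by invoking Lemma~\ref{BoundryVariance}, which gives a uniform second-moment bound on the over-jump $D$ of the barrier $\pm 2t^{1/4}$; that lemma is where assumption~(\textit{i}) of List~\ref{AssumpTwo} (the exponential tail condition) enters. You instead truncate the crossing jump at level $t^{1/4}$, bound the large-jump contribution by the compensator $\int_{|v|>t^{1/4}}v^2 j_a(v)\,dv\le \mathcal R\rho\,t^{-1/2}$ integrated over $(\sigma_j,\theta_j\wedge T]$, and close the estimate self-consistently. This bootstrap works cleanly (on $\{|w_{\theta_j}|>t^{1/4}\}$ one has $E_{\theta_j}\le \tfrac{9}{2}w_{\theta_j}^2+\bar V$, and $\E[w_{\theta_j}^2\mathbf 1_{|w_{\theta_j}|>t^{1/4}}]$ is dominated by the sum over all jumps), and it uses only the fourth-moment bound~(\textit{II}) of List~\ref{AssumpOne}. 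So your argument is more elementary and shows that assumption~(\textit{i}) of List~\ref{AssumpTwo} is not actually needed for this lemma. The paper's route via Lemma~\ref{BoundryVariance} is shorter once that lemma is available, but at the cost of the stronger tail hypothesis.
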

\begin{proof} \text{ }\\
\noindent Part (1): \vspace{.1cm}

Let us set construct the stopping time $\theta_{\mathcal{T}}=(\theta_{j}-\sigma_{j}) \wedge \mathcal{T}$ for some bound $\mathcal{T}> 1$ and set $\sigma_{j}=0$ .
$$\mathbb{E}\big[\theta_{\mathcal{T}} \big]\leq  \frac{r_{2}}{r_{1}}\mathbb{E}\Big[\langle M\rangle_{\theta_{T}}\Big] = \frac{r_{2}}{r_{1}}\mathbb{E}\big[ [ M]_{\theta_{T}} \big], $$
where the equality follows since $[M]_{t}-\langle M\rangle_{t}$ is a martingale, and the inequality is a consequence of $r_{1}\leq \frac{d}{dt}\langle M\rangle_{t}\leq r_{2}$.

Lemma~\ref{DOOOOB} states that
$2^{-1}\sum_{n=1}^{\mathcal{N}_{r}}w_{n}^{2}=2^{-1}[
M]_{r} $ differs from $E_{r}-E_{0}$ by a martingale, and by the Optional Sampling
Theorem
 the expectation of the martingale part is zero at time $\theta_{\mathcal{T}}$ so
$$\mathbb{E}\big[[ M]_{\theta_{\mathcal{T} }}  \big]=2\mathbb{E}\big[ E_{\theta_{\mathcal{T}}}-E_{0}\big]. $$

Let $D$ be the size of the over-jump of the boundary
$-2t^{\frac{1}{4}}$ or $2t^{\frac{1}{4}}$ when $\theta_{\mathcal{T}}<\mathcal{T}$, and $\bar{V}$
be the max of the potential.
$$\mathbb{E}\big[E_{\theta_{T}}-E_{0}\big]\leq  \mathbb{E}\big[\big|\frac{1}{2}(2t^{\frac{1}{4}}+D)^{2}+\bar{V}\big|  \theta_{\mathcal{T}} < \mathcal{T}\big]+(2t^{\frac{1}{2} }+\bar{V})
\Pr\big[\theta_{\mathcal{T}} =\mathcal{T}\big]\leq 2t^{\frac{1}{2}}+\mathit{O}(1)<3^{\frac{1}{2}}        $$
 By Lemma~\ref{BoundryVariance},
 there are universal bounds determined by $\mathcal{C}$ and $\eta$ on
  all the moments of $D$,   $\mathbb{E}\big[ D^{2} \big]<\rho_{2}(\mathcal{C},\,\eta)$.
 Using that $|x+ y|^{2}\leq 2\,x^{2}+2\,y^{2}$,
 $\Pr\big[\theta>\mathcal{T}\big]\leq 1$, and that $\bar{V}\ll t^{\frac{1}{4}}$
$$ \mathbb{E}[\theta_{\mathcal{T}}]<  \frac{r_{2}}{r_{1}} \big( 6\,t^{\frac{1}{2}}+\mathit{O}(1)    \big)=\mathit{O}(t^{\frac{1}{2}} ). $$
Finally, by taking the limit $\mathcal{T}\rightarrow \infty$, we get a bound for the second moment of $\theta_{j}-\sigma_{j}$:
$$\mathbb{E}\big[\theta_{j}-\sigma_{j}\big]=\limsup_{ \mathcal{T}\rightarrow \infty} \mathbb{E}[\theta_{\mathcal{T}}]\leq  \frac{r_{2}}{r_{1}} \big( 6\,t^{\frac{1}{2}}+\mathit{O}(1)   \big)=\mathit{O}(t^{\frac{1}{2}} ).  $$

\noindent Part (2):\vspace{.1cm}

By Part (1) each incursion ends.  Thus for each
count of $N_{Y}(\varrho t)$ there is a distinct up-crossing in
which $|K_{s}|$ begins below $t^{\frac{1}{4}}$ and ends up above
$2 t^{\frac{1}{4}}$.  However, for large values of momentum
$\frac{1}{\sqrt{2}}|K_{s}|\approx E_{s}^{\frac{1}{2}}$, and we
bound $N_{Y}$ by the number of up-crossings $U_{\varrho
t}\big(2^{-1}
t^{\frac{1}{4}},\,2\,t^{\frac{1}{4}};E_{s}(\omega)\big)$ that
$E_{s}^{\frac{1}{2}}$ makes between $2^{-1}\,t^{\frac{1}{4}}$ and
$2 t^{\frac{1}{4}}$.  Since $E_{s}^{\frac{1}{2}}$ is a
submartingale, we can apply the submartingale up-crossing
inequality~\cite{Chung} to obtain
$$
\mathbb{E}[N_{Y}(\varrho t)]\leq \mathbb{E}\big[U_{\varrho t}\big(2^{-1} t^{\frac{1}{4}},\,2\,t^{\frac{1}{4}};E_{s}(\omega)\big)\big]\leq  \frac{\mathbb{E}[E_{\varrho t}^{\frac{1}{2}}] }{2\,t^{\frac{1}{4}}-2^{-1}t^{\frac{1}{4} }} \approx  \frac{\sqrt{2}}{3}\,r_{2}^{\frac{1}{2}}\varrho^{\frac{1}{2}} t^{\frac{1}{4}}<r_{2}^{\frac{1}{2}}\varrho^{\frac{1}{2}} t^{\frac{1}{4}}. \vspace{.5cm}
$$

\end{proof}

The basic idea for our analysis is the following:
\begin{itemize}
\item For large $t$, there is an  asymptotic independence between
the events during a single incursion and all events up to the end
of a previous incursion;

\item  Events during the  incursion, which occur far enough after
the starting time of the incursion are independent of the initial
state of the incursion.
\end{itemize}

It is convenient in many places to have an effective bound on the
size of the momentum jumps that are likely to occur in the
interval $[0,t]$.  We thus consider the statistics for our model
conditioned on the event
$$\{ |v_{n}|\leq t^{\frac{1}{40}} \text{ for all $n$ such that } t_{n}\leq t \},$$
 where a jump greater than $t^{\frac{1}{40}}$ is considered to be large. The lemma below
shows that the probability that there is a jump above
$t^{\frac{1}{40} }$ over the interval $[0,t]$ decays super-polynomially,
and that for dealing with the drift over up to time $t$, we can
neglect the possibility of large jumps.  The choice of $\frac{1}{40}$ involves constraints from Proposition~\ref{BC2NHalf}.

\begin{lemma}\label{CappedJumps}
Assume (\it{i}) of List~\ref{AssumpTwo}, then the probability of a
momentum jump $v_{n}$ with $|v_{n}|\geq t^{\frac{1}{40} }$ over the
interval $[0,t]$ is $\mathit{O}(t\,e^{-\frac{\eta}{2}t^{\frac{1}{40} } })
$, for $\eta$ as in (\textit{i}) of List~\ref{AssumpTwo}.
Moreover, the difference in the quantity
$$\mathbb{E}\Big[\Big|\sup_{0\leq s\leq 1}
t^{-\frac{1}{2}}\int_{0}^{st}ds\,\frac{dV}{dx}(X_{s}) \Big|^{2}
\Big]^{\frac{1}{2}}, $$ for the dynamics conditioned not to make
jumps greater than $t^{\frac{1}{40} }$ and the unconditioned dynamics is
$\mathit{O}(t^{2}\,e^{-\frac{\eta}{2}t^{\frac{1}{40} } }) $

\end{lemma}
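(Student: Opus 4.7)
The argument has two steps. The first is to extract from assumption~(\textit{i}) of List~\ref{AssumpTwo} a uniform exponential tail bound: there is a constant $C_{1}$ depending only on $\mathcal{C}$ and $\eta$ such that
\begin{equation*}
\sup_{a\in[0,1]} \int_{|w|\geq J} \mathcal{P}_{a}(w)\,dw \leq C_{1}\,e^{-\eta J/2}
\end{equation*}
for all $J$ large enough. The derivation proceeds by choosing (uniformly in $a$) a bounded reference set on which $\mathcal{P}_{a}$ carries at least half of its mass, and then applying the pointwise comparison from~(\textit{i}) between that reference set and the tail $|w|\geq J$. This is exactly the step that already gave the paper the finite Laplace transform and all moments of $\mathcal{P}_{a}$.

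Given this tail bound, the first assertion follows from a crude union bound. Since conditional on the Poisson ticks and positions each attempted jump has magnitude exceeding $t^{1/40}$ with probability at most $C_{1}\,e^{-\eta t^{1/40}/2}$, and since the number $\mathcal{N}_{t}$ of Poisson ticks up to time $t$ has mean $\mathcal{R}t$, one gets
\begin{equation*}
\Pr\bigl[\,\exists\, n \text{ with } t_{n} \leq t \text{ and } |v_{n}| \geq t^{1/40} \,\bigr] \leq \mathcal{R}\,t\,C_{1}\,e^{-\eta t^{1/40}/2} = \mathit{O}\bigl(t\,e^{-\eta t^{1/40}/2}\bigr).
\end{equation*}

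For the second assertion, let $A_{t}$ denote the event that no jump up to time $t$ exceeds $t^{1/40}$. The key observation is that the drift integrand is pointwise bounded: since $\sup_{x}|V'(x)|<\infty$, the random variable $Y_{t}=\sup_{0\leq s\leq 1}\bigl|t^{-\frac{1}{2}}\int_{0}^{st}dr\,\tfrac{dV}{dx}(X_{r})\bigr|^{2}$ is deterministically bounded by $\|V'\|_{\infty}^{2}\,t$, under both the original and the conditioned dynamics. Decomposing
\begin{equation*}
\mathbb{E}[Y_{t}] - \mathbb{E}[Y_{t}\mid A_{t}] = -\mathbb{E}[Y_{t}\mathbf{1}_{A_{t}^{c}}] + \mathbb{E}[Y_{t}\mathbf{1}_{A_{t}}]\bigl(1 - \Pr[A_{t}]^{-1}\bigr)
\end{equation*}
and using this pointwise bound together with the first assertion, both terms on the right-hand side are $\mathit{O}\bigl(t\cdot t\,e^{-\eta t^{1/40}/2}\bigr)$. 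Passing to the square root (via $|\sqrt{a}-\sqrt{b}|\leq \sqrt{|a-b|}$) then yields an estimate of order $t\,e^{-\eta t^{1/40}/4}$, which is of course bounded by the $\mathit{O}(t^{2}\,e^{-\eta t^{1/40}/2})$ claimed in the lemma statement.

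I do not foresee any real obstacle here; the only point to be careful about is that the tail bound from~(\textit{i}) is genuinely uniform in $a\in[0,1]$, which is guaranteed because the constants $\mathcal{C}$ and $\eta$ in assumption~(\textit{i}) are independent of $a$, and the reference set in step one can be chosen independently of $a$ using the same ratio-type inequality.
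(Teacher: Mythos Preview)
Your argument is correct and follows essentially the same route as the paper: assumption~(\textit{i}) yields a uniform exponential tail (the paper phrases this via the Laplace transform and an exponential Chebyshev/Chernoff inequality), a union bound over the $\mathcal{N}_{t}$ Poisson ticks gives the first assertion, and the deterministic bound $|t^{-1/2}\int_{0}^{st}V'(X_{r})\,dr|\leq t^{1/2}\|V'\|_{\infty}$ handles the second. Your decomposition of the conditional expectation is more explicit than the paper's one-line sketch.

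One small slip: your final comparison is backwards. From $|\sqrt{a}-\sqrt{b}|\leq\sqrt{|a-b|}$ you correctly obtain $O(t\,e^{-\eta t^{1/40}/4})$, but this is \emph{not} dominated by $O(t^{2}e^{-\eta t^{1/40}/2})$---the ratio $t^{-1}e^{\eta t^{1/40}/4}$ diverges. Both quantities decay super-polynomially, which is all that is ever used downstream, so the discrepancy is harmless for the applications; but strictly speaking your bound does not imply the one stated in the lemma. (The paper's own proof is no more careful on this point.)
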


\begin{proof}
Assumption (\textit{i}) of List~\ref{AssumpTwo} implies that the Laplace transforms for single momentum jumps are uniformly bounded by $\mathcal{C}(1-e^{-\eta-q})^{-1}$.  Using Chebyshev's inequality we can bound the probability that any jump is greater than $t^{\frac{1}{40}}$.  On the other hand, the integral of the drift can be at most $t\, \sup_{a\in [0,1]}\big|\frac{dV}{dx}(a)\big|$.

\end{proof}

 We refer to Appendix~\ref{AppendixBC} for a discussion of boundary crossing distributions and the
definition of the boundary crossing density
$\phi_{\infty}:[0,1]\times \R^{+}\rightarrow \R^{+}$.
If $H$ is a random variable, then $\Pr[H=y]$, for dummy variable $y\in \R$, refers to the distributional measure of $H$ or its probability density if it exists.  For a signed measure $\mu$ on $\R$, then $\|\mu\|_{1}=|\mu|(\R) $, where $|\mu|$ is the absolute value of the measure.  Most of the random variables in this article (e.g. energy and momentum jumps) have well-defined densities.

The following lemma states that incursions beginning at points
$(X_{\sigma_{j}},K_{\sigma_{j}})$ for $s_{1}K_{\sigma_{j}}\in
[t^{\frac{1}{4}}-t^{\frac{1}{40} }, t^{\frac{1}{4}}]$ for fixed $s_{1}=\pm$ all
have  approximately the same probabilities for ending the
incursion in the positive or negative direction.

\begin{lemma}\label{ExitProb}
Assume List~\ref{AssumpOne} and (\it{i})-(\it{ii}) of
List~\ref{AssumpTwo}.  Consider the dynamics conditioned not to
have jumps greater than $t^{\frac{1}{40}}$.   Let
$\mathbf{s}_{1},\mathbf{s}_{2}\in \{+,-\} $, and
$(x,\mathbf{s}_{1}k)\in [0,1]\times
[t^{\frac{1}{4}}-t^{\frac{1}{40}}, t^{\frac{1}{4}}]$
  There are constants $\rho_{\mathbf{s}_{1},\,\mathbf{s}_{2}}(t)$ such that as $t\rightarrow \infty$
$$\sup_{(x,k)}\Big|  \frac{ \Pr\big[\mathbf{s}_{2}K_{\theta_{j}}>0\,\big|\, (X_{\sigma_{j}},\,\mathbf{s}_{1} K_{\sigma_{j}})=(x,k)\big]}{\rho_{\mathbf{s}_{1},\,\mathbf{s}_{2}}(t)}-1 \Big|\longrightarrow 0$$

\end{lemma}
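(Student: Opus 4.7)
The plan is to apply the Markov property at the first Poisson time $\tau_1$ after $\sigma_j$ and reduce the problem to showing that the distribution of $(X_{\tau_1}, K_{\tau_1})$ is nearly independent of the precise starting state $(x, \mathbf{s}_1 k)$ within the narrow entry region. Since the jump cap $t^{1/40}$ is much smaller than the distance $t^{1/4}$ from $\mathbf{s}_1 k$ to the exit threshold $\pm 2t^{1/4}$, the first Poisson time cannot terminate the incursion, so by the Markov property
$$p_{(x,\mathbf{s}_1 k)}(\mathbf{s}_2) := \Pr[\mathbf{s}_2 K_{\theta_j}>0 \mid (X_{\sigma_j},\mathbf{s}_1 K_{\sigma_j})=(x,k)] = \int f_{(x,\mathbf{s}_1 k)}(x',k')\, g(x',k';\mathbf{s}_2)\, dx'\, dk',$$
where $f_{(x,\mathbf{s}_1 k)}$ is the density of $(X_{\tau_1},K_{\tau_1})$ given the starting state and $g(x',k';\mathbf{s}_2)$ is the exit probability of an incursion begun from $(x',k')$.

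The key estimate combines two inputs. By parts (2) and (3) of Lemma~\ref{SingleJump}, the torus position and momentum-jump densities at $\tau_1$, starting from any point with $|k|$ of order $t^{1/4}$, differ in pointwise ratio from the canonical averaged densities $\kappa(a)/\bar\kappa$ and $\tilde P(v)$ by only $O(t^{-1/4})$. Second, the starting momentum $\mathbf{s}_1 k$ itself varies over a window of width $t^{1/40}$, which translates into a shift of the same magnitude in $K_{\tau_1}$; I would show that the exit probability $g(x',k';\mathbf{s}_2)$ varies by only $O(t^{-9/40})$ under such a shift, since the natural scale of variation for $g$ is $t^{1/4}$ (the width of the incursion region). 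Defining $\rho_{\mathbf{s}_1,\mathbf{s}_2}(t)$ as the exit probability computed against any fixed reference point in the entry region, these bounds together yield the claimed uniform convergence of the ratio.

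The main obstacle is establishing the continuity of $g$ in the initial momentum $k'$ at rate $O(t^{-1/4})$. A natural approach is a coupling argument: for two starts $(x', k'_1)$ and $(x', k'_2)$ with $|k'_1-k'_2|\leq t^{1/40}$, at each successive Poisson time one couples the position distributions (close in ratio by Lemma~\ref{SingleJump}) and then matches the jumps using the smoothness assumption (\textit{ii}) of List~\ref{AssumpTwo}, reducing the momentum gap step by step until the two processes coincide. The coupling-failure probability must be controlled uniformly in the starts, and truncation to jumps $|v|\leq C\log t$ (justified by the exponential tails in (\textit{i}) of List~\ref{AssumpTwo}) keeps the smoothness estimate effective throughout. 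A secondary concern is verifying that $\rho_{\mathbf{s}_1,\mathbf{s}_2}(t)$ is bounded below so that the ratio form of the conclusion is nontrivial; this follows from a lower-bound argument showing that from any state in the entry region each of the two exit directions is attained with probability of constant order, using that a typical incursion has $\Theta(t^{1/2})$ jumps and the momentum performs a diffusive random walk over the range $[-2t^{1/4},2t^{1/4}]$.
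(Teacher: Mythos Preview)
Your decomposition via the Markov property is sound, and you correctly identify the central difficulty: the approximate constancy of the exit probability $g(x',k';\mathbf{s}_2)$ over momentum shifts of size $t^{1/40}$. However, the coupling argument you sketch for this step does not close as stated. An incursion involves on the order of $t^{1/2}$ Poisson times (Lemma~\ref{FiniteTimes}), and the per-step position-coupling error from Lemma~\ref{SingleJump} is $O(t^{-1/4})$; multiplied together these do not vanish. Your idea of merging the momenta ``step by step'' does not obviously help: reducing a momentum gap of size $\delta$ via the density-smoothness assumption (\textit{ii}) costs roughly $\mu(1+|v|)\delta$ in total-variation, and with the jump cap $|v|\le t^{1/40}$ the cumulative cost of closing a gap of $t^{1/40}$ is of order $t^{1/20}$, not $o(1)$. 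Truncating to $|v|\le C\log t$ does not rescue this, since the gap to be closed is still $t^{1/40}$. The heuristic that ``the natural scale of variation for $g$ is $t^{1/4}$'' is precisely what needs to be proved here; it is not a consequence of the tools you cite.

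The paper takes a different route that sidesteps any regularity of $g$. Rather than stopping at the first Poisson time, it stops at the first exit $\tau$ from a strip of half-width $t^{1/20}$ about $t^{1/4}$ (note $t^{1/40}\ll t^{1/20}\ll t^{1/4}$). Proposition~\ref{BC3} shows that the joint law of $(X_\tau,K_\tau)$, for each exit direction, converges in $L^1$ to a universal limit $\phi_\infty$, uniformly over starting points inside the strip; this is a renewal-type result proved by comparing the high-momentum dynamics to an i.i.d.\ random walk via Lemma~\ref{JointDen}. Consequently the diameter $D(t)$ of the set $\{p_{(x,\mathbf{s}_1 k)}(\mathbf{s}_2)\}$ shrinks to zero with no continuity argument for $g$ at all. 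The lower-bound concern for $\rho_{\mathbf{s}_1,\mathbf{s}_2}(t)$ is then finessed by \emph{defining} $\rho$ as any one of these probabilities plus $D(t)^{1/2}+t^{-1}$, which forces positivity and makes the ratio form of the conclusion automatic once $D(t)\to 0$; no separate lower-bound argument is needed.
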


\begin{proof}
Fix $\mathbf{s}_{1}=\mathbf{s}_{2}=+$.  Define
$\phi_{\uparrow,\,t}^{(a,\,v)}$,
$\phi_{\downarrow,\,t}^{(a,\,v)}$ to be the boundary
crossing distributions above and below the set
$S=[t^{\frac{1}{4}}-t^{\frac{1}{40} },\,t^{\frac{1}{4}}+t^{\frac{1}{40} }]$
starting from $(a,\,t^{\frac{1}{4}}-v)$.

Using the Markov property
and that $H$ is a function of the process after the time $\tau$,
\begin{multline}
 \textup{Pr}\big[ K_{\theta_{j}}>0 \,\big| \,(X_{\sigma_{j}},K_{\sigma_{j}})= (a,\,t^{\frac{1}{4}}-\,v) \big]\\ =\int_{[0,1]\times \R^{+}} dq\,dp\, \Big( \phi_{\uparrow,\,t}^{(a,\,v)}(q,\,p)
 \Pr\big[K_{\theta_{j}}>0\,\big|\, (X_{\tau},K_{\tau}) = (q,\,t^{\frac{1}{4}}+t^{\frac{1}{20} }+p) \big]\\+\phi_{\downarrow,\,t}^{(a,\,v)}(q,\,p)\Pr\big[K_{\theta_{j}}>0\,\big|\, (X_{\tau},\,K_{\tau})= (q,\,t^{\frac{1}{4}}-\,t^{\frac{1}{20} }-p) \big]     \|_{1}.
\end{multline}
 Thus for
$(a,\,v),\,(a^{\prime},\,v^{\prime})\in [0,1]\times
[0,t^{\frac{1}{40} }]$,
\begin{multline}
\sup_{a,\,a^{\prime},\,v,\,v^{\prime} }  \|\textup{Pr}\big[H=y\,\big|
\,(X_{\sigma_{j}},K_{\sigma_{j}})= (a,\,t^{\frac{1}{4}}-\,v) \big]- \textup{Pr}\big[H=y\,\big| \,(X_{\sigma_{j}},K_{\sigma_{j}})= (a^{\prime},\,t^{\frac{1}{4}}-\,v^{\prime}) \big] \|_{1}\\ \leq \sup_{a,\,a^{\prime},\,v,\,v^{\prime} }\|\phi_{\uparrow,\,t}^{(a,\,v)}-  \phi_{\uparrow,\,t}^{(a^{\prime},v^{\prime})}\|_{1}+\sup_{a,\,a^{\prime},\,v,\,v^{\prime} }\|\phi_{\downarrow\,t}^{(a,\,v)}-  \phi_{\downarrow,\,t}^{(a^{\prime},v^{\prime})}\|_{1}
\end{multline}
By Proposition~\ref{BC3}, $\phi_{\uparrow,\,t}^{(a,\,v)}$\,and
$\phi_{\downarrow,\,t}^{(a,\,v)}$ converge uniformly to
 $\phi_{\infty}$ in $L^{1}$. Thus the diameter $D(t)$ of the set of possible values for $\Pr\big[K_{\theta_{j}}>0\,\big|\, (X_{\tau},K_{\tau}) = (a,\,t^{\frac{1}{4}}-v) \big]$  as a function of $(a,v)$ shrinks to zero as $t\rightarrow \infty$.
Let us define
$$\rho_{+,\,+}(t)= \Pr\big[K_{\theta_{j}}>0\,\big|\, (X_{\sigma_{j}},\, K_{\sigma_{j}})=(x,k)\big]+D_{t}^{\frac{1}{2}}+\frac{1}{t}, $$
for any choice of  $(x,k)$ where $t^{-1}$ is merely to ensure that
$\rho_{+,\,+}(t)$ is non-zero, and the square root is introduced
so that $D_{t}D_{t}^{-\frac{1}{2}}=D_{t}^{\frac{1}{2}}\rightarrow
0$.  Then we will have the conclusion of the lemma.

\end{proof}

\subsection{Bounding the drift over an incursion          }

In this section, we define incursions to have end times
$\varsigma_{j}$ which are different but related to the end times
to $\theta_{j}$.  Define the sequence of times $\varsigma_{j}$:
\begin{align}\label{RevStopTime}
\varsigma_{j}= \textup{min}\{s\in [0,t],\, M_{s}-M_{s^{-}}\neq 0
\, \big| \, s>\sigma_{j},\,  \inf_{s< r\leq \theta_{j} } |K_{r}|
> t^{\frac{1}{4}} \}.
\end{align}
The $\varsigma_{j}$'s are not hitting times  since information up
to time $\theta_{j}$ is required to determine them.  However, for
the time-reversed dynamics the $\varsigma_{j}$ are well-defined
hitting times and are defined to be symmetric with the
$\sigma_{j}$'s.  This will be important in the next section. All of the results in this section apply when $\varsigma_{j}$ is replaced by $\theta_{j}$.

 Define the random variables
$$ Y_{j}=t^{-\frac{1}{4}}\int_{\sigma_{j}}^{\varsigma_{j}}dr\,\frac{dV}{dx}(X_{r}). $$
We consider the $Y_{r}$'s to contain the low  energy contribution
to the cumulative drift.  Yet, during some portion of the interval
$[\sigma_{j},\varsigma_{j}]$, the particle is likely to spend some
time with momentum above $t^{\frac{1}{4}}$, and thus have overlap
with the quantity in Lemma~\ref{HEDrift} with $\beta=\frac{1}{4}$.  Let $N_{Y}(t)$ be the
number of $Y_{j}$ terms up to time $t$. It is convenient to split
the $Y_{j}$'s into four classes.  For $m\geq 1$,
$\mathbf{s}_{1},\mathbf{s}_{2}\in \{+,-\} $,
\begin{align*}
Y_{\mathbf{s}_{1},\, \mathbf{s}_{2}}(m)=t^{-\frac{1}{4}}\chi(\mathbf{s}_{1}K_{\sigma_{j}}>0,\,  \mathbf{s}_{2}K_{\varsigma_{j}}>0  )\, \int_{\sigma_{j}}^{\varsigma_{j} }dr\,\frac{dV}{dx}(X_{r}) ,
\end{align*}
where $j$ and $m$ are related through
\begin{align}\label{jANDm}
 j=\min \{n\geq 0 \,\big| \,  m=\sum_{i=1}^{n}\chi( \mathbf{s}_{1}K_{\sigma_{i}}>0)\, \}.
\end{align}
 $Y_{\mathbf{s}_{1},\, \mathbf{s}_{2}}(m)$ is
 equal to the drift for the $m$th incursion that begins
 with momentum having sign $s_{1}$ provided that the incursion ends
 with sign $s_{2}$.  Naturally, if there is no $m$th
 incursion with sign $s_{1}$, then we
  set $Y_{\mathbf{s}_{1},\, \mathbf{s}_{2}}(m)=0$. For $\mathbf{s}\in \{\pm \} $, we also define $N_{\mathbf{s}}(r)$ to  be the number $N_{\mathbf{s}}(r)= \sum_{i=1}^{N_{Y}(r)}\chi(\mathbf{s}K_{\sigma_{i}}>0 )$.

 Define the constant
 \begin{align}\label{FirstStop}
c_{\mathbf{s}_{1},\,\mathbf{s}_{2}}(t)=t^{-\frac{1}{4}}\int_{[0,1]\times (0,\, t^{\frac{1}{40} }) } da\,dv\,\phi(a,v)\,\mathbb{E}_{(a,\,\mathbf{s}_{1}t^{\frac{1}{4}}-\mathbf{s}_{1}v )}\big[ \int_{0}^{\varsigma}ds\,\frac{dV}{dx}(X_{s}) \chi(\mathbf{s}_{2}K_{\varsigma}>0)   \big],
\end{align}
where $\varsigma$ is defined analogously to the $\varsigma_{j}$'s as the last time that there is a momentum jump inside $(-t^{\frac{1}{4}}, t^{\frac{1}{4}})$ before exiting the larger interval $(-2\,t^{\frac{1}{4}},\,2\, t^{\frac{1}{4}})$:
\begin{eqnarray*}
\varsigma &= &\textup{min}\{s\in [0,\theta), \, M_{s}-M_{s^{-}}\neq 0  \, \big|\,  \inf_{s< r\leq \theta } |K_{r}|   > t^{\frac{1}{4}} \},\\
\theta &=& \textup{min}\{s\in [0,\infty) \,\big| \,  |K_{s}|>2\, t^{\frac{1}{4}} \}.
\end{eqnarray*}
In fact, $c_{\mathbf{s}_{1},\,\mathbf{s}_{2}}(t)$ will be asymptotically close to  $t^{-\frac{1}{4}}\mathbb{E}_{(a,\,\mathbf{s}_{1}t^{\frac{1}{4}}-\mathbf{s}_{1}v )}\big[ \int_{0}^{\varsigma}ds\,\frac{dV}{dx}(X_{s})\, \chi( \mathbf{s}_{2}K_{\theta}>0   )\big]$ for any fixed $(a,v)\in [0,1]\times \R^{+}$ as $t\rightarrow \infty$  so the density $\phi(a,v)$ appearing in the definition of $c_{\mathbf{s}_{1},\mathbf{s}_{2}}$ is not important (except as a matter of convenience).

The main purpose of
 the following proposition is to establish Part (3) which says that the sum of the $Y_{\mathbf{s}_{1},\,\mathbf{s}_{2}}(m)$'s can be replaced by the number $N_{\mathbf{s}_{1}}(st)$ multiplied by the constant $ c_{\mathbf{s}_{1},\mathbf{s}_{2}}(t)$.

\begin{proposition}\label{OddsEnds}
Assume List~\ref{AssumpOne} and (\textit{i})-(\textit{ii}) of~\ref{AssumpTwo}.
\begin{enumerate}
\item
For large enough $t$, then for all $j$ and $\omega \in \mathcal{F}_{\sigma_{j}}$
$$
\mathbb{E}[Y_{j}^{2}|\mathcal{F}_{\sigma_{j}} ]^{\frac{1}{2}}< 5\,.
$$

\item Let $c_{\mathbf{s}_{1},\mathbf{s}_{2}}(t)$ be defined as in~(\ref{FirstStop}) and $j$ and $m$ be related by~(\ref{jANDm}).
As $t\rightarrow \infty$, we have the $L^{2}(\Omega)$ convergence
$$\mathbb{E}\big[ \big|\,\mathbb{E}[Y_{\mathbf{s}_{1},\,\mathbf{s}_{2}}(m) | \mathcal{F}_{\theta_{j-1}} ]- c_{\mathbf{s}_{1},\mathbf{s}_{2}}(t) \big|^{2}\big]^{\frac{1}{2}}\longrightarrow 0.$$

\item As $t\rightarrow \infty$, there is $L^{2}(\Omega)$ convergence
$$  t^{-\frac{1}{4}} \mathbb{E}\Big[ \sup_{0\leq s\leq 1}\Big|\sum_{m=1}^{N_{\mathbf{s}_{1}}(st) }Y_{\mathbf{s}_{1},\,\mathbf{s}_{2}}(m)- \,N_{\mathbf{s}_{1}}(st)\, c_{\mathbf{s}_{1},\mathbf{s}_{2}}(t)     \Big|\Big]\longrightarrow 0.   $$

\end{enumerate}

\end{proposition}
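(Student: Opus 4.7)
The plan is to address the three parts in order, each building on the preceding. Part (1) is a uniform second-moment bound on a single incursion's drift, Part (2) identifies the limiting conditional mean through the universal boundary-crossing distribution, and Part (3) combines these ingredients via a martingale law of large numbers.

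For Part (1), I would rewrite the integrand using $dK_r = \frac{dV}{dx}(X_r)\,dr + dM_r$, which gives
$$Y_j = t^{-1/4}\bigl[(K_{\varsigma_j}-K_{\sigma_j}) - (M_{\varsigma_j}-M_{\sigma_j})\bigr].$$
The construction of the incursion forces $|K_{\sigma_j}|<t^{1/4}$, while conditioning on no jump exceeding $t^{1/40}$ (valid up to exponentially small loss by Lemma~\ref{CappedJumps}) yields $|K_{\varsigma_j}|\leq 2t^{1/4}+t^{1/40}$; hence the momentum-difference piece contributes at most $3+o(1)$ to $\mathbb{E}[Y_j^2\mid\mathcal{F}_{\sigma_j}]^{1/2}$. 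For the martingale piece, applying optional sampling to the martingale $\mathcal{M}_t = E_t - E_0 - \tfrac{1}{2}[M]_t$ of Lemma~\ref{DOOOOB}(1) gives $\mathbb{E}[(M_{\varsigma_j}-M_{\sigma_j})^2\mid\mathcal{F}_{\sigma_j}] = 2\,\mathbb{E}[E_{\varsigma_j}-E_{\sigma_j}\mid\mathcal{F}_{\sigma_j}]\leq 4t^{1/2}+o(t^{1/2})$, using $E_{\varsigma_j}\leq 2t^{1/2}+o(t^{1/2})$ coming from the same momentum bound. The two contributions sum to a bound approaching $3+2=5$ asymptotically.

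For Part (2), I would apply the strong Markov property at $\theta_{j-1}$. The conditional law of the incursion's starting state $(X_{\sigma_j},K_{\sigma_j})$ given $\mathcal{F}_{\theta_{j-1}}$ is the first-passage distribution of the strip $|K|<t^{1/4}$ entered from the high-momentum point $(X_{\theta_{j-1}},K_{\theta_{j-1}})$. By Proposition~\ref{BC3} in the appendix, this distribution converges in total variation, uniformly over the initial high-momentum state, to the universal density $\phi(a,v)$ appearing in~\eqref{FirstStop}. A second application of the Markov property at $\sigma_j$ shows that, conditional on $(X_{\sigma_j},K_{\sigma_j})=(a,\mathbf{s}_1t^{1/4}-\mathbf{s}_1v)$, the conditional expectation of the restricted drift integral is exactly the integrand of $c_{\mathbf{s}_1,\mathbf{s}_2}(t)$. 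Integrating this bounded kernel against the near-universal boundary-crossing density, with the uniform $L^2$ bound of Part (1) (which keeps the inner kernel bounded in $L^2$ by $5t^{1/4}$) controlling the error from the total-variation discrepancy, yields the claimed $L^2(\Omega)$ convergence.

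For Part (3), I would decompose $Y_{\mathbf{s}_1,\mathbf{s}_2}(m)=\mu_m+D_m$, where $\mu_m := \mathbb{E}[Y_{\mathbf{s}_1,\mathbf{s}_2}(m)\mid\mathcal{F}_{\theta_{j(m)-1}}]$ and $D_m := Y_{\mathbf{s}_1,\mathbf{s}_2}(m)-\mu_m$. Then
\begin{multline*}
\sum_{m=1}^{N_{\mathbf{s}_1}(st)}Y_{\mathbf{s}_1,\mathbf{s}_2}(m) - N_{\mathbf{s}_1}(st)\,c_{\mathbf{s}_1,\mathbf{s}_2}(t) \\
= \sum_{m=1}^{N_{\mathbf{s}_1}(st)}\bigl(\mu_m-c_{\mathbf{s}_1,\mathbf{s}_2}(t)\bigr) + \sum_{m=1}^{N_{\mathbf{s}_1}(st)}D_m.
\end{multline*}
The bias sum is handled by combining Part (2) (which gives $L^2$-vanishing of each $\mu_m-c_{\mathbf{s}_1,\mathbf{s}_2}(t)$) with the bound $N_{\mathbf{s}_1}(t)\leq N_Y(t)=O(t^{1/4})$ from Lemma~\ref{FiniteTimes}(2) via Cauchy-Schwarz, giving $o(t^{1/4})$ in $L^1$. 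The residuals $\{D_m\}$ form a martingale difference sequence for the filtration $\mathcal{G}_m := \mathcal{F}_{\theta_{j(m)}}$ with $\mathbb{E}[D_m^2\mid\mathcal{G}_{m-1}]$ uniformly bounded by Part (1); Doob's maximal inequality then bounds $\sup_s|\sum_{m\leq N_{\mathbf{s}_1}(st)}D_m|$ in $L^2$ by $O(N_{\mathbf{s}_1}(t)^{1/2})=O(t^{1/8})$. Both contributions are $o(t^{1/4})$, and the prefactor $t^{-1/4}$ closes the argument. The principal obstacle will be the careful handling of the $s$-dependent random upper limit $N_{\mathbf{s}_1}(st)$ and the switch between the time filtration $(\mathcal{F}_t)$ and the incursion-indexed filtration $(\mathcal{G}_m)$, which is also what justifies the optional-sampling step underlying Part (1)'s energy identity.
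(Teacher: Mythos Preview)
Your overall architecture matches the paper's, but there is a genuine gap in Part~(1) that you flag only obliquely at the end. The time $\varsigma_j$ is \emph{not} a stopping time for $(\mathcal{F}_t)$: by its definition~\eqref{RevStopTime} one needs to know the trajectory up to $\theta_j$ to determine it. Consequently you cannot apply optional sampling at $\varsigma_j$, and the identity $\mathbb{E}[(M_{\varsigma_j}-M_{\sigma_j})^{2}\mid\mathcal{F}_{\sigma_j}] = 2\,\mathbb{E}[E_{\varsigma_j}-E_{\sigma_j}\mid\mathcal{F}_{\sigma_j}]$ is not justified as written (it fails twice: once for $[M]$ versus $E$, once for $(M_\cdot)^2$ versus $[M]$). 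The paper's workaround is to dominate $|M_{\varsigma_j^-}-M_{\sigma_j}|$ by $\sup_{\sigma_j\le r\le\theta_j}|M_r-M_{\sigma_j}|$, apply Doob's maximal inequality, and only then invoke optional sampling at the genuine stopping time $\theta_j$ (capped by a deterministic $\mathcal{T}$). This also explains why the paper works with $K_{\varsigma^-}$ rather than $K_{\varsigma}$: the left limit satisfies $|K_{\varsigma^-}|\le t^{1/4}$ directly from the definition, giving the momentum piece a clean bound of $2$ rather than your $3+o(1)$.

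A secondary looseness is in Part~(3), bias term. ``Cauchy--Schwarz'' is not quite enough: the summation limit $N_{\mathbf{s}_1}(st)$ is random and depends on the future relative to $\mathcal{F}_{\theta_{j(m)-1}}$, so $\mathbb{E}\big[\sum_{m\le N}|\mu_m-c|\big]=\sum_m\mathbb{E}\big[|\mu_m-c|\,\chi(m\le N)\big]$ involves a correlation you have not controlled. The paper handles this by splitting off the last $\varrho t$ units of time (bounded crudely via Lemma~\ref{FiniteTimes}(2) on a short window) and, for the earlier terms, arguing via Corollary~\ref{BC4} that the event $\{m\le N_{\mathbf{s}_1}(st)\}$ is essentially determined by $\mathcal{F}_{\theta_{j-1}}$, so the conditioning is nearly trivial. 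Your reference to Proposition~\ref{BC3} in Part~(2) should likewise be to Corollary~\ref{BC4}, since the starting point $(X_{\theta_{j-1}},K_{\theta_{j-1}})$ has $|K|\in[2t^{1/4},2t^{1/4}+t^{1/40}]$ and one is crossing a single lower barrier at $t^{1/4}$, not exiting a two-sided strip.
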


\begin{proof}\text{ }\noindent \\
Part ($1$):\vspace{.1cm}

By Lemma~\ref{CappedJumps}, we can take the jumps to be bounded by
$t^{\frac{1}{40} }$, although we will not employ this till the end
of the proof.  Set $\sigma=\sigma_{j}$, $\varsigma=\varsigma_{j}$,
and $\theta=\theta_{j}$.   By the triangle inequality,
$$
\mathbb{E}[Y_{j}^{2}|\mathcal{F}_{\sigma} ]^{\frac{1}{2}}\leq t^{-\frac{1}{4}}\mathbb{E}[(M_{\varsigma^{-}}-M_{\sigma})^{2}|\mathcal{F}_{\sigma}]^{\frac{1}{2}}+t^{-\frac{1}{4}}\mathbb{E}\big[(K_{\varsigma^{-}}-K_{\sigma})^{2}|\mathcal{F}_{\sigma }]^{\frac{1}{2}}.
$$
The times $\sigma$ and $\varsigma$ are defined such that  $|K_{\sigma}|,\, |K_{\varsigma^{-}}|\leq t^{\frac{1}{4}}$.  Thus $\mathbb{E}\big[(K_{\varsigma^{-}}-K_{\sigma})^{2}|\mathcal{F}_{\sigma}]^{\frac{1}{2}}\leq 2\,t^{\frac{1}{4}}$.

As in the proof of Part (1) of Lemma~\ref{FiniteTimes} define the stopping time $\theta_{\mathcal{T}}=\theta \wedge \mathcal{T}$ and the  capped time $\varsigma_{\mathcal{T}}=\varsigma \wedge \mathcal{T}$.

By Doob's maximal inequality and $\varsigma_{\mathcal{T} }\leq \theta_{\mathcal{T} }$,
\begin{multline}\label{AnotherDoob}
\mathbb{E}\big[\big|M_{\varsigma_{\mathcal{T}}^{-}}-M_{\sigma}\big|^{2}|\mathcal{F}_{\sigma}\big]^{\frac{1}{2}}\leq \mathbb{E}\big[ \sup_{ \sigma \leq r\leq \theta_{\mathcal{T}}  } \big| M_{r}-M_{\sigma}\big|^{2}|\mathcal{F}_{\sigma}\big]^{\frac{1}{2}}\\ \leq 2\mathbb{E}\big[ \big| M_{\theta_{\mathcal{T}}}-M_{\sigma }\big|^{2}|\mathcal{F}_{\sigma}\big]^{\frac{1}{2}}=2\mathbb{E}\big[ [ M]_{\theta_{\mathcal{T}} }-[ M]_{\sigma} |\mathcal{F}_{\sigma}\big]^{\frac{1}{2}},
\end{multline}
where $[ M]_{t} $ is the quadratic variation and of the martingale $M_{t}$ up to time $t$, and the last equality follows from the optional sampling theorem.
By Lemma~\ref{DOOOOB}, $\mathcal{A}_{t}$ in the increasing part in a Doob-Meyer decomposition for $E_{t}$,
$$2\mathcal{A}_{t}-2\mathcal{A}_{s}  = [ M]_{t} -[ M]_{s}    . $$
Applying the above for the time interval from $\sigma$ up to stopping time $\theta_{T}$,
the right-hand side of~(\ref{AnotherDoob}), is bounded by
$$
2 \mathbb{E}[\mathcal{A}_{\theta_{T}}-\mathcal{A}_{\sigma} |\mathcal{F}_{\sigma}]=2 \mathbb{E}[E_{\theta_{T}}-E_{\sigma} |\mathcal{F}_{\sigma}]\leq 4t^{\frac{1}{2}}+\mathit{O}(1) $$
where the equality is another use of the optional sampling theorem.  The inequality comes from the proof of Part (1) of Lemma~\ref{FiniteTimes}.\vspace{.2cm}

\noindent Part (2):\vspace{.1cm}

Let us take $\mathbf{s}_{1},\,\mathbf{s}_{2}=+$.   By
Lemma~\ref{CappedJumps}, we can take jumps to be bounded by
$t^{\frac{1}{40} }$.  By the Markov property and by the definition of
$Y_{+,\,+}(m)$,
\begin{multline}\label{Piglet}
\mathbb{E}\big[Y_{+,\,+}(m)|\mathcal{F}_{\theta_{j-1}}
\big]
=\mathbb{E}_{(X_{\theta_{j-1}},K_{\theta_{j-1}})}\big[t^{-\frac{1}{4}}\int_{0}^{\varsigma}ds\,\frac{dV}{dx}(X_{s})\,\chi(K_{\varsigma}>0
) \big]\\ = \int_{[0,1]\times [0,t^{\frac{1}{40}}]
}da^{\prime}\,dv^{\prime}\,  \phi_{t} (a^{\prime},\,v^{\prime})\,
\mathbb{E}_{ (a^{\prime},\,t^{\frac{1}{4}}-v^{\prime}
)}\big[t^{-\frac{1}{4}}\int_{0}^{\varsigma}ds\,\frac{dV}{dx}(X_{s})\,\chi(K_{\varsigma}>0)
\big].
\end{multline}
where $\phi_{t}(a,\,v)$ is the joint distribution of
$(X_{\sigma_{j}},\,-K_{\sigma_{j}}+t^{\frac{1}{4}} )$ given
$(X_{\theta_{j-1}},\,K_{\theta_{j-1}})$. We attach a subscript
$(a,\,v)$ to the symbol $\phi_{t}$ to indicate the ending point
$(X_{\theta_{j-1}},\,K_{\theta_{j-1}})=(a,t^{\frac{1}{4}}-\,v)$ of
the last previous incursion.

We have that
\begin{multline}
\mathbb{E}\Big[ \Big|c_{+,\,+}(m) - \mathbb{E}\big[Y_{+,+}(m)\,\big|\,\mathcal{F}_{\theta_{j-1}} \big] \Big|^{2} \Big]
\\
 \leq  \sup_{(a^{\prime},\,v^{\prime})} \mathbb{E}_{ (a^{\prime},\,t^{\frac{1}{4}}-v^{\prime} )}\Big[\Big| t^{-\frac{1}{4}}\int_{0}^{\varsigma}ds\,\frac{dV}{dx}(X_{s})\Big|^{2} \,\chi(\,K_{\sigma}>0)    \Big]\, \mathbb{E}\Big[   \| \phi_{t,\,(a,\,v)} -\phi_{\infty}\|_{1}\big|_{(X_{\theta_{j-1}},\,t^{\frac{1}{4}}-K_{\theta_{j-1}})}^{(a,v)} \Big]  \\ \leq  25\,r_{2}  \sup_{(a,\,v)\in [0,1]\times \R^{+}}  \| \phi_{t,\,(a,\,v)} -\phi_{\infty}\|_{1},
\end{multline}
where the first inequality follows from the definition of
$c_{+,+}$ and~(\ref{Piglet}) to which we apply Jensen's inequality
over the measure determined by
$\big|\phi_{t,\,(a,\,v)}(a^{\prime},v^{\prime})
-\phi_{\infty}(a^{\prime},v^{\prime})|\,da^{\prime}\,dv^{\prime} $
and finally H\"older's inequality to pull the supremum outside the
integral. The second inequality follows since $\mathbb{E}_{(x,k
)}\big[\big|
t^{-\frac{1}{4}}\int_{0}^{\varsigma}ds\,\frac{dV}{dx}(X_{s})\big|^2\big]$
is smaller than $25 r_{2}$  for $(x,k)\in [0,1]\times
[0,t^{\frac{1}{20} }]$ by the same argument as in Part (1).

Finally by Proposition~\ref{BC4}, $\phi_{t,\,(a,\,v)}$ converges to $\phi_{\infty}$ in $L^{1}$ uniformly for $(a,\,v)\in [0,1]\times [0,2t^{\frac{1}{4}}] $ (which includes $[0,1]\times [0,t^{\frac{1}{40}}]$) as $t\rightarrow \infty$.

\vspace{.25cm}

\noindent Part (3):\vspace{.1cm}

Again we invoke Lemma~\ref{CappedJumps}, to work with the process
conditioned to have jumps bounded by $t^{\frac{1}{40} }$.  Let
$\mathcal{F}_{\theta_{j-1}}$ be the $\sigma$-algebra of all
information known up to the end of the last incursion
$\theta_{j-1}$, and $j$ and $m$ are related by~(\ref{jANDm}).  For
a random process $X_{s}$, $0\leq s \leq 1$, define $\|X_{s}\|_{p,
\infty}$ for $p\geq 1$ as $\mathbb{E}\big[\sup_{0\leq s\leq
1}|X_{s}|^{p}\big]^{\frac{1}{p}}$.     By the triangle inequality
and  by Jensen's inequality for the first term on the right
\begin{multline}\label{Kremblin}
\left \| \sum_{m=1}^{N_{\mathbf{s}_{1}}(st) }
\Big(Y_{\mathbf{s}_{1},\,\mathbf{s}_{2}}(m)-  c_{\mathbf{s}_{1},\mathbf{s}_{2}}(t)\Big) \right \|_{1,\infty}   \leq \left \| \sum_{j=1}^{N_{\mathbf{s}_{1}}(st) }\Big(Y_{\mathbf{s}_{1},\,\mathbf{s}_{2}}(m)- \mathbb{E}\big[Y_{\mathbf{s}_{1},\,\mathbf{s}_{2}}(m)|\mathcal{F}_{\theta_{j-1}} ]\Big) \right \|_{2,\infty}\\ +  \left \|  \sum_{j=1}^{N_{\mathbf{s}_{1}}(st) } \mathbb{E}\big[Y_{\mathbf{s}_{1},\,\mathbf{s}_{2}}(m)|\mathcal{F}_{\theta_{j-1}} ] - N_{\mathbf{s}_{1}}(st)\, c_{\mathbf{s}_{1},\mathbf{s}_{2}}(t)     \right \|_{1,\infty}
\end{multline}

  Since the information of
  previous incursions is contained in $\mathcal{F}_{\theta_{j-1}}$, the sum of the differences $Y_{\mathbf{s}_{1},\,\mathbf{s}_{2}}(m)-\mathbb{E}\big[Y_{\mathbf{s}_{1},\,\mathbf{s}_{2}}(m)|\mathcal{F}_{\theta_{j-1}} ]$ up to $m=N_{st}$ is a martingale.
By Doob's inequality and Lemma~\ref{HoldersInequality}
\begin{multline}
\left \| \sum_{m=1}^{N_{\mathbf{s}_{1}}(st) }\Big(Y_{\mathbf{s}_{1},\,\mathbf{s}_{2}}(m)- \mathbb{E}\big[Y_{\mathbf{s}_{1},\,\mathbf{s}_{2}}(m)|\mathcal{F}_{\theta_{j-1}} ]\Big) \right \|_{2,\infty}  \leq  \mathbb{E}\Big[\Big|\sum_{m=1}^{N_{\mathbf{s}_{1}}(st)}Y_{\mathbf{s}_{1},\,\mathbf{s}_{2}}(m)-\mathbb{E}\big[Y_{\mathbf{s}_{1},\,\mathbf{s}_{2}}(m)|\mathcal{F}_{\theta_{j-1}} ]     \Big|^{2}\Big]^{\frac{1}{2}}\\
\leq
\mathbb{E}\big[N_{\mathbf{s}_{1}}(st)\big]^{\frac{1}{2}}\,\sup_{m } \mathbb{E}\Big[\big|Y_{\mathbf{s}_{1},\,\mathbf{s}_{2}}(m)-\mathbb{E}[Y_{\mathbf{s}_{1},\,\mathbf{s}_{2}}(m)|\mathcal{F}_{\theta_{j-1}}] \big|^{2}\Big| m\leq N_{\mathbf{s}_{1}}(st)  \Big]^{\frac{1}{2}} \\ \leq  r_{2}^{\frac{1}{2}} t^{\frac{1}{8} }\sup_{m,\omega \in \mathcal{F}_{\sigma_{j}} }\mathbb{E}\Big[\big|Y_{\mathbf{s}_{1},\,\mathbf{s}_{2}}(m) \big|^{2}\Big|\mathcal{F}_{\sigma_{j}}  \Big]^{\frac{1}{2}}
\leq 5 r_{2}^{\frac{1}{2}}\,t^{\frac{1}{8} } \,
\end{multline}
where the last inequality uses Part (1), and the third inequality
uses Part (2) of Lemma~\ref{FiniteTimes}, and the following
\begin{multline}
 \mathbb{E}\Big[\big|Y_{\mathbf{s}_{1},\,\mathbf{s}_{2}}(m)-\mathbb{E}[Y_{\mathbf{s}_{1},\,\mathbf{s}_{2}}(m)|\mathcal{F}_{\theta_{j-1}}] \big|^{2}\Big| m\leq N_{\mathbf{s}_{1}}(st)  \Big]^{\frac{1}{2}}      \\ \leq \sup_{m,\omega \in \mathcal{F}_{\sigma_{j}} } \mathbb{E}\Big[\big|Y_{\mathbf{s}_{1},\,\mathbf{s}_{2} }(m)-\mathbb{E}[Y_{\mathbf{s}_{1},\,\mathbf{s}_{2}}(m)|\mathcal{F}_{\theta_{j-1}}]\big|^{2}\Big|\mathcal{F}_{\sigma_{j}}  \Big]^{\frac{1}{2}}\leq \sup_{m,\omega \in \mathcal{F}_{\sigma_{j}} }\mathbb{E}\Big[\big|Y_{\mathbf{s}_{1},\,\mathbf{s}_{2}}(m) \big|^{2}\Big|\mathcal{F}_{\sigma_{j}}  \Big]^{\frac{1}{2}}.
\end{multline}

Now we can work on the second term on the right
of~(\ref{Kremblin}).  By the triangle inequality and conditioning
 that $m\leq N_{\mathbf{s}_{1}}(st)$ for the terms in the sum as above,
\begin{multline*}
 \mathbb{E}\Big[\,\Big|\sum_{m=1}^{N_{\mathbf{s}_{1}}(st)}\mathbb{E}\big[Y_{\mathbf{s}_{1},\,\mathbf{s}_{2}}(m)|\mathcal{F}_{\theta_{j-1}} ]-N_{\mathbf{s}_{1} }(st)c_{\mathbf{s}_{1},\,\mathbf{s}_{2}}(t)     \Big|\Big] \\
 \leq \mathbb{E}\Big[\sum_{m=1}^{N_{\mathbf{s}_{1}}(st)}  \mathbb{E}\Big[\big| \mathbb{E}[Y_{\mathbf{s}_{1},\,\mathbf{s}_{2}}(m)|\mathcal{F}_{\theta_{j-1}}]-c_{\mathbf{s}_{1},\,\mathbf{s}_{2}}(t) \big|  \Big| m\leq N_{\mathbf{s}_{1}}(st)    \Big]  \Big]
\end{multline*}

However, we will split the terms in the sum on the right-side into the two groups $m\in  [N_{\mathbf{s}_{1}}(st-\varrho t)+1, N_{\mathbf{s}_{1}}(st)]$ and $m\in  [0,N_{\mathbf{s}_{1}}(st-\varrho t)]$ for some $0<\varrho \ll 1 $ and in particular $\varrho <s$.  For $m\in  [N_{\mathbf{s}_{1}}(st-\varrho t)+1, N_{\mathbf{s}_{1}}(st)]$
\begin{multline*}
 \mathbb{E}\Big[\sum_{m=N_{\mathbf{s}_{1}}(st-\varrho t)+1}^{N_{\mathbf{s}_{1}}(st)}  \mathbb{E}\Big[\big| \mathbb{E}[Y_{\mathbf{s}_{1},\,\mathbf{s}_{2}}(m)|\mathcal{F}_{\theta_{j-1}}]-c_{\mathbf{s}_{1},\,\mathbf{s}_{2}}(t) \big|  \Big| m\leq N_{\mathbf{s}_{1}}(st)    \Big]  \Big] \\ \leq
2\mathbb{E}\big[ N_{\mathbf{s}_{1}}(st)-N_{\mathbf{s}_{1}}(st-\varrho t ) \big]\,\sup_{\omega\in \mathcal{F}_{\sigma_{j}} } \mathbb{E}\big[|Y_{\mathbf{s}_{1},\,\mathbf{s}_{2}}(m)| \big|\mathcal{F}_{\sigma_{j}}\big]\leq 5\, r_{2}^{\frac{1}{2}}\varrho^{\frac{1}{2}}t^{\frac{1}{4}},
\end{multline*}
where the first inequality follows since $\mathbb{E}[Y_{\mathbf{s}_{1},\,\mathbf{s}_{2}}(m)|\mathcal{F}_{\theta_{j-1}}]$ and $c_{\mathbf{s}_{1},\,\mathbf{s}_{2}}(t)$ are convex combinations of values $\mathbb{E}\big[|Y_{\mathbf{s}_{1},\,\mathbf{s}_{2}}(m)| \big|\mathcal{F}_{\sigma_{j}}\big]$ for different $\omega \in \mathcal{F}_{\sigma_{j}}$.
The second inequality employs Part (1) and then Part (2) of Lemma~\ref{FiniteTimes} for an interval of length $\varrho t$.

For the sum of the terms with $m\in  [0,N_{\mathbf{s}_{1}}(st-\varrho t)]$, we need to better understand the expressions
\begin{align}\label{Fun}
\mathbb{E}\Big[\big|\mathbb{E}[Y_{\mathbf{s}_{1},\,\mathbf{s}_{2}}(m)|\mathcal{F}_{\theta_{j-1}}]-c_{\mathbf{s}_{1},\,\mathbf{s}_{2}}(t) \big|  \Big| m\leq N_{\mathbf{s}_{1}}(st)    \Big]  ,
\end{align}
and, in particular, how the information $m\leq N_{\mathbf{s}_{1}}(st)$ will change the expectation.  If $m\leq N_{\mathbf{s}_{1}}(st)$, then it will already be known at time $\theta_{j-1}\leq (s-\varrho)t$, that $m-1 \leq N_{\mathbf{s}_{1}}(st)$.
However, it was shown in the beginning of the proof of Corollary~\ref{BC4} that the probability of a jump into the region $[-t^{\frac{1}{4}},t^{\frac{1}{4}}]$ (which is the beginning of an incursion) after starting in any point $(x,k)$ with $|k|\in [2t^{\frac{1}{4}},4t^{\frac{1}{4}}]$ occurs with probability approaching one for $t\rightarrow \infty$.  Since, we have assumed jumps bounded by $t^{\frac{1}{40} }$, the point $(X_{\theta_{j-1}},K_{\theta_{j-1}})$ will have $|K_{\theta_{j-1}}|\leq [2t^{\frac{1}{4}},2t^{\frac{1}{4}}+t^{\frac{1}{40} }]$.  Thus knowing $ m\leq N_{\mathbf{s}_{1}}(st)$ will add little is known at time $\theta_{j}$. With this consideration, we can give a crude upper bound for the expression~(\ref{Fun}) by doubling the unconditioned value of the expectation of $|Y_{\mathbf{s}_{1},\,\mathbf{s}_{2}}(m)|\mathcal{F}_{\theta_{j-1}}]-c_{\mathbf{s}_{1},\,\mathbf{s}_{2}}(t)|$:
\begin{multline}
\mathbb{E}\Big[\big|\mathbb{E}[Y_{\mathbf{s}_{1},\,\mathbf{s}_{2}}(m)|\mathcal{F}_{\theta_{j-1}}]-c_{\mathbf{s}_{1},\,\mathbf{s}_{2}}(t) \big| \Big| N_{\mathbf{s}_{1}}(st)-N_{\mathbf{s}_{1}}(\theta_{j-1})>0  \Big] \\ < 2 \mathbb{E}\Big[\big|\mathbb{E}[Y_{\mathbf{s}_{1},\,\mathbf{s}_{2}}(m)|\mathcal{F}_{\theta_{j-1}}]-c_{\mathbf{s}_{1},\,\mathbf{s}_{2}}(t) \big|   \Big]
\end{multline}
where the inequality is due to the event $  N_{\mathbf{s}_{1}}(st)-N_{\mathbf{s}_{1}}(\theta_{j-1})>0$ having probability close to one by Corollary~\ref{BC4}.  Finally,
\begin{multline*}
\mathbb{E}\Big[ t^{-\frac{1}{4}}\sum_{m=1}^{N_{\mathbf{s}_{1}}(st-\varrho t)}  \mathbb{E}\Big[\big| \mathbb{E}[Y_{\mathbf{s}_{1},\,\mathbf{s}_{2}}(m)|\mathcal{F}_{\theta_{j-1}}]-c_{\mathbf{s}_{1},\,\mathbf{s}_{2}}(t)\big| \Big]\Big] \\ < 2t^{-\frac{1}{4}}\mathbb{E}\big[N_{\mathbf{s}_{1}}(st-\varrho t)   \big] \sup_{m} \mathbb{E}\Big[\big| \mathbb{E}[Y_{\mathbf{s}_{1},\,\mathbf{s}_{2}}(m)|\mathcal{F}_{\theta_{j-1}}]-c_{\mathbf{s}_{1},\,\mathbf{s}_{2}}(t)\big| \Big],
\end{multline*}
which goes to zero by Part (2) of this proposition and by Part (2) of Lemma~\ref{FiniteTimes}.

\end{proof}

\subsection{The torus reflection symmetry}\label{Torus}

Now we move on to results which are specific to having a reflection symmetry on the torus such that the potential $V(x)$ and jump densities $j_{x}(v)$ satisfy
$$V(x)=V(R(x))\quad \text{and} \quad j_{x}(v)=j_{R(x)}(v).$$
A consequence of this symmetry along with the symmetric jump rates
$j_{a}(v)= j_{a}(-v)$ between positive and negative momenta
is that a specific phase space
trajectory $(x_{s},\,k_{s})$, $s\in[0,t]$ from $(x_{0},\,k_{0})$
to $(x_{t},\,k_{t})$ will occur with the same density as a
trajectory $(R(x_{s}),\,-k_{s})$ from  $(R(x_{0}),\,-k_{0})$ to
$(R(x_{t}),\,-k_{t})$.  This combines with the time reversal
symmetry of the model to yield a ``time reversal in momentum''.

\begin{center}
\begin{tabular}{|p{6cm}|p{6cm}|}
\hline
\hspace{1.3cm}Original trajectory   & \hspace{1.3cm}Torus reflection  \\
\hspace{2cm} $(x_{s},\,k_{s})$ \hspace{1cm} & \hspace{1,3cm} $(R(x_{s}),\,-k_{s})$  \vspace{.1cm}\\
\hline
\hspace{.4cm} Time reversal symmetry & \hspace{.1cm} Time rev. with torus reflection \\
\hspace{1.7cm} $(x_{t-s},\,-k_{t-s})$ \hspace{1cm}         &  \hspace{1.3cm} $(R(x_{t-s}),\,k_{t-s})$      \vspace{.1cm}   \\
\hline
\end{tabular}

\end{center}

The significance of the symmetries for the study of the drift is that $\frac{dV}{dx}(x)=-\frac{dV}{dx}(R(x_{s}))$. Thus for the trajectories $(x_{s},\,k_{s})$ and $ (R(x_{t-s}),\,k_{t-s})$ over the interval $[0,t]$,
$$\int_{0}^{t}ds\,\frac{dV}{dx}(x_{s})= -\int_{0}^{t}ds\frac{dV}{dx}(R(x_{t-s})). $$
This will provide a basis for showing that there is no systematic bias for the contributions of the incursions.

\begin{lemma}\label{ExitProbTwo}
Assume List~\ref{AssumpOne} and List~\ref{AssumpTwo}. For the momentum-capped dynamics
$\frac{\rho_{+,-}(t)}{\rho_{-,+}(t)}-1$ tends to zero.

\end{lemma}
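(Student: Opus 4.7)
The proof is by invoking the momentum time-reversal symmetry set up at the start of Section~\ref{Torus}. Under (\textit{III}) of List~\ref{AssumpOne} and (\textit{iii}) of List~\ref{AssumpTwo}, the transformation $\Phi:(X_s,K_s)\mapsto(R(X_s),-K_s)$ preserves the law of the (capped-jump) process: the Hamiltonian flow is covariant, because differentiating $V(R(x))=V(x)$ gives $(dV/dx)(R(x))=-(dV/dx)(x)$; and the jump rates are covariant because $j_{R(a)}(-v)=j_{R(a)}(v)=j_a(v)$. The hitting times $\sigma_j$ and $\theta_j$ depend only on $|K_s|$, which is $\Phi$-invariant, hence they are preserved as well.

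Fix any $(x,k)\in[0,1]\times[t^{1/4}-t^{1/40},\,t^{1/4}]$. Its image $(R(x),-k)$ lies in the admissible entry region $[0,1]\times[-t^{1/4},\,-t^{1/4}+t^{1/40}]$ for a negative-sign incursion. Because $\Phi$ swaps the events $\{K_{\theta_j}>0\}$ and $\{K_{\theta_j}<0\}$ while preserving the conditional measure, one obtains the pointwise identity
\begin{align*}
\Pr\big[\,K_{\theta_j}<0\,\big|\,(X_{\sigma_j},K_{\sigma_j})=(x,k)\,\big] = \Pr\big[\,K_{\theta_j}>0\,\big|\,(X_{\sigma_j},K_{\sigma_j})=(R(x),-k)\,\big].
\end{align*}

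By Lemma~\ref{ExitProb}, the left-hand side equals $\rho_{+,-}(t)(1+\varepsilon_t)$ and the right-hand side equals $\rho_{-,+}(t)(1+\varepsilon'_t)$, with $\varepsilon_t,\varepsilon'_t\to 0$ uniformly in the starting point. Since $\rho_{-,+}(t)\geq t^{-1}>0$ by its very definition in Lemma~\ref{ExitProb}, one may divide to conclude
\begin{align*}
\frac{\rho_{+,-}(t)}{\rho_{-,+}(t)} \;=\; \frac{1+\varepsilon'_t}{1+\varepsilon_t}\;\longrightarrow\;1.
\end{align*}

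I do not expect a real obstacle here: the whole point of the reflection assumption (\textit{iii}) is to furnish precisely such a pairing between the two classes of incursions, and the nontrivial analytic work --- showing that the exit probability is essentially independent of the particular entry point --- has already been done in Lemma~\ref{ExitProb}. The only verification requiring attention is that $\Phi$ is genuinely a symmetry of the full capped-jump Markov dynamics, including the Poisson clock and the random jump magnitudes; this reduces to the two symmetry identities above.
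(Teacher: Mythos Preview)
Your proof is correct and follows essentially the same approach as the paper: both establish the pointwise identity $\Pr_{(a,\,t^{1/4}-v)}[K_{\theta}<0]=\Pr_{(R(a),\,-t^{1/4}+v)}[K_{\theta}>0]$ via the torus reflection symmetry, and then invoke Lemma~\ref{ExitProb} to replace each side by the corresponding $\rho_{\pm,\mp}(t)$ up to a vanishing multiplicative error. The paper organizes the final step with a triangle-inequality splitting rather than your direct division, but the content is the same.
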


\begin{proof}
  By the torus reflection symmetry of the dynamics, the probability density of going from $(a, t^{\frac{1}{4}}-v)$ to $(a^{\prime},-2t^{\frac{1}{4}}-v^{\prime} )$ from the initial time to time $\theta$ is the same as the probability density of going from $(R(a), -t^{\frac{1}{4}}+v)$ to $(R(a^{\prime}),\,2t^{\frac{1}{4}}+v^{\prime} )$.  Thus
$$\textup{Pr}_{(a,\,t^{\frac{1}{4}}-v)}\big[ K_{\theta}< 0   \big]= \textup{Pr}_{(R(a),\,-t^{\frac{1}{4}}+v)}\big[ K_{\theta}> 0   \big].  $$
By the triangle inequality,
\begin{multline}\label{DayTrip}
\big|\frac{\rho_{+,-}(t)}{\rho_{-,+}(t)}-1 \big| \leq  \Big|\frac{\rho_{+,-}(t)}{\rho_{-,+}(t)}-\frac{\rho_{+,-}(t)}{\textup{Pr}_{(R(a),\,-t^{\frac{1}{4}}+v)}\big[ K_{\theta}> 0   \big]} \Big|+\Big|\frac{ \rho_{+,-}(t)}{ \textup{Pr}_{(a,\,t^{\frac{1}{4}}-v)}\big[ K_{\theta}< 0   \big]}-1 \Big|
\end{multline}
By Proposition~\ref{ExitProb},
$$\Big|\frac{\textup{Pr}_{(a,\,t^{\frac{1}{4}}-v)}\big[ K_{\theta}< 0   \big]}{\rho_{+,-} }-1\Big|,\quad \Big|\frac{\textup{Pr}_{(a,\,-t^{\frac{1}{4}}+v)}\big[ K_{\theta}>0\big]}{\rho_{-,+} }-1 \Big| \longrightarrow 0,     $$
for all $(a,v)\in [0,1]\times [0,t^{\frac{1}{40} }]$.  Thus~(\ref{DayTrip}) goes to zero.

\end{proof}

The following lemma constructs a specific joint distribution
$\phi_{t}^{*}(a,\,v\,;\,a^{\prime},\,v^{\prime}) $ for the first
entrance coordinates $(a,\,v)$ and last exit coordinates
$(a^{\prime},\,v^{\prime})$ for the set $S=\{(x,\,k)\in
[0,\,1]\times \R\,|\, |k|\leq t^{\frac{1}{4}}\} $ for trajectories
conditioned to begin with $s_{1}K_{0}> t^{\frac{1}{4}}$ and to end
with $s_{2}K_{\theta}> t^{\frac{1}{4}}$.  The
symmetry~(\ref{FirstLast}) will play a key role in the proof of
Proposition~\ref{UtSym}.

\begin{lemma}[Equilibrium first-entrance/last-exit distribution] \label{Equilibrium}
Assume List~\ref{AssumpOne} and List~\ref{AssumpTwo}.  Consider the dynamics conditioned to have jumps capped by $t^{\frac{1}{40}}$.  For large enough $t$, there exists a unique joint density $\phi_{t}^{*}(a,\,v;\,a^{\prime},\,v^{\prime}) $ with support in $([0,1]\times [0,\,t^{\frac{1}{40}}])^{2}$ such that the marginals
$$\phi_{\mathcal{F},t}^{*}(a,\,v)= \int_{[0,1]\times \R^{+}}da^{\prime}\,dv^{\prime}\,\phi_{t}^{*}(a,\,v;\,a^{\prime},\,v^{\prime})\, \text{ and }\, \phi_{\mathcal{L},t}^{*}(a^{\prime},\,v^{\prime})= \int_{[0,1]\times \R^{+}}da\,dv\,\phi_{t}^{*}(a,\,v;\,a^{\prime},\,v^{\prime}),   $$
satisfy the relations
\begin{eqnarray*}
\phi_{\mathcal{L},t}^{*}(a^{\prime},\,v^{\prime}) &=&   \int_{[0,1]\times \R^{+}}da\,dv\, \phi_{\mathcal{F},t}^{*}(a,\,v)\,\textup{Pr}_{(a,\,s_{1}\,t^{\frac{1}{4}}-s_{1}v)}\big[(X_{\varsigma},\,s_{2}\,K_{\varsigma})=(a^{\prime},\,t^{\frac{1}{4}}-v^{\prime} )\,\big|\,s_{2}\,K_{\varsigma}>0  \big] \\
\phi_{\mathcal{F},t}^{*}(a^{\prime},\,v^{\prime}) &=&   \int_{[0,1]\times \R^{+}}da\,dv\, \phi_{\mathcal{L},t}^{*}(a,\,\,v)\,\textup{Pr}_{(a,\,s_{1}t^{\frac{1}{4}}-s_{1}v)}^{(R)}\big[(X_{\varsigma}, s_{2}\,K_{\varsigma})=(a^{\prime},\,t^{\frac{1}{4}}-v^{\prime} )\,\big|\,s_{2}\,K_{\varsigma}>0,    \big]
\end{eqnarray*}
where $\textup{Pr}_{(x,\,k)}^{(R)}$ refers to the law of the time-reversed dynamics starting from the point $(x,\,k)\in [0,\,1]\times \R$.

Moreover, $\phi_{t}^{*}$ has the symmetry
\begin{align}\label{FirstLast}
\phi_{t}^{*}(a,\,v\,;\,a^{\prime},\,v^{\prime})=\phi_{t}^{*}(R(a^{\prime}),\,v^{\prime};\,R(a),\,v).
\end{align}

\end{lemma}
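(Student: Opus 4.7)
The plan is to recognize the two marginal relations as a fixed-point condition for a Markov chain on first-entrance points, then exploit the torus/time-reversal symmetry to get~(\ref{FirstLast}).

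First I would introduce two Markov kernels on $[0,1]\times[0,t^{\frac{1}{40}}]$. Let $T_{F}$ be the ``entrance-to-exit'' kernel
$$ T_{F}\bigl((a,v)\to(a',v')\bigr)=\textup{Pr}_{(a,\,s_{1}t^{\frac{1}{4}}-s_{1}v)}\bigl[(X_{\varsigma},s_{2}K_{\varsigma})=(a',t^{\frac{1}{4}}-v')\,\big|\,s_{2}K_{\varsigma}>0\bigr] $$
and $T_{R}$ the analogous ``exit-to-entrance'' kernel for the time-reversed dynamics. The two relations in the statement say precisely that $\phi_{\mathcal{L},t}^{*}=T_{F}\phi_{\mathcal{F},t}^{*}$ and $\phi_{\mathcal{F},t}^{*}=T_{R}\phi_{\mathcal{L},t}^{*}$, so $\phi_{\mathcal{F},t}^{*}$ must be an invariant density of the composite kernel $K_{t}:=T_{R}T_{F}$. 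Once $\phi_{\mathcal{F},t}^{*}$ is produced, I define $\phi_{t}^{*}(a,v;a',v'):=\phi_{\mathcal{F},t}^{*}(a,v)\,T_{F}\bigl((a,v)\to(a',v')\bigr)$, so that by construction both marginal relations hold and the two marginals are $\phi_{\mathcal{F},t}^{*}$ and $\phi_{\mathcal{L},t}^{*}$ as required.

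To obtain existence and uniqueness of the invariant density of $K_{t}$, I would invoke Proposition~\ref{BC4} (the uniform $L^{1}$-convergence of boundary-crossing densities to $\phi_{\infty}$). Because $\phi_{\infty}$ is strictly positive on the relevant compact piece of $[0,1]\times\R^{+}$, this uniform approximation yields, for all sufficiently large $t$, a Doeblin-type minorization of the form $T_{F}\bigl((a,v)\to\cdot\bigr)\geq c\,\mu(\cdot)$ uniformly in $(a,v)$, for some $c>0$ and some reference probability measure $\mu$ on $[0,1]\times[0,t^{\frac{1}{40}}]$; likewise for $T_{R}$. Standard Markov chain theory then gives $K_{t}$ a unique invariant density $\phi_{\mathcal{F},t}^{*}$ on $[0,1]\times[0,t^{\frac{1}{40}}]$ and exponential convergence of iterates, which is more than enough.

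For the symmetry~(\ref{FirstLast}), I would use the combined torus-reflection and time-reversal symmetry displayed in Section~\ref{Torus}: a trajectory $(x_{s},k_{s})_{s\in[\sigma,\varsigma]}$ which enters the low-energy strip at $(a,s_{1}t^{\frac{1}{4}}-s_{1}v)$ and makes its last in-strip jump at $(a',t^{\frac{1}{4}}-v')$ with $s_{2}K_{\varsigma}>0$ is mapped, under $(x,k)\mapsto(R(x),-k)$ composed with time reversal, to a trajectory of the time-reversed dynamics that enters at $(R(a'),s_{2}t^{\frac{1}{4}}-s_{2}v')$ and exits at $(R(a),t^{\frac{1}{4}}-v)$. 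In terms of the kernels, this reads $T_{F}\bigl((a,v)\to(a',v')\bigr)=T_{R}\bigl((R(a'),v')\to(R(a),v)\bigr)$ (with the same $s_{1},s_{2}$ roles). Consequently the density $\tilde\phi_{t}^{*}(a,v;a',v'):=\phi_{t}^{*}(R(a'),v';R(a),v)$ also satisfies the two marginal relations with marginals swapped consistently, and uniqueness of the invariant measure forces $\tilde\phi_{t}^{*}=\phi_{t}^{*}$. The main obstacle I anticipate is quantitative: I must carefully track constants in Proposition~\ref{BC4} to guarantee that the minorization constant $c>0$ persists uniformly for all large $t$, despite the dependence of the support $[0,1]\times[0,t^{\frac{1}{40}}]$ on $t$; the restriction of jumps by $t^{\frac{1}{40}}$ from Lemma~\ref{CappedJumps} is what makes this feasible.
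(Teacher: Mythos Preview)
Your overall strategy---cast the two marginal relations as the fixed-point equation for a Markov kernel on entrance points, obtain the fixed point by a contraction/Doeblin argument, and deduce~(\ref{FirstLast}) from uniqueness---is exactly the scheme the paper follows. The symmetry step via uniqueness is in fact cleaner than what the paper writes out. The gap is in how you propose to obtain the contraction.

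You want to minorize $T_{F}$ (or $T_{R}T_{F}$) directly by invoking Proposition~\ref{BC4}. That proposition controls the distribution of a \emph{first} down-crossing of $t^{\beta}$ for a particle that starts at \emph{high} momentum $2t^{\beta}\le k\le 4t^{\beta}$; it is a purely high-energy estimate. Your kernel $T_{F}$, by contrast, starts from momentum just below $t^{1/4}$ and describes the location of the \emph{last} in-strip jump $\varsigma$ after a full pass through low energy. Neither the starting configuration nor the observable matches the hypotheses of Proposition~\ref{BC4}, and $\varsigma$ is not a stopping time for the forward dynamics, so you cannot insert the strong Markov property and then apply a high-energy boundary-crossing lemma. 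The low-energy portion of $T_{F}$ is simply not touched by any of the Appendix~\ref{AppendixBC} results.

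The paper circumvents this by lifting the fixed-point problem to an auxiliary level $\tfrac{3}{2}t^{1/4}$: it introduces $\varsigma'$ (the last jump below $\tfrac{3}{2}t^{1/4}$ before reaching $2t^{1/4}$) and builds contraction maps $\Psi^{F},\Psi^{R}$ whose one-step transition involves crossing the strip $|k-\tfrac{3}{2}t^{1/4}|<t^{1/20}$. This strip sits entirely in the high-energy regime, so Proposition~\ref{BC3NHalf} applies and gives uniform $L^{1}$-closeness of the conditioned crossing densities to $\tfrac{1}{2}\phi_{\infty}$; that yields the contraction on differences of densities. The fixed points $\pi_{t}^{F},\pi_{t}^{R}$ at level $\tfrac{3}{2}t^{1/4}$ are then pushed down to $\phi^{*}_{\mathcal{F},t}$ via one more forward transition. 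If you keep your fixed-point/uniqueness framework but replace the direct minorization of $T_{F}$ by this auxiliary-level contraction (using Proposition~\ref{BC3NHalf} rather than Proposition~\ref{BC4}), your argument goes through.
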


\begin{proof}
Set $s_{1}=s_{2}=+$.  We pick a number in $(1,2)$, say, $\frac{3}{2}$. In the event that $K_{\sigma}>0$, define  $\varsigma^{\prime}$ to be the last time the particle has a jump with momentum $<\frac{3}{2}\,t^{\frac{1}{4}}$ before it continues on to reach $2\,t^{\frac{1}{2}}$ (at time $\theta$),
\begin{align}
\varsigma^{\prime}= \textup{inf}\{s\in [0,\,\theta],\, M_{s}-M_{s^{-}}\neq 0  \, \big|\,  \inf_{s< r\leq \theta } |K_{r}| > \frac{3}{2}\,t^{\frac{1}{4}} \}.
\end{align}
Clearly $\varsigma^{\prime}>\varsigma$.

Consider the two maps $\Psi^{F},\Psi^{R}:L^{1}([0,1]\times \R^{+})$
\begin{eqnarray*}
\Psi^{F}(\varphi)(a,v)&=&\int da^{\prime}\,dv^{\prime}\, \varphi(a,v)\,\textup{Pr}_{(a,\,\frac{3}{2}t^{\frac{1}{4}}-v)}\big[(X_{\varsigma^{\prime}},\,K_{\varsigma^{\prime}})=(a^{\prime},\,\frac{3}{2}t^{\frac{1}{4}}-v^{\prime} )\,\big|\,K_{\varsigma}>0,\, \inf_{0\leq r\leq \theta }K_{r}<t^{\frac{1}{4}}  \big] \\
\Psi^{R}(\varphi)(a,\,v)&=&\int da^{\prime}\,dv^{\prime}  \, \varphi(a,v)\,\textup{Pr}_{(a,\,\frac{3}{2}t^{\frac{1}{4}}-v)}^{(R)}\big[(X_{\varsigma},\,K_{\varsigma})=(a^{\prime},\,\frac{3}{2}t^{\frac{1}{4}}-v^{\prime} )\,\big|\,K_{\varsigma}>0,\, \inf_{0\leq r\leq \theta }K_{r}<t^{\frac{1}{4}}    \big],
\end{eqnarray*}
where $\textup{Pr}_{(x,\,k)}^{(R)}$ refers to the statistics for the time-reversed Markov dynamics starting from the point $(x,\,k)$.

$\Psi^{F}$ and $\Psi^{R}$ send probability densities to
probability densities, and for large enough $t$, we claim that
$\Psi^{F},\,\Psi^{R}$ are contractive on differences of densities.
Consider the hitting time $\tau= \inf\{s\in [0,\theta],\,
M_{s}-M_{s^{-}}\neq 0 \,\big| \, K_{s}\notin S \} $ for the set
$S=\{(x,\,k)\in [0,1]\times \R
\,|\,|k-\frac{3}{2}t^{\frac{1}{4}}|< t^{\frac{1}{20} } \} $.

Since $\tau<\varsigma^{\prime}$, by the Markov property,
\begin{multline}\label{Contract}
\Psi^{F}(\varphi)(a,\,v)=\int_{[0,1]\times \R^{+}} da^{\prime}\,dv^{\prime}\varphi(a^{\prime},\,v^{\prime})\\ \times\int_{[0,1]\times \R^{+}}dx\,dk\, \big(\psi_{\uparrow,\,t}^{(a^{\prime},\,v^{\prime})}(x,\,k-\frac{3}{2}t^{\frac{1}{4}}-t^{\frac{1}{20}})+\psi_{\downarrow,\,t}^{(a^{\prime},\,v^{\prime})}( x,\,-k+\frac{3}{2}t^{\frac{1}{4}}-t^{\frac{1}{20}})\big)\\
\times \textup{Pr}_{(x,\,k)}\big[(X_{\varsigma^{\prime}},\,K_{\varsigma^{\prime}})=(a,\, \frac{3}{2}t^{\frac{1}{4}}-v )\,\big|\,K_{\varsigma}>0,\,\inf_{0\leq r\leq \theta }K_{r}<t^{\frac{1}{4}}   \big]
\end{multline}
$\psi_{\uparrow,\,t}^{(a^{\prime},\,v^{\prime})}$ and
$\psi_{\downarrow,\,t}^{(a^{\prime},\,v^{\prime})}$ are the
boundary crossing densities for the set $S$ for paths starting
from $(a^{\prime},\,\frac{3}{2}t^{\frac{1}{4}}-v^{\prime})$ and
conditioned to reach below $t^{\frac{1}{4}}$ before going above
$t^{\frac{1}{4}}$.
  Using~(\ref{Contract}), for two probability densities $\varphi_{1},\,\varphi_{2}$,
\begin{multline}
\|\Psi^{F}(\varphi_{1}-\varphi_{2})\|_{1}\leq  \|\varphi_{1}-\varphi_{2}\|_{1}\\ \times \sup_{a,\,a^{\prime}\in [0,1],\, v\,v^{\prime}\in [0,\,t^{\frac{1}{40} }] }\big( \|\psi_{
\uparrow,\, t}^{(a,\,v)}-\psi_{\uparrow,\,t}^{(a^{\prime},\,v^{\prime})}\|_{1}+   \|\phi_{\downarrow,\, t}^{(a,\,v)}-\phi_{\downarrow,\,t}^{(a^{\prime},\,v^{\prime})}\|_{1}\big).
\end{multline}
 However, by Proposition~\ref{BC3NHalf} we have that
$$\sup_{a\in [0,1],\, v\,v^{\prime}\in [0,\,t^{\frac{1}{40} }]  } \|\psi_{\uparrow,\, t}^{(a,\,v)}-\frac{1}{2}\phi_{\infty}\|_{1}\longrightarrow 0\quad\text{ and }\quad \sup_{a\in [0,1],\, v\in [0,\,t^{\frac{1}{40} }]} \|\phi_{\downarrow,\,t}^{(a,\,v)}-\frac{1}{2}\phi_{\infty}\|_{1}\longrightarrow 0.$$

Thus for large enough $t$ there exists a constant $0<\lambda<1$ such that
$$\|\Psi^{F}(\varphi_{1}-\varphi_{2})\|_{1}\leq \lambda\|\varphi_{1}-\varphi_{2}\|_{1},$$
for any two probability densities $\varphi_{1},\,\varphi_{2}\in L^{1}([0,1]\times \R^{+})$.  For large enough $t$, the constant $\lambda>0$ can be made arbitrarily small. By symmetry, the same proof holds for $\Psi^{R}$.

For $t$ large enough for the strict contractive property above, we
can now construct special densities by defining the limits
$$ \pi^{F}_{t}=\lim_{n\rightarrow \infty}(\Psi^{R}\Psi^{F})^{n}(P)\quad \text{and} \quad \pi^{R}_{t}= \lim_{n\rightarrow \infty}(\Psi^{F}\Psi^{R})^{n}(P),  $$
where the limit is independent of the probability density $P$.  We have constructed equilibrium states in which we can do computations using the Markov property for both the original dynamics and the time-reversed dynamics.

We thus define the first entrance distribution
\begin{multline}\label{ForEqu}
\phi^{*}_{\mathcal{F},t}(a,\,v)= \int_{[0,1]\times \R^{+}} da^{\prime}\,dv^{\prime}\,\pi^{F}_{t}(a^{\prime},\,v^{\prime})\\ \textup{Pr}_{(a,\,t^{\frac{1}{4}}-v)}\big[(X_{\tau},\,K_{\tau})=(a^{\prime},\,t^{\frac{1}{4}}-v^{\prime} )\,\big|\,K_{\theta}>0\,\inf_{0\leq r\leq \theta }K_{r}<t^{\frac{1}{4}}   \big],
\end{multline}
and finally $\phi_{t}^{*}(a,\,v\,;\,a^{\prime},\,v^{\prime})$ as the product
$$\phi_{t}^{*}(a,\,v\,;\,a^{\prime},\,v^{\prime})=\phi^{*}_{\mathcal{F},t}(a,\,v)\textup{Pr}_{(a,\,t^{\frac{1}{4}}-v)}\big[(X_{\varsigma},\,K_{\varsigma})=(a^{\prime},\,t^{\frac{1}{4}}-v^{\prime} )\,\big|\,K_{\varsigma}>0  \big].  $$

The first relation in the statement of the lemma, which determines
$\phi^{*}_{\mathcal{L},t}(a,\,v)$ with
$\phi^{*}_{\mathcal{F},t}(a,\,v)$ using the forward dynamics,
follows immediately from the definition of $\phi_{t}^{*}$ .  In
the above formula,
$\phi^{*}_{\mathcal{L},t}(a^{\prime},\,v^{\prime})$ is determined
as the last exit time of $S=\{(x,\,k)\in [0,1]\times \R\,|\,
|k|\leq t^{\frac{1}{4}}\} $.  However, starting from the points
$(q,\,\frac{3}{2}t^{\frac{1}{4}}-p)$ with distribution
$\pi^{R}_{t}(q,\,p)$ in the time-reversed dynamics, then
$\phi^{*}_{\mathcal{L},t}(a,\,v)$ is the first entrance
distribution for set $S$. The second relation then follows from
the Markov property for the time-reversed dynamics
\begin{align}
\phi^{*}_{\mathcal{F},t}(a,\,v)=\int da^{\prime},\,dv^{\prime}\phi^{*}_{\mathcal{L},t}(a^{\prime},\,v^{\prime})\,  \textup{Pr}_{(a,\,t^{\frac{1}{4}}-v)}^{(R)}\big[(X_{\varsigma}, \,K_{\varsigma})=(a^{\prime},\,t^{\frac{1}{4}}-v^{\prime} )\,\big|\,K_{\varsigma}>0,    \big].
\end{align}

\end{proof}

\begin{proposition}[Antisymmetry of constants]\label{UtSym}
Assume List~\ref{AssumpOne} and List~\ref{AssumpTwo}. The constants
$c_{+,\,+}(t)$, $c_{-,\,-}(t)$, $c_{+,\,-}(t)+c_{-,\,+}(t)$ tend to zero for large times.

\end{proposition}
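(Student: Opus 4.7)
The plan is to use the momentum time-reversal symmetry---the combination of the spatial reflection $R$ with time reversal---which leaves the trajectory distribution invariant while flipping the sign of the integrated drift, together with the symmetric equilibrium joint density $\phi_t^*$ produced by Lemma~\ref{Equilibrium}. The target identity is the asymptotic antisymmetry
\[
c_{\mathbf{s}_1, \mathbf{s}_2}(t) \;=\; -\,c_{\mathbf{s}_2, \mathbf{s}_1}(t) \;+\; o(1) \qquad (t\to\infty),
\]
from which the three claims follow by specializing $(\mathbf{s}_1,\mathbf{s}_2) \in \{(+,+),(-,-),(+,-)\}$ (diagonal cases forcing $2c_{+,+}(t), 2c_{-,-}(t) = o(1)$).

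First I would replace the fixed entrance density $\phi(a,v)$ in the definition~\eqref{FirstStop} of $c_{\mathbf{s}_1,\mathbf{s}_2}(t)$ by the equilibrium first-entrance marginal $\phi_{\mathcal{F},t}^{*}(a,v)$ supplied by Lemma~\ref{Equilibrium}. The integrand
\[
\mathcal{I}_{\mathbf{s}_1,\mathbf{s}_2}(a,v;t) \;=\; \mathbb{E}_{(a,\,\mathbf{s}_1 t^{1/4}-\mathbf{s}_1 v)}\!\left[\int_0^\varsigma \tfrac{dV}{dx}(X_s)\,ds\;\chi(\mathbf{s}_2 K_\varsigma > 0)\right]
\]
is uniformly bounded on $(a,v)\in[0,1]\times[0,t^{1/40}]$ by Part~(1) of Proposition~\ref{OddsEnds}, and both $\phi$ and $\phi_{\mathcal{F},t}^{*}$ are $L^1$-close to the common asymptotic density $\phi_\infty$ on this set (Proposition~\ref{BC4} together with the construction in Lemma~\ref{Equilibrium}), so the discrepancy costs only $o(1)$ after multiplication by the $t^{-1/4}$ prefactor.

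Second, I would apply the Markov property to unfold $\chi(\mathbf{s}_2 K_\varsigma > 0)$ over the last-exit coordinate $(a',v')$ and use Lemma~\ref{ExitProb} to factor out the asymptotically $(a,v)$-independent exit-sign probability $\rho_{\mathbf{s}_1,\mathbf{s}_2}(t)$. This rewrites the constant in the form
\[
c_{\mathbf{s}_1,\mathbf{s}_2}(t) \;\approx\; t^{-1/4}\,\rho_{\mathbf{s}_1,\mathbf{s}_2}(t)\int \phi_t^{*}(a,v;a',v')\;\overline{D}_{\mathbf{s}_1,\mathbf{s}_2}(a,v;a',v')\,da\,dv\,da'\,dv',
\]
where $\overline{D}_{\mathbf{s}_1,\mathbf{s}_2}(a,v;a',v')$ denotes the expected drift along the bridge from $(a,\mathbf{s}_1 t^{1/4}-\mathbf{s}_1 v)$ to last-exit coordinates $(a',\mathbf{s}_2 t^{1/4}-\mathbf{s}_2 v')$. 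Now apply the involution $(X_s,K_s)\mapsto(R(X_{\varsigma-s}),K_{\varsigma-s})$: it is measure-preserving on paths, it bijectively maps $(\mathbf{s}_1,\mathbf{s}_2)$-bridges with endpoint data $(a,v;a',v')$ to $(\mathbf{s}_2,\mathbf{s}_1)$-bridges with endpoint data $(R(a'),v';R(a),v)$, and it negates the integrated drift because $\tfrac{dV}{dx}\circ R = -\tfrac{dV}{dx}$ combined with the change of variable $u=\varsigma-s$. Combined with the symmetry $\phi_t^{*}(a,v;a',v')=\phi_t^{*}(R(a'),v';R(a),v)$ from Lemma~\ref{Equilibrium} and the identity $\rho_{\mathbf{s}_1,\mathbf{s}_2}(t)/\rho_{\mathbf{s}_2,\mathbf{s}_1}(t)\to 1$ (trivial on the diagonal, Lemma~\ref{ExitProbTwo} off-diagonal), the change of variables $(a,v,a',v')\mapsto(R(a'),v',R(a),v)$ delivers the displayed antisymmetry.

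The main technical obstacle is the second step: justifying that the exit-sign probability can be cleanly factored out of the integral as the single constant $\rho_{\mathbf{s}_1,\mathbf{s}_2}(t)$, uniformly across the entire range of entrance coordinates $(a,v) \in [0,1]\times[0,t^{1/40}]$, while also replacing $\phi$ by $\phi^{*}_{\mathcal{F},t}$ without disturbing the limit. The uniform bound on $\mathcal{I}_{\mathbf{s}_1,\mathbf{s}_2}$ from Proposition~\ref{OddsEnds} controls the overall size of the integral, but the delicate ingredient is the uniform $L^1$-convergence of the boundary-crossing densities from Proposition~\ref{BC4}, which is precisely what guarantees that neither the entrance distribution nor the exit-sign probability retains meaningful dependence on $(a,v)$ in the limit.
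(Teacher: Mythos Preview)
Your proposal is correct and follows the same architecture as the paper's proof: replace the fixed entrance density in the definition of $c_{\mathbf{s}_1,\mathbf{s}_2}(t)$ by the equilibrium first-entrance marginal $\phi^{*}_{\mathcal{F},t}$, then exploit the exact pairing between the symmetry $\phi_t^{*}(a,v;a',v')=\phi_t^{*}(R(a'),v';R(a),v)$ of Lemma~\ref{Equilibrium} and the antisymmetry of the bridge drift under $(X_s,K_s)\mapsto (R(X_{\varsigma-s}),K_{\varsigma-s})$ to force the integral to (asymptotically) vanish.

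There is one procedural difference worth noting. You justify the replacement of $\phi_\infty$ by $\phi^{*}_{\mathcal{F},t}$ by asserting that both are $L^{1}$-close to $\phi_\infty$; however, the convergence $\phi^{*}_{\mathcal{F},t}\to\phi_\infty$ is not stated as a standalone result and would need to be extracted from the fixed-point construction in Lemma~\ref{Equilibrium} combined with a boundary-crossing estimate. The paper bypasses this by proving something slightly stronger: that the functional $\Lambda(\varphi)=t^{-1/4}\int\varphi(a,v)\,\mathbb{E}_{(a,\mathbf{s}_1 t^{1/4}-\mathbf{s}_1 v)}[\,\cdot\,]$ is asymptotically \emph{independent} of the entrance density $\varphi$ (for $\varphi$ supported in $[0,1]\times[0,t^{1/40}]$). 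This is done by inserting an intermediate stopping time $\tau$ at the exit of a thin strip of width $t^{1/20}$ around $t^{1/4}$, observing that the drift accumulated up to $\tau$ is $O(t^{-3/20})$, and then invoking Proposition~\ref{BC3} to show that the boundary-crossing densities at $\tau$ converge to $\phi_\infty$ uniformly in the starting point. Either route closes the argument; the paper's is more self-contained given the available lemmas, while yours is conceptually cleaner if the $L^{1}$-convergence of $\phi^{*}_{\mathcal{F},t}$ is granted.
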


\begin{proof}
Let us fix $\mathbf{s}_{1}=\mathbf{s}_{2}=+$.  By
Lemma~\ref{CappedJumps}, we can take the dynamics conditioned to
make jumps capped by $t^{\frac{1}{40}}$.

  Define the functional $\Lambda
:L^{1}\big([0,1]\times \R^{+}\big)\rightarrow \R$,
\begin{align}\label{PreBigZero}
\Lambda(\varphi)=\int_{[0,1]\times [0,t^{\frac{1}{40} }]}  da\,dv\,\varphi(a,v)\,\mathbb{E}_{(a,\,t^{\frac{1}{4}}-v )}\big[ t^{-\frac{1}{4}}\int_{0}^{\varsigma}ds\,\frac{dV}{dx}(X_{s}) \,  \chi( K_{\varsigma}>0)   \big].
\end{align}
By our comments above $c_{+,\,+}\approx \Lambda(\phi_{\infty})$, and when $\varphi=\phi_{\mathcal{F},t}^{*}$ then
\begin{multline}\label{BigZero}
\Lambda( \phi_{\mathcal{F},t}^{*}) = \Pr[ K_{\theta}>0  ] \,  \int da\,dv\,da^{\prime}\,dv^{\prime}\,\phi_{t}^{*}(a,\,v\,;\,a^{\prime},\,v^{\prime})\,\mathbb{E}_{(a,\,t^{\frac{1}{4}}-v )}^{(a^{\prime},\,t^{\frac{1}{4}}-v^{\prime} )}\big[ t^{-\frac{1}{4}}\int_{0}^{\varsigma}ds\,\frac{dV}{dx}(X_{s})\big],
\end{multline}
where $\mathbb{E}_{(x,\,k )}^{(x^{\prime},\,k^{\prime} )}$ is the expectation conditioned on the trajectories that begin at $(x,\,k )$ and have a last exit $(x^{\prime},\,k^{\prime} )$ at the time $\varsigma$.

Note the anti-symmetry
\begin{align}\label{Antisymmetry}
\mathbb{E}_{(a,\,t^{\frac{1}{4}}-v )}^{(a^{\prime},\,t^{\frac{1}{4}}-v^{\prime} )}\big[ t^{-\frac{1}{4} }\int_{0}^{\varsigma}ds\,\frac{dV}{dx}(X_{s})\big]= -\mathbb{E}_{(R(a^{\prime}),\,t^{\frac{1}{4}}-v^{\prime} )}^{(R(a),\,t^{\frac{1}{4}}-v )}\big[ t^{-\frac{1}{4}}\int_{0}^{\varsigma}ds\,\frac{dV}{dx}(X_{s})\big],
\end{align}
which is due to the ``time-reversal in momentum'' mentioned at the
beginning of the section in which for every trajectory
$(x_{t},\,k_{t})$ on the interval $[0,t]$, there is a backwards
trajectory with a torus-reflected position $(R(x_{t-s}
),\,k_{t-s})$ which occurs in the forward dynamics with the same
``probability'' as a fraction of the trajectories that begin at
$(x_{0},\,k_{0})$ and $(x_{t},\,k_{t})$ respectively.  Even though
the final time $\varsigma$ is not deterministic, the two
trajectories are still weighted equally in the
expectations~(\ref{Antisymmetry}).  This follows since $\varsigma$
is a hitting time for the time-reversed Markov process (for when
the momentum first jumps below $t^{\frac{1}{4}})$.   Due
to~(\ref{Antisymmetry}) and the symmetry~(\ref{FirstLast}) of
$\phi_{t}^{*}(a,\,v\,;\,a^{\prime},\,v^{\prime})$, it follows
that~(\ref{BigZero}) is zero.

We now focus on showing that due to a dynamical loss of memory of
the initial conditions over the time interval $[0,\sigma]$ , the
values of $\Lambda(\varphi)$ for any $\varphi$ with support in
$[0,1]\times [0,\,t^{\frac{1}{40} } ]$ are close.  In
particular, $\Lambda(\varphi)$ for
$\varphi=\bar{\phi}_{\infty}=\phi_{\infty}\chi(|v|\leq
t^{\frac{1}{40}}) $ or $\varphi=\phi_{\mathcal{F},t}^{*}$ are close,
which would prove the result.

 Define $\tau$ to be the exit time for the set $S=\{(x,\,k)\in [0,1]\times \R\,\big| \,|k-t^{\frac{1}{4}}|< t^{\frac{1}{20}}\} $.    Thus for a probability density $\varphi$,
 \begin{multline}\label{HereNoMore}
\Lambda(\varphi)= \int_{[0,1]\times \R^{+}}da \,dv\, \varphi(a,\,v)\,\mathbb{E}_{(a,\,t^{\frac{1}{4}}-v)}  \big[t^{-\frac{1}{4}}\int_{0}^{\tau \wedge \sigma}ds\,\frac{dV}{dx}(X_{s})\, \chi\big( K_{\varsigma}>0 \big)  \big]\\ +   \int dx\,dk\, \big(\phi_{\uparrow ,t}^{\varphi}( x,\,k-t^{\frac{1}{4}}-t^{\frac{1}{20}})+\phi_{\downarrow,\,t}^{\varphi}( x,\,-k+t^{\frac{1}{4}}-t^{\frac{1}{20}})\big)\\
\times \mathbb{E}_{(x,k)}\big[t^{-\frac{1}{4}}\int_{0}^{\sigma}ds\,\frac{dV}{dx}(X_{s})\,\chi\big( K_{\varsigma}>0 \big)  \big].
\end{multline}
where $\phi_{\uparrow ,t}^{\varphi}=\int dq\, dp\,\varphi(q,\,p)\,\phi_{\uparrow ,t}^{(q,p)}$ and an analogous definition for $\phi_{\downarrow ,t}^{\varphi}$.

We argue that the first term on the right-side
of~(\ref{HereNoMore}) tends to zero for large $t$.  Since the
momentum is greater than $\frac{1}{2}t^{\frac{1}{4}}$ up to time
$\tau$, then by Lemma~\ref{RealBasics}
$$\big| \mathbb{E}_{(a,\,t^{\frac{1}{4}}-v)}  \big[t^{-\frac{1}{4}}\int_{0}^{\tau \wedge \sigma}ds\,\frac{dV}{dx}(X_{s})   \big]\big| \leq \frac{4\bar{V}}{t^{\frac{1}{4}}}\, \mathbb{E}_{(a,\,t^{\frac{1}{4}}-v)}\big[N_{\tau}\big],$$
where $N_{\tau}$ is the number of momentum jumps up to time
$\tau$.  In the proof of~Proposition~\ref{BC2NHalf}, it was shown
that $\mathbb{E}_{(x,\,k)}\big[N\big]= \mathit{O}( t^{\frac{1}{10}
})$.  Thus the drift up to time $\tau\wedge \sigma$ vanishes for
large $t$ vanishes as $\mathit{O}(t^{-\frac{3}{20}})$.

Thus
\begin{multline*}
\big|\Lambda\big(\phi \big)-\Lambda\big(\phi_{\mathcal{F},t }^{*}\big)\big|\leq  \mathit{O}(t^{-\frac{3}{20}})+  \big(\|\phi_{\uparrow,t}^{\varphi}|_{\varphi=\bar{\phi}_{\infty}}-\phi_{\uparrow,t}^{\varphi}|_{\varphi= \phi_{\mathcal{F},t}^{*}}\|_{1}  + \|\phi_{\downarrow,\,t}^{\varphi}|_{\varphi=\bar{\phi}_{\infty}}-\phi_{\downarrow,\,t}^{\varphi}|_{\varphi= \phi_{\mathcal{F},t}^{*}}\|_{1}\big) \\
\times \sup_{(x,\,k)}  \mathbb{E}_{(x,k)}\big[ \big| t^{-\frac{1}{4}}\int_{0}^{\sigma}ds\,\frac{dV}{dx}(X_{s})\,  \chi \big( K_{\varsigma}>0\big) \big| \big] .
\end{multline*}
However, by Proposition~\ref{BC3}, $\phi_{\uparrow,\,t}^{\varphi},\,\phi_{\downarrow,\,t}^{\varphi}\rightarrow \varphi_{\infty}$, since by definition $\phi_{\uparrow,\,t}^{\varphi}$ is a convex combination of $\phi_{\uparrow,\,t}^{(q,p)}$ for $(a,p)$ with $ p\in [0,\,t^{\frac{1}{40} }]$.

$\phi_{\uparrow,t}^{\varphi}|_{\varphi=\phi_{\infty}}$, $\phi_{\uparrow,t}^{\varphi}|_{\varphi= \phi_{\mathcal{F},t}^{*}}$, $\phi_{\downarrow,\,t}^{\varphi}|_{\varphi=\phi_{\infty}}$, and $\phi_{\downarrow,\,t}^{\varphi}|_{\varphi= \phi_{\mathcal{F},t}^{*}}$ tend to $\phi_{\infty}$ in $L^{1}$.  Moreover, by the same argument as for Part (1) of Proposition~\ref{OddsEnds},
$$\sup_{(x,k)}\Big|\mathbb{E}_{(x,k)}\big[\big|t^{-\frac{1}{4}}\int_{0}^{\sigma}ds\,\frac{dV}{dx}(X_{s}) \big|^{2} \chi(K_{\theta}>0 ) \big]\Big|^{\frac{1}{2}}\leq 5.$$
We then have that $\big|\Lambda\big(\phi \big)-\Lambda\big(\phi_{\mathcal{F},t}^{*}(a,\,v)\big)\big|$
converges to zero for $t\rightarrow \infty$ which proves the result.  That shows the $c_{+,+}$ case.  The other cases for $c_{-,-}$, and $c_{+,-}+c_{-,+}$ are similar.

\end{proof}

\begin{theorem}\label{BigDriftLem}
Assume List~\ref{AssumpOne} and List~\ref{AssumpTwo}.  In the limit $t\rightarrow \infty$,
\begin{align}
\mathbb{E}\Big[  \sup_{0\leq s\leq 1} \Big| t^{-\frac{1}{2}}\int_{0}^{st}dr\,\frac{dV}{dx}(X_{r})\Big|\Big] \longrightarrow 0.
\end{align}

\end{theorem}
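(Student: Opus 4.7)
The strategy is to split the cumulative drift into a high-energy piece, where the particle is far above the potential barrier, and a low-energy piece concentrated on the incursion intervals $[\sigma_j,\varsigma_j]$. Using the indicator $\chi(|K_{\tau(r)}|>t^{1/4})$ to isolate high energy, Lemma~\ref{HEDrift} applied with $\beta=1/4$ yields
\[
\mathbb{E}\Big[\sup_{0\leq s\leq 1}\Big|t^{-1/2}\int_0^{st}dr\,\tfrac{dV}{dx}(X_r)\,\chi(|K_{\tau(r)}|>t^{1/4})\Big|^2\Big]\longrightarrow 0.
\]
The complement is supported (up to boundary terms bounded by Lemma~\ref{RealBasics}) on the incursions, so it equals $\sum_{j\leq N_Y(st)}\int_{\sigma_j}^{\varsigma_j}dr\,\tfrac{dV}{dx}(X_r)=t^{1/4}\sum_j Y_j$. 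After the $t^{-1/2}$ rescaling it thus remains to show that $t^{-1/4}\sup_{s\le 1}\bigl|\sum_{j\le N_Y(st)} Y_j\bigr|\to 0$ in $L^1$.

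Partitioning the incursions by the signs $(s_1,s_2)$ of their initial and terminal momenta, Part (3) of Proposition~\ref{OddsEnds} replaces each $\sum_m Y_{s_1,s_2}(m)$ by $N_{s_1}(st)c_{s_1,s_2}(t)$ with an error that vanishes in $L^1$ after the $t^{-1/4}$ rescaling. So the task reduces to showing $t^{-1/4}\mathbb{E}\bigl|\sum_{s_1,s_2}N_{s_1}(st)c_{s_1,s_2}(t)\bigr|\to 0$. The diagonal contributions go to zero because $c_{\pm,\pm}(t)\to 0$ by Proposition~\ref{UtSym} and $\mathbb{E}[N_Y(st)]\leq r_2^{1/2}t^{1/4}$ by Lemma~\ref{FiniteTimes}. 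For the off-diagonal part I would rewrite
\[
N_+c_{+,-}+N_-c_{-,+}\;=\;\tfrac{N_++N_-}{2}(c_{+,-}+c_{-,+})+\tfrac{N_+-N_-}{2}(c_{+,-}-c_{-,+}),
\]
and absorb the first summand using Proposition~\ref{UtSym} ($c_{+,-}+c_{-,+}\to 0$) together with the same bound on $\mathbb{E}[N_Y]$.

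The main obstacle is the second summand, because Proposition~\ref{UtSym} only controls the \emph{sum} $c_{+,-}+c_{-,+}$, not its two summands individually, so an independent bound on the imbalance $N_+(st)-N_-(st)$ is needed. The idea is to exploit the reflection symmetry one more time: Lemma~\ref{ExitProbTwo} already gives $\rho_{+,-}(t)/\rho_{-,+}(t)\to 1$, and a parallel argument based on the boundary-crossing estimates of Appendix~\ref{AppendixBC} should show that conditional on $\mathcal F_{\theta_{j-1}}$ the probabilities of $K_{\sigma_j}>0$ and $K_{\sigma_j}<0$ differ by $o(1)$ uniformly in $j$. Writing $N_+-N_-=\sum_{j\le N_Y(st)}D_j$ with $D_j=\chi(K_{\sigma_j}>0)-\chi(K_{\sigma_j}<0)\in\{\pm 1\}$ and decomposing into martingale plus compensator, the orthogonality of martingale differences together with $\mathbb{E}[N_Y(st)]=O(t^{1/4})$ yields $\mathbb{E}|N_+-N_-|=o(t^{1/4})$. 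Since $|c_{\pm,\mp}(t)|$ remains uniformly bounded (the defining integral is controlled by the $L^2$ estimate of Part (1) of Proposition~\ref{OddsEnds}), this yields $t^{-1/4}\mathbb{E}[|N_+-N_-|\cdot|c_{+,-}-c_{-,+}|]\to 0$ and closes the argument.
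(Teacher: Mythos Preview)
Your high-energy/low-energy split, the use of Lemma~\ref{HEDrift} with $\beta=\tfrac14$, Part~(3) of Proposition~\ref{OddsEnds}, and the treatment of the diagonal constants via Proposition~\ref{UtSym} all match the paper's proof. The genuine gap is in your handling of the off-diagonal piece.

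Your decomposition $N_+c_{+,-}+N_-c_{-,+}=\tfrac12(N_++N_-)(c_{+,-}+c_{-,+})+\tfrac12(N_+-N_-)(c_{+,-}-c_{-,+})$ forces you to control $t^{-1/4}\mathbb{E}|N_+-N_-|$, and your proposed mechanism for this is incorrect. You claim that conditional on $\mathcal F_{\theta_{j-1}}$ the probabilities of $K_{\sigma_j}>0$ and $K_{\sigma_j}<0$ differ by $o(1)$. This is false: under the capped-jump dynamics, the sign of $K_{\sigma_j}$ is essentially deterministic given $\mathcal F_{\theta_{j-1}}$. If $K_{\theta_{j-1}}>2t^{1/4}$, the trajectory must descend through the interval $(t^{1/4}-t^{1/40},t^{1/4})$ before it can reach $|K|<t^{1/4}$ from the negative side, so $K_{\sigma_j}>0$ with probability $1-o(1)$. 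Consequently the compensator of $\sum_j D_j$ is not small; each $\mathbb{E}[D_j\mid\mathcal F_{\theta_{j-1}}]$ is close to $\pm1$, and your martingale argument collapses.

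The paper avoids this entirely by a different and much cleaner observation. Rather than comparing $N_+$ with $N_-$, it first replaces $N_{\mathbf s_1}(st)\,c_{\mathbf s_1,\mathbf s_2}(t)$ by $\rho_{\mathbf s_1,\mathbf s_2}(t)^{-1}c_{\mathbf s_1,\mathbf s_2}(t)\sum_{n\le N_{\mathbf s_1}(st)}\chi(\mathbf s_2 K_{\varsigma_j}>0)$, using Lemma~\ref{ExitProb} and a short martingale estimate to control the replacement error. After using Lemma~\ref{ExitProbTwo} to equate $\rho_{+,-}$ and $\rho_{-,+}$, the two off-diagonal sums count, respectively, the number of incursions that cross from the positive to the negative side of $2t^{1/4}$ and vice versa. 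These up-crossings and down-crossings of $K$ between $-2t^{1/4}$ and $+2t^{1/4}$ alternate, so the two counts differ by at most one \emph{deterministically}. This immediately yields a combined bound $t^{-1/4}|c_{+,-}(t)+c_{-,+}(t)|\,\mathbb{E}[N_Y(t)]$ plus an $O(1)$ boundary term, and Proposition~\ref{UtSym} finishes the job. No estimate on $N_+-N_-$ is ever needed.
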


\begin{proof}
Our basic idea is to break  the integral
$t^{-\frac{1}{2}}\int_{0}^{st}dr\,\frac{dV}{dx}(X_{r})$ into parts
where corresponding to where $|K_{r}|$ is high and low energy respectively.
The low energy parts are controlled by our study of the random
variables
$Y_{n}=\int_{\sigma_{n}}^{\varsigma_{n}}dr\,\frac{dV}{dx}(X_{r})$
and the high energy parts will be controlled by
Lemma~\ref{HEDrift}.
\begin{multline}\label{PreBreakDown}
 t^{-\frac{1}{2}}\int_{0}^{st}dr\,\frac{dV}{dx}(X_{r})= t^{-\frac{1}{2}}\int_{0}^{\theta_{0}}dr\,\frac{dV}{dx}(X_{r})+ t^{-\frac{1}{2}}\int_{0}^{st}dr\,\frac{dV}{dx}(X_{r})\chi(|K_{\tau(r)}|>t^{\frac{1}{4}} )\\ + t^{-\frac{1}{4}}\sum_{n=1}^{N_{Y}(st)}Y_{n}-\int_{\sigma_{n}}^{\varsigma_{n}}dr\,\frac{dV}{dx}(X_{r})\chi(|K_{\tau(r)}|>t^{\frac{1}{4}} ) \\
 -\chi\big(\exists(j):  st \in [\sigma_{j},\varsigma_{j}] \big)\, \int^{\theta_{N_{Y}(st)}}_{st}dr\,\frac{dV}{dx}(X_{r})
\end{multline}
where $\tau(r)$ the next to last jump time before time $r$ as in Lemma~\ref{HEDrift}, and the last term corresponds removing an overlap due to a last incomplete incursion which begins before

For the first term on the right side of~(\ref{PreBreakDown}),
an argument analogous to Part (1) of Proposition~\ref{OddsEnds} gives the bound,
\begin{align}\label{Stray}
 \mathbb{E}\big[ \big|\int_{0}^{\theta_{0}}dr\,\frac{dV}{dx}(X_{r})\big|^{2}    \big]^{\frac{1}{2}}\leq 5\,t^{\frac{1}{4}},
\end{align}
and thus that term is negligible.  In this case, the end time $\gamma$ is a hitting time, which makes the argument easier.  The last term has the same bound.  By Lemma~\ref{HEDrift},
$$\mathbb{E}\big[\sup_{0\leq s\leq 1}\big|t^{-\frac{1}{2}}\int_{0}^{st}dr\,\frac{dV}{dx}(X_{r})\chi(|K_{\tau(r)}|>t^{\frac{1}{4}} )\big|^{2}\big]^{\frac{1}{2}}$$
 converges to zero.  The same argument as in the proof of Lemma~\ref{HEDrift} shows that
$$\mathbb{E}\Big[\sup_{0\leq s\leq 1}\Big|t^{-\frac{1}{2}}\sum_{n=1}^{N_{Y}(st)}\int_{\sigma_{n}}^{\varsigma_{n}}dr\,\frac{dV}{dx}(X_{r})\chi\big(|K_{\tau(r)}|>t^{\frac{1}{4}} \big)\Big|^{2}  \Big]^{\frac{1}{2}}\longrightarrow 0 ,  $$
since it includes even less terms.

We are left with the sequence
$t^{-\frac{1}{4}}\sum_{n=1}^{N_{Y}(st)-1}Y_{n}$,
which we can write
as
$$t^{-\frac{1}{4}}\sum_{m=1}^{N_{+}(st)}Y_{+,+}(m)+Y_{+,-}(m)+ t^{-\frac{1}{4}}\sum_{m=1}^{N_{-}(st)}Y_{-,-}(m)+Y_{-,+}(m).$$
 By Part~(4) of Proposition~\ref{OddsEnds} these sums
  can be approximated by $t^{-\frac{1}{4}}N_{\mathbf{s}_{1}}(st)\,c_{\mathbf{s}_{1},\,\mathbf{s}_{2}}(t)$:
$$\mathbb{E}\Big[ \sup_{0\leq s\leq 1}\Big| t^{-\frac{1}{4}}\sum_{j=1}^{N_{\mathbf{s}_{1}}(st) }Y_{\mathbf{s}_{1},\,\mathbf{s}_{2}}(m)- t^{-\frac{1}{4}}\,N_{\mathbf{s}_{1}}(st)\, c_{\mathbf{s}_{1},\mathbf{s}_{2}}(t)     \Big|\Big]\longrightarrow 0.   $$
For the sequences with $\mathbf{s}_{1} =\mathbf{s}_{2}=\mathbf{s}$
$$\mathbb{E}\big[\sup_{0\leq s\leq 1}\big| t^{-\frac{1}{4}}N_{\mathbf{s},\,\mathbf{s}}(st)\,c_{\mathbf{s},\,\mathbf{s}_{1}}(t)\big|\big]=\mathbb{E}\big[\big| t^{-\frac{1}{4}}N_{\mathbf{s},\,\mathbf{s}}(t)\,c_{\mathbf{s},\,\mathbf{s}}(t)\big| \big]\leq r_{2}^{\frac{1}{2}}\,|c_{\mathbf{s},\,\mathbf{s}}(t)|, $$
where the inequality uses Part (2) of Lemma~\ref{FiniteTimes}.  By Proposition~\ref{UtSym}, $c_{\mathbf{s},\,\mathbf{s}}(t)$ converges to zero.

The  cases of $(\mathbf{s}_{1},\mathbf{s}_{2})=(+,-)$  and $(\mathbf{s}_{1},\mathbf{s}_{2})=(-,+)$ must be treated together.  We will take a step backward and approximate $t^{-\frac{1}{4}}N_{\mathbf{s}_{1}}(st)\,c_{\mathbf{s}_{1},\,\mathbf{s}_{2}}(t)$ with $t^{-\frac{1}{4}}\big(\rho_{\mathbf{s}_{1},\mathbf{s}_{2}}(t)\big)^{-1} \sum_{n=1}^{N_{\mathbf{s}_{1}}(st) } c_{\mathbf{s}_{1},\,\mathbf{s}_{2}}(t)\,\chi(\mathbf{s}_{2}K_{\varsigma_{j}}>0)$.  By the triangle inequality
\begin{multline}\label{Cops}
\mathbb{E}\big[\sup_{0\leq s\leq 1}\big| t^{-\frac{1}{4}}\big(\rho_{\mathbf{s}_{1},\mathbf{s}_{2}}(t)\big)^{-1}\sum_{n=1}^{N_{\mathbf{s}_{1}}(st) }c_{\mathbf{s}_{1},\,\mathbf{s}_{2}}(t)\,\chi(\mathbf{s}_{2}K_{\varsigma_{j}}>0)-t^{-\frac{1}{4}}N_{\mathbf{s}_{1}}(st)\,c_{\mathbf{s}_{1},\,\mathbf{s}_{2}}(t)\big|\big] \\ \leq   t^{-\frac{1}{4}}\big(\rho_{\mathbf{s}_{1},\mathbf{s}_{2}}(t)\big)^{-1}\big|c_{\mathbf{s}_{1},\,\mathbf{s}_{2}}(t)\big|\,\mathbb{E}\big[\sup_{0\leq s\leq 1}\big| \sum_{n=1}^{N_{\mathbf{s}_{1}}(st) }\,\chi(\mathbf{s}_{2}K_{\varsigma_{j}}>0)-\Pr[\mathbf{s}_{2}K_{\sigma_{j}} >0 ] \big|\big]\\ +t^{-\frac{1}{4}}(\rho_{\mathbf{s}_{1},\mathbf{s}_{2}}(t)\big)^{-1}c_{\mathbf{s}_{1},\,\mathbf{s}_{2}}(t)\,\mathbb{E}\big[\sup_{0\leq s\leq 1}\big| \sum_{n=1}^{N_{\mathbf{s}_{1}}(st) }\,\Pr[\mathbf{s}_{2}K_{\varsigma_{j}}>0]-\rho_{s_{1},s_{2}}(t) \big|\big].
\end{multline}
The sum $\sum_{n=1}^{N_{\mathbf{s}_{1}}(st) }\,\chi(\mathbf{s}_{2}K_{\varsigma_{j}}>0)-\Pr[\mathbf{s}_{2}K_{\varsigma_{j}}>0 ]$ is a martingale, so by Doob's maximal inequality and by Lemma~\ref{HoldersInequality}
\begin{multline}
\mathbb{E}\Big[\sup_{0\leq s\leq 1}\Big| \sum_{n=1}^{N_{\mathbf{s}_{1}}(st) }\,\chi(\mathbf{s}_{2}K_{\varsigma_{j}}>0)-\Pr[\mathbf{s}_{2}K_{\varsigma_{j}}>0] \Big|^{2}\Big]^{\frac{1}{2}}   \leq 2 \mathbb{E}\Big[\Big| \sum_{n=1}^{N_{\mathbf{s}_{1}}(t) }\,\chi(\mathbf{s}_{2}K_{\varsigma_{j}}>0)-\Pr[\mathbf{s}_{2}K_{\varsigma_{j}}>0] \Big|^{2}\Big]^{\frac{1}{2}}
\\  \leq \sup_{j}\mathbb{E}\big[\big(\chi(\mathbf{s}_{2}K_{\varsigma_{j}}>0) -\Pr[\mathbf{s}_{2}K_{\varsigma_{j}}>0]\big)^{2}\big| j\leq N_{\mathbf{s}_{1}}(st) \big]^{\frac{1}{2}} \,  \mathbb{E}\big[ N_{\mathbf{s}_{1}}(t)\big]^{\frac{1}{2}}\\ \leq \sup_{j,s}\Big|\Pr[\mathbf{s}_{2}K_{\varsigma_{j}}>0|j\leq N_{\mathbf{s}_{1}}(st)]-   \Pr[\mathbf{s}_{2}K_{\varsigma_{j}}>0] \Big|  \, r_{2}t^{\frac{1}{8}},
\end{multline}
where the last inequality follows since $ N_{\mathbf{s}_{1}}(t)< N_{Y}(t)$ and from Part (2) of Proposition~\ref{FiniteTimes} and by an explicit calculation for the expectation of the indicator in the variance-type formula. Since the factor of $t^{-\frac{1}{4}}$ in~(\ref{Cops}) over-powers the factor $t^{\frac{1}{8}}$, the only worry is $\big(\rho_{\mathbf{s}_{1},\mathbf{s}_{2}}(t)\big)^{-1}$ going to zero.
 Since the event $j\leq N_{\mathbf{s}_{1}}(st)$ is adapted to the information known up to time $\sigma_{j}$,
\begin{multline}
\big(\rho_{\mathbf{s}_{1},\mathbf{s}_{2}}(t)\big)^{-1}\sup_{j}\Big|\Pr[\mathbf{s}_{2}K_{\varsigma_{j}}>0|j\leq N_{\mathbf{s}_{1}}(st)]-   \Pr[\mathbf{s}_{2}K_{\varsigma_{j}}>0] \Big| \\ \leq  \sup_{j,\,\omega\in \mathcal{F}_{\sigma_{j}}}\Big|\frac{ \Pr[\mathbf{s}_{2}K_{\varsigma_{j}}>0|\mathcal{F}_{\sigma_{j}}]}{\rho_{\mathbf{s}_{1},\mathbf{s}_{2}}(t)   }-   \frac{\Pr[\mathbf{s}_{2}K_{\varsigma_{j}}>0]}{\rho_{\mathbf{s}_{1},\mathbf{s}_{2}}(t) } \Big|.
\end{multline}
By adding and subtracting $1$ in the expression on the right-side, then by two applications of Lemma~\ref{ExitProb}, which is permitted by our assumption on the boundedness of the jumps, shows that the above goes to zero.  For the application of Lemma~\ref{ExitProb}, note that by the definitions $\theta_{j}$ and $\varsigma_{j}$ that  $\mathbf{s}_{2}K_{\varsigma_{j}}>0$ is equivalent to $\mathbf{s}_{2}K_{\theta_{j}}>0$.  Also note that
$\Pr[\mathbf{s}_{2}K_{\varsigma_{j}}>0]$ is a convex combination of the probabilities $\Pr[\mathbf{s}_{2}K_{\varsigma_{j}}>0|(X_{\sigma_{j}},K_{\sigma_{j}} )]  $.
Due to the decay of $t^{-\frac{1}{8}}$, we had only needed this term to be bounded, but we apply these principles again below.

For the second term in~(\ref{Cops}),
$$t^{-\frac{1}{4}} \mathbb{E}\Big[\sup_{0\leq s\leq 1}\Big| \sum_{n=1}^{N_{\mathbf{s}_{1}}(st) }\,\frac{\Pr\big[\mathbf{s}_{2}K_{\varsigma_{j}}>0]}{\rho_{\mathbf{s}_{1},\mathbf{s}_{2}}(t)}-1 \Big|\Big]\leq t^{-\frac{1}{4}}\sup_{j} \Big|\frac{ \Pr[\mathbf{s}_{2}K_{\varsigma_{j}}>0 ]}{\rho_{\mathbf{s}_{1},\mathbf{s}_{2}}(t)}-1 \Big|\,\mathbb{E}[N_{Y}].    $$
We apply Part (2) of Lemma~\ref{FiniteTimes}  to show that
$t^{-\frac{1}{4}}\, \mathbb{E}[N_{Y}]\leq r_{2}^{\frac{1}{2}}$ is
bounded and Lemma~\ref{ExitProb} to show that the ratio of
probabilities converges to one.

Now we just need to bound
\begin{multline}
t^{-\frac{1}{4}}\big(\rho_{+,-}(t)\big)^{-1} \sum_{n=1}^{N_{+}(st) } c_{+,\,-}(t)\,\chi(K_{\varsigma_{j}}<0 )+ t^{-\frac{1}{4}}\big(\rho_{-,+}(t)\big)^{-1} \sum_{n=1}^{N_{-}(st) } c_{-,\,+}(t)\,\chi(K_{\varsigma_{j}}>0).
\end{multline}
Using Lemma~\ref{ExitProbTwo} and the same techniques  above, we can replace
$\rho_{-,+}(t)$ by $\rho_{+,-}(t)$.  More critically, since the
number of up-crossings from below $-2\,t^{\frac{1}{4}}$ to above
$2\,t^{\frac{1}{4}}$ can  differ by at most one from the number of
down-crossings from  above $2\,t^{\frac{1}{4}}$ to below
$-2\,t^{\frac{1}{4}}$,
\begin{multline*}
\big(\rho_{+,-}(t)\big)^{-1}\mathbb{E}\Big[\sup_{0\leq s \leq 1}\Big|   t^{-\frac{1}{4}} \sum_{n=1}^{N_{+}(st) } c_{+,\,-}(t)\,\chi(K_{\varsigma_{j}}<0 )+ t^{-\frac{1}{4}} \sum_{n=1}^{N_{-}(st) } c_{-,\,+}(t)\,\chi(K_{\varsigma_{j}}>0)\Big|\Big] \\ < \sup_{j}\Big(\frac{\Pr\big[K_{\varsigma_{j}}<0\big]}{\rho_{+,-}(t) } +\frac{\Pr\big[K_{\varsigma_{j}}>0\big]}{\rho_{+,-}(t)}   \Big) \Big(       t^{-\frac{1}{4}}|c_{+,\,-}(t)+c_{-,\,+}(t)|\,   \mathbb{E}\big[N_{Y}(t)\big]  +  c_{+,\,-}(t)\wedge c_{-,\,+}(t)      \Big)
\end{multline*}
As above $\mathbb{E}\big[N_{Y}(t)\big]=\mathit{O}(t^{\frac{1}{4}})$, and by Lemmas~\ref{UtSym} and~\ref{ExitProb}
$$\sup_{j}\Big(\frac{\Pr\big[K_{\varsigma_{j}}>0\big]}{\rho_{+,-}(t) } +\frac{\Pr\big[K_{\varsigma_{j}}>0\big]}{\rho_{+,-}(t)}\Big)<4.  $$
Finally, $|c_{+,\,-}(t)+ c_{-,\,+}(t)|$  converges to zero by
Lemma~\ref{UtSym}, which finishes the proof.

\end{proof}

\section{Proof of main results}\label{SecMain}

\begin{proof}[Proof of Theorem \ref{MartBrown}]
Denote the initial distribution $P_{0}(x,k)$ as $\mu$.  We wish to apply the martingale central limit theorem.  The Lindberg condition follows easily, since jumps occur at Poisson rate $\mathcal{R}$ and have finite fourth moments by (\textup{II}) of List~\ref{AssumpOne}.  Weak convergence with respect to the uniform metric then follows by convergence in probability of the predictable quadratic variation $t^{-1}\langle M\rangle_{st}$ to $\sigma s$ for every $s$.  Without loosing generality, we take $s=1$ and show $\mathbb{E}_{\mu}\big[\big|t^{-1}\langle M\rangle_{t}-\sigma \big|\big]\rightarrow 0$.

Define $S_{\epsilon,\,\delta}^{(t)}$ to be the event that $\int_{0}^{1}ds\,\chi\big( t^{-\frac{1}{2}}|K_{st}|>\epsilon \big)> 1-\delta $.  The same statement as in Lemma~\ref{EIL} holds for $t^{-\frac{1}{2}}E_{st}$ replaced by $t^{-\frac{1}{2}}|K_{st}|$, since $E_{r}^{\frac{1}{2}}\approx 2^{-\frac{1}{2}}|K_{r}|$ when $|K_{r}|\gg 1$.   Thus for some $C$, $\Pr[S_{\epsilon,\delta}^{(t)}]\geq 1-C\frac{\sqrt{r_{2}} }{r_{1}}\frac{\epsilon}{\delta}$ for large enough $t$.  It will be convenient to observe that  $\mathcal{R}\int_{0}^{r}ds\,\chi\big( t^{-\frac{1}{2}}|K_{r}|\leq \epsilon \big)$ is close to the number $n_{r,\epsilon}$ of Poisson times $t_{m}$ such that $|K_{t_{m}^{-}}|\leq \epsilon t^{\frac{1}{2}}$ up to time $r\leq t$.  This is a law of large numbers following from the difference $n_{r,\epsilon}- \mathcal{R}\int_{0}^{r}ds\,\chi\big( t^{-\frac{1}{2}}|K_{s}|\leq\epsilon \big) $ being a martingale with
\begin{align}\label{Cosmo}
\mathbb{E}_{\mu}\big[\big(t^{-1}n_{t,\epsilon}- \mathcal{R}\int_{0}^{1}ds\,\chi\big( t^{-\frac{1}{2}}|K_{st}|\leq\epsilon \big)  \big)^{2}\big]^{\frac{1}{2}}\leq t^{-\frac{1}{2}}\mathcal{R}.
\end{align}

Using conditional expectations and the triangle inequality over a telescoping sum determined by the Poisson times $t_{m}$, $m=1,\cdots,\mathcal{N}_{t}$, we get the first inequality below.
\begin{multline}\label{Tiznit}
\mathbb{E}_{\mu}\Big[\Big|\frac{\langle M\rangle _{t}}{t}-\sigma \Big| \Big]
 \leq \mathbb{E}_{\mu}\Big[\frac{1}{t} \sum_{m=0}^{\mathcal{N}_{t}}
\mathbb{E}_{(X_{t_{m} },\,K_{t_{m} })}\big[\big| \langle M \rangle_{t_{m+1}}-\langle M\rangle_{t_{m}}-\sigma(t_{m+1}-t_{m})\big|  \big]   \Big]  \\ <
2^{\frac{1}{2}}r_{2}\big(1-\textup{Pr}\big[S_{\epsilon,\delta}^{(t)} \big]\big)^{\frac{1}{2}}  +  \mathbb{E}_{\mu}\Big[\chi( S_{\epsilon,\delta}^{(t)}) \frac{1}{t} \sum_{m=0}^{\mathcal{N}_{t}}
\mathbb{E}_{(X_{t_{m} },\,K_{t_{m} })}\big[ \big|\langle M \rangle_{t_{m+1}}-\langle M\rangle_{t_{m}}-\sigma(t_{m+1}-t_{m}) \big|  \big]   \Big].
\end{multline}
The second inequality follows by using the brute upperbound for the complement of the event $S_{\epsilon,\delta}$:
\begin{align}\label{Tagalog}
\frac{1}{t} \sum_{m=0}^{\mathcal{N}_{t}}
\mathbb{E}_{(X_{t_{m} },\,K_{t_{m} })}\big[ \big|\langle M \rangle_{t_{m+1}}-\langle M\rangle_{t_{m}}-\sigma(t_{m+1}-t_{m})\big|  \big]  < r_{2}\frac{\mathcal{N}_{t}}{t\mathcal{R}},
\end{align}
which holds since $\langle M \rangle_{r}$ always increases at rates ranging between $r_{1}$ and $r_{2}$ and thus
\begin{align}\label{PreJudah}
\mathbb{E}_{(X_{t_{m} },\,K_{t_{m} })}\big[ \big|\langle M \rangle_{t_{m+1}}-\langle M\rangle_{t_{m}}-\sigma(t_{m+1}-t_{m})\big|  \big] \leq |r_{2}-\sigma|\vee |r_{1}-\sigma|\, \mathcal{R}^{-1}<r_{2}\mathcal{R}^{-1}.
\end{align}
To bound $t^{-1}\mathcal{R}^{-1}r_{2}\mathbb{E}\big[\big(1-\chi(S_{\epsilon,\delta}^{(t)}) \big)  \mathcal{N}_{t} \big]$, we apply the Cauchy-Schwarz inequality to obtain $2^{\frac{1}{2}}t\mathcal{R}(1-\Pr[S_{\epsilon,\delta}^{(t)}])^{\frac{1}{2}} $.

In the event $S_{\epsilon,\delta}^{(t)}$, by the discussion in the second paragraph of this proof, we expect that a fraction of at least $(1-\delta)$ of the times $t_{n}$ have $|K_{t_{n}^{-}}|\geq \epsilon t^{\frac{1}{2}}$.  For large $t$, it will be easy to see that the terms corresponding to momentum jumps $K_{t_{n}}-K_{t_{n}^{-}}=v_{n}$ with $|v_{n}|\geq \frac{1}{2}\epsilon t^{\frac{1}{2}}$ will be negligible, and we can assume $|K_{t_{n}}|>\frac{1}{2}\epsilon t^{\frac{1}{2}}$.  Using that the differences $t_{n+1}-t_{n}$ are exponentially distributed with expectation $\mathcal{R}^{-1}$, then for $(x,k)=(X_{t_{n}},K_{t_{n}})$, $|k|\geq 2^{-1}\epsilon t^{\frac{1}{2}}$
\begin{multline}\label{Judah}
\mathbb{E}_{(x ,\,k )}\big[\big| \langle M \rangle_{t_{n+1}}- \langle M \rangle_{t_{n}}-\sigma (t_{n+1}-t_{n})\big| \big]  =\int_{0}^{\infty}d\tau \,\mathcal{R}e^{-\mathcal{R}\tau}\Big| \int_{0}^{1}da\,\big(T_{(x,k),\,\tau}(a)-\tau\big) \int_{\R}dv\,j_{a}(v)\,v^{2} \Big| \\ \leq r_{2}|k|^{-1}\int_{0}^{\infty}d\tau\mathcal{R}\tau e^{-\mathcal{R}\tau}\leq 2 r_{2}\mathcal{R}^{-1}\epsilon^{-1} t^{-\frac{1}{2}},
\end{multline}
where $r_{2}=\sup_{0\leq a\leq 1}\int_{\R}dv\,j_{a}(v)\,v^{2}$ and  $T_{(x,k),\,r}\in L^{1}([0,1])$ is as defined in Lemma~\ref{SingleJump} which has been applied to get the second inequality.

  Our recipe for bounding~(\ref{Tiznit}) is the following:
\begin{enumerate}
\item Pick $\delta $ so that $r_{2}\delta \ll 1$,
\item Pick $\epsilon$ so that $r_{2}(C \frac{\sqrt{r_{2}} }{r_{1}}\frac{\epsilon}{\delta})^{\frac{1}{2}}\ll 1$,
\item Pick $t$ so that $r_{2}\epsilon^{-1}t^{-\frac{1}{2}}\ll 1$.
\end{enumerate}
By the observations in the second paragraph of this proof,
$1-\textup{Pr}\big[S_{\epsilon,\delta}^{(t)} \big]\leq C
\frac{\sqrt{r_{2}} }{r_{1}}\frac{\epsilon}{\delta}$, and hence the first term on the right side of~(\ref{Tiznit}) is small.

Let  $n_{\epsilon}^{\prime}$ be the number of Poisson times with $|K_{t_{n}^{-}}|> 2^{-1}\epsilon t^{\frac{1}{2}} $ and $|K_{t_{n}}|\leq 2^{-1}\epsilon t^{\frac{1}{2}} $ up to time $t$, and $n_{\epsilon}^{\prime \prime}$ be the number of times with $|K_{t_{n}}|\geq 2^{-1}\epsilon t^{\frac{1}{2}}  $.

Applying the bounds (\ref{Judah}) and~(\ref{PreJudah}) depending on whether or not $|K_{t_{n}}|\geq 2^{-1}\epsilon t^{\frac{1}{2}}$, then we get the first inequality below:
\begin{multline}\label{Lobby}
\mathbb{E}_{\mu}\Big[\chi(S_{\epsilon,\delta}^{(t)})  \, \frac{1}{t}
 \sum_{m=0}^{\mathcal{N}_{t} } \mathbb{E}_{(X_{t_{m} },\,K_{t_{m} })}\big[\big| \langle M \rangle_{t_{m+1}}-\langle M\rangle_{t_{m}}-\sigma(t_{m+1}-t_{m})\big|  \big] \Big] \\
< \mathbb{E}_{\mu}\Big[ \chi( S_{\epsilon,\delta}^{(t)} ) \big( (n_{\epsilon, t}   +n_{\epsilon}^{\prime}) \, \mathcal{R}^{-1}r_{2}\,t^{-1} +n_{\epsilon}^{\prime \prime}\,
2r_{2}\mathcal{R}^{-1}\epsilon^{-1} t^{-\frac{3}{2} }\big)\Big] \leq r_{2}\delta+\mathit{O}(t^{-\frac{1}{2}})    \ll 1.
\end{multline}
Now we look at the expectation of the terms $n_{\epsilon,t}$, $n_{\epsilon}^{\prime}$, and  $n_{\epsilon}^{\prime \prime}$ to reach the second inequality. Since $n_{\epsilon}^{\prime \prime}$ is smaller than the total number of Poisson times $\mathcal{N}_{t}$, $\mathbb{E}_{\mu}[n_{\epsilon}^{\prime \prime}]\leq \mathcal{R}t$, which makes its contribution in~(\ref{Lobby}) $\mathit{O}(t^{-\frac{1}{2}})$.  The number $n_{\epsilon,t}^{\prime}$ is smaller than $\sum_{n=1}^{\mathcal{N}_{t}}\chi(|v_{n}|\geq 2^{-1}\epsilon t^{\frac{1}{2}})$, so for the distribution $\mathcal{P}_{a}(v)$ of a momentum jump conditioned to occur at a torus point $a\in [0,1]$, we can write
 $$t^{-1}\mathbb{E}_{\mu}\big[n_{\epsilon,t}^{\prime} \big]\leq t^{-1}\mathbb{E}_{\mu}\big[\sum_{n=1}^{\mathcal{N}_{t}}\chi(|v_{n}|\geq 2^{-1}\epsilon t^{\frac{1}{2}})\big]\leq 2t^{-1}\mathbb{E}_{\mu}[\mathcal{N}_{t}]\sup_{0\leq a\leq 1}\int_{2^{-1}\epsilon t^{\frac{1}{2}} }^{\infty}dv\,\mathcal{P}_{a}(v)=\mathit{O}(t^{-1}).$$
 The decay for the above expression follows by Chebyshev's inequality and by the uniformly bounded second moments for $\mathcal{P}_{a}$, $a\in[0,1]$.  To bound the $t^{-1}n_{\epsilon, t}$, we will finally be able to use the definition of the event $S_{\epsilon,\delta}$ in the second inequality below
$$\mathbb{E}_{\mu}\big[ \chi( S_{\epsilon,\delta}^{(t)}) t^{-1}n_{\epsilon, t}\big]\leq \mathcal{R}\mathbb{E}_{\mu}\big[ \chi( S_{\epsilon,\delta}^{(t)})\int_{0}^{1}ds\,\chi\big( t^{-\frac{1}{2}}|K_{st}|\leq\epsilon \big)     \big]+\mathit{O}(t^{-\frac{1}{2}})\leq \mathcal{R}\delta+\mathit{O}(t^{-\frac{1}{2}}).  $$
Finally, the first inequality follows from~(\ref{Cosmo}).

\end{proof}

\begin{proof}[Proof of Theorem \ref{AbsoluteValue}]
By Lemma~\ref{StochEqn}, $t^{-\frac{1}{2}}|K_{st}|$ obeys the equation
\begin{multline}
t^{-\frac{1}{2}}|K_{st}| =
\int_{0}^{s}S(K_{r\,t})\,dM^{(t)}_{r}+\sup_{0\leq a\leq
s}-\int_{0}^{a}S(K_{rt})\,dM^{(t)}_{r}+\mathcal{E}_{s}^{(t)}\\
=t^{-\frac{1}{2}}\sum_{n=1}^{\mathcal{N}_{st}}w_{n}\,S(K_{t_{n}^{-}})+t^{-\frac{1}{2}}\sup_{0\leq
s\leq 1}-\sum_{n=1}^{\mathcal{N}_{st}}w_{n}\,S(K_{t_{n}^{-} })+\mathcal{E}_{s}^{(t)}
\end{multline}
where the error $\mathcal{E}_{s}^{(t)}$ vanishes in the norm
 $\mathbb{E}\big[\sup_{0\leq s\leq 1}\big| \mathcal{E}_{s}^{(t)}\big| \big]$.
   The martingale $M_{r}^{\prime}=\sum_{n=1}^{\mathcal{N}_{r}}w_{n}\,S(K_{t_{n}^{-}})$
   has the same quadratic variation as the martingale of momentum jumps $M_{r}$.
$$[ M ]_{r}= [ M^{\prime}]_{r} =
\sum_{n=1}^{\mathcal{N}_{r}}w_{n}^{2}.$$
Thus the predictable quadratic variations $\langle M\rangle_{r}$ and $\langle M^{\prime}\rangle$ are also equal.

 However, by the proof of Theorem~\ref{MartBrown},
 $\mathbb{E}_{\mu}\big[\big|t^{-1}
 \langle M\rangle_{st}-\sigma\,s\big|\big] \rightarrow 0 $
 for $s\in [0,\,1]$, and thus $ M^{\prime}_{r}$ converges to a Brownian motion.
  By the same argument as at the end of the proof of
   Lemma~\ref{TimeChange}, we have that
   $t^{-\frac{1}{2}}|K_{st}|$ converges to the absolute value of a Brownian motion.
\end{proof}

\begin{proof}[Proof of Main Result]
It is sufficient to prove that $t^{-\frac{1}{2}}K_{s\,t}$ converges to a Brownian motion, since by a change of integration variable
$$t^{-\frac{3}{2}}X_{s\,t}=t^{-\frac{3}{2}}\big(X_{0}+\int_{0}^{st}dr\,K_{r}\big)=t^{-\frac{3}{2}}X_{0}+\int_{0}^{s}dr\,t^{-\frac{1}{2}}K_{rt}.$$
Hence $(t^{-\frac{3}{2}}X_{s\,t} ,\,t^{-\frac{1}{2}}K_{s\,t})$ converges in distribution to $(\int_{0}^{s}dr\,\mathbf{B}_{r},\mathbf{B}_{s})$ if $t^{-\frac{1}{2}}K_{s\,t}$ converges to $\mathbf{B}_{s}$.

Recall that $K_{t}=K_{0}+M_{t}+\int_{0}^{t}ds\,\frac{dV}{dx}(X_{s})$. By Lemma~\ref{BigDriftLem}, the first moment of
$$\sup_{0\leq s\leq t}\big| t^{-\frac{1}{2}}\int_{0}^{s}dr\,\frac{dV}{dx}(X_{r}) \big| $$
converges to zero.  By Theorem~\ref{MartBrown}, $t^{-\frac{1}{2}}M_{st}$ converges to a Brownian motion with diffusion constant $\sigma$.   Hence $t^{-\frac{1}{2}}K_{st}$ converges to a Brownian motion.

\end{proof}

\section*{Acknowledgments}
We thank Wojciech De Roeck, Karel Neto\v{c}n\'{y} and Frank Redig
for useful discussions. J. C. acknowledges support from the
Belgian Interuniversity Attraction Pole P6/02 and the Marie Curie Research Training Network project MRTN-CT-2006-035651, Acronym CODY, of the European Commission.

\begin{appendix}

\section{A martingale lemma}

Below we formalize a simple lemma for a martingale
$\textbf{M}_{t}$ which makes jumps at discrete times.  For a basic introduction to martingale theory, see~\cite{Chung}.

\begin{lemma}\label{HoldersInequality}
Let $\mathbf{M}_{t}=\sum_{n=1}^{\mathbf{N}_{t}}X_{n}$ be a
right-continuous  martingale adapted to a filtration
$\mathcal{F}_{t}$, and making jumps $X_{n}(\omega) $, $\omega\in
\mathcal{F}_{t_{n}}$ at discrete times according to some random
counter $\mathbf{N}_{t}$ where $t_{n}=\inf\{s\in \R^{+}\big|
\mathbf{N}_{s}=n \} $.  Assume also that
$\mathbb{E}\big[\mathbf{N}_{t}\big]<\infty$ for every $t$ and that
$\mathbb{E}\big[X_{n}^{2}]<\infty$ for every $n$.  Then
$$\mathbb{E}\big[\mathbf{M}_{t}^{2}\big]\leq \mathbb{E}\big[\mathbf{N}_{t}\big]\,
\sup_{n}\mathbb{E}\big[X_{n}^{2}\big| n\leq \mathbf{N}_{t}]$$
\end{lemma}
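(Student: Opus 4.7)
The plan is to reduce the claim to the standard orthogonality of martingale increments in $L^2$ and then apply a Fubini-style rearrangement together with the identity $\mathbb{E}[\mathbf{N}_t]=\sum_{n\geq 1}\Pr[n\leq \mathbf{N}_t]$. First, I would observe that since $t_n=\inf\{s:\mathbf{N}_s=n\}$ is a stopping time and $\mathbf{M}_t$ is a martingale making jumps only at the $t_n$, the discrete-time process $S_n=\sum_{k=1}^n X_k$ is a martingale with respect to $\mathcal{F}_{t_n}$, with $\mathbf{M}_t=S_{\mathbf{N}_t}$. In particular, $\mathbb{E}[X_k\mid\mathcal{F}_{t_{k-1}}]=0$ for each $k\geq 1$.

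Next, I would establish the ``quadratic variation'' identity
$$\mathbb{E}\big[\mathbf{M}_t^{\,2}\big]=\mathbb{E}\Big[\sum_{k=1}^{\mathbf{N}_t}X_k^{\,2}\Big].$$
The argument is a routine expansion: writing $\mathbf{M}_t^{\,2}=\sum_{k\leq \mathbf{N}_t}X_k^{\,2}+2\sum_{j<k\leq \mathbf{N}_t}X_j X_k$ and using the tower property on the cross terms shows that the cross-term expectation vanishes, since $\mathbb{E}[X_j X_k \mathbf{1}_{k\leq\mathbf{N}_t}\mid\mathcal{F}_{t_{k-1}}]=X_j\mathbf{1}_{k\leq\mathbf{N}_t}\mathbb{E}[X_k\mid\mathcal{F}_{t_{k-1}}]=0$ (noting that $\{k\leq\mathbf{N}_t\}=\{t_k\leq t\}\in\mathcal{F}_{t_{k-1}}$ up to the usual subtlety that can be handled by a stopping-time truncation). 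Equivalently, one can apply the optional sampling theorem to the discrete martingale $S_n^{\,2}-\sum_{k=1}^n X_k^{\,2}$ at the bounded stopping times $\mathbf{N}_t\wedge N$ and take $N\to\infty$ using the integrability assumptions $\mathbb{E}[\mathbf{N}_t]<\infty$ and $\sup_k\mathbb{E}[X_k^{\,2}\mid k\leq\mathbf{N}_t]<\infty$ (we may assume the supremum is finite, else the inequality is trivial).

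Having the identity, I would finish with the rearrangement
$$\mathbb{E}\Big[\sum_{k=1}^{\mathbf{N}_t}X_k^{\,2}\Big]=\sum_{k=1}^{\infty}\mathbb{E}\big[X_k^{\,2}\mathbf{1}_{k\leq\mathbf{N}_t}\big]=\sum_{k=1}^{\infty}\mathbb{E}\big[X_k^{\,2}\,\big|\,k\leq\mathbf{N}_t\big]\,\Pr[k\leq\mathbf{N}_t],$$
where the interchange of sum and expectation is by Tonelli. Bounding each conditional second moment by the supremum pulls the supremum out, and then $\sum_{k=1}^{\infty}\Pr[k\leq\mathbf{N}_t]=\mathbb{E}[\mathbf{N}_t]$ yields the stated bound.

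The main obstacle is the careful handling of integrability when passing from the deterministic-$n$ orthogonality to the random-index version. The cleanest route is a truncation: apply the discrete-time $L^2$-isometry to $S_{\mathbf{N}_t\wedge N}$, which gives $\mathbb{E}[S_{\mathbf{N}_t\wedge N}^{\,2}]=\mathbb{E}[\sum_{k=1}^{\mathbf{N}_t\wedge N}X_k^{\,2}]$; monotone convergence handles the right-hand side, and the bound $\sup_n\mathbb{E}[S_{\mathbf{N}_t\wedge N}^{\,2}]\leq \mathbb{E}[\mathbf{N}_t]\sup_k\mathbb{E}[X_k^{\,2}\mid k\leq\mathbf{N}_t]$ together with Doob/Fatou gives $\mathbb{E}[\mathbf{M}_t^{\,2}]\leq\liminf_N\mathbb{E}[S_{\mathbf{N}_t\wedge N}^{\,2}]$, closing the loop without needing any a priori square-integrability of $\mathbf{M}_t$ itself.
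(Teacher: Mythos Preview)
Your proposal is correct and follows essentially the same route as the paper: orthogonality of martingale increments gives $\mathbb{E}[\mathbf{M}_t^2]=\mathbb{E}\big[\sum_{n=1}^{\mathbf{N}_t}X_n^2\big]$, then the Tonelli rearrangement $\sum_n\Pr[n\leq\mathbf{N}_t]\,\mathbb{E}[X_n^2\mid n\leq\mathbf{N}_t]$ together with $\sum_n\Pr[n\leq\mathbf{N}_t]=\mathbb{E}[\mathbf{N}_t]$ yields the bound. The paper's version is terser and simply invokes orthogonality without discussing the measurability of $\{n\leq\mathbf{N}_t\}$ relative to $\mathcal{F}_{t_{n-1}}$ or the truncation you outline; your added care on that point is an improvement, not a deviation.
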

where $\mathbb{E}\big[X_{n}^{2}\big| n\leq \mathbf{N}_{t}]$ is defined as zero for $n$ such that $\Pr[n\leq \mathbf{N}(t)]=0.$

\begin{proof}

By orthogonality of martingale increments
\begin{multline*}
\mathbb{E}\big[\mathbf{M}_{t}^{2}\big]= \mathbb{E}\Big[ \sum_{n=1}^{\mathbf{N}_{t}}X_{n}^{2}  \Big]= \mathbb{E}\Big[ \sum_{n=1}^{\infty} X_{n}^{2}\chi\big(n\leq  \mathbf{N}_{t}\big)\Big]  =\sum_{n=1}^{\infty}\Pr[n\leq  \mathbf{N}_{t}]\, \mathbb{E}\big[X_{n}^{2}\big| n\leq \mathbf{N}_{t}] \\ \leq  \big( \sum_{n=1}^{\infty}\Pr[n\leq  \mathbf{N}_{t}] \big)\,\sup_{n}\mathbb{E}\big[X_{n}^{2}\big| n\leq \mathbf{N}_{t}]\leq \mathbb{E}\big[\mathbf{N}_{t}\big]\,\sup_{n}\mathbb{E}\big[X_{n}^{2}\big| n\leq \mathbf{N}_{t}],
\end{multline*}
where the inequality is H\"older's.

\end{proof}

\section{Boundary crossing distributions}\label{AppendixBC}

Consider symmetric independent and identically distributed random
variables $X_{1},\,X_{2},...$ with mean zero and density
\begin{align*}
\tilde{P}(v)=\int_{0}^{1}da\,\frac{\kappa(a)}{\bar{\kappa}}\,\mathcal{P}_{a}(v).
\end{align*}
 Construct the random walk
$Y_{n}=\sum_{m=1}^{n}X_{m}$.  We will refer to this as the so called ``averaged random walk''
Let $L\geq 0$.  Our interest in this
section is to understand the probability density
\begin{align}\label{Records}
\pi_{L}(a,\,b)=\sum_{n=1}^{\infty}\textup{Pr}[Y_{n}-L=a,\,X_{n}=b,\,Y_{r}<L \text{ for } r<n   ],\quad a,\,b\leq \R^{+}.
\end{align}
In the present appendix we use the notation  Pr$[\cdot]$ for the
induced density by the random walk, in its obvious meaning for
continuous densities.  To be clear, $\pi_{L}$ describes the
distribution of jumps in excess over point $L$ for the random walk
$Y_{n}$ on the first time that it passes over that $L$.

For $L=0$, we  write $D(v,\,w)=\pi_{L=0}(v,\,w)$ which is in fact
the ``successive record increment'' distribution. Indeed, define
$R_{n}=\sup_{0\leq m\leq n}Y_{m}$, the record in the positive
direction for the random walk up to time $n$, and let $\tau_{m}$
be the time of the $m$th record.  Then, the increments
$R_{\tau_{m}}-R_{\tau_{m-1}}$ are i.i.d. random variables with
density $D(a)=\int_{0}^{\infty}db\,D(a,\,b)$.  We use this fact
in the proof of the following proposition.  In particular
$$\mathbb{E}\big[ R_{\tau_{m}}-R_{\tau_{m-1}}
\big]=\int_{0}^{\infty}\int_{0}^{\infty}da\,db\,a\,D(a,\,b).$$ The
topic of ``record distributions" has a much wider scope and has a
long history in extreme value statistics, see e.g. \cite{hus} for
some pioneering contribution.

\begin{lemma}\label{HighBrow}
Assume (\textit{III}) of List~\ref{AssumpOne} and (\textit{i})-(\text{ii}) of List~\ref{AssumpTwo}.
\begin{enumerate}
\item The Laplace transform $\varphi(q)$ of $D(v)$ satisfies
$$\varphi(q)\leq \frac{\mathcal{C} }{1-e^{-\eta-q}} \quad \text{ for } q> -\eta.$$

\item $D(v)$ is bounded and continuous.

\item The following is a probability density on  $\R^{+}\times\R^{+}$
$$\pi_{\infty}(v,\,w)=\frac{ \int_{v}^{\infty}dx\,D(x,\,w)}{ \int_{0}^{\infty}\int_{0}^{\infty}dx\,dy\, \,x\,D(x,\,y) },$$
which is positive for all $v<w$.

\end{enumerate}

\end{lemma}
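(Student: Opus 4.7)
The plan is to derive all three parts from two ingredients: the exponential-tail consequence of condition (\textit{i}) of List~\ref{AssumpTwo}, and an integral representation obtained by conditioning on the pre-passage state of the walk. Writing $Y_{n-1}$ for the walk's position just before the first upward crossing of $0$, one has $X_n = Y_n - Y_{n-1}$ and the decomposition
$$D(v,w)=\tilde{P}(w)\,\nu(dw-dv)/dw, \qquad D(v)=\int_{0}^{\infty}\tilde{P}(v+u)\,\nu(du),$$
where $\nu$ is the occupation measure on $[0,\infty)$ given by $\nu(du)=\sum_{n\geq 1}\textup{Pr}[-Y_{n-1}\in du,\,Y_{r}<0 \text{ for } r<n]$. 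From (\textit{i}), applied with $|w|=v+u$ and $|w'|=u$, one extracts $\tilde{P}(v+u)\leq \mathcal{C}\,e^{-\eta v}\tilde{P}(u)$ for all $v,u\geq 0$.

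For Part (1), inserting this tail bound into the integral representation gives $D(v)\leq \mathcal{C}\,e^{-\eta v}D(0)$. To match the exact form, I would slice $[0,\infty)$ into unit intervals and combine the tail bound with the normalization $\int D=1$ to obtain $\int_{n}^{n+1}D(v)\,dv \leq \mathcal{C}\,e^{-n\eta}$, whence
$$\varphi(q)=\sum_{n=0}^{\infty}\int_{n}^{n+1}e^{-qv}D(v)\,dv \;\leq\; \sum_{n=0}^{\infty}\mathcal{C}\,e^{-n(\eta+q)} \;=\; \frac{\mathcal{C}}{1-e^{-\eta-q}}$$
for $q>-\eta$. For Part (2), boundedness is immediate from $D(v)\leq \mathcal{C}\,e^{-\eta v}D(0)$ once $D(0)$ is shown finite; this reduces via the same representation to a bound on the density of $\nu$, which follows by standard renewal theory applied to the descending ladder process (finite by symmetry of $\tilde{P}$ and the finite-variance input in (\textit{II}) of List~\ref{AssumpOne}). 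Continuity of $D$ then follows by dominated convergence applied to $v\mapsto \int \tilde{P}(v+u)\,\nu(du)$, using continuity of $\tilde{P}$---which in turn follows from (\textit{ii}) of List~\ref{AssumpTwo}, supplying the local Lipschitz bound $|d\mathcal{P}_{a}/dv|\leq \mu(1+|v|)\mathcal{P}_{a}(v)$ on a unit neighborhood.

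For Part (3), the normalization is a direct Fubini computation: with $Z=\int\int x\,D(x,y)\,dx\,dy$,
$$\int_{0}^{\infty}\!\!\int_{0}^{\infty}\pi_{\infty}(v,w)\,dv\,dw = \frac{1}{Z}\int_{0}^{\infty}dw\int_{0}^{\infty}D(x,w)\Big(\int_{0}^{x}dv\Big)dx = \frac{Z}{Z}=1.$$
Positivity for $v<w$ reduces to showing $D(x,w)>0$ for $0\leq x<w$; isolating the $n=2$ contribution gives $D(x,w)\geq \tilde{P}(w)\,\tilde{P}(x-w)$, which is strictly positive because $\tilde{P}$ is strictly positive on all of $\R$ (by (\textit{i}) of List~\ref{AssumpTwo}, any zero of $\tilde{P}$ at some $v$ would force $\tilde{P}$ to vanish on $\{w:|w|\leq|v|\}$, incompatible with $\tilde{P}$ being a probability density). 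The main obstacle is the renewal-theoretic input for Part (2)---uniform local boundedness of the density of $\nu$---which requires combining the exponential tails just derived with symmetry of $\tilde{P}$ (so that ascending and descending ladder heights are equidistributed) and a Spitzer-type identity or Blackwell-theorem application to control the density of the pre-passage occupation measure.
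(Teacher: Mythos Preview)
Your argument is correct and follows the same line the paper gestures at, but with substantially more detail. The paper's proof consists of a single sentence---``Part (1) follows, since $D(v)$ inherits the property (\textit{i}) of List~\ref{AssumpTwo}''---and says nothing about Parts (2) and (3). Your occupation-measure representation $D(v)=\int_0^\infty \tilde P(v+u)\,\nu(du)$ makes the inheritance of the exponential tail explicit and yields the Laplace bound by slicing, while your treatments of Parts (2) and (3) (renewal control on the density of $\nu$, dominated convergence for continuity, Fubini for the normalization, and the $n=2$ contribution for strict positivity on $\{v<w\}$) fill in precisely what the paper leaves to the reader.

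One small caution on the positivity of $\tilde P$: you invoke condition~(\textit{i}) as literally printed, but the inequality there appears transposed; the intended direction (consistent with its use in Lemma~\ref{CappedJumps} and with the remark after List~\ref{AssumpTwo} that the support of $\mathcal{P}_a$ cannot be finite) is that larger $|v|$ carries the \emph{smaller} density. Under that reading a zero of $\tilde P$ forces compact support, which condition~(\textit{ii}) then rules out, so your conclusion $\tilde P>0$ on $\R$ is unaffected.
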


\begin{proof} Part (1) follows, since $D(v)$ inherits the property (\textit{i}) of List~\ref{AssumpTwo}.
\end{proof}

\begin{proposition}\label{BC1}
Assume (\textit{III}) of List~\ref{AssumpOne} and (\textit{i})-(\textit{ii}) of List~\ref{AssumpTwo}.  Let the random walk $Y_{n}$, $\pi_{L}(v,\,w)$, and $D(v,\,w)$ be defined as above.

In the limit of large $L$, we have the $L^{1}(\R^{+}\times\R^{+})$ convergence
$$\pi_{L}(v,\,w)\longrightarrow \pi_{\infty}(v,\,w)=\frac{ \int_{v}^{\infty}dx\,D(x,\,w)}{ \int_{0}^{\infty}\int_{0}^{\infty}dx\,dy\, \,x\,D(x,\,y) }
  $$
Moreover, the difference between $\pi_{\infty}$ and
$\pi_L$ converges exponentially and monotonically.

\end{proposition}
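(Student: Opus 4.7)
The plan is to reduce this first-passage statement to classical renewal theory applied to the i.i.d.\ sequence of successive record increments. Let $\tau_1<\tau_2<\cdots$ be the record times of the averaged walk $Y_n$, and set $Z_m=R_{\tau_m}-R_{\tau_{m-1}}$ and $W_m=X_{\tau_m}$, with the convention $R_{\tau_0}=0$. The strong Markov property applied at each record time shows that the pairs $(Z_m,W_m)$ are i.i.d.\ with joint density $D(z,w)=\pi_0(z,w)$, and the common marginal $D(z)=\int_0^\infty D(z,b)\,db$ has mean $\mu=\int_0^\infty\int_0^\infty x\,D(x,y)\,dx\,dy$. Since the first time that $Y_n$ exceeds $L$ is necessarily a record time, $\pi_L(v,w)$ coincides with the joint density of $(S_{N_L}-L,\,W_{N_L})$ where $S_m=Z_1+\cdots+Z_m$ and $N_L=\min\{m : S_m>L\}$. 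Conditioning on the value of $S_{N_L-1}$ and summing over $m$ yields the renewal representation
\begin{equation*}
\pi_L(v,w) \;=\; \int_{[0,L]} D(L-u+v,w)\,U(du),
\end{equation*}
where $U=\sum_{m\geq 0} D^{*m}$ is the renewal measure generated by $D(z)$.

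Pointwise convergence of $\pi_L$ to $\pi_\infty$ is then an immediate instance of Blackwell's renewal theorem: $D$ is non-arithmetic because it has a density, and for each fixed $v,w>0$ the function $y\mapsto D(y+v,w)$ is directly Riemann integrable thanks to the exponential tail bound of Lemma~\ref{HighBrow}(1). Blackwell gives $\pi_L(v,w)\to\mu^{-1}\int_v^\infty D(x,w)\,dx=\pi_\infty(v,w)$ for each $(v,w)$. Upgrading pointwise convergence to $L^1(\R^+\times\R^+)$ convergence is then automatic by Scheff\'e's lemma, since $\pi_L$ and $\pi_\infty$ are both probability densities (the latter by Lemma~\ref{HighBrow}(3)).

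The main technical step, and the real obstacle, is to upgrade this to exponential (and monotone) convergence. Here hypothesis (\textit{i}) of List~\ref{AssumpTwo} is used in full strength: by Lemma~\ref{HighBrow}(1) the Laplace transform $\varphi(q)$ of $D$ is finite on a strip $\{q>-\eta\}$, so Stone's decomposition of the renewal measure is available in the form $U(du)=\mu^{-1}\,du+V(du)$ with a signed measure $V$ whose total variation on $[L,\infty)$ decays like $e^{-\alpha L}$, where $\alpha>0$ is governed by the nearest nonzero root of $1-\varphi(q)=0$. Inserting this decomposition into the renewal identity and subtracting $\pi_\infty$ gives
\begin{equation*}
\pi_L(v,w)-\pi_\infty(v,w)\;=\;\int_{[0,L]} D(L-u+v,w)\,V(du)\;-\;\frac{1}{\mu}\int_L^\infty D(x+v,w)\,dx,
\end{equation*}
and the exponential tails of $D$ combined with the exponential decay of $V$ render both terms exponentially small pointwise and exponentially small in $L^1(\R^+\times\R^+)$. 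The monotone character of the decay can be read off from this formula a posteriori: the second term is manifestly non-increasing in $L$, and the contribution of the leading exponential mode in Stone's decomposition has a fixed sign, so for $L$ sufficiently large the difference $\pi_L-\pi_\infty$ approaches zero monotonically.
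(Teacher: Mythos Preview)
The paper does not actually supply a proof of Proposition~\ref{BC1}; it is stated and then used, so there is no ``paper's own proof'' to compare against. Your renewal-theoretic route is the natural one and is essentially what the authors must have had in mind: the reduction to the ladder-height renewal process, the renewal representation
\[
\pi_L(v,w)=\int_{[0,L]}D(L-u+v,w)\,U(du),
\]
and the passage to the limit via the key renewal theorem (you write ``Blackwell'', but what you actually invoke is the key renewal theorem) together with Scheff\'e are all correct. The exponential rate via the exponential-moment hypothesis and Stone-type decomposition of $U$ is also the standard mechanism, and Lemma~\ref{HighBrow}(1) supplies exactly the needed input.

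The one genuine soft spot is your monotonicity argument. You claim that ``the contribution of the leading exponential mode in Stone's decomposition has a fixed sign''. For a strictly positive renewal variable $Z$, the Laplace transform $\varphi(q)=\mathbb{E}[e^{-qZ}]$ is strictly decreasing on the real axis, so $1-\varphi(q)=0$ has \emph{no} real root other than $q=0$; the nearest singularities of $(1-\varphi)^{-1}$ in the left half-plane, if any, come in complex-conjugate pairs and produce an oscillatory leading correction, not one of fixed sign. So the heuristic you give does not establish monotonicity. That said, the paper itself leaves ``monotonically'' unproved and somewhat vague, and the only downstream use (in the proof of Corollary~\ref{BC2}) is to conclude that $\sup_{L'\ge L}\|\pi_{L'}-\pi_\infty\|_1\to 0$ uniformly in the shift $d$. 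Your exponential bound already delivers this: $\|\pi_{L'}-\pi_\infty\|_1\le Ce^{-\alpha L'}\le Ce^{-\alpha L}$ for all $L'\ge L$. So for the purposes of the paper your argument is adequate; just do not overclaim the monotonicity, or replace it by the uniform exponential bound, which is what is actually used.
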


\begin{corollary}\label{BC2}
Let  $Y_{n}$ be a random walk as above.  For $L>0$ and $d\in \R$
with $|d|\leq \frac{1}{2}L$ define $S=(-d-L,L)\subset \R$, and let
$\pi_{\uparrow,\,L}(v,\,w),\pi_{\downarrow,\,L}(v,\,w)$ be the
probability densities on $\R^{+}\times \R^{+}$ defined as
\begin{eqnarray}
\pi_{\uparrow,\,L}(v,\,w)&=&\sum_{n=1}^{\infty}\textup{Pr}\big[X_{n}=w,\,Y_{n}-L= v,\, Y_{m}\in S \text{ for } 0\leq m< n \big]1_{Y_{n}>0}. \\
\pi_{\downarrow,\,L}(v,\,w)&=&\sum_{n=1}^{\infty}\textup{Pr}\big[-X_{n}=w,\,Y_{n}+L+d=
-v,\,    Y_{m}\in S\ \text{ for } m< n \big]1_{Y_{n}<0}
\end{eqnarray}

In the limit $L\rightarrow \infty $, we have $L^{1}(\R^{+}\times \R^{+})$ convergence
$$\pi_{\uparrow,t}\longrightarrow p_{\uparrow}\,\pi_{\infty}\quad   \pi_{\downarrow,t} \longrightarrow p_{\downarrow}\,\pi_{\infty}$$
 Moreover, the convergence is uniform for $|d|\leq \frac{1}{2}L$.

\end{corollary}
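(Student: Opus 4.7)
The plan is to reduce Corollary~\ref{BC2} to Proposition~\ref{BC1} by a single application of the strong Markov property at the hitting time of the ``wrong'' boundary. Set $\tau^+_L=\inf\{n:Y_n\geq L\}$, $\tau^-_M=\inf\{n:Y_n\leq -M\}$ with $M=d+L$, and let $p_\uparrow=\textup{Pr}[\tau^+_L<\tau^-_M]$, $p_\downarrow=1-p_\uparrow$. Because $\tilde P$ is symmetric and has finite variance, both stopping times are almost surely finite, so the density $\pi_L$ of Proposition~\ref{BC1} has unit mass and splits as $\pi_L=\pi_{\uparrow,L}+\pi_L|_B$, where $\pi_L|_B$ is the portion of the overshoot density carried by paths in $B=\{\tau^-_M<\tau^+_L\}$.

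I would then apply the strong Markov property at $\tau^-_M$ on the event $B$: conditional on $Y_{\tau^-_M}=y\leq -M$, the post-$\tau^-_M$ walk is an independent copy of $Y_n$ shifted by $y$, and its first passage over $L$ has overshoot density $\pi_{L-y}$ by translation invariance and Proposition~\ref{BC1}. Writing $\mu$ for the sub-distribution of $Y_{\tau^-_M}$ on $B$ (of total mass $p_\downarrow$),
\begin{equation*}
 \pi_L|_B(v,w)=\int_{-\infty}^{-M}\mu(dy)\,\pi_{L-y}(v,w).
\end{equation*}
When $|d|\leq L/2$ one has $L+M=2L+d\geq\tfrac32 L$, so every $y\leq -M$ satisfies $L-y\geq\tfrac32 L$, and the monotone exponential bound from Proposition~\ref{BC1} gives $\|\pi_{L-y}-\pi_\infty\|_1\leq Ce^{-\alpha(L-y)}\leq Ce^{-3\alpha L/2}$ uniformly in $y$ and $d$. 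Pulling $\|\cdot\|_1$ inside the integral and combining with $\|\pi_L-\pi_\infty\|_1\leq Ce^{-\alpha L}$ yields
\begin{equation*}
 \|\pi_{\uparrow,L}-p_\uparrow\pi_\infty\|_1 \leq \|\pi_L-\pi_\infty\|_1+\|\pi_L|_B-p_\downarrow\pi_\infty\|_1 \leq 2Ce^{-\alpha L},
\end{equation*}
which is the desired uniform convergence.

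The downward statement follows from the mirrored argument: the symmetry $\tilde P(v)=\tilde P(-v)$ identifies the unrestricted first-passage density below $-M$ with $\pi_M$ (from Proposition~\ref{BC1} applied to the reflected walk), so $\pi_{\downarrow,L}=\pi_M-\pi_M|_{B'}$ with $B'=\{\tau^+_L<\tau^-_M\}$. Strong Markov at $\tau^+_L$ expresses $\pi_M|_{B'}$ as $\int\mu'(dy)\pi_{M+y}$ against the sub-distribution of $Y_{\tau^+_L}\geq L$ (of total mass $p_\uparrow$), and the level $M+y\geq L+M\geq\tfrac32 L$ is again large enough for the same exponential bound, giving $\|\pi_{\downarrow,L}-p_\downarrow\pi_\infty\|_1\to 0$ uniformly in $d$.

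The only real subtlety in this plan is that the strong Markov identity delivers a \emph{mixture} of the shifted overshoot densities $\pi_{L-y}$ rather than a single $\pi_L$, so the argument genuinely relies on the uniform exponential rate of convergence asserted in Proposition~\ref{BC1} (and not merely pointwise $L^1$ convergence at each fixed level). Once that uniform rate is in hand the integration against $\mu(dy)$ is essentially harmless — one could even drop the exponential decay and use only a uniform $o(1)$ estimate together with dominated convergence — and the elementary inequality $L+M\geq\tfrac32 L$ valid on $|d|\leq L/2$ produces the uniformity in $d$ automatically.
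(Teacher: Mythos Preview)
Your argument is correct and is essentially the paper's own proof: both decompose the one-sided overshoot density $\pi_L$ via the strong Markov property at the first hit of the opposite boundary, yielding $\pi_{\uparrow,L}=\pi_L-\int \pi_{\downarrow}(a,b)\,\pi_{2L+d+a}$, and then invoke Proposition~\ref{BC1} (with its monotone/uniform rate) on each $\pi_{L'}$ with $L'\geq \tfrac32 L$ to get the uniform $L^1$ bound. Your write-up is in fact a bit more explicit than the paper's about why the mixture causes no trouble and about the symmetric reduction for $\pi_{\downarrow,L}$.
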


\begin{proof}
In general, we have the identity
\begin{align}\label{BCID}
\pi_{L}(v,\,w)= \pi_{\uparrow,\,L}(v,\,w)+\int_{\R^{+}\times\R^{+}
}da\,db\, \pi_{\downarrow}(a,\,b)\,\pi_{2L+d+b}(v,\,w),
\end{align}
where $\pi_{L}(v,\,w)$ is the boundary increment distribution from~(\ref{BC1}).  As $L\rightarrow \infty$, $\pi_{L}(v,\,w),\,\pi_{2L+d+b}(v,\,w)$ converge exponentially in $L^{1}(\R^{+}\times\R^{+})$ to $\pi_{\infty}(v,\, w)$ by Proposition~\ref{BC1}.  By the triangle inequality and rearranging~(\ref{BCID}),
$$\|\pi_{\uparrow,\,L}-\pi_{\infty}\|_{1}\leq \int_{\R^{+}\times\R^{+} }da\,db\, \pi_{\downarrow}(a,\,b)\,\| \pi_{2L+d+b}-\pi_{L}\|_{1}\leq 2\,p_{\downarrow}\,\sup_{L^{\prime}\geq L}\| \pi_{L^{\prime} }-\pi_{\infty}\|_{1},
$$
where  where $p_{\uparrow}=\int_{\R^{+}\times
\R^{+}}dv\,dw\,\pi_{\uparrow,\,L}(v,\,w)$ and
$p_{\downarrow}=\int_{\R^{+}\times
\R^{+}}dv\,dw\,\pi_{\downarrow,\,L}(v,\,w)$. Since $|d|\leq
\frac{1}{2}L$, $p_{\uparrow}^{-1}$ is bounded away from zero and
the convergence holds.

The uniformity of the convergence over $|d|\leq \frac{1}{2}L$ is implied by the monotonicity of the convergence from Proposition~\ref{BC1}.

\end{proof}

Next we extend Proposition~\ref{BC1} and Corollary~\ref{BC2} to results about our dynamics in high momentum situations.  We prove an analogue of Corollary~\ref{BC2} and then prove the analogue of Proposition~\ref{BC1} as a corollary.  Since there is also a position variable moving on the torus, the crossing distributions are joint distributions on $[0,1]\times \R^{+}$.   We consider the so called ``averaged random walk''
\begin{align*}
\tilde{P}(v)=\int_{0}^{1}da\,\frac{\kappa(a)}{\bar{\kappa}}\,\mathcal{P}_{a}(v),
\end{align*}
 where $\bar{\kappa}=\int_{0}^{1}da\,\kappa(a)$, and
  its boundary crossing distribution $\pi_{\infty}(v,\,w)$ obtained from the
  previous Proposition B1 gives rise to another density
\begin{align}\label{Jazz}
\phi_{\infty}(a,\,v)=\int_{\R^{+}}  dw\,\pi_{\infty}(v,\,w) \frac{
\kappa(a)\,\mathcal{P}_{a}(w)}{\int_{0}^{1}dx\,\kappa(x)\mathcal{P}_{x}(w)
}.
\end{align}
Intuitively, $\phi_{\infty}(a,\,v)$
is an idealized joint distribution for the position on the torus and the
over-jump for the particle upon crossing the border of some domain
$S\subset [0,1]\times \R^{+} $. $ \phi_{\infty}(a,\,v)$ provides a
good approximation for such a distribution at high energy where
the momentum behaves nearly as the averaged random walk.

The following lemma is similar to Part (1) of Lemma~(\ref{HighBrow}).
\begin{lemma}\label{BoundryVariance}
Assume (\textit{i}) of List~\ref{AssumpTwo}.     Let $S\subset \R $ be an open interval with boundaries $L_{1}< L_{2}$ not both unbounded.  If the particle begins at some point $(x,\,k)$ with $k\in S$, then define
\begin{eqnarray*}
\pi_{\uparrow}(v)&=& \sum_{n=1}^{\infty} \textup{Pr}\big[K_{t_{n}}-L_{2}=v \geq 0,\,\, K_{t_{m}}\in S     \text{ for } 0\leq m< n \big]1_{ K_{t_{n} }\geq S}, \\
\pi_{\downarrow}(v)&=&\sum_{n=1}^{\infty}
\textup{Pr}\big[K_{t_{n}}-L_{1}= -v \leq 0, K_{t_{m}}\in
S\,\text{ for }\, 0\leq m< n \big]1_{ K_{t_{n}}\leq S}.
\end{eqnarray*}
There exists a universal upper bound $J$ such that for all
$L_{1},\,L_{2}$ and $(x,\,k)$
$$ \frac{\int_{\R^{+}}dv\,\pi_{\uparrow}(v)\,v^{2}}{\int_{\R^{+}}dv\,\pi_{\uparrow}(v) },\,\frac{\int_{\R^{+}}dv\,\pi_{\downarrow}(v)\,v^{2}}{\int_{\R^{+}}dv\,\pi_{\downarrow}(v) } \leq J,  $$
when
$\int_{\R^{+}}dv\,\pi_{\uparrow}(v),\,\int_{\R^{+}}dv\,\pi_{\downarrow}(v)\neq
0$. More generally, the Laplace transforms $\varphi_{\uparrow}$,\,
$\varphi_{\downarrow}$ have
$$\varphi_{\uparrow}(q),\,\varphi_{\downarrow}(q)\leq \frac{\mathcal{C} }{1-e^{-\eta-q}} \quad \text{ for } q> -\eta.   $$

\end{lemma}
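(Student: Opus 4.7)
The plan is to reduce the over-jump statistics to those of a single jump by conditioning on the exit index, then propagate the exponential tail bound from (\textit{i}) of List~\ref{AssumpTwo} through the resulting series. First I would decompose $\pi_{\uparrow}(v)$ by the first Poisson index $N$ at which $K_{t_N}\notin S$. On that event, writing $x=X_{t_N^{-}}$, $k=K_{t_N^{-}}$ and $u = L_2 - k$, the over-jump $v = K_{t_N} - L_2$ is the deterministic image of the jump $w_N$ under $v = w_N - u$, so conditional on the trajectory up to $t_N^{-}$ its density is proportional to $\mathcal{P}_x(u+v)$. This gives
\[
\pi_{\uparrow}(v) = \sum_{n\geq 1}\mathbb{E}\bigl[\chi\{K_{t_m}\in S,\,m<n\}\,\kappa(X_{t_n^{-}})\,\mathcal{P}_{X_{t_n^{-}}}(u_n+v)\bigr],
\]
with $u_n = L_2-K_{t_n^{-}}$. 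The possibility $u_n<0$, arising when the Hamiltonian flow between Poisson events carries the momentum across $L_2$, is controlled because energy conservation gives $|u_n|\leq\sqrt{2\bar V}$, so it only contributes a bounded additive defect that can be absorbed into the constants.

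Next I would prove a uniform single-jump Laplace bound: for all $x\in[0,1]$, $u\in\R$ and $q>-\eta$,
\[
\frac{\int_0^{\infty} e^{qv}\mathcal{P}_x(u+v)\,dv}{\int_0^{\infty} \mathcal{P}_x(u+v)\,dv}\leq \frac{\mathcal{C}}{1-e^{-\eta-q}}.
\]
This is obtained by chopping $[0,\infty)$ into unit intervals and applying the exponential comparison from (\textit{i}) of List~\ref{AssumpTwo} iteratively to successive intervals: the assumption dominates the $(k+1)$th unit piece of $\mathcal{P}_x(u+\cdot)$ by $\mathcal{C} e^{-\eta k}$ times the first piece, producing a geometric bound on the numerator with common ratio $e^{-\eta-q}$ in terms of $\int_0^1 \mathcal{P}_x(u+v)\,dv$, while the denominator is bounded below by this same first integral. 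Substituting the resulting bound termwise into the series for $\pi_{\uparrow}$ and using $\int_0^{\infty}\pi_{\uparrow}\leq 1$ yields the claimed bound on $\varphi_{\uparrow}(q)$. The analogous treatment for $\pi_{\downarrow}$ is produced either by reflecting the whole construction or by invoking the jump-density symmetry $\mathcal{P}_a(-v)=\mathcal{P}_a(v)$ from (\textit{III}) of List~\ref{AssumpOne}. Finally, the uniform second-moment bound $J$ follows by specializing the Laplace transform bound to a small positive $q$ and extracting the second moment via a power-series expansion at $q=0$.

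The main obstacle is the single-jump Laplace step. The bound must hold uniformly in $u$: both when $u\approx 0$ (the pre-jump momentum sits just inside $S$, and the conditioning is essentially a first-moment of $\mathcal{P}_x$) and when $u$ is large (the conditioning concentrates on the tail, so the inherited decay rate must be the same $\eta$). Assumption (\textit{i}) is exactly the right tool because its quantitative tail comparison between $\mathcal{P}_a$ at two points is what makes the unit-interval decomposition close as a geometric series with common ratio $e^{-\eta-q}$, producing precisely the form $\mathcal{C}/(1-e^{-\eta-q})$ asserted in the lemma; any weaker moment hypothesis would not suffice to pass the bound through the averaging over the exit index $N$ and the pre-jump state $(X_{t_N^{-}},K_{t_N^{-}})$.
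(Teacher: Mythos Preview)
Your proposal is correct and follows the approach the paper intends. The paper itself does not give a proof of this lemma beyond the sentence preceding it, ``The following lemma is similar to Part (1) of Lemma~\ref{HighBrow},'' whose own proof is the single line ``$D(v)$ inherits the property (\textit{i}) of List~\ref{AssumpTwo}.'' Your decomposition by the exit index, reduction to the conditional single-jump over-jump density $\mathcal{P}_x(u+\cdot)/\int_0^\infty\mathcal{P}_x(u+\cdot)$, and unit-interval geometric bound is precisely the way to make that inheritance rigorous; it is the same idea, just worked out in detail where the paper leaves it implicit.
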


Let $\textup{P}(v_{1},\cdots,v_{n})$ be the joint probability distribution for the increments $v_{j}=K_{t_{j}}-K_{t_{j-1}}$ where $t_{j}$ is the time of the $j$th momentum jump.

\begin{lemma}\label{JointDen}
Let us make assumptions (\textit{i})-(\textit{ii}) from List~\ref{AssumpTwo}.   Fix $\beta, \gamma > 0$. Let the dynamics begin at a point $(x,k)$ and $v_{1},\dots,v_{n}\in [-2\,t^{\gamma},\,2\,t^{\gamma}]$ be such that $|k+\sum_{r=0}^{m}v_{r}|\geq \frac{1}{2}t^{\beta}$ for all $0\leq m\leq n$.  Then
$$\Big| \frac{\textup{P}(v_{1},\cdots,v_{n}) }{\tilde{P}(v_{1})\cdots \tilde{P}(v_{n})}  -1 \Big|  < c\,n \,t^{\gamma-\beta}e^{c\,n\, t^{\gamma-\beta}},  $$
where $c=8\bar{V}\mu$.

\end{lemma}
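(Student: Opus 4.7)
The plan is to factorize $P(v_1,\ldots,v_n)$ via the strong Markov property, compare each conditional factor with $\tilde P(v_j)$, and combine the per-step errors by an elementary product inequality.

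First I would write, by the strong Markov property applied at the successive momentum jump times,
\[
P(v_1,\ldots,v_n)=\prod_{j=1}^n P(v_j\mid v_1,\ldots,v_{j-1}),\qquad P(v_j\mid v_1,\ldots,v_{j-1})=\mathbb{E}\!\left[P_{(X^+_{j-1},K^+_{j-1})}(v_j)\mid v_1,\ldots,v_{j-1}\right],
\]
where $(X^+_j,K^+_j)$ denotes the phase-space state just after the $j$-th momentum jump (with $(X^+_0,K^+_0)=(x,k)$) and $P_{(x,k)}$ is the first-jump density analyzed in Lemma~\ref{SingleJump}. The hypothesis $|k+\sum_{r\le m}v_r|\ge \tfrac12 t^\beta$, combined with Lemma~\ref{RealBasics} applied across each free-flight segment, forces $|K_s|\ge \tfrac14 t^\beta$ uniformly along the orbit, so that on any free-flight interval the conservative force shifts the momentum by at most $8\bar V t^{-\beta}$, which is below~$1$ for $t$ large.

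The key one-step estimate I would then establish is
\[
\Bigl|\,P_{(X^+_{j-1},K^+_{j-1})}(v_j)-\tilde P(v_j)\,\Bigr|\le c\,t^{\gamma-\beta}\,\tilde P(v_j),\qquad c:=8\bar V\mu.
\]
Starting from the representation $P_{(x,k)}(v)=\int_0^1 \tilde r_{(x,k)}(a)\,\mathcal{P}_a(v)\,da$ used in the proof of Lemma~\ref{SingleJump}, two mechanisms contribute. Part~(2) of Lemma~\ref{SingleJump} controls $|\tilde r_{(x,k)}(a)-\kappa(a)/\bar\kappa|$ at order $t^{-\beta}$, which is dominated by $t^{\gamma-\beta}$. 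The second mechanism is a shift of size $|\Delta|\le 8\bar V t^{-\beta}\le 1$ in the effective argument at which $\mathcal{P}_a$ must be evaluated, produced by the drift of the momentum under the potential during the free flight between successive Poisson events; assumption~(\textit{ii}) of List~\ref{AssumpTwo} then yields
\[
\bigl|\mathcal{P}_a(v_j+\Delta)-\mathcal{P}_a(v_j)\bigr|\le \mu(1+|v_j|)\,\mathcal{P}_a(v_j)\,|\Delta|\le 2\mu(1+2t^\gamma)\,\bar V\,t^{-\beta}\,\mathcal{P}_a(v_j),
\]
and integrating against $\kappa(a)/\bar\kappa$ produces the stated ratio bound with constant at most $8\bar V\mu$ once $t^\gamma\ge 1$.

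Finally, writing $P(v_j\mid v_1,\ldots,v_{j-1})=\tilde P(v_j)(1+\epsilon_j)$ with $|\epsilon_j|\le c\,t^{\gamma-\beta}$, the product bound
\[
\Bigl|\prod_{j=1}^n(1+\epsilon_j)-1\Bigr|\le (1+ct^{\gamma-\beta})^n-1\le nct^{\gamma-\beta}e^{nct^{\gamma-\beta}}
\]
follows from $(1+x)^n\le e^{nx}$ together with $e^{y}-1\le ye^{y}$ for $y\ge 0$, which is exactly the statement of the lemma. The main obstacle is the one-step comparison: one has to correctly isolate the shift $\Delta$ arising from the deterministic drift on each free-flight segment, verify that it lies in the unit window where the derivative control~(\textit{ii}) applies, and check that Part~(2) of Lemma~\ref{SingleJump} continues to provide useful estimates with the random (but uniformly lower-bounded) post-jump momentum $K^+_{j-1}$ in place of the deterministic $k$ used in Lemma~\ref{SingleJump}.
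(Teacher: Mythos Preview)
Your proposal is correct and follows essentially the same approach as the paper: a one-step comparison of the increment density with $\tilde{P}$ (using Part~(2) of Lemma~\ref{SingleJump} for the torus-position density and assumption~(\textit{ii}) for the drift-induced shift in the argument of $\mathcal{P}_a$), followed by a product/telescoping bound yielding $(1+ct^{\gamma-\beta})^n-1\le cn\,t^{\gamma-\beta}e^{cn\,t^{\gamma-\beta}}$. The paper writes the one-step increment density explicitly as $\int_0^1 \tilde r_{(x,k)}(a)\,\mathcal{P}_a\!\bigl(v-\Delta(a)\bigr)\,da$ (rather than citing the $P_{(x,k)}$ of Lemma~\ref{SingleJump}, which is the density of the jump alone and does not include the drift), but once you account for the shift $\Delta$ as you do, the substance is identical.
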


\begin{proof}
We start with the one-jump case.  The only difference between Lemma~\ref{JointDen} and Lemma~\ref{SingleJump} is that here the increment $v$ is the sum of the first momentum kick \`and the drift up to the time of the kick.

Let $(x_{s},k_{s})$ be on the trajectory determined by the Hamiltonian $H(x,k)=\frac{1}{2}k^{2}+V(x)$ starting from $(x,\,k)\in [0,1]\times \R$.   For fixed starting point $(x,\,k)$, $\Delta_{s}=k_{s}-k=\Delta(x_{s})$ is function of $x_{s}$.  For the one-jump case, we can use the density $\tilde{r}_{(x,k)}\in L^{1}([0,1])$ from the proof of Lemma~\ref{SingleJump}, to write
$$ \textup{P}(v)=\int_{0}^{1}da\,\tilde{r}_{(x,\,k)}(a)\,\mathcal{P}_{a}\big(v-\Delta(a) \big)$$

Define the density
$\textup{P}^{\prime}(v)=\int_{0}^{1}da\,\frac{\kappa(a)}{\bar{\kappa}}\,\mathcal{P}_{a}\big(v-\Delta(a) \big) $.  Following the same line of reasoning as~(\ref{IG}) in Lemma~\ref{SingleJump},
\begin{multline*}
\sup_{v\in \R }\big| \frac{\textup{P}(v)   }{\textup{P}^{\prime}(v)  }-1\big|\leq \sup_{v\in \R} \frac{1}{\textup{P}^{\prime}(v)}\int_{0}^{1}da\,\big|\tilde{r}_{(x,\,k)}(a)-\frac{\kappa(a)}{\bar{\kappa}}\big| \mathcal{P}_{a}\big(v-\Delta(a)\big)\\
\leq  \sup_{0\leq a\leq 1}\big|\tilde{r}_{(x,\,k)}(a)\frac{\bar{\kappa}}{\kappa(a)}-1\big| \frac{1}{\textup{P}^{\prime}(v)}\int_{0}^{1}da\,\frac{\kappa(a)}{\bar{\kappa}} \mathcal{P}_{a}\big(v-\Delta(a)\big)  =\sup_{a\in [0,1]}\big|\tilde{r}_{(x,k)}(a)\frac{\bar{\kappa}}{\kappa(a)}-1\big|
\end{multline*}
Since, $|k|\geq \frac{1}{2}t^{\beta}$, then by Part (2) of Lemma~\ref{SingleJump},
$$\sup_{a\in [0,1]}\big|\tilde{r}_{(x,k)}(a)\frac{\bar{\kappa}}{\kappa(a)}-1\big|\leq \nu^{-2}4\mathcal{R}\, t^{-\beta}+\mathit{O}(t^{-2\beta}). $$

Moreover,
\begin{multline}\label{TributeBand}
 \big| \frac{\textup{P}^{\prime}(v)}{\tilde{P}(v)}-1 \big| \leq \frac{1}{\tilde{P}(v)}\,\int_{0}^{1}da\,\frac{\kappa(a)}{\bar{\kappa}}\,\big|      \mathcal{P}_{a}(v-\Delta(a))-\mathcal{P}_{a}(v)\big| \\ = \frac{1}{\tilde{P}(v)} \int_{0}^{1}da\,\frac{\kappa(a)}{\bar{\kappa}} |\Delta(a)|\,\Big| \int_{0}^{1}ds  \, \frac{d\mathcal{P}_{a}}{dv}\,\big(v+s\,\Delta(a) \big)\Big| \\ \leq  \frac{\mu\,4\bar{V}(1+|v|)t^{-\beta} }{\tilde{P}(v)}\int_{0}^{1}da\,\frac{\kappa(a)}{\bar{\kappa}}\mathcal{P}_{a}(v)= \mu\,4\bar{V}(1+|v|)t^{-\beta}
\end{multline}
 For the second inequality we have used Lemma~\ref{RealBasics} to get $\sup_{0\leq s\leq 1}|\Delta(a)|\leq 4\bar{V}t^{-\beta}$.  In particular, $|\Delta(a)|<1$ allows us to use (\textit{ii}) of List~\ref{AssumpTwo} to bound $ |\frac{d\mathcal{P}_{a}}{dv}\,\big(v+s\,\Delta(a)\big)|$ by $\mu (1+|v|)\mathcal{P}_{a}(v)$.   Since $1+|v|\leq 1+2\,t^{\gamma}<4t^{\gamma}$ for large enough $t$, we have
$$\sup_{v\in \R} \big| \frac{\textup{P}^{\prime}(v)}{\tilde{P}(v)}-1 \big|< 16\bar{V}\mu\,t^{\gamma-\beta} $$

\begin{multline}
\Big| \frac{\textup{P}(v)}{\tilde{P}(v)}-1\Big|\leq \Big| \frac{\textup{P}(v)}{\textup{P}^{\prime}(v)}-1\Big|\,\frac{\textup{P}^{\prime}(v)}{\tilde{P}(v)}+\Big| \frac{\textup{P}^{\prime}(v)}{\tilde{P}(v)}-1\Big|
\leq (16\bar{V}\mu\, t^{\gamma-\beta})\,\big(1+4\mathcal{R}\,\nu^{-2}\,t^{-\beta} +\mathit{O}(t^{-2\beta})\big) \\+4\mathcal{R}\,\nu^{-2}t^{-\beta}+\mathit{O}(t^{-2\beta})< 20\bar{V}\mu
\,t^{\gamma-\beta}=c\,t^{\gamma-\beta}
\end{multline}
where the last inequality is for $t$ large enough and $c=20\bar{V}\mu$ .

To extend to arbitrary $n$
\begin{align}
\Big| \frac{\textup{P}(v_{1},\dots,v_{n}) }{\tilde{P}(v_{1})\dots \tilde{P}(v_{n})}  -1 \Big| \leq \sum_{m=1}^{n}\Big| \frac{\textup{P}(v_{1},\dots,v_{m}) }{\tilde{P}(v_{1})\dots \tilde{P}(v_{m})}-\frac{\textup{P}(v_{1},\dots,v_{m-1}) }{\tilde{P}(v_{1})\dots \tilde{P}(v_{m-1})}\Big|
\end{align}
Moreover, we can write the summand as
$$\Big|\frac{\textup{P}(v_{1}, \dots,v_{m})}{ \textup{P}(v_{1},\dots ,v_{m-1})\,\tilde{P}(v_{m}) }-1\Big|\, \prod_{r=1}^{m-1}\frac{\textup{P}(v_{1}, \dots,v_{r})}{ \textup{P}(v_{1},\dots ,v_{r-1})\,\tilde{P}(v_{r}) }         $$
The ratio's appearing in the above expression can be written in terms of an expectation:
\begin{align} \label{FracProb}
\frac{\textup{P}(v_{1}, \dots,v_{m})}{ \textup{P}(v_{1},\dots ,v_{m-1})\,\tilde{P}(v_{m})  }  =\mathbb{E}\Big[\frac{\textup{P}_{(X_{t_{m-1}},K_{t_{m-1}})}(v_{m})}{ \tilde{P}(v_{m}) }  \Big| K_{t_{r}}-K_{t_{r-1}}=v_{r} \text{ for } r<m \Big]
 \end{align}
where $\textup{P}_{(X_{t_{m-1}},K_{t_{m-1}})}(v_{m})$ is the one jump distribution starting from $(x,k)=(X_{t_{m-1}},K_{t_{m-1}})$.  The right side of~(\ref{FracProb}) is thus an average of the one-jump case over different starting points $(x,k)=(X_{t_{m-1}},K_{t_{m-1}})$ and thus we can apply the $n=1$ case.

We thus conclude
$$\Big| \frac{\textup{P}(v_{1},\dots,v_{n}) }{\tilde{P}(v_{1})\dots \tilde{P}(v_{n})}  -1 \Big|< c\,n\,t^{\gamma-\beta}\,(1+c\,t^{\gamma-\beta})^{n}< c\,n\,t^{\gamma-\beta}\,e^{c\,n\,t^{\gamma-\beta}}.$$

\end{proof}

For notational convenience, we will take the initial point $(x,k)$ to have positive momentum $k>0$ for the propositions and lemmas below.

\begin{proposition}\label{BC2NHalf}
Assume~(\textit{i})-(\textit{ii}) from List~\ref{AssumpTwo}.  Let
$(X_{s},K_{s})$ evolve from an initial point $(x,k)$ with $k\geq
t^{\beta}$.  Define $S=\{y\in \R \, |\, k-t^{\gamma}-d
<y<k+t^{\gamma}\} $ for $0< \gamma<\frac{1}{4}\beta $ and $|d|\leq
\frac{1}{2}t^{\gamma}$, and the densities
\begin{eqnarray*}
\pi_{\uparrow,\,t}(v,\,w)&=& \sum_{n=1}^{\infty} \textup{Pr}[K_{t_{n}}-K_{t_{n-1}}=w,\,K_{t_{n}}-k-t^{\gamma}=v \geq 0,\,\, K_{t_{m}}\in S \,   \forall( m< n) ]1_{ K_{t_{n} }>S} \\
\pi_{\downarrow,\,t}(v,\,w)&=&\sum_{n=1}^{\infty}
\textup{Pr}[K_{t_{n}}-K_{t_{n-1}}=w,\, K_{t_{n}}-k+t^{\gamma}+d=
-v \leq 0,\,    K_{t_{m}}\in S\,\forall( m< n) ]1_{ K_{t_{n}}<S}
\end{eqnarray*}

In the limit $t\rightarrow \infty$, there is $L^{1}(\R^{+}\times\R^{+})$ convergence
$$\pi_{\uparrow,\,t}\longrightarrow p_{\uparrow}\,\pi_{\infty},\quad \pi_{\downarrow,\,t}\longrightarrow  p_{\downarrow}\,\pi_{\infty},$$
where $p_{\uparrow}$ and $p_{\downarrow}$ are the exit
probabilities for the averaged random walk, and the convergence is
uniform for all $|d|\leq \frac{1}{2}t^{\gamma}$.

\end{proposition}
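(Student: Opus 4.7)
The strategy is to couple the momentum process up to the exit from $S$ with an i.i.d.\ random walk of step density $\tilde P$, using Lemma~\ref{JointDen} as a change-of-measure, and then invoke Corollary~\ref{BC2} for the averaged walk. Because $|d|\leq \tfrac12 t^\gamma$ and $k\geq t^\beta$ with $\gamma<\beta/4$, every point of $\bar S$ lies in $[\tfrac12 t^\beta,\tfrac32 t^\beta]$, so while the process is still inside $S$ the hypothesis $|k+\sum_{r\leq m}v_r|\geq \tfrac12 t^\beta$ of Lemma~\ref{JointDen} holds automatically. The joint densities $\pi_{\uparrow,t}$ and $\pi_{\downarrow,t}$ depend only on the sequence $(v_1,\ldots,v_N)$ of inter-Poisson-time momentum increments up to the exit index $N$, so it is enough to control the law of this finite sequence in total variation.

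First I would truncate jumps. Assumption (\textit{i}) of List~\ref{AssumpTwo} forces super-polynomial tails on each $\mathcal P_a$, so the event $\{\,\exists\, j\leq N:\,|v_j|>t^\gamma\,\}$ has probability $o(t^{-p})$ for every $p$ (cf.\ Lemma~\ref{CappedJumps}); this both reduces $\pi_{\uparrow,t},\pi_{\downarrow,t}$ to support in $[0,t^\gamma]\times[0,t^\gamma]$ up to a negligible $L^1$ error and enforces the jump cap required by Lemma~\ref{JointDen}. Next I would estimate $N$. Lemma~\ref{RealBasics} bounds the momentum drift accumulated between two Poisson times by $4\bar V/|K|\leq 8\bar V t^{-\beta}$, so up to a total drift of $O(Nt^{-\beta})$ the exit is driven by the centered kicks $v_j$ of uniformly bounded variance. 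Optional stopping on $\big(\sum_{j\leq n}v_j\big)^2$ minus its predictable quadratic variation, together with $|K_{t_N}-k|\leq 2t^\gamma$ at the exit time, yields $\mathbb E[N]=O(t^{2\gamma})$ uniformly in $(x,k)$ and in $d$; Markov then gives $\Pr[N>t^{2\gamma+\epsilon}]\to 0$ for every $\epsilon>0$.

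On the event $\{N\leq t^{2\gamma+\epsilon}\}$, Lemma~\ref{JointDen} supplies
\[
\sup_{n\leq N}\Big|\,\frac{\textup P(v_1,\ldots,v_n)}{\tilde P(v_1)\cdots\tilde P(v_n)}-1\,\Big|\;\leq\;c\,N\,t^{\gamma-\beta}\,e^{c\,N\,t^{\gamma-\beta}},
\]
and the hypothesis $\gamma<\beta/4$ allows us to choose $\epsilon>0$ with $3\gamma+\epsilon<\beta$, which forces $Nt^{\gamma-\beta}\to 0$. Hence the joint law of $(v_1,\ldots,v_N)$ under the actual dynamics and under the i.i.d.\ product measure with marginal $\tilde P$ agree in total variation up to $o(1)$. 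Passing this agreement to the overshoot/last-jump marginals $\pi_{\uparrow,t},\pi_{\downarrow,t}$ and applying Corollary~\ref{BC2} (with $L=t^\gamma$ and shift $d$) identifies the $L^1$-limits with $p_\uparrow\pi_\infty$ and $p_\downarrow\pi_\infty$ respectively; the uniformity in $|d|\leq \tfrac12 t^\gamma$ is inherited from the analogous uniformity statement in Corollary~\ref{BC2}.

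The main obstacle is balancing the exit-time estimate against the exponential blow-up $e^{cNt^{\gamma-\beta}}$ in Lemma~\ref{JointDen}: the change of measure is useful only so long as $N\ll t^{\beta-\gamma}$, so one needs a tail bound for $N$ sharp enough to cooperate with $\mathbb E[N]=O(t^{2\gamma})$ and robust against both the small deterministic drift between jumps and the freedom in the shift $d$. The numerical constraint $\gamma<\beta/4$ is precisely what leaves room for the truncation error, the drift error, and the density-ratio error to be simultaneously $o(1)$.
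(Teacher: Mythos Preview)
Your proposal is correct and follows essentially the same route as the paper: truncate the increments, control the exit index $N$ by an optional-stopping/quadratic-variation bound giving $\mathbb E[N]=O(t^{2\gamma})$, use Lemma~\ref{JointDen} on the event $\{N\le t^{3\gamma}\}$ (the paper's choice, versus your $t^{2\gamma+\epsilon}$) to replace the true joint density by the product $\tilde P(v_1)\cdots\tilde P(v_n)$, and then invoke Corollary~\ref{BC2}. The only organizational difference is that the paper bounds $N_0$ for the \emph{capped averaged random walk} and then uses the fact that $\pi_{\uparrow,t}+\pi_{\downarrow,t}$ and $\pi_{\uparrow,t}^{(0)}+\pi_{\downarrow,t}^{(0)}$ are both probability measures to transfer the mass estimate to the actual process---this sidesteps having to run the optional-stopping argument on the true dynamics with its small drift term, whereas you bound $N$ for the actual process directly (which is also fine, but note your truncation step tacitly needs $\mathbb E[N]<\infty$ first, so the order should be reversed).
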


\begin{proof}

 Define the sets
$$ S_{n}(p,\,w)= \{(v_{1},\dots,v_{n})\in \R^{n}\,|v_{n}=w, \,  p=k+\sum_{r=1}^{n}v_{r}\notin S,\,\, k+\sum_{r=1}^{m}v_{r}\in S \text{ for } m<n \}.$$
and define $S_{n}^{\prime}(p,\,w)$ so that
$S_{n}^{\prime}(p,\,w)=S_{n}(p,\,w)\cap \{ |v_{m}|\leq 2t^{\gamma}
\text{ for } m\leq n\} $  when $n\leq \lfloor t^{3\gamma}\rfloor$
and $S_{n}^{\prime}(p,\,w)=\emptyset$ when $n> \lfloor
t^{3\gamma}\rfloor$.

Then $\pi_{\uparrow,\,t}(v,\,w),\pi_{\downarrow,\,t}(v,\,w)$ satisfy
\begin{align}\label{CrossDef}
 \pi_{\uparrow,\,t}(p-k-t^{\gamma},\,w)+\pi_{\downarrow,\,t}(k-t^{\gamma}-d-p,\,w) =\sum_{n=1}^{\infty}\int_{S_{n}(p,\,w)  }dv_{1}\cdots dv_{n}\,\textup{P}(v_{1},\dots, v_{n})
\end{align}
where $\textup{P}(v_{1},\cdots,v_{n})$ are the joint densities
from Lemma~\ref{JointDen}. The sum of $\pi_{\uparrow,\,t}$ and $\pi_{\downarrow,\,t}$ forms a probability measure, since, by an analogous argument as in the proof of Lemma~\ref{FiniteTimes}, the exit time for $S$ is almost surely finite (and has expectation $\propto t^{2\beta}$).   We will argue that the right side is
close in $L^{1}(\R^{+}\times \R^{+})$ as a function of $(p,\,w)$
to the same expression with $\textup{P}(v_{1},\dots,v_{n})$
replaced by $\tilde{P}(v_{1})\cdots \tilde{P}(v_{n})$:
\begin{align*}
 \pi_{\uparrow,\,t}^{(0)}(p-k-t^{\gamma},\,w)+\pi_{\downarrow,\,t}^{(0)}(k-t^{\gamma}-d-p,\,w) =\sum_{n=1}^{\infty}\int_{S_{n}(p,\,w)  }dv_{1}\cdots dv_{n}\,\tilde{P}(v_{1})\cdots \tilde{P}(v_{n}).
\end{align*}
This expression corresponds to the averaged random walk.
We can
then apply Lemma~\ref{BC2} for an ordinary random walk to get the convergence of $\pi_{\uparrow,\,t}^{(0)}$ and
$\pi_{\downarrow,\,t}^{(0)}$ to $p_{\uparrow} \pi_{\infty}$ and $p_{\downarrow}\pi_{\infty}$.

By definition of $S_{n}(p,\,w),\,S_{n}^{\prime}(p,\,w)$,\, $K+\sum_{n=1}^{m}v_{m}>\frac{1}{2}t^{\beta}$ for all $0\leq m<n$.  First, notice that by Lemma~\ref{JointDen}
\begin{multline}\label{Elephant}
\sum_{n=1}^{\infty}\int_{S_{n}^{\prime}(p,\,w)  }dv_{1}\cdots dv_{n}\,\tilde{P}(v_{1})\cdots \tilde{P}(v_{n})  \Big| \frac{\textup{P}(v_{1},\cdots,v_{n}) }{\tilde{P}(v_{1})\cdots \tilde{P}(v_{n})}  -1 \Big|\\ \leq \sum_{n=1}^{\lfloor t^{3\gamma} \rfloor  }\int_{S_{n}^{\prime}(p,\,w)  }dv_{1}\cdots dv_{n}\,\tilde{P}(v_{1})\cdots \tilde{P}(v_{n})\,c\,n \,t^{\gamma-\beta}e^{c\,n\, t^{\gamma-\beta}}
\end{multline}
The $L^{1}(\R^{+}\times \R^{+})$ norm of the right-side of~(\ref{Elephant}) (by integrating over $p,\,w$) is bounded by $\mathbb{E}\big[q\,(N_{0}\wedge t^{3\gamma})  \,e^{q\,(N_{0}\wedge t^{3\gamma})   }\big]\big|_{q=c\,t^{\gamma-\beta}}$, where $N_{0}$ is the number of steps that a random walk with jumps having density $ c_{t}^{-1}\tilde{P}(v)1_{|v|\leq t^{\gamma}}$, for normalization constant $c_{t}$, takes to leave $S$ starting from $k$.  The right-side of~(\ref{Elephant}) vanishes as
$$\mathbb{E}\big[q\,(N_{0}\wedge t^{3\gamma})  \,e^{q\,(N_{0}\wedge t^{3\gamma})   }\big]\big|_{q=c\,t^{\gamma-\beta}}\leq c\,t^{4\gamma-\beta} \,e^{c\, t^{4\gamma-\beta }   }\approx c\,t^{4\gamma-\beta}. $$

Now we need to show that not much probability was lost by replacing $S_{n}(p,\,w)$ with $S_{n}^{\prime}(p,\,w)$.  Since both  $\pi_{\uparrow,\,t}+\pi_{\downarrow,\,t}$ and $\pi_{\uparrow,\,t}^{(0)}+\pi_{\downarrow,\,t}^{(0)}$ are probability measures, it is enough to show that the probability of the event $ \{ |v_{m}|> 2t^{\gamma} \text{ for some } m\leq N \text{ or } N> t^{3\gamma} \} $ is small for the random walk.
First, note that the expected number of steps $N$ to leave $S$ for the averaged random walk will be smaller than the expected number of the steps $N_{0}$ for the capped random walk.
$$\Pr[N>t^{3\gamma}]\leq \Pr[N_{0}>t^{3\gamma}] \text{ and } \mathbb{E}[N]\leq \mathbb{E}[N_{0}].  $$
This follows since the first jump of size greater than $2t^{\gamma}$ will immediately leave $S$, and the corresponding capped trajectory may have to continue on for more steps before leaving $S$.

Now we argue that  $\mathbb{E}[N_{0}]=\mathit{O}(t^{2\gamma})$.  Let us set $k=0$.  Define the stopping time $N_{0,\mathcal{T}}=N_{0}\wedge \mathcal{T}$.  Reasoning as in the proof of Lemma~\ref{FiniteTimes}, then
$$\mathbb{E}\big[ N_{0,\mathcal{T}}\big]=\zeta^{-1}\mathbb{E}\big[\sum_{n=1}^{N_{0,\mathcal{T}} }  v_{n}^{2}\big]=\zeta^{-1}\mathbb{E}\big[\big(v_{1}+\dots +v_{N_{0,\mathcal{T}}}\big)^{2}  \big]\leq \zeta^{-1}(3\,t^{\gamma}+d)^{2}<16\,\zeta^{-1}t^{2\gamma}, $$
where the expectations are with respect to the statistics for the capped random walk, and $\zeta= c_{t}^{-1}\int_{|v|\leq t^{\gamma}}dv\,\tilde{P}(v)\,v^{2}$.

In the second equality, we used that $v_{1}+\dots
+v_{N_{0,\mathcal{T}}}$ is either inside
$[-t^{\gamma}-d,t^{\gamma}]$ when $\mathcal{T}< N_{0}$ or has
jumped out this interval with a jump smaller than $2t^{\gamma}$.
The bound on the right is independent of $\mathcal{T}$, so
$$\mathbb{E}\big[ N_{0}\big]= \limsup_{\mathcal{T}\rightarrow \infty} \mathbb{E}\big[ N_{0,\mathcal{T}}\big]\leq 16\,\zeta^{-1}t^{2\gamma}.$$
By Chebyshev's inequality
$$\Pr[N_{0}>t^{3\gamma}]\leq t^{-3\gamma} \mathbb{E}\big[ N_{0}\big]\leq 16\,\zeta^{-1}t^{-\gamma}\longrightarrow 0,  $$
since $\zeta$ converges to $\sigma$ for large $t$.

We still need to show that $\Pr\big[ |v_{n}|> 2t^{\gamma} \text{ for some } n\leq N \big]$ is small.
$$\Pr\big[ |v_{n}|> 2t^{\gamma} \text{ for some } n\leq N \big]\leq \mathbb{E}\big[N\big]\, \Pr\big[ |v_{n}  |\geq 2 t^{\gamma} \big].$$
By our remarks above $\mathbb{E}\big[N\big]\leq \mathbb{E}\big[N_{0}\big]=\mathit{O}(t^{2\gamma})$. Using Chebyshev's inequality, and the bound on the fourth moment of a single momentum jump by $\rho$,
\begin{align*}
16\, t^{4\gamma}\Pr\big[ |v_{n} |\geq t^{2\gamma} \big]\leq
\mathbb{E}\big[ v_{n}^{4}\big] \leq  \rho.
\end{align*}
Putting the above inequalities together
$$ \Pr\big[ |v_{n}   |\geq 2 t^{\gamma},\text{ for some } n\leq N\big]\leq \mathbb{E}[N_{0}]\, \Pr\big[ |v_{n}|\geq 2 t^{\gamma} \big]< \rho\,\zeta^{-1}\,  t^{-2\gamma}.$$
Hence the event that the last jump $v_{n}$

 We have shown that
$\pi_{\uparrow,\,t}$, $\pi_{\downarrow,\,t}$ converge to
$\pi_{\infty}$ in $L^{1}(\R^{+}\times \R^{+})$ at $t\rightarrow
\infty$. The convergence of $\pi_{\uparrow,\,t}$ and $\pi_{\uparrow,\,t}$ to $\pi_{\infty,\,t}$ is uniform over $|d|\leq \frac{1}{2}t^{\gamma}$ by the uniformity in Lemma~\ref{BC2} and by the uniformity in the bounds above.

\end{proof}

\begin{proposition}\label{BC3}
Assume~(\textit{i})-(\textit{ii}) from List~\ref{AssumpTwo}.  Let
$(X_{s},K_{s})$ evolve from an initial point $(x,k)$ with $k\geq
t^{\beta}$.  Define $S=\{y\in \R \, |\, k-t^{\gamma}-d
<y<k+t^{\gamma}\} $ for $0< \gamma<\frac{1}{4}\beta $, and the
densities
\begin{eqnarray*}
\phi_{\uparrow,\,t}(a,\,v)&=& \sum_{n=1}^{\infty} \textup{Pr}[X_{t_{n}}=a,\,K_{t_{n}}-k-t^{\gamma}=v \geq 0,\,\, K_{t_{m}}\in S     \text{ for } 0\leq m< n ]1_{ K_{t_{n} }>S}, \\
\phi_{\downarrow,\,t}(a,\,v)&=&\sum_{n=1}^{\infty}
\textup{Pr}[X_{t_{n}}=a,\, K_{t_{n}}-k+t^{\gamma}+d= -v \leq
0,\text{ for }    K_{t_{m}}\in S\,\, 0\leq m< n ]1_{ K_{t_{n}}<S}.
\end{eqnarray*}
In the limit $t\rightarrow \infty$, there is $L^{1}([0,1]\times\R^{+})$ convergence
$$\phi_{\uparrow,\,t}\longrightarrow p_{\uparrow}\, \phi_{\infty}(a,\,v) ,\,\phi_{\downarrow,\,t}\longrightarrow p_{\downarrow}\,\phi_{\infty}(a,\,v),$$
where $p_{\uparrow}$ and $p_{\downarrow}$ are probabilities that
the averaged random walk exits above or below $S$ respectively.

Moreover, the convergence is uniform for $|d|\leq \frac{1}{2}t^{\gamma}$.

\end{proposition}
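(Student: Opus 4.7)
The plan is to deduce Proposition~\ref{BC3} from Proposition~\ref{BC2NHalf} together with the high-momentum position-jump factorization already encoded in Lemma~\ref{SingleJump}.  Introduce the joint density $\Phi_t(a,v,w)$ on $[0,1]\times \R^+\times\R^+$ of the triple (exit position $X_{t_n}$, over-jump $v=K_{t_n}-k-t^\gamma$, last jump size $w=K_{t_n}-K_{t_{n-1}}$), summed over the random upward-exit step $n$.  By definition $\phi_{\uparrow,t}(a,v)=\int_{\R^+}dw\,\Phi_t(a,v,w)$ while $\pi_{\uparrow,t}(v,w)=\int_0^1 da\,\Phi_t(a,v,w)$ is the density already controlled in Proposition~\ref{BC2NHalf}.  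The target is an approximate factorization
$$\Phi_t(a,v,w)\;=\;f_\infty(a\mid w)\,\pi_{\uparrow,t}(v,w)\,\bigl(1+\mathit{O}(t^{-\beta})\bigr),\qquad f_\infty(a\mid w)\;:=\;\frac{\kappa(a)\mathcal{P}_a(w)}{\int_0^1 da'\,\kappa(a')\mathcal{P}_{a'}(w)},$$
uniformly in $(a,v,w)$ and in $|d|\leq \tfrac12 t^\gamma$.  Granted this, Proposition~\ref{BC2NHalf} gives $\pi_{\uparrow,t}\to p_\uparrow\pi_\infty$ in $L^1(\R^+\times\R^+)$, and integrating in $w$ against the weight $f_\infty(a\mid w)$ reproduces precisely the density $p_\uparrow\phi_\infty(a,v)$ defined in~\eqref{Jazz}.

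For the factorization I would use the Markov property at the penultimate jump time $t_{n-1}$, conditioning on the state $(x',k')=(X_{t_{n-1}},K_{t_{n-1}})$.  Conditionally on $(x',k')$, the joint density of (position, jump size) at the next jump is exactly $\tilde{r}_{(x',k')}(a)\,\mathcal{P}_a(w)$, with $\tilde{r}_{(x',k')}$ as in Lemma~\ref{SingleJump}.  On the exit event the penultimate momentum lies in $S$, so $|k'|\geq k-t^\gamma-d\geq \tfrac12 t^\beta$ for large $t$, and Part~(2) of Lemma~\ref{SingleJump} yields
$$\tilde{r}_{(x',k')}(a)\,\mathcal{P}_a(w)\;=\;\frac{\kappa(a)\mathcal{P}_a(w)}{\bar\kappa}\,\bigl(1+\mathit{O}(t^{-\beta})\bigr),$$
uniformly in $a\in[0,1]$, $w\in\R$, and admissible $(x',k')$.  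Since this is the \emph{only} way $a$ enters the trajectory density after one conditions on $(x',k')$, the earlier history $v_1,\ldots,v_{n-1}$, and the values $(v,w)$, integrating everything except $a$ detaches the $a$-dependence into the single factor $\kappa(a)\mathcal{P}_a(w)$; renormalizing by $\int\!da$ produces $f_\infty(a\mid w)$ and leaves the remaining integrand summing to $\pi_{\uparrow,t}(v,w)$.

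The conclusion is a triangle inequality in $L^1([0,1]\times\R^+)$:
$$\phi_{\uparrow,t}(a,v)-p_\uparrow\phi_\infty(a,v)=\int dw\, f_\infty(a\mid w)\bigl(\pi_{\uparrow,t}(v,w)-p_\uparrow\pi_\infty(v,w)\bigr)+\int dw\,\bigl(\Phi_t(a,v,w)-f_\infty(a\mid w)\pi_{\uparrow,t}(v,w)\bigr).$$
Using $\int_0^1 f_\infty(a\mid w)\,da=1$ and Fubini, the first term is bounded in $L^1$ by $\|\pi_{\uparrow,t}-p_\uparrow\pi_\infty\|_{L^1(\R^+\times\R^+)}$, which vanishes by Proposition~\ref{BC2NHalf} uniformly for $|d|\leq \tfrac12 t^\gamma$; the second term is bounded by $\mathit{O}(t^{-\beta})\|\pi_{\uparrow,t}\|_{L^1}=\mathit{O}(t^{-\beta})$, manifestly uniform in $d$.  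The downward case $\phi_{\downarrow,t}\to p_\downarrow\phi_\infty$ is verbatim the same argument with the obvious sign changes.

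The main obstacle is the bookkeeping for the factorization step: one must check that the single-jump error bound of Lemma~\ref{SingleJump} survives after being multiplied by all the weights that make $(x',k')$ the penultimate state and $(a,w)$ an upward-exiting pair, and that these errors are controlled uniformly over the random number of prior jumps.  The cleanest route is to condition first on the exit index $n$ and on the earlier jump sizes $v_1,\ldots,v_{n-1}$, use Lemma~\ref{JointDen} to replace their joint density by the product $\tilde P(v_1)\cdots\tilde P(v_{n-1})$, apply Part~(2) of Lemma~\ref{SingleJump} to the last jump from $(x',k')$, and finally sum over $n$ using the exit-time bounds already established in the proof of Proposition~\ref{BC2NHalf}.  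Uniformity in $|d|\leq \tfrac12 t^\gamma$ is then inherited from Proposition~\ref{BC2NHalf}, since the $\mathit{O}(t^{-\beta})$ correction from Lemma~\ref{SingleJump} is independent of $d$.
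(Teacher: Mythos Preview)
Your approach is essentially the paper's: introduce the joint exit density $\Phi_t(a,v,w)$, factor off the position variable via the last-jump distribution from Lemma~\ref{SingleJump}, and reduce to Proposition~\ref{BC2NHalf} by a triangle inequality.

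One correction is needed in your factorization step. The joint density of $(X_{t_n},\,K_{t_n}-K_{t_{n-1}})$ given the penultimate state $(x',k')$ is not \emph{exactly} $\tilde r_{(x',k')}(a)\,\mathcal P_a(w)$: since $w=K_{t_n}-K_{t_{n-1}}$ includes the Hamiltonian drift accrued between $t_{n-1}$ and the next jump time, the correct expression is $\tilde r_{(x',k')}(a)\,\mathcal P_a\bigl(w-\Delta(a)\bigr)$ with $|\Delta(a)|\leq 4\bar V/|k'|$ by Lemma~\ref{RealBasics}. This is precisely where assumption~(\textit{ii}) of List~\ref{AssumpTwo} enters (and is the estimate~\eqref{TributeBand} inside Lemma~\ref{JointDen}): the derivative bound gives $\mathcal P_a(w-\Delta(a))=\mathcal P_a(w)\bigl(1+\mathit O((1+|w|)t^{-\beta})\bigr)$, so the multiplicative error is $\mathit O(t^{\gamma-\beta})$ rather than $\mathit O(t^{-\beta})$, and only after one truncates to $|w|\leq 2t^\gamma$. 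The paper carries this out by restricting to the set $\{v<w\leq 2t^\gamma\}$ via a semi-norm $\|\cdot\|_\infty^{(t)}$, showing the complementary probability is $\mathit O(t^{-2\gamma})$ (already done in the proof of Proposition~\ref{BC2NHalf}), and bounding the ratio $\Phi'_{\uparrow,t}\,\pi_\infty/(\pi'_{\uparrow,t}\,\Phi_\infty)-1$ in that semi-norm before the final triangle inequality. Your sketch of the bookkeeping via Lemma~\ref{JointDen} is pointing at the right tool, but the drift term and the $w$-truncation should be made explicit.
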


\begin{proof}
Define the joint density $\Phi_{\uparrow,\,t}(a,\,v,\,w)$ for the
position $a$, increment $v$ for the over-jump of the boundary of
$S$, and the size of the jump $w$ which exits above $S$:
$$
\Phi_{\uparrow,\,t}(a,\,v,\,w)= \sum_{n=1}^{\infty}
\textup{Pr}[X_{t_{n}}=a,\,K_{t_{n}}=w,  \,K_{t_{n}}-k-t^{\gamma}=v
\geq 0,\,\, K_{t_{m}}\in S     \text{ for } 0\leq m< n ]1_{
K_{t_{n} }>S}.
$$
The definition for  $\Phi_{\downarrow,\,t}(a,\,v,\,w)$ is
analogous. Let primed densities be normalized (e.g.
$\Phi_{\uparrow,\,t}^{\prime}=p_{\uparrow}^{-1}\Phi_{\uparrow,\,t}$
and
$\pi_{\uparrow,\,t}^{\prime}=p_{\uparrow,t}^{-1}\pi_{\uparrow,\,t}$,
where $\p_{\uparrow,t}$ is the probability of leaving  $S$ from
the top).  We will show that $\Phi_{\uparrow,\,t}^{\prime}$
converges in $L^{1}$ to
\begin{align} \label{Jass}
\Phi_{\infty}(a,v,w)= \pi_{\infty}(v,\,w) \frac{ \frac{\kappa(a)}{\bar{\kappa}} \mathcal{P}_{a}(w) }{ \tilde{P}(w) }.
\end{align}
Since $\phi^{\prime}_{\uparrow,t}(a,v)= \int_{\R^{+}}
dw\,\Phi_{\uparrow,\,t}(a,\,v,\,w)$, and $p_{\uparrow,\,t}$
converges to the probability that the random walk exits in the up
direction by Proposition~\ref{BC2NHalf}, this would prove the
result.  In particular, since $|d|\leq \frac{1}{2}t^{\gamma}$
neither of the probabilities $p_{\infty}(\uparrow,t)$ or
$p_{\infty}(\downarrow,t)$ will be close to zero.  Also by the proof
of~\ref{BC2NHalf}, the probability that the final momentum increment
$w=K_{t_{N}}-K_{t_{N-1}}$ is greater than $2t^{\gamma}$, where $t_{N}$
is the time of the last momentum jump leaving $S$ and $t_{N-1}$ is the time of
previous momentum jump, decays as $\mathit{O}(t^{-2\gamma})$.  This is true
also for the random walk case.  $\|\Phi_{\infty}\chi(w> 2t^{\gamma})\|_{1} $ and $\|\Phi_{\uparrow,\,t}\chi(w> 2t^{\gamma})\|_{1}$ thus vanish for large $t$.

 Define $\|g\|_{1}^{(t)}=\|g\,\chi(w\leq 2 t^{\gamma},\,v<w)\|_{1}$.  We placed the constraint $v<w$ in the indicator so that by Part (3) of Lemma~\ref{HighBrow}, $\pi_{\infty}^{\prime}(v,w)$ is strictly positive and, in particular, we can divide by it. $\pi_{\uparrow,\,t}^{\prime}(v,\,w)$ is also strictly positive for $v<w<  2 t^{\gamma}$, since there is a non-zero density for jumping from $k$ to $k+t^{\gamma}-w+v$ in one jump and then to $k+t^{\gamma}+v$ on a second jump.

  We now focus on showing that $\|\Phi_{\infty}-\Phi_{\uparrow,\,t}\|_{1}^{(t)}$ tends
 to zero.
\begin{multline}
\frac{\Phi_{\uparrow,\,t}^{\prime}(a,\,v,\,w)}{\pi_{\uparrow,\,t}^{\prime}(v,\,w)
}  =  \mathbb{E}\Big[\Pr\big[X_{t_{N}}=a\, \big| \,
K_{t_{N}}=t^{\beta}+t^{\gamma}+v,\,K_{t_{N}}-K_{t_{N-1}}=w,\,
X_{t_{N-1}}   \big] \Big],
\end{multline}
since
$\pi_{\uparrow,\,t}^{\prime}(v,\,w)=\int_{0}^{1}da\,\Phi_{\uparrow,\,t}^{\prime}(a,\,v,\,w)$.
We may write the conditional probability density in the
expectation above as
\begin{multline}\label{OK}
\Pr\big[X_{t_{N}}=a\, \big| \, K_{t_{N}}=t^{\beta}+t^{\gamma}+v,\,K_{t_{N}}-K_{t_{N-1}}=w,\,  X_{t_{N-1}}   \big]\\ = \frac{\mathcal{P}_{a}\big(w-\Delta(a)\big)\,\tilde{r}_{(X_{t_{N-1}},\,K_{t_{N-1}})}(a) }{\textup{P}_{(X_{t_{N-1}},\,K_{t_{N-1}})}(w)}.
\end{multline}
$\Delta(a)$ is a drift term which was defined in Lemma~\ref{JointDen}. $\textup{P}_{(x,\,k)}(w)$ are defined as in the proof of Lemma~\ref{JointDen} as the difference in momentum (sum of one momentum jump plus some drift) between a starting time with state $(x,k)$ and next time jump time.  $\tilde{r}_{(x,k)}(a)$ a defined as in Lemma~\ref{SingleJump} as the distribution for the position of the particle at the next momentum jump.

Analogously to $\|g\|_{1}^{(t)}$ define the semi-norm $\|g\|_{\infty}^{(t)}=\|g\,\chi(|w|\leq 2\,t^{\gamma},\,v<w) \|_{\infty}$.

Putting together~(\ref{Jass})-(\ref{OK}),
\begin{multline*}
 \| \frac{\Phi_{\uparrow,t}^{\prime}}{\pi^{\prime}_{\uparrow,t} }\frac{\pi_{\infty}}{\Phi_{\infty}}-1 \|_{\infty}^{(t)}\leq  \mathbb{E}\Big[\sup_{w\geq 2t^{\gamma},\, a }\Big| \tilde{r}_{(X_{t_{N-1}},\,K_{t_{N-1}})}(a)\frac{\bar{\kappa}}{\kappa(a)}\,\frac{ \tilde{P}(w) }{\textup{P}_{(X_{t_{N-1}},\,K_{t_{N-1}})}(w) }     -1             \Big|  \Big] \\ \leq \sup_{y\geq t^{\beta},\, a,\,x }\,\big[ \tilde{r}_{(x,\,y)}(a)\frac{\bar{\kappa}}{\kappa(a)}\big]\,\mathbb{E}\Big[\Big|\,\frac{ \tilde{P}(w) }{\textup{P}_{(X_{t_{N-1}},\,K_{t_{N-1}})}(w) }-1\Big]+            \mathbb{E}\Big[\sup_{a }\Big| \tilde{r}_{(X_{t_{N-1}},\,K_{t_{N-1}})}(a)\frac{\bar{\kappa}}{\kappa(a)}   -1             \Big|  \Big] \Big| \\ < \big(1+4\mathcal{R}\nu^{-2}t^{-\beta}+\mathit{O}(t^{-2\beta})\big)\big( 2ct^{\gamma-\beta} \big)   +4\mathcal{R}\nu^{-2}t^{-\beta}+\mathit{O}(t^{-2\beta})=\mathit{O}(t^{\gamma-2\beta}),
\end{multline*}
where in the strict inequality we have used Part (2) Lemma~\ref{SingleJump} and Lemma~\ref{JointDen}:
 $$\sup_{0\leq a\leq 1}\big|\tilde{r}_{(x,k)}(a)\frac{\bar{\kappa}}{\kappa(a)}-1  \big|\leq 8\mathcal{R}\nu^{-2}t^{-\beta},\, \sup_{w\in R^{+} }\big|\frac{ \textup{P}_{(x,k)}(w) }{\tilde{P}(w) }-1\big| \leq 2ct^{\gamma-\beta} , \text{ for } |k|\geq \frac{1}{2}t^{\beta}    $$
(where we doubled the constant factor in front of the higher order term to get ride of the lower), and we used that $|\frac{1}{b}-1|\leq 2\,b$ for $b$ in a small neighborhood around one.

By adding and subtracting by $
\Phi_{\uparrow,\,t}^{\prime}\frac{\pi_{\uparrow,\,\infty}^{\prime}
}{ \pi_{\uparrow,\,t}^{\prime}}   $,   and the triangle inequality
\begin{multline*}
\|\Phi^{\prime}_{\uparrow,\,t}-\Phi_{\uparrow,\,\infty}^{\prime} \|_{1}^{(t)}
 \leq \left \|  \Big(\frac{\Phi_{\uparrow,\,t}^{\prime}}{ \pi_{\uparrow,\,t}^{\prime} }   -\frac{\Phi_{\uparrow,\,\infty}^{\prime}}{ \pi_{\uparrow,\,\infty}^{\prime}} \Big)\pi_{\uparrow,\,\infty}^{\prime}\right \|_{1}^{(t)} +
\left \| \frac{\Phi_{\uparrow,\,t}^{\prime}}{
\pi_{\uparrow,\,t}^{\prime} }\Big(
\pi_{\uparrow,\,t}^{\prime}-\pi_{\uparrow,\,\infty}^{\prime}\Big)\right
\|_{1}^{(t)}
\\
\leq  \|\pi_{\uparrow,\,\infty}^{\prime}\|_{1} \left \|
\frac{\Phi_{\uparrow,\,t}^{\prime}}{ \pi_{\uparrow,\,t}^{\prime} }
-\frac{\Phi_{\uparrow,\,\infty}^{\prime}}{
\pi_{\uparrow,\,\infty}^{\prime} } \right \|_{\infty}^{(t)}  + \|
\pi_{\uparrow,\,t}^{\prime}-\pi_{\uparrow,\,\infty}^{\prime}
\|_{1} \left \| \frac{\Phi_{\uparrow,\,t}^{\prime}}{
\pi_{\uparrow,\,t}^{\prime} }\right \|_{\infty}^{(t)}.
\end{multline*}
$\|\pi_{\uparrow,\,\infty}^{\prime}\|_{1}=1$ and  so with the
analysis above the left term on the second line tends to zero. For
the right term, $\|
\pi_{\uparrow,\,t}^{\prime}-\pi_{\uparrow,\,\infty}^{\prime}
\|_{1}$ vanishes by Proposition~\ref{BC2NHalf} and $\left \|
\frac{\Phi_{\uparrow,\,t}^{\prime}}{ \pi_{\uparrow,\,t}^{\prime}
}\right \|_{\infty}^{(t)}$ is bounded by $\left \| \frac{\Phi_{\infty}}{
\pi_{\infty} }\right \|_{\infty}^{(t)} $ plus a small number by the
analysis above.

\end{proof}

Proposition~\ref{BC3NHalf} states the same results as in
Proposition~\ref{BC3} when there is some conditioning on the
future of the particle's trajectories.  For a particle starting
with momentum in $(t^{\beta},\,2t^{\beta})$,  let
$\theta_{\uparrow}$, $\theta_{\downarrow}$, and $\tau$ be the
first times the the particle has a momentum landing above
$2\,t^{\beta}$, below $-2\,t^{\beta}$ and below $t^{\beta}$
respectively.  By the same argument as in Lemma~\ref{FiniteTimes},
$\theta_{\uparrow}\wedge \theta_{\downarrow}$ has finite
expectation.

\begin{proposition}\label{BC3NHalf}
Assume~(\textit{i})-(\textit{ii}) from List~\ref{AssumpTwo}.
Consider the dynamics conditioned to have momentum jumps bounded
by $t^{\frac{\gamma}{2} }$. Let $(X_{s},K_{s})$ evolve from an
initial point $(x,k)$ with $
\frac{3}{2}t^{\beta}-t^{\frac{\gamma}{2} }\leq  k\leq
\frac{3}{2}t^{\beta}$.  Let  $S=\{y\in \R \, |\,  k-t^{\gamma}
<y<k+t^{\gamma}\} $ for $0< \gamma<\frac{1}{4}\beta $.

The boundary crossing densities $\psi_{\uparrow,\,t}$,\, $\psi_{\downarrow,\,t}$ for trajectories that are conditioned so that $\tau<\theta_{\uparrow}<\theta_{\downarrow}$,
have $L^{1}([0,1]\times\R^{+})$ convergence
$$\psi_{\uparrow,\,t}\longrightarrow \frac{1}{2}\, \phi_{\infty} ,\,\psi_{\downarrow,\,t}\longrightarrow \frac{1}{2}\,\phi_{\infty}.   $$

The convergence is uniform for the allowed range of $k$.  The same statements hold for $\psi_{\uparrow,\,t}$,\, $\psi_{\downarrow,\,t}$ for trajectories which are conditioned so that $\theta_{\downarrow}<\theta_{\uparrow}$.

\end{proposition}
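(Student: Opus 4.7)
The plan is to derive Proposition~\ref{BC3NHalf} from Proposition~\ref{BC3} by showing that the long-range conditioning $\{\tau<\theta_{\uparrow}<\theta_{\downarrow}\}$ reweights the short-range boundary-crossing distribution of $S$ only by an asymptotically position-independent factor. Since $S$ is centered at the starting momentum $k$ and the averaged jump density $\tilde{P}$ is symmetric by assumption (\textit{III}), Proposition~\ref{BC3} applied with $d=0$ yields $\phi_{\uparrow,t}\to \tfrac{1}{2}\phi_{\infty}$ and $\phi_{\downarrow,t}\to \tfrac{1}{2}\phi_{\infty}$ in $L^{1}([0,1]\times\R^{+})$, uniformly in $k\in [\frac{3}{2}t^{\beta}-t^{\gamma/2},\,\frac{3}{2}t^{\beta}]$. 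This gives the unconditioned result; the task is to absorb the conditioning.

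By the strong Markov property at the first exit time of $S$, writing $E=\{\tau<\theta_{\uparrow}<\theta_{\downarrow}\}$, I would write
\[ \psi_{\uparrow,t}(a,v)\;=\;\frac{\phi_{\uparrow,t}(a,v)\,p^{*}(a,v)}{\Pr[E]},\qquad \psi_{\downarrow,t}(a,v)\;=\;\frac{\phi_{\downarrow,t}(a,v)\,q^{*}(a,v)}{\Pr[E]}, \]
where $p^{*}(a,v)=\Pr_{(a,\,k+t^{\gamma}+v)}[E']$ and $q^{*}(a,v)=\Pr_{(a,\,k-t^{\gamma}-v)}[E']$, with $E'$ the natural re-expression of $E$ for the continuation from the restarted point. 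Substituting into $\Pr[E]=\int\phi_{\uparrow,t}\,p^{*}+\int\phi_{\downarrow,t}\,q^{*}$, the desired convergence reduces to producing a single normalizing quantity $\rho^{*}(t)>0$ such that
\[ \sup_{a\in[0,1],\,v\in[0,t^{\gamma/2}]}\left|\frac{p^{*}(a,v)}{\rho^{*}(t)}-1\right|+\sup_{a\in[0,1],\,v\in[0,t^{\gamma/2}]}\left|\frac{q^{*}(a,v)}{\rho^{*}(t)}-1\right|\longrightarrow 0, \]
since then $\Pr[E]\sim\rho^{*}(t)(p_{\uparrow,t}+p_{\downarrow,t})\sim\rho^{*}(t)$ and the $\rho^{*}$'s cancel in the ratio.

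Establishing this uniformity is the main obstacle. Both starting momenta $k\pm t^{\gamma}$ lie in the interior $(t^{\beta},2t^{\beta})$ and differ by only $O(t^{\gamma})\ll O(t^{\beta})$; the variable $v\in[0,t^{\gamma/2}]$ contributes an even smaller perturbation. I would follow the template of Lemma~\ref{ExitProb}: decompose $p^{*}(a,v)$ by iterating the strong Markov property at successive hitting times of the barriers involved in $E'$ (first the drop through $t^{\beta}$, then the race between $+2t^{\beta}$ and $-2t^{\beta}$). Each such passage produces an integral of a boundary-crossing density against the residual-event probability from the next start, and Proposition~\ref{BC3}, applied with $\beta$ as the ambient exponent and an appropriate intermediate exponent in place of $\gamma$, gives uniform $L^{1}$-convergence of these crossing densities to limits that depend on the starting coordinates only through the symmetric averaged-random-walk hitting probabilities of the macroscopic interval. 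A symmetric-walk hitting probability for an interval of width $\Theta(t^{\beta})$ has continuity modulus $O(t^{-\beta})$ in the starting position, so the difference between $p^{*}$ and $q^{*}$ is $O(t^{\gamma-\beta})\to 0$. Dependence on the torus coordinate $a$ disappears in the limit because the intermediate macroscopic crossings mix over the torus (Part~(1) of Lemma~\ref{SingleJump} provides the required quantitative ergodicity).

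With the uniformity in hand, $\Pr[E]\sim\rho^{*}(t)$ and therefore $\psi_{\uparrow,t}\sim\phi_{\uparrow,t}\to\tfrac{1}{2}\phi_{\infty}$ and $\psi_{\downarrow,t}\sim\phi_{\downarrow,t}\to\tfrac{1}{2}\phi_{\infty}$ in $L^{1}$, with uniformity over the allowed range of $k$ inherited from Proposition~\ref{BC3}. The case of conditioning on $\theta_{\downarrow}<\theta_{\uparrow}$ is handled by the identical template with the roles of the $+$ and $-$ barriers swapped in the description of $E'$; note that under the jump cap the particle must pass through momentum $t^{\beta}$ on its way to $-2t^{\beta}$, so $\tau<\theta_{\downarrow}$ holds automatically there.
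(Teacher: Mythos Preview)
Your overall strategy is sound and is almost certainly the intended argument. Note that the paper's printed proof of this proposition is a verbatim duplicate of the proof of Proposition~\ref{BC3} and never engages with the conditioning on $\{\tau<\theta_{\uparrow}<\theta_{\downarrow}\}$; this is evidently an editing slip, so there is no genuine paper proof to compare against. Writing $\psi_{\uparrow,t},\psi_{\downarrow,t}$ as $\phi_{\uparrow,t},\phi_{\downarrow,t}$ reweighted by $p^{*},q^{*}$ via the strong Markov property at the exit time of $S$, and then showing those weights collapse to a common $\rho^{*}(t)$, is exactly the mechanism of Lemma~\ref{ExitProb}, and the cancellation you describe is correct.

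One point deserves tightening. Your sentence ``a symmetric-walk hitting probability for an interval of width $\Theta(t^{\beta})$ has continuity modulus $O(t^{-\beta})$'' is true for the \emph{averaged} walk, but Proposition~\ref{BC3} only transfers the full dynamics to the averaged walk on intervals of width $t^{\gamma'}$ with $\gamma'<\tfrac14\beta$, not on the macroscopic scale $t^{\beta}$ (Lemma~\ref{JointDen} would need $\sim t^{2\beta}$ steps there, far beyond its range). The clean way to run the template is: pick $\gamma'\in(\gamma,\tfrac14\beta)$ and let every candidate start (both $k+t^{\gamma}+v$ and $k-t^{\gamma}-v$, all within $O(t^{\gamma})$ of $k$) evolve until it exits the \emph{common} interval $I'=(k-t^{\gamma'},k+t^{\gamma'})$. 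These starts correspond to asymmetry parameters $|d|=O(t^{\gamma})\le\tfrac12 t^{\gamma'}$ in Proposition~\ref{BC3}, so the averaged-walk exit probabilities for $I'$ vary across the whole family only by $O(t^{\gamma-\gamma'})$; Proposition~\ref{BC3} then gives that the full-dynamics exit laws from $I'$ agree in $L^{1}$ up to $o(1)+O(t^{\gamma-\gamma'})$, and the residual probability of $E$ from the exit of $I'$ is a common functional of that exit point. This yields diameter $o(1)$ for $\{p^{*},q^{*}\}$, after which you manufacture $\rho^{*}(t)$ as in Lemma~\ref{ExitProb}. You should also record that $\Pr[E]$ stays bounded away from zero so the ratios are genuinely controlled; the quickest route is optional stopping for the jump martingale $M$ at $\tau\wedge\theta_{\uparrow}$ (and then at $\theta_{\uparrow}\wedge\theta_{\downarrow}$), using Lemma~\ref{HEDrift} to show the accumulated potential drift over those stopping times is $o(t^{\beta})$, which pins the gambler's-ruin probabilities near $\tfrac12$ and $\tfrac34$ respectively.
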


\begin{proof}
Define the joint density $\Phi_{\uparrow,\,t}(a,\,v,\,w)$ for the
position $a$, increment $v$ for the over-jump of the boundary of
$S$, and the size of the jump $w$ which exits above $S$:
$$
\Phi_{\uparrow,\,t}(a,\,v,\,w)= \sum_{n=1}^{\infty}
\textup{Pr}[X_{t_{n}}=a,\,K_{t_{n}}=w,  \,K_{t_{n}}-k-t^{\gamma}=v
\geq 0,\,\, K_{t_{m}}\in S     \text{ for } 0\leq m< n ]1_{
K_{t_{n} }>S}.
$$
The definition for  $\Phi_{\downarrow,\,t}(a,\,v,\,w)$ is
analogous. Let primed densities be normalized (e.g.
$\Phi_{\uparrow,\,t}^{\prime}=p_{\uparrow}^{-1}\Phi_{\uparrow,\,t}$
and
$\pi_{\uparrow,\,t}^{\prime}=p_{\uparrow,t}^{-1}\pi_{\uparrow,\,t}$,
where $\p_{\uparrow,t}$ is the probability of leaving  $S$ from
the top).  We will show that $\Phi_{\uparrow,\,t}^{\prime}$
converges in $L^{1}$ to
\begin{align} \label{Jass}
\Phi_{\infty}(a,v,w)= \pi_{\infty}(v,\,w) \frac{ \frac{\kappa(a)}{\bar{\kappa}} \mathcal{P}_{a}(w) }{ \tilde{P}(w) }.
\end{align}
Since $\phi^{\prime}_{\uparrow,t}(a,v)= \int_{\R^{+}}
dw\,\Phi_{\uparrow,\,t}(a,\,v,\,w)$, and $p_{\uparrow,\,t}$
converges to the probability that the random walk exits in the up
direction by Proposition~\ref{BC2NHalf}, this would prove the
result.  In particular, since $|d|\leq \frac{1}{2}t^{\gamma}$
neither of the probabilities $p_{\infty}(\uparrow,t)$ or
$p_{\infty}(\downarrow,t)$ will be close to zero.  Also by the proof
of~\ref{BC2NHalf}, the probability that the final momentum increment
$w=K_{t_{N}}-K_{t_{N-1}}$ is greater than $2t^{\gamma}$, where $t_{N}$
is the time of the last momentum jump leaving $S$ and $t_{N-1}$ is the time of
previous momentum jump, decays as $\mathit{O}(t^{-2\gamma})$.  This is true
also for the random walk case.  $\|\Phi_{\infty}\chi(w> 2t^{\gamma})\|_{1} $ and $\|\Phi_{\uparrow,\,t}\chi(w> 2t^{\gamma})\|_{1}$ thus vanish for large $t$.

 Define $\|g\|_{1}^{(t)}=\|g\,\chi(w\leq 2 t^{\gamma},\,v<w)\|_{1}$.  We placed the constraint $v<w$ in the indicator so that by Part (3) of Lemma~\ref{HighBrow}, $\pi_{\infty}^{\prime}(v,w)$ is strictly positive and, in particular, we can divide by it. $\pi_{\uparrow,\,t}^{\prime}(v,\,w)$ is also strictly positive for $v<w<  2 t^{\gamma}$, since there is a non-zero density for jumping from $k$ to $k+t^{\gamma}-w+v$ in one jump and then to $k+t^{\gamma}+v$ on a second jump.

  We now focus on showing that $\|\Phi_{\infty}-\Phi_{\uparrow,\,t}\|_{1}^{(t)}$ tends
 to zero.
\begin{multline}
\frac{\Phi_{\uparrow,\,t}^{\prime}(a,\,v,\,w)}{\pi_{\uparrow,\,t}^{\prime}(v,\,w)
}  =  \mathbb{E}\Big[\Pr\big[X_{t_{N}}=a\, \big| \,
K_{t_{N}}=t^{\beta}+t^{\gamma}+v,\,K_{t_{N}}-K_{t_{N-1}}=w,\,
X_{t_{N-1}}   \big] \Big],
\end{multline}
since
$\pi_{\uparrow,\,t}^{\prime}(v,\,w)=\int_{0}^{1}da\,\Phi_{\uparrow,\,t}^{\prime}(a,\,v,\,w)$.
We may write the conditional probability density in the
expectation above as
\begin{multline}\label{OK}
\Pr\big[X_{t_{N}}=a\, \big| \, K_{t_{N}}=t^{\beta}+t^{\gamma}+v,\,K_{t_{N}}-K_{t_{N-1}}=w,\,  X_{t_{N-1}}   \big]\\ = \frac{\mathcal{P}_{a}\big(w-\Delta(a)\big)\,\tilde{r}_{(X_{t_{N-1}},\,K_{t_{N-1}})}(a) }{\textup{P}_{(X_{t_{N-1}},\,K_{t_{N-1}})}(w)}.
\end{multline}
$\Delta(a)$ is a drift term which was defined in Lemma~\ref{JointDen}. $\textup{P}_{(x,\,k)}(w)$ are defined as in the proof of Lemma~\ref{JointDen} as the difference in momentum (sum of one momentum jump plus some drift) between a starting time with state $(x,k)$ and next time jump time.  $\tilde{r}_{(x,k)}(a)$ a defined as in Lemma~\ref{SingleJump} as the distribution for the position of the particle at the next momentum jump.

Analogously to $\|g\|_{1}^{(t)}$ define the semi-norm $\|g\|_{\infty}^{(t)}=\|g\,\chi(|w|\leq 2\,t^{\gamma},\,v<w) \|_{\infty}$.

Putting together~(\ref{Jass})-(\ref{OK}),
\begin{multline*}
 \| \frac{\Phi_{\uparrow,t}^{\prime}}{\pi^{\prime}_{\uparrow,t} }\frac{\pi_{\infty}}{\Phi_{\infty}}-1 \|_{\infty}^{(t)}\leq  \mathbb{E}\Big[\sup_{w\geq 2t^{\gamma},\, a }\Big| \tilde{r}_{(X_{t_{N-1}},\,K_{t_{N-1}})}(a)\frac{\bar{\kappa}}{\kappa(a)}\,\frac{ \tilde{P}(w) }{\textup{P}_{(X_{t_{N-1}},\,K_{t_{N-1}})}(w) }     -1             \Big|  \Big] \\ \leq \sup_{y\geq t^{\beta},\, a,\,x }\,\big[ \tilde{r}_{(x,\,y)}(a)\frac{\bar{\kappa}}{\kappa(a)}\big]\,\mathbb{E}\Big[\Big|\,\frac{ \tilde{P}(w) }{\textup{P}_{(X_{t_{N-1}},\,K_{t_{N-1}})}(w) }-1\Big]+            \mathbb{E}\Big[\sup_{a }\Big| \tilde{r}_{(X_{t_{N-1}},\,K_{t_{N-1}})}(a)\frac{\bar{\kappa}}{\kappa(a)}   -1             \Big|  \Big] \Big| \\ < \big(1+4\mathcal{R}\nu^{-2}t^{-\beta}+\mathit{O}(t^{-2\beta})\big)\big( 2ct^{\gamma-\beta} \big)   +4\mathcal{R}\nu^{-2}t^{-\beta}+\mathit{O}(t^{-2\beta})=\mathit{O}(t^{\gamma-2\beta}),
\end{multline*}
where in the strict inequality we have used Part (2) Lemma~\ref{SingleJump} and Lemma~\ref{JointDen}:
 $$\sup_{0\leq a\leq 1}\big|\tilde{r}_{(x,k)}(a)\frac{\bar{\kappa}}{\kappa(a)}-1  \big|\leq 8\mathcal{R}\nu^{-2}t^{-\beta},\, \sup_{w\in R^{+} }\big|\frac{ \textup{P}_{(x,k)}(w) }{\tilde{P}(w) }-1\big| \leq 2ct^{\gamma-\beta} , \text{ for } |k|\geq \frac{1}{2}t^{\beta}    $$
(where we doubled the constant factor in front of the higher order term to get ride of the lower), and we used that $|\frac{1}{b}-1|\leq 2\,b$ for $b$ in a small neighborhood around one.

By adding and subtracting by $
\Phi_{\uparrow,\,t}^{\prime}\frac{\pi_{\uparrow,\,\infty}^{\prime}
}{ \pi_{\uparrow,\,t}^{\prime}}   $,   and the triangle inequality
\begin{multline*}
\|\Phi^{\prime}_{\uparrow,\,t}-\Phi_{\uparrow,\,\infty}^{\prime} \|_{1}^{(t)}
 \leq \left \|  \Big(\frac{\Phi_{\uparrow,\,t}^{\prime}}{ \pi_{\uparrow,\,t}^{\prime} }   -\frac{\Phi_{\uparrow,\,\infty}^{\prime}}{ \pi_{\uparrow,\,\infty}^{\prime}} \Big)\pi_{\uparrow,\,\infty}^{\prime}\right \|_{1}^{(t)} +
\left \| \frac{\Phi_{\uparrow,\,t}^{\prime}}{
\pi_{\uparrow,\,t}^{\prime} }\Big(
\pi_{\uparrow,\,t}^{\prime}-\pi_{\uparrow,\,\infty}^{\prime}\Big)\right
\|_{1}^{(t)}
\\
\leq  \|\pi_{\uparrow,\,\infty}^{\prime}\|_{1} \left \|
\frac{\Phi_{\uparrow,\,t}^{\prime}}{ \pi_{\uparrow,\,t}^{\prime} }
-\frac{\Phi_{\uparrow,\,\infty}^{\prime}}{
\pi_{\uparrow,\,\infty}^{\prime} } \right \|_{\infty}^{(t)}  + \|
\pi_{\uparrow,\,t}^{\prime}-\pi_{\uparrow,\,\infty}^{\prime}
\|_{1} \left \| \frac{\Phi_{\uparrow,\,t}^{\prime}}{
\pi_{\uparrow,\,t}^{\prime} }\right \|_{\infty}^{(t)}.
\end{multline*}
$\|\pi_{\uparrow,\,\infty}^{\prime}\|_{1}=1$ and  so with the
analysis above the left term on the second line tends to zero. For
the right term, $\|
\pi_{\uparrow,\,t}^{\prime}-\pi_{\uparrow,\,\infty}^{\prime}
\|_{1}$ vanishes by Proposition~\ref{BC2NHalf} and $\left \|
\frac{\Phi_{\uparrow,\,t}^{\prime}}{ \pi_{\uparrow,\,t}^{\prime}
}\right \|_{\infty}^{(t)}$ is bounded by $\left \| \frac{\Phi_{\infty}}{
\pi_{\infty} }\right \|_{\infty}^{(t)} $ plus a small number by the
analysis above.

\end{proof}

In the next corollary, like Proposition~\ref{BC1}, there is only one boundary.  However, we place an optional time constraint $\varrho\,t$, $\varrho\in (0,\infty]$ here on the amount of time the particle is allowed to have before reaching the boundary.

\begin{corollary}\label{BC4}
Assume~(\textit{i})-(\textit{ii}) from List~\ref{AssumpTwo}.  Let $(X_{s},K_{s})$ evolve according to the dynamics from some initial point $(x,k)$ for $2t^{\beta}\leq k \leq 4\,t^{\beta}$, $\frac{1}{4}\leq \beta <\frac{1}{2} $.  Define the density
$$\phi_{t}(a,\,v)= \sum_{n=1}^{\infty}\textup{Pr}[X_{t_{n}}=a,\, t^{\beta}-K_{t_{n}}= v\geq 0,\,\, K_{t_{m}}>t^{\beta}     \text{ for } 0\leq m< n,\,t_{n}< \varrho t ].$$
In the limit $t\rightarrow \infty$,
$$\phi_{t}(a,\,v)\rightarrow \phi_{\infty}(a,\,v).$$
The convergence is uniform over all starting points $(x,\,k)$.

\end{corollary}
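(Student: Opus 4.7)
The plan is to combine three ingredients: optional sampling to handle the time constraint, momentum-capping for analytical convenience, and iterated applications of Proposition~\ref{BC3} to reduce the one-sided first-passage problem to a sequence of short two-sided crossings.

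To dispense with the time constraint $\{t_{n}<\varrho t\}$, let $T=\min\{t_{n}:K_{t_{n}}<t^{\beta}\}$. Arguing as in Part~(1) of Lemma~\ref{FiniteTimes} via optional sampling on the submartingale $E_{t}$ and the Doob-Meyer decomposition of Lemma~\ref{DOOOOB}, one obtains $\mathbb{E}[T]=\mathit{O}(t^{2\beta})$. Since $\beta<\tfrac{1}{2}$, Chebyshev's inequality then gives $\Pr[T\geq \varrho t]=\mathit{O}(t^{2\beta-1})\to 0$, uniformly over $k\in[2t^{\beta},4t^{\beta}]$. By Lemma~\ref{CappedJumps} I would further condition on the dynamics making no momentum jump larger than $t^{\gamma_{0}}$ for some small $\gamma_{0}<\gamma$; the resulting error in $\phi_{t}$ is super-polynomially small.

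The core of the argument is to iterate Proposition~\ref{BC3}. Choose $\gamma$ with $0<\gamma<\beta/4$ and define macro-times $\tau_{1}<\tau_{2}<\cdots$ as successive first exit times from symmetric intervals $(K_{\tau_{j-1}}-t^{\gamma},\,K_{\tau_{j-1}}+t^{\gamma})$. Proposition~\ref{BC3} describes the density of $(X_{\tau_{j}},K_{\tau_{j}})$ at each macro-step as close to $p_{\uparrow}\phi_{\infty}$ on the upper boundary and $p_{\downarrow}\phi_{\infty}$ on the lower, with the convergence uniform in the starting state. The macro-chain $K_{\tau_{j}}$ thus behaves asymptotically like a random walk with symmetric $\pm t^{\gamma}$ steps that almost surely reaches $t^{\beta}$ within $\mathit{O}(t^{\beta-\gamma})$ steps. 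For the terminal macro-step, i.e., the one whose downward exit carries $K$ below $t^{\beta}$, apply Proposition~\ref{BC3} once more with the offset $d=K_{\tau_{j^{*}}}-t^{\beta}-t^{\gamma}$ so that the lower boundary of $S$ coincides exactly with $t^{\beta}$; this is admissible ($|d|\leq t^{\gamma}/2$) because by the capping the macro-chain necessarily enters its terminal strip with $K_{\tau_{j^{*}}}\in[t^{\beta}+t^{\gamma}/2,\,t^{\beta}+3t^{\gamma}/2]$ up to a choice of alignment for the intermediate intervals. Summing the downward exit densities over all possible $j^{*}$ and using that the eventual downward crossing has total probability one, $\phi_{t}$ converges to $\phi_{\infty}$.

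The main obstacle will be controlling the cumulative error across the $\mathit{O}(t^{\beta-\gamma})$ macro-iterations and preserving the uniformity of Proposition~\ref{BC3} through the chain. A renewal-type argument exploiting the contractivity and loss of memory of the macro-transitions, analogous to the fixed-point construction in the proof of Lemma~\ref{Equilibrium}, together with the monotone exponential convergence from Proposition~\ref{BC1} as quantitative input, should suffice. Uniformity in the initial $(x,k)$ then follows because within $\mathit{O}(1)$ macro-steps the chain forgets its start, after which the exit through $t^{\beta}$ is governed entirely by the universal local mechanism near the boundary that is captured by $\phi_{\infty}$.
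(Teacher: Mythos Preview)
Your handling of the time constraint contains a genuine gap. You claim that the first-passage time $T=\min\{t_{n}:K_{t_{n}}<t^{\beta}\}$ satisfies $\mathbb{E}[T]=\mathit{O}(t^{2\beta})$ by the optional-sampling argument of Lemma~\ref{FiniteTimes}. That argument works there because $\theta$ is the \emph{exit time from a bounded interval}: at time $\theta$ one has $E_{\theta}\leq \frac{1}{2}(2t^{1/4}+D)^{2}+\bar{V}$, which bounds $\mathbb{E}[E_{\theta_{\mathcal{T}}}-E_{0}]$ and hence $\mathbb{E}[[M]_{\theta_{\mathcal{T}}}]$ uniformly in the truncation $\mathcal{T}$. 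In your one-sided situation the momentum is free to wander upward before returning below $t^{\beta}$, so $E_{T\wedge\mathcal{T}}$ is not bounded uniformly in $\mathcal{T}$, and in fact the analogue for Brownian motion (first passage from level $a>0$ to $0$) has infinite expected hitting time. The conclusion $\mathbb{E}[T]=\mathit{O}(t^{2\beta})$ is therefore not available by this route. The paper instead shows $\Pr[T>\varrho t]\to 0$ by a distributional argument: writing $K_{r}=k+M_{r}+\int_{0}^{r}\frac{dV}{dx}(X_{s})\,ds$, the drift is controlled by Lemma~\ref{HEDrift}, and $t^{-1/2}M_{st}$ converges to Brownian motion by Theorem~\ref{MartBrown}, so $\sup_{0\leq s\leq\varrho}(-t^{-1/2}M_{st})$ converges in law to $\sup_{0\leq s\leq\varrho}(-\mathbf{B}_{s})$, which exceeds any fixed $\epsilon$ with probability close to one.

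On the core convergence, your macro-iteration scheme is more laborious than necessary and, as you note, leaves you with an error-accumulation problem over $\mathit{O}(t^{\beta-\gamma})$ steps. The paper avoids this entirely: it introduces hitting times $\sigma_{j},\theta_{j}$ only in a thin strip $[t^{\beta}+t^{\gamma},\,t^{\beta}+2t^{\gamma}]$ adjacent to the target boundary, uses Lemma~\ref{BoundryVariance} to show that the overshoot at each $\sigma_{j}$ is small (so that $K_{\sigma_{j}}\geq t^{\beta}+\tfrac{1}{2}t^{\gamma}$ with high probability), and then applies Proposition~\ref{BC3} \emph{once}, at the terminal $\sigma_{j}$, with $d=2(t^{\beta}+t^{\gamma}-K_{\sigma_{j}})$. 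Because that single application is uniform in the starting point, no error needs to be propagated through the earlier visits to the strip; the number of such visits is irrelevant, and the contractivity/renewal machinery you invoke is not needed.
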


\begin{proof}
First, we will argue that the probability $p_{t}=\|\phi_{t}\|_{1}$ that the particle jumps below $t^{\beta}$ in the interval $[0,\varrho\, t]$ approaches one as $t\rightarrow \infty$ for $\varrho\in \R^{+}$.
\begin{multline}
p_{t}=\Pr\Big[ \inf_{0\leq r \leq \varrho t}t^{-\frac{1}{2}}\,K_{r}\geq t^{\beta-\frac{1}{2}}        \Big]\\ \geq  \Pr\Big[ t^{\beta-\frac{1}{2}}+t^{-\frac{1}{2}}k  +\sup_{0\leq r \leq \varrho \,t}      \big|t^{-\frac{1}{2}}\,\int_{0}^{r}ds\,\frac{dV}{dx}(X_{s})  \big| \leq \sup_{0\leq r \leq \varrho t} -t^{-\frac{1}{2}}M_{r}       \Big],
\end{multline}
since $K_{r}=k+M_{r}+\int_{0}^{r}ds\,\frac{dV}{dx}(X_{s})$.  By Chebyshev's inequality and that $k\leq 4\,t^{\beta}$.
\begin{multline}\label{Hitomi}
\Pr\Big[t^{\beta-\frac{1}{2}}+t^{-\frac{1}{2}}k +\sup_{0\leq r \leq \varrho\,t} \big|t^{-\frac{1}{2}}\,\int_{0}^{r}ds\,\frac{dV}{dx}(X_{s})  \big| \geq \epsilon \Big] \\ \leq \epsilon^{-2}\,\mathbb{E}\Big[ 5\,t^{\beta-\frac{1}{2}}+\sup_{0\leq r \leq \varrho\, t} \Big|t^{-\frac{1}{2}}\,\int_{0}^{r}ds\,\frac{dV}{dx}(X_{s})   \Big|^{2} \Big].
\end{multline}
The right-side converges to zero by Lemma~\ref{HEDrift}, and thus the probability that the supremum of the drift is greater than any finite $\epsilon_{1}$ tends to zero.

By~(\ref{MartBrown}), $t^{-\frac{1}{2}}M_{st}$, $s\in[0,1]$ converges to a Brownian motion $\mathbf{B}_{s}$.  Since the supremum over an interval $[0,\,1]$ is a uniformly continuous functional on elements $L^{\infty}[0,1]$, $\sup_{0\leq s \leq st} -t^{-\frac{1}{2}}M_{st}$ converges in distribution to $\sup_{0\leq s\leq 1}-\mathbf{B}_{s}$. In the large $t$ limit,
$$\Pr[\sup_{0\leq s \leq \varrho\,t} -t^{-\frac{1}{2}}M_{st}>2\epsilon ]\longrightarrow \Pr [\sup_{0\leq s \leq \varrho} -\mathbf{B}_{s}>2\epsilon ]. $$
Thus we can pick $\epsilon$ so that $\Pr[\sup_{0\leq s \leq 1} -\mathbf{B}_{s}>2\epsilon ]$ is close to one, and then pick $t$ so that both $\Pr[\sup_{0\leq s \leq \varrho\,t} -t^{-\frac{1}{2}}M_{st}>2\epsilon ]$ is close to $\Pr[\sup_{0\leq s \leq \varrho} -\mathbf{B}_{s}>2\epsilon ]$ and~(\ref{Hitomi}) is close to zero.  By the inclusion exclusion principle
$$ p_{t}\geq 1-2\Pr[\sup_{0\leq s \leq \varrho\,t} -t^{-\frac{1}{2}}M_{st}>2\epsilon ]\wedge \Pr\Big[t^{\beta-\frac{1}{2}}+t^{-\frac{1}{2}}k +\sup_{0\leq r \leq \varrho\, t} \big|t^{-\frac{1}{2}}\,\int_{0}^{r}ds\,\frac{dV}{dx}(X_{s})  \big| \geq \epsilon \Big],$$
and so $p_{t}$ can be made arbitrarily close to $1$.

Now we continue with showing $L^{1}$ convergence of $\phi_{t}$ to $\phi_{\infty}$.  Since $(1-\|\phi_{t}\|)$ is negligible, it is sufficient to show $\phi_{t}\rightarrow (1-\|\phi_{t}\|)\phi_{\infty}$. Let $\gamma<\frac{1}{3}\beta$. We construct a sequence of hitting times
\begin{eqnarray*}
\sigma_{j}&=& \textup{min}\{s\in [0,\infty) \,\big| \, s>\theta_{j-1},\,  K_{s}< t^{\beta}+t^{\gamma} \},\\
\theta_{j}&=& \textup{min}\{s\in [0,\infty) \,\big| \, s>\sigma_{j},\,  K_{s}> t^{\beta}+2t^{\gamma}  \}.
\end{eqnarray*}
where we set $\theta_{0}=0$.  Let $N^{\prime}$ be the number of $\sigma_{j}$'s in the interval $[0,\varrho t]$ before the first time $t_{N}$ that $K_{t_{N}}$ jumps below $t^{\beta}$.  The process will usually have its first jump below $t^{\beta}$ from a point with momentum in the interval $[t^{\beta},\,t^{\beta}+\frac{1}{2}t^{\gamma}]$.  In other words, $\Pr\big[|K_{t_{N-1}}|>t^{\gamma} \big]$ is small.   If, as defined in Lemma~\ref{BoundryVariance}, $\pi_{\downarrow}(v)$ is the distribution for the over-jump for the lower boundary of the set $S=[t^{\beta}+t^{\gamma},\,\infty)$ starting from some point in $S$, then by Lemma~\ref{BoundryVariance}
$\int_{\R^{+}}dv\,\pi_{\downarrow}(v)\,v^{2}<J$.
By Chebyshev's inequality
 $$ \Pr\big[|K_{t_{N-1}}|>t^{\beta}+\frac{1}{2}t^{\gamma} \big]\leq \int_{\R^{+}}dv\,\pi_{\downarrow}(v)\chi(|v|\geq \frac{1}{2}\,t^{\gamma})\leq \frac{J}{4}\,t^{-2\gamma}.  $$

The above comments allow us to write the following
$$\left \|\phi_{t}- \mathbb{E}\Big[\sum_{j=1}^{\infty}\chi\big(N=j;\, K_{\sigma_{j}}-t^{\beta} \geq \frac{1}{2}\,t^{\gamma}  \big)\,\frac{\phi^{(X_{\sigma_{j}},\, K_{\sigma_{j}})}_{\downarrow,\,t} }{p(X_{\sigma_{j}},\, K_{\sigma_{j}}) }   \Big] \right \|_{1}\leq  \Pr\big[|K_{t_{N-1}}|>t^{\gamma} \big]\leq J\,t^{\gamma},
$$
where $\phi^{(X_{\sigma_{j}},\, K_{\sigma_{j}})}_{\downarrow,\,t}$ is the lower boundary crossing distribution as in Proposition~\ref{BC3} starting from the point $(X_{\sigma_{j}},\, K_{\sigma_{j}})$, and $\int da\, dv\, \phi^{(X_{\sigma_{j}},\, K_{\sigma_{j}})}_{\downarrow,\,t}(a,\,v)=p(X_{\sigma_{j}},\, K_{\sigma_{j}})$ is the probability of exiting the domain $[t^{\beta},\,t^{\beta}+2t^{\gamma})$ at the lower boundary.  The difference $d$ between the distance to the upper and lower boundaries  $2t^{\beta}$ and $t^{\beta}+t^{2\gamma}$ respectively is $d=2(t^{\beta}+t^{\gamma}-K_{\sigma_{j}})$.
\begin{multline*}
\left \|\mathbb{E}\Big[\sum_{j=1}^{\infty}\chi\big(N=j;\, K_{\sigma_{j}}-t^{\beta} \geq \frac{1}{2}\,t^{\gamma}  \big)\,\frac{\phi^{(X_{\sigma_{j}},\, K_{\sigma_{j}})}_{\downarrow,\,t}}{p(X_{\sigma_{j}},\, K_{\sigma_{j}})}  \Big]-p_{t}\,  \pi_{\infty} \right \|_{1}\\ \leq \sum_{n=1}^{\infty}\Pr\big[\chi(N^{\prime}=j)\big]\,    \sup_{y-t^{\beta}\geq \frac{1}{2}t^{\gamma},\, a } \left \| \frac{\phi^{(a,\,y)}_{\downarrow,\,t}}{p(a,\, y)}  - \pi_{\infty} \right  \|_{1}\leq \sup_{y-t^{\beta}\geq \frac{1}{2}t^{\gamma},\, a } \left \| \frac{\phi^{(a,\,y)}_{\downarrow,\,t}}{p(a,\, y)}  - \pi_{\infty} \right  \|_{1}
\end{multline*}
By Proposition~\ref{BC3} $p(a,\, y)$, $y\geq
t^{\beta}+\frac{1}{2}t^{\gamma}$ converge uniformly to the the
probabilities $p_{\downarrow}$ for the averaged random walk to
exit in down direction. $p_{\downarrow}$ are bounded away from
zero, since the ratio of the distance to the lower boundary to the
upper boundary is less than or equal to $3$.  Moreover, by
Proposition~\ref{BC4} $\phi^{(a,\,y)}_{\downarrow,\,t}$ converges
uniformly to $p_{\downarrow}\pi_{\infty}$.

\end{proof}

\end{appendix}

\end{document}